\renewcommand\footnotetextcopyrightpermission[1]{} 
\newcommand{\EPA}{\sc CRA}
\newcommand{\VPA}{\sc GPA}
\newcommand{\DRA}{\sc DRA}
\DeclarePairedDelimiter\ceil{\lceil}{\rceil}
\DeclareMathOperator*{\argmin}{arg\,min}
\DeclareMathOperator*{\argmax}{arg\,max}
\newcommand{\be}{\begin{eqnarray}}
	\newcommand{\ee}{\end{eqnarray}}
\newcommand{\ben}{\begin{eqnarray*}}
	\newcommand{\een}{\end{eqnarray*}}
\newcommand{\bfl}{\begin{flalign*}}
	\newcommand{\efl}{\end{flalign*}}
\newcommand{\dref}[1]{(\ref{#1})}
\newcommand{\calJ}{{\mathcal J}}
\newcommand{\calK}{{\mathcal K}}
\newcommand{\globl}[1] { {#1} }
\newcommand{\globlt}[1] { {#1} }   
\newcommand{\schedk} { \mathbf{\hat k} }
\newcommand{\schedki} { \hat{k} }
\newcommand{\assgnk} { \mathbf{k} }
\newcommand{\greedk} { \mathbf{\bar k} }
\newcommand{\greedki} { {\bar{k}} }
\newcommand{\assgnki} { {k} }
\newcommand{\setConv} {\Gamma}  
\newcommand{\setOpt} {\Gamma^\star}  
\newcommand{\Kmax}{ K }  
\newcommand{\mumax}{ \mu_{max} }
\newcommand{\mumin}{ \mu_{min} }
\newcommand{\emptysrv}{\ell_e}
\newcommand{\bk}{ {\mathbf{k}} }
\newcommand{\statez}{ \mathbf{z} }
\newcommand{\dsone}{ {\mathds{1}} }
\newcommand{\exactJ}[1]{\mathcal{J}^{(#1)}}
\newcommand{\cfcnt}{ {C} }  
\newcommand{\gcnt}{ {{C}^{(g)}} }  
\newcommand{\gtcnt}{ {{C}^{(g)}_{\bm{\rho}}} }  
\newcommand{\gtcntp}[1]{ {C_{#1}} }  
\newcommand{\qcnt}{ {\gcnt} }
\newcommand{\subL}{ {f(L)} }  
\newcommand{\LC}{ {Z} }  
\newcommand{\perm}[1]{ {\sigma_{#1}} }  
\newcommand{\permp}[2]{ {\sigma^{#2}_{#1}} }  
\newcommand{\ggi}{ {I_{\bm{\rho}}} }  
\newcommand{\maptg}[1]{ {g_{#1}} }  
\newcommand{\ji}[1]{ {j_{#1}} }  
\newcommand{\jip}[1]{ {j^\prime_{#1}} }
\newcommand{\jiB}[2]{ {j_{#1}[#2]} }  
\newcommand{\jix}[1]{j(#1)}
\newcommand{\subind}{ \mathcal{\bar J} }  
\newcommand{\maxm}[1]{ {M_{#1}} }
\newcommand{\emptyx}{x_{\varnothing}}
\newenvironment{subproof}[1][\proofname]{%
	\begin{proof}[\textit{#1}]%
	}{%
	\end{proof}%
}
\theoremstyle{remark}
\newtheorem{remark}{Remark}
\newtheorem{claim}{Claim}
\definecolor{cadmiumgreen}{rgb}{0.0, 0.42, 0.24}
\begin{document}
\title[A Theory of Auto-Scaling for Resource Reservation in Cloud Services]{
A Theory of Auto-Scaling for Resource Reservation in\\ Cloud Services 
}
\author{Konstantinos Psychas, Javad Ghaderi}
\affiliation{%
    \institution{Department of Electrical Engineering \\
    	Columbia University}
}

\keywords{Scheduling, Loss Systems, Fluid Limits, Resource Allocation}	
\begin{abstract}
\label{abstract} 
We consider a distributed server system consisting of a large number of servers, each with limited capacity on multiple resources (CPU, memory, disk, etc.). 
Jobs with different rewards arrive over time and require certain amounts of resources for the duration of their service. 
When a job arrives, the system must decide whether to admit it or reject it, and if admitted, in which server to schedule the job. 
The objective is to maximize the expected total reward received by the system. 
	This problem is motivated by control of cloud computing clusters, in which, jobs are requests for Virtual Machines or Containers that reserve resources for various services, and rewards represent service priority of requests or price paid per time unit of service by clients.
We study this problem in an asymptotic regime where the number of servers and jobs' arrival rates scale by a factor $L$, as $L$ becomes large. 
We propose a resource reservation policy that asymptotically achieves \textit{at least} $1/2$, and under certain monotone property on jobs' rewards and resources, at least $1-1/e$ of the  optimal expected reward. 
The policy automatically scales the number of VM slots for each job type as the demand changes, and decides in which servers the slots should be created in advance, without the knowledge of traffic rates. 
It effectively tracks a low-complexity greedy packing of existing jobs in the system while maintaining only a small number, $g(L)=\omega(\log L)$, of reserved VM slots for high priority jobs that pack well.
  
\end{abstract}
\maketitle

\section{Introduction}
\label{introduction}
There has been a rapid migration of computing, storage, applications, and other services to cloud. 
By using cloud (e.g., Amazon AWS~\cite{AWS}, Microsoft Azure~\cite{Azure}, Google Cloud~\cite{google}), clients are no longer required to install and maintain their own infrastructure. 
Instead, clients use the cloud resources on demand, by procuring Virtual Machines (VMs) or Containers~\cite{Googlecontainer, AWScontainer} with specific configurations of CPU, memory, disk, and networking in the cloud data center, depending on their needs. 

A key challenge for the cloud service providers is to efficiently support a wide range of services on their physical platform. They usually offer QoS guarantees (in SLAs)~\cite{AmazonSLA} for clients' applications and services, and allow the number of VM instances to scale up or down with demand to ensure QoS guarantees are met. 
For example, in Amazon EC2 auto-scaling~\cite{auto-scaler}, clients can define simple rules to launch or terminate VM instances as their application demand increases or decreases. Various predictive and reactive schemes have been proposed for dynamically allocating VMs to different services, e.g.,~\cite{han2012lightweight,jiang2013optimal,mao2010cloud,roy2011efficient,qu2018auto,ghobaei2018autonomic}, however, they mostly assume a dedicated hosting model where VMs of each application run on a dedicated set of servers. Such models do not consider potential consolidation of VMs in servers which is known to significantly improve efficiency and scalability~\cite{corradi2014vm,song2013adaptive}.
For instance, suppose a CPU-intensive VM, a disk-intensive VM, and a memory-intensive VM are located on three individual servers (for our purpose, we use the terms VM and Container interchangeably). 
The cloud operator can pack these VMs in a single server to fully utilize its resources along CPU, disk, and memory, then the two unused servers can be used to pack additional VMs and serve more requests. 
However, in the absence of an accurate estimate of the workload, or when the workload varies over time and space, it is not clear how many VM instances an application launches and which VMs must be packed in which servers to ensure efficiency.

In this paper, we consider a cloud data center consisting of a large number of servers. 
As an abstraction in our model, a VM is simply a multi-dimensional object (vector of resource requirements) that should be served by one server and \textit{cannot be fragmented}. 
Each server has a limited fixed capacity on its available resources (CPU, memory, disk, networking). 
VM requests belong to a collection of VM types, each with a specific resource requirement vector, and a specific reward that represents its service priority or the price that will be paid per time  unit of service by the client. 
When a VM request arrives, we must decide in an online manner whether to accept it, and, if so, in which server to schedule it. 
The objective is to \textit{maximize the expected total reward} received by the system. 
Note that finding the right packing for a given workload is a hard combinatorial problem (related to multi-dimensional Knapsack~\cite{kellerer2004multidimensional}). 
The absence of accurate estimate of workload (VM traffic rates and service durations) makes the problem even more challenging. 
For instance, consider a simple scenario with three types of VMs with the following (CPU, memory) requirement and rewards: (0.6, 0.6) with reward 4, (0.7, 0.1) with reward 3, and (0.1, 0.7) with reward 3. Server's capacity is normalized to (1,1). 
Hence, a server can accommodate a single (0.6, 0.6) VM, or pack one (0.7, 0.1) VM and one (0.1, 0.7) VM together. 
Suppose there is one empty server, and a (0.6, 0.6) VM request arrives. 
Should we admit this request and receive a reward of 4, or reserve the server to pack one (0.7, 0.1) VM and one (0.1, 0.7) VM in future, which can potentially yield a maximum reward of 6?

This problem is related to the \textit{Online Multiple Knapsack} problem, in which there is a set of bins of finite capacity, items with various sizes and profits arrive one by one, and the goal is to pack them in an \textit{online} manner into the bins so as to maximize their total profit. 
In general, this problem does not have any competitive (constant approximation) algorithm~\cite{marchetti1995stochastic}, even when items are allowed to be removed from any bin at any time. 
Hence, proposed competitive algorithms focus on more restricted cases of the problem~\cite{iwama2002removable,cygan2016online}.

In this paper, we study a stochastic version of the problem in an asymptotic regime, where the number of servers $L$ is large and requests for VMs of type $j$ arrive at rate $\lambda_j L$, $j=1, \cdots, J$, and each requires service with mean  duration $1/\mu_j$. The (normalized) load of the system is defined as $\bm{\rho}:=(\lambda_j/\mu_j, j=1, \cdots, J)$. 
This is the \textit{heavy-traffic} regime, e.g.~\cite{kelly1991loss, whitt1985blocking,hunt1994large,hunt1997optimization,xie2015power, karthik2017choosing,mukhopadhyay2015mean}, 
and it has been shown that algorithms with good performance in such a regime also show good performance in other regimes. 
The interesting scenario occurs when not all VM requests can be scheduled (e.g., $\bm{\rho}>\bm{\rho}_c$ for a \textit{critical load} $\bm{\rho_c}$ on the boundary of system capacity), in which case a fraction of the traffic has to be rejected even by the optimal policy. 
We propose an adaptive reservation policy that makes admission and packing decisions without the knowledge of $\bm{\rho}$. 
Packing decisions include placement of admitted VM in one of the feasible servers, and migration of at most one VM across servers when a VM finishes its service.

\subsection{Related Work}
There is classical work on large loss networks, e.g.~\cite{hunt1994large,bean_gibbens_zachary_1995,kelly1991loss,hunt1997optimization}, where calls with different bandwidth requirements and priorities arrive to a telecommunication network. 
Trunk reservation has been shown to be a robust and effective call admission policy in this setting, in which each call type is accepted if the residual link bandwidth is above a certain threshold for that type. 
The performance of trunk reservation policies has been analyzed in the asymptotic regime where the call arrival rates and link's capacity scale up by a factor $N$, as $N \to \infty$. 
This is different from our large-scale server model, where the server's capacity is ``fixed'' and only the number of servers scales (a.k.a. system scale-out as opposed to scale-up).
	This makes the problem significantly more difficult, because, due to resource fragmentation when packing VMs in servers, the resources of servers \textit{cannot} be viewed as one giant pool; hence our policy not only needs to make admission decisions, but also decide in which server to place the admitted VM.
    Moreover, VMs have multi-dimensional resource as opposed to  
    one-dimensional calls (bandwidth). If we restrict that every server 
    can fit exactly one VM, our policy reduces to classical trunk reservation.%

There has been past work on VM allocation~\cite{maguluri2012stochastic, stillwell2012virtual,zhao2015joint,psychas2017non,maguluri2014heavy,psychas2018randomized} and stochastic bin packing~\cite{gupta2012online,stolyar2015asymptotic,stolyar2013infinite,stolyar2013large, ghaderi2014asymptotic}, however their models or objectives are different from ours. 
The works~\cite{maguluri2012stochastic,psychas2017non,maguluri2014heavy,psychas2018randomized} consider a queueing model where VM requests are placed in a queue and then served by the system. 
In this paper, we are considering a loss model without delay, i.e., each VM request upon arrival has to be served immediately, otherwise it is lost. 
The recent works~\cite{stolyar2015asymptotic,stolyar2013infinite,stolyar2013large, ghaderi2014asymptotic} study a system with an infinite number of servers and their objective is to minimize the number of occupied servers. 
The auto-scaling algorithm proposed in~\cite{guo2018online} also assumes such an infinite server model. 
These are different from our setting where we consider a finite number of servers and study the total reward of served VMs by the system, in the limit as the number of servers becomes large. 
In this regime, we have to address complex fluid limit behaviors, especially when the load is above the system capacity and VMs have different priorities.

The works~\cite{xie2015power, karthik2017choosing,mukhopadhyay2015mean,stolyar2017large} study the blocking probability in a large-scale server system where \textit{all VMs have the same reward}.
The work~\cite{stolyar2017large} assumes a subcritical system load and only shows local stability of fluid limits. 
The works~\cite{xie2015power, karthik2017choosing,mukhopadhyay2015mean} show that, under a power-of-d choices routing, the blocking probability drops much faster compared to the case of uniform random routing. 
However, there is no analysis of optimality, especially in a supercritical regime where even the optimal policy has a non-zero blocking probability. 
Moreover, such algorithms treat all VMs with the same priority (reward) when making decisions, thus a low priority VM can potentially block multiple high priority ones.

We remark that in real clouds, servers are monitored periodically~\cite{HadoopYarn,shao2010runtime,aceto2013cloud}, for resource management, security, recovery, billing, etc., 
hence scheduling decisions can be made based on available information about the global system state.
     
\subsection{Contributions}

We propose a dynamic resource reservation policy that makes admission and packing decisions based on the current system state, and prove that it asymptotically achieves \textit{at least} $1/2$, and under certain monotone property on VMs' rewards and resources, at least $1-\frac{1}{e}$ of the optimal expected reward, as the number of servers $L \to \infty$.
Further, simulations suggest that for real cloud VM instances, the achieved ratio is in fact very close to one.

The main features of our policy and analysis technique can be summarized as follows:

\textbf{Dynamic Reservation.} 
The policy \textit{reserves slots} for VMs in advance. 
A slot for a VM type will reserve the VM's required resources on a specific server. 
An incoming VM request then will be admitted if there is enough reservation in the system, in which case it will fill an empty slot of that type.
The policy effectively tracks a low-complexity greedy packing of existing VM requests in the system while maintaining only a small number $g(L)=\omega(\log L)$ of empty slots (e.g., $(\log L)^{1+\varepsilon}$),
for VM types that have high priority at the current time. 
The reservation policy is \textit{robust} and can automatically adapt to changes in the workload based on requests in the system and new arriving requests, without the knowledge of $\bm{\rho}$.

\textbf{Analysis Technique.} 
Our proofs rely on analysis of fluid limits under the proposed policy, however, a major difficulty happens when the workload is above the critical load. 
In this regime the slot reservation process evolves at a much faster time-scale compared to the fluid-limit processes of the number of VMs and number of servers in different packing configurations in the system. 
To describe the behavior of fluid limits, we devise a careful analysis based on averaging the behavior of fluid-scale process over \textit{small intervals of length $\omega(\log L/L)$}.
We then introduce a \textit{Lyapunov function} based on a Linear Program. 
It is designed to have a unique maximizer at a global greedy solution and determines the convergence properties of our policy in steady state.

\subsection{Notations}\label{sec:notation}
For two positive-valued functions $x(n)$ and $y(n)$, with $n \in \mathds{N}$,
we write $x(n) = o(y(n))$ if $\lim_{n \to \infty} x(n)/y(n) = 0$, and 
$x(n) = \omega(y(n))$ if $y(n)=o(x(n))$.
$\dsone(\cdot)$ is the indicator function. 
$\mathbf{e}_j$ denotes the $j$-th basis vector. $t^-$ and $t^+$ denote 
the times right before and after $t$.
$\mathds{R}_+$ is the set of nonnegative real numbers. $(\cdot)^+=\max\{\cdot,0\}$.

\section{Model and Definitions}\label{sec:model}
\textbf{Cloud Model.} 
We consider a collection of $L$ servers denoted by the set $\mathcal{L}$. Each server $\ell \in \mathcal{L}$ has a limited capacity on different resource types (CPU, memory, disk, networking, etc.). 
We assume there are $n \geq 1$ types of resource. 

\textbf{VM Model.} There is a collection of VM types denoted by the set $\mathcal{J}$. The VM types are indexed in arbitrary order from $1$ to $J$. 
Each VM type $j$ requires a vector of resources $\mathbf{R}_j=(R_j^1,\cdots,R_j^n)$, where $R_j^d$ is its requirement for the $d$-th resource, $d=1,\cdots,n$.

VMs are placed in servers and reserve the required resources. The sum of reserved resources by the VMs placed in a server should not exceed the server's capacity.
A vector $\assgnk = (\assgnki_1, \cdots, \assgnki_J) \in \mathds{Z}_+^J$ is said to be a feasible configuration if the server can simultaneously accommodate
$\assgnki_1$ VMs of type $1$, $\assgnki_2$ VMs of type $2$, $\cdots$, $\assgnki_J$ VMs of type $J$. We use $\mathcal{K}$ to denote the set of all feasible configurations (including the empty configuration $\mathbf{0}_{J}$).
The number of feasible configurations will be denoted by $\cfcnt := |\mathcal{K}|$. 

We define $\mathcal{K}^{\mathcal{J}^\prime}$ to be the set of feasible configurations that include only VMs from a subset of types $\mathcal{J}^\prime \subseteq \mathcal{J}$, i.e., 
\be
&\mathcal{K}^{\mathcal{J}^\prime}=\{\assgnk \in \mathcal{K}: \assgnki_j =0,  \forall   j \notin \mathcal{J}^\prime \}.
\ee

We use $\Kmax < \infty$ to denote the maximum number of VMs that can fit in a server. We use $\assgnk^\ell(t) = \assgnk$ to denote that at time $t$, server $\ell \in \mathcal{L}$ has configuration $\assgnk$. 

We do not necessarily need the resource requirements to be additive, only the monotonicity of the feasible configurations is sufficient, namely, if $\assgnk \in \mathcal{K}$, and $ {\mathbf{k}^\prime} \leq \assgnk$ (component-wise), then ${\mathbf{k}^\prime} \in \mathcal{K}$. 
This will allow sub-additive resources as well, when the cumulative resource used by the VMs in a configuration could be less than the sum of the resources used individually~\cite{rampersaud2014sharing}.

\textbf{Job and Reward Model.} Jobs for various VM types arrive to the system over time. We can consider two models for jobs: 

(i) \textit{Revenue interpretation}: a job of type $j$ is a request to create a new VM of type $j$.

(ii) \textit{Service interpretation}: a job of type $j$ is a request that must be served by an existing VM of type $j$ in the system.  

To simplify the formulations and use one model to capture both interpretations, we assume that each VM can serve at most one job at any time. As we will see, our algorithm works based on creating ``\textit{reserved VM  slots}'' in advance. Hence, serving a newly arrived type-$j$ job can be interpreted as deploying a VM of type $j$ in its reserved slot (revenue interpretation), or assigning it to an already deployed VM of type $j$ in the slot (service interpretation).

Each job type $j$ is associated with a reward $u_j$ which represents its priority (service interpretation) or price paid per time unit of service (revenue interpretation).   

We define the \textit{feasible job placement} $\schedk= (\schedki_1, \cdots, \schedki_J)$ to be the set of jobs that are simultaneously being served in a single server, where $\schedki_j$ corresponds to the number of type-$j$ jobs. Note that by the definition of server configuration, it holds that
$\schedk \leq \assgnk$, for some $\assgnk \in \mathcal{K}$. Hence, $\assgnk - \schedk$ can be viewed as the reserved VM slots, where $\assgnki_j - \schedki_j$ is the number of reserved type-$j$ VM slots.
We use $\schedk^\ell(t) = \schedk$, when at time $t$, the job placement in  server $\ell \in \mathcal{L}$ is $\schedk$.

\textbf{Traffic Model.} 
Jobs of type $j$ arrive according to a Poisson process of rate $\lambda_j L$, for a constant $\lambda_j >0$. Once scheduled in a server (more accurately, in a reserved slot of type $j$), a job of type $j$ requires an exponentially distributed service time with mean $1/\mu_j$, and generates \textit{reward} at rate $u_j$ during its service. 
We define the normalized workload of type-$j$ jobs as $\rho_j := {\lambda_j}/{\mu_j}$ and the workload vector $\bm{\rho}=(\rho_j, j \in \mathcal{J})$.

\begin{definition}[Configuration Reward]\label{def:rew}
	The reward $U(\assgnk)$ of a configuration $\assgnk \in \mathcal{K}$ is defined as its total reward per unit time when its slots are full, i.e.,
	$
	U(\assgnk) := \sum_{j=1}^J u_j \assgnk_j.
	$
\end{definition}
\begin{definition}[Configuration Ordering]\label{def:ordering}	
	For two vectors $\assgnk, \assgnk^\prime \in \mathcal{K}$, we say $\assgnk \succ \assgnk^\prime$, 
    if either $U(\assgnk) > U(\assgnk^\prime)$, 
    or $U(\assgnk) = U(\assgnk^\prime)$ and considering the smallest $j$ for which $\assgnki_{j} \neq \assgnki^\prime_{j}$, $\assgnki_{j} > \assgnki^\prime_{j}$.
\end{definition}

\begin{definition}[MaxReward]\label{def:maxreward}
Given a subset $\mathcal{K}_s \subseteq \mathcal{K}$, the maximum reward configuration of $\mathcal{K}_s$ is defined as 
$$
\textsc{MaxReward}({\mathcal{K}_s}) := \arg\max_{\mathbf{k} \in \mathcal{K}_s} U(\assgnk),
$$
where ties are broken based on the ordering in Definition \ref{def:ordering}.
\end{definition}

\begin{definition}[State Variables]\label{def:systemj} 
    Consider the system with $L$ servers. We use $X^L_{\assgnk}(t)$ to denote the number of servers assigned to configuration $\assgnk \in \mathcal{K}$ at time $t$. 
    To distinguish between servers assigned to the same configuration $\assgnk$, 
	we \textit{index} them from $1$ to $X^L_{\assgnk}(t)$, starting from the most recent server assigned to $\assgnk$ (without loss of generality).

	The system state at time $t$ can then be described as 
	\begin{equation}\label{eq:systemstate}
		\mathbf{S}^L(t) := ((\assgnk^\ell(t), \schedk^\ell(t), c^\ell(t)), \ell \in \mathcal{L}),
	\end{equation}
	where for each server $\ell \in \mathcal{L}$, $\assgnk^\ell(t) \in \mathcal{K}$
	is its configuration, $\schedk^\ell(t)$, with $\schedk^\ell(t) \le \assgnk^\ell(t)$, is its job placement,
	and $c^\ell(t)$ is its index among the servers with configuration $\assgnk^\ell(t)$.

    The number of jobs of type $j$ in the system at time $t$ is given by
	\begin{equation}\label{eq:jobtypenumber}
	Y^{L}_j(t) = \sum_{ \ell \in \mathcal{L}}   \schedki_j^\ell(t). 
	\end{equation}
    We also define the vectors  $\mathbf{Y}^{L}(t)=(Y^{L}_j(t), j \in \mathcal{J})$, 
    and $\mathbf{X}^L(t)=(X^L_{\assgnk}(t),\assgnk \in \mathcal{K})$. 
    Clearly $\sum_{\assgnk \in \mathcal{K}}X^L_{\assgnk}(t)= L$ since there are $L$ servers.
\end{definition}

\textbf{Optimization Objective.}
	Given a Markov policy $\pi$, we define the expected reward of the policy per unit time as 
	\begin{equation}\label{eq:costdef}
	F^{\pi} (L)=\lim_{t \to \infty} \mathds{E}
	\Big[\sum_{j \in \mathcal{J}} Y^{L}_j(t) u_j\Big].
	\end{equation}
	Our goal is to maximize the expected reward, i.e.,
	\begin{equation}\label{eq:obj}
	\text{maximize}_{\pi} F^{\pi} (L),
	\end{equation}
	where the maximization is over all Markov scheduling policies $\pi$. 
    Hence, when jobs are requests for VMs, this optimization is a revenue maximization, whereas when jobs are requests to be served by existing VMs, it is a weighted QoS maximization where each service is weighted by its priority.

    Note that under any Markov policy, the system state $\mathbf{S}^L(t)$ is a continuous-time irreducible Markov chain over a finite state space, hence it is positive recurrent and \dref{eq:costdef} is well defined. 
	Let $\mathbf{X}^L(\infty)$ and $\mathbf{Y}^{L}(\infty)$ be random vectors with the stationary distributions of $\mathbf{X}^L(t)$ and $\mathbf{Y}^{L}(t)$, respectively, as $t \to \infty$.
	Note that if $\mathbf{Y}^{\star}(t)$ is the number of jobs in an $M/M/\infty$ system in which every job is admitted, then $\mathbf{Y}^{L}(\infty)$ is stochastically dominated by ${\mathbf Y}^{\star}(\infty)$ whose stationary distribution is Poisson with mean $L \bm{\rho}$~\cite{Bolch2006}.
	
    We study the problem (\ref{eq:obj}) in the asymptotic regime where the 
	number of servers $L \to \infty$, while the job arrival rates are $\lambda_jL$, $j \in \mathcal{J}$. 
	Note that we do not make any assumption on the values of $\rho_j$. 
	
	Notice that as $t \to \infty$, the scaled stationary random variables satisfy $\frac{1}{L}\mathbf{X}^L(\infty) \leq \mathbf{1}$ 
    and $\frac{1}{L}{\mathbf{Y}^{L}}(\infty) \leq \frac{1}{L}\mathbf{Y}^{\star}(\infty)$. 
    This implies that the sequence of scaled random variables  is tight~\cite{billingsley2008probability}, therefore the (random) limits $\mathbf{x}(\infty):=\lim_{L \to \infty}\frac{1}{L}\mathbf{X}^L(\infty)$, and $\mathbf{y}(\infty):=\lim_{L \to \infty} \frac{1}{L}\mathbf{Y}^{L}(\infty)$ exist along a subsequence of $L$. 
    The limits satisfy ${x}_{\assgnk}(\infty) \geq  0$, $\sum_{\assgnk \in \mathcal{K}}{x_{\assgnk}}(\infty)=1$, and $\mathbf{y}(\infty) \leq  \bm{\rho}$, $ \mathbf{y}(\infty)\leq \sum_{\assgnk \in \mathcal{K}}  x_\assgnk(\infty) \assgnk$.    

    To unify the algorithm descriptions for revenue maximization and QoS maximization, in the rest of the paper,
    we use the term ``\textit{slot}'' of type $j$ to refer to the resource (equal to a VM of type $j$) reserved for one job of type $j$ in a server. Filled slots have jobs already in them, while empty slots could accept jobs. Therefore, the term configuration applies to all the slots in a server, while placement applies to the filled slots in the server.

\section{A Static Optimization and its Greedy Solution}\label{sec:static_optim}
Given a workload reference vector $\hat{\mathbf Y}^L=(\hat{Y}^L_j, j \in \mathcal{J})$, let $F^\star(L,\hat{\mathbf Y}^L)$ be the optimal value of the following linear program:
\begin{maxi!}{\mathbf X, \mathbf Y}{\sum_j u_j Y_j} 
	{\label{eq:opt1c}}{}
	\addConstraint{Y_j\le \hat{Y}^L_j}, {\ \forall j \in \mathcal{J}} \label{eq:opt1c_Y}
	\addConstraint{\sum_{\assgnk \in \mathcal{K}}  X_\assgnk \assgnki_{j} \ge Y_j}, {\ \forall j \in \mathcal{J}} \label{eq:opt1c_X}
	\addConstraint{\sum_{\assgnk \in \mathcal{K}} X_\assgnk = L},{\quad X_\assgnk\geq 0,\ \forall \assgnk \in \mathcal{K}} \label{eq:opt1c_s}
\end{maxi!}
\noindent where $\mathbf Y$ is the vector of jobs in the system, and  $\mathbf X$ is the vector of the number of servers assigned to each configuration.
If we choose $\hat{\mathbf Y}^L=\bm{\rho}L$, this optimization will provide an upper bound on optimization (\ref{eq:obj}), i.e., $F^{\pi}(L) \leq F^\star(L,\bm{\rho}L) $, for any Markov policy $\pi$.
The interpretation of the result is as follows. The average number of type-$j$ jobs in the system cannot be more than its workload (Constraint \dref{eq:opt1c_Y}), and further, it cannot be more than the average number of slots of type $j$ in the servers (Constraint \dref{eq:opt1c_X}). 
The sum of number of servers in different configurations is $L$, so their average should also satisfy \dref{eq:opt1c_s}.  

As $L \to \infty$, the normalized objective value $\frac{1}{L}F^\star(L,\bm{\rho}L)\to {U}^{\star}[\bm{\rho}] $, 
which is the optimal value of the linear program below
\begin{maxi!}{\mathbf{x}, \mathbf{y}}{\sum_j u_j y_j} 
	{\label{eq:opt1b}}{}
	\addConstraint{y_j\le \rho_j}, {\forall j \in \mathcal{J}}
	\addConstraint{\sum_{\assgnk \in \mathcal{K}} \assgnki_{j} x_\assgnk
		\ge y_j}, {\forall j \in \mathcal{J}}\label{eq:opt1b_wl_a}
		\addConstraint{\sum_{\assgnk \in \mathcal{K}} x_\assgnk = 1}, {\quad x_\assgnk\geq 0,\ \forall \assgnk \in \mathcal{K}}\label{eq:opt1b_s_a}
\end{maxi!}
where $x_\assgnk$ can be interpreted as the ideal fraction of servers which should be in configuration $\assgnk$ when $L$ is large. 
Hence, one can consider a static reservation policy where the cloud cluster is partitioned and $\lfloor x_\assgnk L \rfloor $ servers are assigned to each non-zero configuration $\assgnk \in \mathcal{K}$ (and the rest of servers can be empty to save resource or used to serve more jobs). 
Then once a type-$j$ job arrives, it will be routed to an empty slot of type $j$ in one of the servers, if any, otherwise it is rejected.
This will provide an asymptotic optimal policy since it achieves the normalized reward ${U}^{\star}[\bm{\rho}]$, as $L \to \infty$.

However, there are several issues with this approach: (i) solving optimization~(\ref{eq:opt1c}) or its relaxation~(\ref{eq:opt1b}) has a very high complexity, as the number of configurations is exponential in the number of job types $J$, and (ii) it requires knowing an accurate estimate of the workload $\bm{\rho}$ which might not be available. 
Inaccurate estimates of workload can lead to poor performance for such policies, e.g., see~\cite{Key1990} which illustrates that static reservation policies in classical loss networks can give very poor performance.
Even if we have an estimate of the workload and approximate the solution to (\ref{eq:opt1b}), to handle time-varying workloads, the new solution may require rearranging a large number of VMs and jobs to make their placements match the new solution. 
This is costly and also causes interruption of many jobs in service.

We first address the complexity issue, by presenting a greedy solution for the optimization, and analyze its asymptotic performance below.
\subsection{Greedy Solution}
    We describe a greedy algorithm, called \textit{Greedy Placement Algorithm} ({\VPA}), for solving   optimization~(\ref{eq:opt1c}). 

    {\VPA} takes as input the workload reference vector
    $\hat{\mathbf Y}^L$,  and returns an assignment vector 
    $\hat{\bm{X}}^L$ which indicates which configurations
    should be used and in how many servers. 
    The assignment consists of at most $J$ configurations, 
    which are found in $J$ iterations.  
    In each iteration $i$, {\VPA} maintains a set of candidate 
    job types $\mathcal{J}[i]$, and finds a configuration $\mathbf{k}[i]$. 
    Initially $\mathcal{J}[1]=\mathcal{J}$. 
    In iteration $i$:
   	\begin{enumerate}[leftmargin=*] 
	\item It finds $\mathbf{k}[i]=\textsc{MaxReward}(\mathcal{K}^{\mathcal{J}[i]})$, which is the configuration of highest reward among the configurations that have jobs from the set $\mathcal{J}[i]$, according to Definition~\ref{def:maxreward}. 
	\item It computes the number of servers $\hat{X}^L_{\assgnk[i]}$ that should be assigned to $\mathbf{k}[i]$, 
	until at least one of the job types $j$, for which $k_{j}[i] > 0$, has no more jobs left, or there are 
	no more unused servers left. We refer to this job type as $j^\star$.
	\item It then creates $\mathcal{J}[i+1]$ by removing job type $j^\star$ from $\mathcal{J}[i]$.
   	\end{enumerate}

	\begin{algorithm}[t]
    \caption{Greedy Placement Algorithm ({\VPA})}\label{alg:vpa}
    \begin{algorithmic}[1]
        \Function{{\VPA}}{$\mathbf{\hat{Y}}$}
        \State $\mathbf{r} \gets \mathbf{\hat{Y}}$ \Comment tracks the vector of number of jobs left
        \State $N \gets L$  
        \Comment tracks the number of servers left
        \State $i \gets 1$,\ $\mathcal{J}[1] = \mathcal{J}$
        \While {$\mathcal{J}[i] \neq \varnothing$}
        \State \label{ln:kupdate_vpa} $\assgnk[i] \gets 
        \textsc{MaxReward}(\mathcal{K}^{\mathcal{J}[i]})$
        \State $j^\star \gets \argmin_{j:\assgnki_{j}[i]>0} 
            \ceil{\frac{r_j}{\assgnki_{j}[i]}}$ \Comment break ties arbitrarily 
        \State ${\hat X}_{\assgnk[i]} \gets \min \left(\ceil{\frac{r_{j^\star}}{\assgnki_{j^\star}[i]}}, N\right)$
        \State $\mathbf{r} \gets \mathbf{r} - {\hat X}_{\assgnk[i]} \assgnk[i]$
        \State $N \gets N - {\hat X}_{\assgnk[i]}$
        \State \label{ln:Jgupdate_vpa}
        $\mathcal{J}[i+1] \gets \mathcal{J}[i] - \{
        j^\star \}$
        \State $i \gets i+1$
        \EndWhile
        \State \Return  ${\hat X}_{\assgnk[j]}$, $j=1,\cdots,J$
        \EndFunction
    \end{algorithmic}
    \end{algorithm}
A pseudocode for {\VPA} is given by Algorithm~\ref{alg:vpa}.
We use the vector $\mathbf{\hat X}^{L}=({\hat X}^{L}_\assgnk, \assgnk \in \mathcal{K})$ to denote the output of {\VPA}, which has at most $J$ non-zero elements corresponding to $\assgnk[i]$, $i=1, \ldots, J$.

\begin{remark}
	$\textsc{MaxReward}$ finds the maximum reward configuration of a subset of job types, which is equivalent with \textit{unbounded} Knapsack problem (unbounded number of items for each type). 
    This problem is tractable with Pseudopolynomial algorithms to solve it exactly ~\cite{andonov2000unbounded,martello1990exact} or fully polynomial approximation algorithms~\cite{ibarra1975fast}. 
    {\VPA} needs to solve at most $J$ instances of this problem. 
    Note that the number of different instances of the problem is bounded and we can compute \textsc{MaxReward} for all of them offline as \textit{they are not workload dependent}. 
    This is in contrast to optimization \dref{eq:opt1c}, 
    which is equivalent to multi Knapsack problem which is strongly NP-hard~\cite{kellerer2004multidimensional}, and requires resolving when workload reference $\mathbf{\hat Y}$ changes.  
\end{remark}

We next define the limit of $\hat{\bm{X}}^L/L$ for input $\hat{\bm{Y}}^L = L \bm{\rho}$, as  $L \to \infty$, which we refer to as \textit{Global Greedy Assignment}. 
To describe this assignment, we first define a unique ordering of the job types through the following proposition.

    \begin{proposition}\label{prop:index}
        For any permutation $\sigma = (\perm{1}, \perm{2}, \ldots, \perm{J})$ of 
        job types in $\calJ$, let 
        $\mathcal{J}^\sigma_{j} := \{\perm{j}, \ldots, \perm{J}\}$,
        and $\mathbf{k}^{(j)} := \textsc{MaxReward}(\mathcal{K}^{\mathcal{J}^\sigma_j})$.
        Given a workload $\bm{\rho}$, there is a ``unique'' permutation $\sigma = (\perm{1}, \perm{2}, \ldots, \perm{J})$ of 
        job types, such that the following holds: 
        \begin{itemize}[leftmargin=0.15 in]
        \item[1)] $\forall j\in \calJ$, ${k}^{(j)}_\perm{j} > 0$, and
        there are constants $\globl{z}^{(j)}[\bm{\rho}] \ge 0$, such that
        \begin{equation}\label{eq:rho_xg}
        \rho_\perm{j} = \sum_{\ell=1}^j {k}^{(\ell)}_\perm{j}
        \globl{z}^{(\ell)} [\bm{\rho}],
        \end{equation}
        \item[2)]for any two indexes $j, j^\prime \in \calJ$, 
        with $j < j^\prime$,  if
        \begin{equation}\label{eq:rho_xg2}
        \rho_\perm{j^\prime} = \sum_{\ell=1}^{j} {k}^{(\ell)}_\perm{j^\prime} 
        \globl{z}^{(\ell)} [\bm{\rho}],
        \end{equation}
          then we should have $\perm{j} < \perm{j^\prime}$.
        \end{itemize}
    \end{proposition}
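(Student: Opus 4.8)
I would prove the two halves separately: \emph{existence} by running the greedy placement algorithm {\VPA} ``in the fluid'' (input $\bm\rho$ rather than $L\bm\rho$), and \emph{uniqueness} by induction on the position $i$, using an argmin characterization of $\sigma_i$ that is forced by conditions~1)--2). Throughout I assume, without loss of generality, that every job type is individually feasible, i.e.\ $\mathbf e_j\in\calK$ for all $j$, since an infeasible type can never be served and would anyway violate $k^{(j)}_{\sigma_j}>0$ in~1).

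\textbf{Existence.} Maintain an active set $\calA_i$ of types ($\calA_1=\calJ$) and a residual workload $\mathbf r^{(i)}\ge 0$ ($\mathbf r^{(1)}=\bm\rho$). At step $i$: put $\mathbf k^{(i)}:=\textsc{MaxReward}(\calK^{\calA_i})$ --- this is nonzero with support inside $\calA_i$, because each $\mathbf e_j\in\calK^{\calA_i}$ has $U(\mathbf e_j)=u_j>0=U(\mathbf 0)$; set $z^{(i)}:=\min_{j:\,k^{(i)}_j>0} r^{(i)}_j/k^{(i)}_j\ge 0$; let $\sigma_i$ be a minimizer; update $\mathbf r^{(i+1)}:=\mathbf r^{(i)}-z^{(i)}\mathbf k^{(i)}\ge 0$ and $\calA_{i+1}:=\calA_i\setminus\{\sigma_i\}$. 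After $J$ steps $(\sigma_1,\dots,\sigma_J)$ is a permutation, and since exactly $\sigma_1,\dots,\sigma_{i-1}$ have been removed, $\calA_i=\{\sigma_i,\dots,\sigma_J\}=\calJ^\sigma_i$, so the $\mathbf k^{(i)}$ here coincide with those in the statement. Part~1) is then immediate: $k^{(i)}_{\sigma_i}>0$ since $\sigma_i$ is in the support, and $k^{(\ell)}_{\sigma_i}=0$ for $\ell>i$ since the support of $\mathbf k^{(\ell)}$ lies in $\calA_\ell\not\ni\sigma_i$; telescoping the residual update and using $r^{(i+1)}_{\sigma_i}=r^{(i)}_{\sigma_i}-z^{(i)}k^{(i)}_{\sigma_i}=0$ (as $\sigma_i$ is a minimizer) gives $\rho_{\sigma_i}=\sum_{\ell=1}^{i}k^{(\ell)}_{\sigma_i}z^{(\ell)}$. (This is in the regime where the greedy does not exhaust all servers before placing the demand it wants; otherwise the equalities in~1) should be read with a capacity cap, which does not change the structure of the argument.)

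\textbf{Condition 2) and uniqueness.} The only freedom above is which minimizer $\sigma_i$ to pick; I would always take the minimizer of \emph{smallest type index}, treating a type already at zero residual as a minimizer to be removed as soon as it re-enters a support. This yields~2): if $j<j'$ and $\sigma_{j'}$ is exhausted by step $j$, i.e.\ $\rho_{\sigma_{j'}}=\sum_{\ell=1}^{j}k^{(\ell)}_{\sigma_{j'}}z^{(\ell)}$, then $\sigma_{j'}$ has residual $0$ entering step $j{+}1$, so at step $j$ (or the first later step $\le j'$ where it sits in a support) it is a minimizer passed over in favor of $\sigma_j$, forcing $\sigma_j<\sigma_{j'}$. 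For uniqueness, let $\sigma,(z^{(\ell)})$ be any solution of~1)--2) and induct on $i$: $\calJ^\sigma_i$ depends only on $\{\sigma_1,\dots,\sigma_{i-1}\}$, already fixed, so $\mathbf k^{(i)}=\textsc{MaxReward}(\calK^{\calJ^\sigma_i})$ is fixed. With $r^{(i)}_j:=\rho_j-\sum_{\ell<i}k^{(\ell)}_jz^{(\ell)}$, part~1) at index $i$ gives $r^{(i)}_{\sigma_i}=k^{(i)}_{\sigma_i}z^{(i)}$ with $k^{(i)}_{\sigma_i}>0$; and for every other $j\in\calJ^\sigma_i$ with $k^{(i)}_j>0$, occurring at some position $p>i$, part~1) at index $p$ (discarding the nonnegative terms $\ell=i{+}1,\dots,p$) yields $r^{(i)}_j\ge k^{(i)}_jz^{(i)}$, i.e.\ $r^{(i)}_j/k^{(i)}_j\ge z^{(i)}=r^{(i)}_{\sigma_i}/k^{(i)}_{\sigma_i}$. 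So $\sigma_i$ minimizes $r^{(i)}_j/k^{(i)}_j$; if some $j<\sigma_i$ also attained the minimum, equality would give $\rho_j=\sum_{\ell\le i}k^{(\ell)}_jz^{(\ell)}$, and~2) applied to $(i,p)$ with $\sigma_p=j$ would force $\sigma_i<j$, a contradiction. Hence $\sigma_i$ (and then $z^{(i)}$) is uniquely determined and the induction closes.

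\textbf{Main obstacle.} For generic $\bm\rho$ every minimum above is attained uniquely and~2) is vacuous; the real difficulty is the degenerate case, where several types are exhausted at the same ``fluid time'' and some of them are momentarily orphaned --- not in the support of the current $\textsc{MaxReward}$ configuration. Verifying that the smallest-index tie-break can be applied consistently across all such simultaneous exhaustion events, so that orphaned-and-exhausted types are ultimately removed in increasing index order, is the step I expect to require the most careful bookkeeping.
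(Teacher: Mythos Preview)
Your existence construction coincides with the paper's: set $\sigma_j$ and $z^{(j)}$ via the smallest-index $\argmin$ of the residual ratio $\bigl(\rho_i-\sum_{\ell<j}k^{(\ell)}_iz^{(\ell)}\bigr)/k^{(j)}_i$, then assert that conditions~1)--2) follow. Where you differ is in uniqueness. The paper supposes a second solution $\sigma'$, lets $j$ be the first position at which $\sigma_j\neq\sigma'_j$, and splits into three cases according to whether the residual ratio of type $\sigma'_j$ at step $j$ is larger than, equal to, or smaller than that of $\sigma_j$: the ``larger'' case contradicts the $\argmin$ definition of $\sigma_j$, the ``smaller'' case forces some later $z'^{(\ell)}<0$ for $\sigma'$, and the tie case invokes condition~2) on \emph{both} $\sigma$ and $\sigma'$ to manufacture the cycle $\sigma_j<\sigma'_j<\sigma_j$. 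Your inductive route---showing that any $\sigma$ satisfying 1)--2) is forced to place the smallest-index $\argmin$ at each position---reaches the same conclusion without the case split and is arguably cleaner; it also makes transparent that condition~2) is precisely what pins down the tie-break. Your flagged ``main obstacle'' is on target: the paper likewise only asserts (``we can verify'') that the smallest-index rule delivers condition~2), without working through the orphaned-exhausted-type scenario you describe. One small correction: your parenthetical about a ``capacity cap'' is misplaced, since Proposition~\ref{prop:index} involves no server budget---the truncation $\sum_i z^{(i)}\le 1$ enters only later, in Definition~\ref{def:gga}.
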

    \begin{proof}
        See Appendix~\ref{indexproof}.
    \end{proof}

    The Global Greedy Assignment is defined as follows
    \begin{definition}[Global Greedy Assignment]\label{def:gga}
        Define the index $\ggi \leq J$ for which
        \ben
        & \sum_{i=1}^{\ggi-1} z^{(i)}[\bm{\rho}] < 1, \quad
        \sum_{i=1}^{\ggi} z^{(i)}[\bm{\rho}] \ge 1,
        \een
         with the convention that $\ggi=J+1$ if $\sum_{i=1}^{J} z^{(i)}[\bm{\rho}] < 1$.
        The \textit{global greedy assignment} $\globlt{\mathbf{x}}^{(g)}[\bm{\rho}]$ is defined as
        \begin{equation}
        \globlt{x}^{(g)}_{\bk^{(i)}}[\bm{\rho}] =\begin{cases}
        z^{(i)}[\bm{\rho}],&  \text{for }  i < \ggi \\
        0, & \text{for }   i > \ggi \\
        1 - \sum_{j=1}^{i-1} \globlt{x}^{(g)}_{\bk^{(j)}}[\bm{\rho}], &
        \text{for }  i = \ggi,
        \end{cases}
        \end{equation} 
        where $\bk^{(i)}$ and $z^{(i)}[\bm{\rho}]$, $i=1, \ldots, J$, 
        were defined in Proposition~\ref{prop:index}, and ${\bk}^{(J+1)}:=\mathbf{0}$ (empty configuration). 
        We call the ordered configurations $\bk^{(i)}$, $i=1, \ldots, J+1$, the ``global greedy configurations'' of workload $\bm{\rho}$. For any configuration $\bk \in \calK$ not in 
        global greedy configurations, $\globlt{x}^{(g)}_{\bk}[\bm{\rho}]=0$. 
        When it is clear from the context, 
        the dependency $[\bm{\rho}]$ will be omitted.
    \end{definition}

    Since global greedy configurations ${\assgnk}^{(\ell)}$, $\ell=1, \ldots, J+1$, depend on $\bm{\rho}$, 
    the following configurations will come in handy when
    the analysis needs to be agnostic to $\bm{\rho}$.
    \begin{definition}[Greedy Configurations]\label{def:greedy_conf}
        The greedy configuration set $\mathcal{K}^{(g)}$ includes all configurations that are output of 
        \textsc{MaxReward}($\mathcal{K}^{\mathcal{J}^\prime}$)
        for any $\mathcal{J}^\prime \subseteq \mathcal{J}$.
        That is the set of all possible	configurations which 
        may be assigned by {\VPA}, and the empty configuration.
        We define $\gcnt := |\mathcal{K}^{(g)}|$. 	
        We enumerate configurations of $\mathcal{K}^{(g)}$ 
        as $\greedk^{(i)}$, for $i=1, \ldots, \gcnt$,
        such that $\greedk^{(i_1)} \succ \greedk^{(i_2)}$ if $i_1 < i_2$ (according to Definition~\ref{def:ordering}), and $\greedk^{(\gcnt)} = \mathbf{0}_J$.
    \end{definition}
    Notice that 
    $
    \{\bk^{(j)}, j=1, \ldots, J+1\} \subseteq \{\greedk^{(i)}, i=1, \ldots, \gcnt\},
    $
    and their order is consistent with Definition~\ref{def:ordering}, as defined below.
    \begin{definition}[Mapping global greedy to greedy]
        \label{def:mapgreedy}
        For any $j, j^\prime \in \{1, \ldots, J+1\}$, with
        $j < j^\prime$, there are indexes $\maptg{j}, \maptg{j^\prime}
        \in \{1, \ldots, \gcnt\}$, such that
        $\bk^{(j)} \equiv \greedk^{(\maptg{j})}$, 
        $\bk^{(j^\prime)} \equiv \greedk^{(\maptg{j^\prime})}$, and 
        $\maptg{j} < \maptg{j^\prime}$. 
        We also define $\gtcnt:=\maptg{\ggi}$ to be the index for which $\bk^{(\ggi)} \equiv \greedk^{(\gtcnt)}$.
    \end{definition}

    The following proposition states the connection between {\VPA} and Global Greedy Assignment $x^{(g)}_{\bk}[\bm{\rho}]$.
    \begin{proposition}\label{prop:greedy_conv}
        Let $\hat{\bm{X}}^L = $ {\VPA}$(L \bm{\rho})$.
        Then
        \begin{equation}
        \lim_{L \to \infty} \frac{\hat{X}_{\bk}^L}{L} =
        x^{(g)}_{\bk}[\bm{\rho}],\ \forall \bk \in \calK,
        \end{equation}
        where $x^{(g)}_{\bk}[\bm{\rho}]$ is the Global Greedy Assignment
        of Definition~\ref{def:gga}.
    \end{proposition}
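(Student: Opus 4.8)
The plan is to run {\VPA} on $\hat{\bm{Y}}^L = L\bm{\rho}$ and follow its execution, verifying that it reconstructs the Global Greedy Assignment of Definition~\ref{def:gga} up to an additive $O(1)$ error that disappears after dividing by $L$. Fix the tie-break in the $\argmin$ step to the smallest job label (I indicate below why the limit is tie-break-independent). Write $\mathbf{k}[i]$, $\hat X^L_{\mathbf{k}[i]}$, $\mathcal{J}[i]$, $N$, $\mathbf{r}$ for the configuration, its server count, the candidate set, the server budget, and the residual job vector at iteration $i$ of Algorithm~\ref{alg:vpa}; recall the canonical permutation $\sigma$, the configurations $\mathbf{k}^{(i)} = \textsc{MaxReward}(\mathcal{K}^{\mathcal{J}^\sigma_i})$ and nonnegative constants $z^{(i)}[\bm{\rho}]$ from Proposition~\ref{prop:index}, and the index $\ggi$ from Definition~\ref{def:gga}. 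The core is the inductive claim (on $i$): $\mathcal{J}[i] = \mathcal{J}^\sigma_i = \{\sigma_i,\ldots,\sigma_J\}$, hence $\mathbf{k}[i] = \mathbf{k}^{(i)}$; and $\hat X^L_{\mathbf{k}[i]} = L z^{(i)} + O(1)$ for $i < \ggi$, $\hat X^L_{\mathbf{k}[\ggi]} = L\big(1-\sum_{\ell<\ggi} z^{(\ell)}\big) + O(1)$, and $\hat X^L_{\mathbf{k}[i]} = O(1)$ for $i > \ggi$.

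For the inductive step, $\mathcal{J}[i] = \mathcal{J}^\sigma_i$ makes {\VPA} return $\mathbf{k}[i] = \mathbf{k}^{(i)}$. At the start of iteration $i \le \ggi$, $r_{\sigma_m} = L\rho_{\sigma_m} - \sum_{\ell<i}\hat X^L_{\mathbf{k}[\ell]}\,k^{(\ell)}_{\sigma_m}$ for $m \ge i$ (well-defined since $\sigma_m \in \mathcal{J}^\sigma_\ell$), so by the inductive hypothesis and \dref{eq:rho_xg}, $r_{\sigma_m} = L\sum_{\ell=i}^{m} z^{(\ell)} k^{(\ell)}_{\sigma_m} + O(1)$. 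As $k^{(i)}_{\sigma_i} > 0$ by Proposition~\ref{prop:index} and the ceiling perturbs by $O(1)$, the tested value $\lceil r_{\sigma_m}/k^{(i)}_{\sigma_m}\rceil$ is $L z^{(i)} + O(1)$ for $m = i$ and $L z^{(i)} + L\Delta_m + O(1)$ for $m > i$, where $\Delta_m := \sum_{\ell=i+1}^{m} z^{(\ell)} k^{(\ell)}_{\sigma_m}/k^{(i)}_{\sigma_m} \ge 0$. Hence for large $L$ the minimum is attained at $\sigma_i$ and at those $\sigma_m$ with $\Delta_m = 0$; the latter have $z^{(m)} k^{(m)}_{\sigma_m} = 0$, hence $z^{(m)} = 0$. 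Moreover $\Delta_m = 0$ ($m > i$) is, via \dref{eq:rho_xg}, equivalent to the hypothesis of part~2 of Proposition~\ref{prop:index} for the pair $(i,m)$, so that part forces the label inequality $\sigma_i < \sigma_m$; thus the smallest-label minimizer is $\sigma_i$, so $j^\star = \sigma_i$ and $\mathcal{J}[i+1] = \mathcal{J}^\sigma_{i+1}$. Finally $\hat X^L_{\mathbf{k}[i]} = \min\big(\lceil r_{\sigma_i}/k^{(i)}_{\sigma_i}\rceil,\,N\big)$ with $N = L\big(1-\sum_{\ell<i}z^{(\ell)}\big)+O(1)$, and comparing $z^{(i)}$ with $1-\sum_{\ell<i}z^{(\ell)}$ through the defining inequalities of $\ggi$ gives the three cases; for $i > \ggi$, $N$ is already $O(1)$, so $\hat X^L_{\mathbf{k}[i]} = O(1)$ and the candidate set still shrinks by one type.

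Dividing by $L$ and letting $L \to \infty$ yields $\hat X^L_{\mathbf{k}^{(i)}}/L \to z^{(i)}$ for $i < \ggi$, $\to 1 - \sum_{\ell<\ggi} z^{(\ell)}$ for $i = \ggi$, $\to 0$ for $i > \ggi$; any configuration {\VPA} never outputs has $\hat X^L_{\mathbf{k}} = 0$, and when $\sum_{\ell} z^{(\ell)} < 1$ the unassigned servers form the empty configuration with fraction $1 - \sum_{\ell} z^{(\ell)}$. Since $\mathbf{k}^{(1)},\ldots,\mathbf{k}^{(J+1)}$ are pairwise distinct (each non-empty $\mathbf{k}^{(i)}$ uses $\sigma_i$, used by no later configuration), this matches $x^{(g)}_{\mathbf{k}}[\bm{\rho}]$ configuration by configuration. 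The main obstacle I foresee is making the finite-$L$ bookkeeping airtight in the presence of (near-)ties and ceiling rounding: even under the smallest-label rule, rounding can make {\VPA} momentarily drop a ``tied'' type $\sigma_m$ before $\sigma_i$, deviating from this reference execution. I would dispel this by noting that such $\sigma_m$ have $z^{(m)} = 0$ (so their ``true'' mass is $O(1)$ already), that after such a step $\sigma_i$ retains only $O(1)$ residual, and that a type with $O(1)$ residual cannot affect the $\Theta(L)$-scale assignment of any configuration; hence the deviation leaves the limit unchanged. The rest --- the residual algebra, the $O(1)$ ceiling slack, and the budget cutoff at $\ggi$ --- is routine.
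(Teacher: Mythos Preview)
Your approach is essentially the same as the paper's: both track the execution of {\VPA} on $L\bm{\rho}$ and show by induction on the iteration index that the assigned server counts match the global greedy values up to bounded additive error. The paper packages this as an explicit bound $\lvert \hat X^L_{\greedk^{(i)}} - L x^{(g)}_{\greedk^{(i)}}\rvert \le (\Kmax+1)^{i-1}$ (its Lemma~\ref{lem:hat_conv}), whereas you carry $O(1)$ slack; the inductive mechanics---comparing the ceiling quotients to the exact-fit equations of the global greedy and propagating the error through the recursion---are the same. Your use of part~2 of Proposition~\ref{prop:index} to pin down which type is removed at a tie is in fact more explicit than the paper, which asserts the analogous step (``if \dref{eq:Xa1} holds then \dref{eq:Xb1} holds for the same job type $j$'') and omits the details. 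The residual worry you flag about rounding causing {\VPA} to drop a tied $\sigma_m$ before $\sigma_i$ is real but, as you note, harmless in the limit since such $\sigma_m$ carry $z^{(m)}=0$; the paper does not address this more carefully either.
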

    \begin{proof}
        See Appendix~\ref{sec:greedy_conv}.
    \end{proof}
    Note that clearly $\mathbf{\globlt{x}}^{(g)}[\bm{\rho}]$ is a feasible solution for optimization (\ref{eq:opt1b}) and it is easy to see that its corresponding objective value is
    \begin{equation}\label{eq:greedyobjective}
    {U}^{(g)}[\bm{\rho}] :=
    \sum_{j=1}^J u_j 
    \sum_{\ell=1}^J k^{(\ell)}_j \globlt{x}^{(g)}_{\mathbf{k^{( \ell)}}}[\bm{\rho}].
    \end{equation}
    It is also easy to see that in optimization~(\ref{eq:opt1b}) we can replace the inequality in (\ref{eq:opt1b_wl_a}) with equality and the optimal value will not change.
    Let $\mathbf{x}^\star[\bm{\rho}]$ be one such optimal solution to optimization (\ref{eq:opt1b}) for workload $\bm{\rho}$. 
    Then the optimal objective value is
    \begin{equation}
    {U}^{\star}[\bm{\rho}] :=
    \sum_{j\in \calJ} u_j 
    \sum_{\bk \in \mathcal{K}} k_j x^{\star}_{\bk}[\bm{\rho}].
    \end{equation}
    The following corollary is immediate from Proposition~\ref{prop:greedy_conv}.
    \begin{corollary}\label{cor:relation}
    	Let $F^{\VPA}(L,\bm{\rho} L)$ be the total reward of {\VPA} in the system with $L$ servers given reference workload $\hat{\bm{Y}}^L = \bm{\rho} L$. 
        Then
    	\ben
        \lim_{L \to \infty}	\frac{F^{\VPA}(L,\bm{\rho} L) }{F^{\star}(L,\bm{\rho} L)}=\frac{{U}^{(g)}[\bm{\rho}]}{{U}^{\star}[\bm{\rho}]}.
    	\een  
    \end{corollary}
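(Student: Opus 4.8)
\textbf{Proof plan for Corollary~\ref{cor:relation}.} This is a limit-exchange argument resting on Proposition~\ref{prop:greedy_conv} together with the convergence $\frac{1}{L}F^\star(L,\bm{\rho}L)\to U^\star[\bm{\rho}]$ already recorded in the text preceding optimization~(\ref{eq:opt1b}). The first step is to make the numerator explicit. {\VPA} returns the assignment $\hat{\bm{X}}^L$, and given this assignment the number of type-$j$ jobs the system serves is $\min(\rho_j L,\ \sum_{\bk\in\calK}\hat{X}^L_{\bk}\assgnki_j)$; hence $F^{\VPA}(L,\bm{\rho}L)=\sum_{j\in\calJ}u_j\min(\rho_j L,\ \sum_{\bk\in\calK}\hat{X}^L_{\bk}\assgnki_j)$, which is exactly the value of the linear program~(\ref{eq:opt1c}) obtained by fixing $\mathbf{X}=\hat{\bm{X}}^L$ (padding unused servers with the empty configuration, which adds nothing). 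Dividing by $L$, using that $\calK$ is finite, that $\hat{Y}^L_j/L=\rho_j$, that $\hat{X}^L_{\bk}/L\to x^{(g)}_{\bk}[\bm{\rho}]$ for every $\bk\in\calK$ by Proposition~\ref{prop:greedy_conv}, and invoking continuity of $\min$ and of finite sums, I obtain $\frac{1}{L}F^{\VPA}(L,\bm{\rho}L)\to\sum_{j\in\calJ}u_j\min(\rho_j,\ \sum_{\ell=1}^{J}k^{(\ell)}_j\,x^{(g)}_{\bk^{(\ell)}}[\bm{\rho}])$, since only the global greedy configurations $\bk^{(\ell)}$ carry nonzero weight.

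The second step is to show the inner $\min$ is always attained by the slot term, i.e.\ $\sum_{\ell=1}^{J}k^{(\ell)}_j\,x^{(g)}_{\bk^{(\ell)}}[\bm{\rho}]\le\rho_j$ for every $j$. Writing $j=\perm{m}$ for the permutation $\sigma$ of Proposition~\ref{prop:index}: since $\bk^{(\ell)}=\textsc{MaxReward}(\mathcal{K}^{\mathcal{J}^{\sigma}_{\ell}})$ and $\perm{m}\notin\mathcal{J}^{\sigma}_{\ell}$ whenever $\ell>m$, we have $k^{(\ell)}_{\perm{m}}=0$ for $\ell>m$; and $x^{(g)}_{\bk^{(\ell)}}[\bm{\rho}]\le z^{(\ell)}[\bm{\rho}]$ for every $\ell\in\{1,\dots,J\}$ (equality for $\ell<\ggi$, the value $1-\sum_{i<\ggi}z^{(i)}[\bm{\rho}]\le z^{(\ggi)}[\bm{\rho}]$ at $\ell=\ggi$ by the definition of $\ggi$ in Definition~\ref{def:gga}, and $0$ for $\ell>\ggi$). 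Combining these with the identity $\rho_{\perm{m}}=\sum_{\ell=1}^{m}k^{(\ell)}_{\perm{m}}z^{(\ell)}[\bm{\rho}]$ from~(\ref{eq:rho_xg}) gives $\sum_{\ell}k^{(\ell)}_{\perm{m}}x^{(g)}_{\bk^{(\ell)}}[\bm{\rho}]=\sum_{\ell=1}^{m}k^{(\ell)}_{\perm{m}}x^{(g)}_{\bk^{(\ell)}}[\bm{\rho}]\le\sum_{\ell=1}^{m}k^{(\ell)}_{\perm{m}}z^{(\ell)}[\bm{\rho}]=\rho_{\perm{m}}$. Hence the limit from the first step equals $\sum_{j\in\calJ}u_j\sum_{\ell}k^{(\ell)}_j x^{(g)}_{\bk^{(\ell)}}[\bm{\rho}]$, which is precisely $U^{(g)}[\bm{\rho}]$ as defined in~(\ref{eq:greedyobjective}).

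Finally, combining $\frac{1}{L}F^{\VPA}(L,\bm{\rho}L)\to U^{(g)}[\bm{\rho}]$ with $\frac{1}{L}F^\star(L,\bm{\rho}L)\to U^\star[\bm{\rho}]$, and using that $U^\star[\bm{\rho}]>0$ (each VM type fits in an empty server and $\bm{\rho}>0$, so a positive-reward configuration is feasible in a positive fraction of servers), the denominator is bounded away from $0$ for all large $L$, so the ratio of the two limits yields $\lim_{L\to\infty}F^{\VPA}(L,\bm{\rho}L)/F^\star(L,\bm{\rho}L)=U^{(g)}[\bm{\rho}]/U^\star[\bm{\rho}]$. The one subtle point, and the one Proposition~\ref{prop:greedy_conv} is really responsible for, is that the $\lceil\cdot\rceil$ rounding inside {\VPA} can make the slot count of the ``closing'' job type at each of the $\le J$ iterations overshoot its reference $\rho_j L$ by an $O(1)$ additive amount; this is invisible after dividing by $L$, but it is why one cannot identify $F^{\VPA}(L,\bm{\rho}L)$ with the LP value at $x^{(g)}$ for finite $L$. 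Confirming $F^\star(L,\bm{\rho}L)=\Theta(L)$ so that the ratio-of-limits step is legitimate is the remaining bookkeeping item; I expect no genuine difficulty beyond these two points.
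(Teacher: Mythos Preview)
Your proposal is correct and follows essentially the same route the paper takes: the paper declares the corollary ``immediate from Proposition~\ref{prop:greedy_conv}'' and, just before it, records that $\mathbf{x}^{(g)}[\bm{\rho}]$ is feasible for~(\ref{eq:opt1b}) with objective value $U^{(g)}[\bm{\rho}]$ as in~(\ref{eq:greedyobjective}), which together with $\frac{1}{L}F^\star(L,\bm{\rho}L)\to U^\star[\bm{\rho}]$ gives the result. Your second step (verifying $\sum_\ell k^{(\ell)}_j x^{(g)}_{\bk^{(\ell)}}\le\rho_j$ via the permutation $\sigma$ and the inequality $x^{(g)}_{\bk^{(\ell)}}\le z^{(\ell)}$) is a clean justification of the paper's ``easy to see'' claim that~(\ref{eq:greedyobjective}) is indeed the objective value, so you have simply made explicit what the paper leaves implicit.
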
	
    The theorem below bounds the above ratio.
    \begin{theorem}\label{thm:optimality}
        The global greedy assignment 
        ${\globlt{\mathbf{x}}}^{(g)}[\bm{\rho}]$ provides at least $\frac{1}{2}$ 
        of the optimal normalized reward, i.e., 
        $\frac{{U}^{(g)}[\bm{\rho}]}{{U}^{\star}[\bm{\rho}]}\geq \frac{1}{2}$, $\forall \bm{\rho}\geq0$.
    \end{theorem}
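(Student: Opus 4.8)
The plan is to run an induction that follows the $J$ iterations of {\VPA}, i.e.\ the ordered greedy configurations $\bk^{(1)},\bk^{(2)},\ldots$ of Definition~\ref{def:gga}. For $m=1,\ldots,J+1$ let $P^{(m)}$ be the \emph{residual} instance of optimization~(\ref{eq:opt1b}) obtained by restricting the job types to $\calT^{(m)}:=\{\perm m,\ldots,\perm J\}$ and the configurations to $\calK^{\calT^{(m)}}$, taking workloads $w^{(m)}_j=\rho_j-\sum_{\ell<m}z^{(\ell)}[\bm{\rho}]\,k^{(\ell)}_j$ for $j\in\calT^{(m)}$, and replacing the constant $1$ in~(\ref{eq:opt1b_s_a}) by $b^{(m)}:=1-\sum_{\ell<m}z^{(\ell)}[\bm{\rho}]$. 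Let $U^{\star}_m$ be its optimal value and $U^{(g)}_m$ the value~(\ref{eq:greedyobjective}) attained on $P^{(m)}$ by the global greedy assignment. Since $P^{(1)}$ is exactly~(\ref{eq:opt1b}) (so $U^{(g)}_1=U^{(g)}[\bm{\rho}]$, $U^{\star}_1=U^{\star}[\bm{\rho}]$), it suffices to prove $U^{(g)}_m\ge\frac{1}{2}U^{\star}_m$ for every $m$, by downward induction on $|\calT^{(m)}|$; the base cases are $\calT^{(m)}=\varnothing$ (trivial) and $m=\ggi$ (handled next).

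Two facts drive the induction. First, any feasible $(\mathbf x,\mathbf y)$ of $P^{(m)}$ satisfies, by Constraint~(\ref{eq:opt1b_wl_a}) and $\bk^{(m)}=\textsc{MaxReward}(\calK^{\calT^{(m)}})$, the bound $\sum_j u_j y_j\le\sum_{\bk}x_{\bk}\,U(\bk)\le b^{(m)}\max_{\bk\in\calK^{\calT^{(m)}}}U(\bk)=b^{(m)}U(\bk^{(m)})$; in particular $U^{\star}_m\le b^{(m)}U(\bk^{(m)})$. Second, by the definition of $\ggi$: if $m=\ggi$, the global greedy assignment puts all the mass $b^{(m)}$ on $\bk^{(m)}$, so $U^{(g)}_m=b^{(m)}U(\bk^{(m)})\ge U^{\star}_m$, which settles that base case; and if $m<\ggi$, the greedy assignment uses the full mass $z^{(m)}[\bm{\rho}]$ on $\bk^{(m)}$ and thereafter coincides with the global greedy assignment on $P^{(m+1)}$, so $U^{(g)}_m=z^{(m)}[\bm{\rho}]\,U(\bk^{(m)})+U^{(g)}_{m+1}$.

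The heart of the argument is the inequality $U^{\star}_{m+1}\ge U^{\star}_m-2\,z^{(m)}[\bm{\rho}]\,U(\bk^{(m)})$ for $m<\ggi$, which I would prove by transforming an optimal solution $(\mathbf x^{\star},\mathbf y^{\star})$ of $P^{(m)}$ into a feasible solution of $P^{(m+1)}$ in three moves. (i) Scale $(\mathbf x^{\star},\mathbf y^{\star})$ by $\lambda:=1-z^{(m)}[\bm{\rho}]/b^{(m)}$, so the server mass falls to $b^{(m+1)}$; by the first fact the reward drops by $(1-\lambda)U^{\star}_m\le z^{(m)}[\bm{\rho}]\,U(\bk^{(m)})$. (ii) For each $j\in\calT^{(m+1)}$ cap the scaled $y_j$ at $w^{(m+1)}_j=w^{(m)}_j-z^{(m)}[\bm{\rho}]\,k^{(m)}_j$ (and put $y_{\perm m}=0$, forced since $w^{(m+1)}_{\perm m}=0$); this keeps Constraint~(\ref{eq:opt1b_wl_a}) valid, and because $\lambda y^{\star}_j\le y^{\star}_j\le w^{(m)}_j$ it costs at most $\sum_j u_j\big(w^{(m)}_j-w^{(m+1)}_j\big)=\sum_j u_j\,z^{(m)}[\bm{\rho}]\,k^{(m)}_j=z^{(m)}[\bm{\rho}]\,U(\bk^{(m)})$. (iii) Replace each used configuration $\bk$ by the configuration with its $\perm m$-th coordinate zeroed; by the monotonicity of $\calK$ this lies in $\calK^{\calT^{(m+1)}}$, and it changes neither the server mass nor any $y_j$ with $j\neq\perm m$. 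The result is feasible for $P^{(m+1)}$ with objective at least $U^{\star}_m-2\,z^{(m)}[\bm{\rho}]\,U(\bk^{(m)})$. Feeding this into the greedy decomposition, $U^{(g)}_m=z^{(m)}[\bm{\rho}]\,U(\bk^{(m)})+U^{(g)}_{m+1}\ge z^{(m)}[\bm{\rho}]\,U(\bk^{(m)})+\frac{1}{2}U^{\star}_{m+1}\ge z^{(m)}[\bm{\rho}]\,U(\bk^{(m)})+\frac{1}{2}\big(U^{\star}_m-2\,z^{(m)}[\bm{\rho}]\,U(\bk^{(m)})\big)=\frac{1}{2}U^{\star}_m$, closing the induction; at $m=1$ this is $U^{(g)}[\bm{\rho}]/U^{\star}[\bm{\rho}]\ge\frac{1}{2}$.

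The step I expect to be the main obstacle is precisely the inequality $U^{\star}_{m+1}\ge U^{\star}_m-2\,z^{(m)}[\bm{\rho}]\,U(\bk^{(m)})$: one has to see that passing from $P^{(m)}$ to $P^{(m+1)}$ gives up server mass $z^{(m)}[\bm{\rho}]$ — worth at most $z^{(m)}[\bm{\rho}]\,U(\bk^{(m)})$ since no server of the restricted instance earns more than $U(\bk^{(m)})$ — \emph{and} workload $z^{(m)}[\bm{\rho}]\,\bk^{(m)}$ — worth at most $\sum_j u_j\,z^{(m)}[\bm{\rho}]\,k^{(m)}_j=z^{(m)}[\bm{\rho}]\,U(\bk^{(m)})$ — for a total loss of $2\,z^{(m)}[\bm{\rho}]\,U(\bk^{(m)})$, exactly twice the reward the greedy step banks; this is why the factor $\frac{1}{2}$ survives the induction (and cannot be improved in general). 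A small but essential point is that both the bound $\max_{\bk\in\calK^{\calT^{(m)}}}U(\bk)=U(\bk^{(m)})$ and the containment in move (iii) must be taken over the \emph{restricted} set $\calK^{\calT^{(m)}}$, and the monotonicity assumption on $\calK$ is exactly what makes move (iii) legitimate.
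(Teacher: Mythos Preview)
Your inductive peeling argument is correct, but it is a genuinely different route from the paper's. The paper gives a direct two--line split of $U^\star[\bm\rho]$: for the ``early'' types $\perm{1},\ldots,\perm{\ggi-1}$ it uses the workload constraint $\sum_{\bk}x^\star_{\bk}k_{\perm j}\le\rho_{\perm j}$ to bound $\sum_{j<\ggi}\sum_{\bk}x^\star_{\bk}k_{\perm j}u_{\perm j}\le\sum_{j<\ggi}\rho_{\perm j}u_{\perm j}\le U^{(g)}[\bm\rho]$; for the ``late'' types $\perm{\ggi},\ldots,\perm J$ it uses that $\sum_{j\ge\ggi}k_{\perm j}u_{\perm j}\le U(\bk^{(\ggi)})$ for every $\bk\in\calK$ (since $\bk^{(\ggi)}$ is the \textsc{MaxReward} configuration on those types), together with $\sum_{\bk}x^\star_{\bk}=1$, to bound the remaining sum by $U(\bk^{(\ggi)})\le U^{(g)}[\bm\rho]$. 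Adding gives $U^\star\le 2U^{(g)}$ in one shot, with no induction.

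Your approach instead peels one greedy configuration at a time and tracks the residual LP. The crux is your inequality $U^\star_{m+1}\ge U^\star_m-2z^{(m)}U(\bk^{(m)})$, which you justify by the three--move transformation; all three moves are sound (the scaling uses $m<\ggi$ so $\lambda\in[0,1)$, the capping bound uses $\lambda y_j^\star\le w^{(m)}_j$ and $w^{(m)}_j-w^{(m+1)}_j=z^{(m)}k^{(m)}_j\ge0$, and the zeroing relies on monotonicity of $\calK$). The paper's argument is shorter and more elementary for this specific statement; yours is more structural and makes the ``twice the banked reward'' accounting explicit at every step, which is the kind of framework that tends to generalize to other greedy variants or to tighter ratios under extra assumptions.
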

    \begin{proof}

    	Consider the permutation of job types according to Proposition~\ref{prop:index}. 
        By the global greedy definition and the feasibility of $\mathbf{x}^\star[\bm{\rho}]$, for any job type $\perm{j}$, $j=1, \ldots, \ggi-1$, we have
        \begin{equation}
        \sum_{\mathbf{k} \in \mathcal{K}} x^\star_{\mathbf{k}} 
        k_\perm{j} \le \sum_{\ell=1}^j \globlt{x}^{(g)}_{\mathbf{k}^{( \ell)}} 
        k^{( \ell)}_\perm{j} = \rho_\perm{j},
        \end{equation}
        from which it follows that
        \begin{equation}\label{eq:sumIM1}
        \sum_{j=1}^{\ggi-1} \sum_{\mathbf{k} \in \mathcal{K}} x^\star_{\bk} 
        k_\perm{j} u_\perm{j} \le \sum_{j=1}^{\ggi-1} \sum_{\ell=1}^j 
        \globlt{x}^{(g)}_{\mathbf{k}^{( \ell)}} k^{(\ell)}_\perm{j} u_\perm{j} = 
        \sum_{j=1}^{\ggi-1} \rho_\perm{j} u_\perm{j} \le {U}^{(g)}[\bm{\rho}].
        \end{equation}
        Also for the job types $\perm{j}$, for $j = \ggi, \ldots, J$, we have
        \begin{equation}\label{eq:sumI}
        \begin{aligned}
        &\sum_{j=\ggi}^J \sum_{\mathbf{k} \in \mathcal{K}} x^\star_{\bk} 
        k_\perm{j} u_\perm{j} 
        = \sum_{\mathbf{k} \in \mathcal{K}} x^\star_{\bk} 
        \sum_{j=\ggi}^J k_\perm{j} u_\perm{j} \stackrel{(a)}{\le} \\
        &\argmax_{\mathbf{k} \in \mathcal{K}} 
        \sum_{j=\ggi}^J k_\perm{j} u_\perm{j}
        \stackrel{(b)}{=} \sum_{j=\ggi}^J k^{( \ggi)}_\perm{j} u_\perm{j} \stackrel{(c)}{\le} {U}^{(g)}[\bm{\rho}].
        \end{aligned}
        \end{equation}
        where (a) is due to the fact that $\sum_{\bk \in \mathcal{K}} x^\star_{\bk}= 1$, 
        (b) is by the definition of $\assgnk^{(\ggi)}$, 
        and (c) is because ${U}^{(g)}[\bm{\rho}]$ is a convex combination of rewards of $\assgnk^{(1)}, \ldots, \assgnk^{(\ggi)}$, 
        which all have a reward no less than that of $\assgnk^{(\ggi)}$.
        Then adding (\ref{eq:sumIM1}) and (\ref{eq:sumI}), we get
        \[
        {U}^{\star}[\bm{\rho}] = 
        \sum_{j=1}^J \sum_{\bk \in \mathcal{K}} x^\star_{\bk} 
        k_\perm{j} u_\perm{j}
        \le 2 {U}^{(g)}[\bm{\rho}]. \qedhere
        \]
    \end{proof}
    
    Theorem~\ref{thm:optimality} can be improved when job types and rewards satisfy a monotone greedy property described next.
    \begin{definition}\label{def:monotone greedy}
        We say the job types and the rewards have \textit{monotone greedy} 
        property if for any two instances of the optimization (\ref{eq:opt1b}) with
        $\bm{\rho}_1 \ge \bm{\rho}_2$, ${U}^{(g)}[\bm{\rho}_1] \ge {U}^{(g)}[\bm{\rho}_2]$.
    \end{definition}
    
    It is easy to verify that any system with two job types always has the property in Definition~\ref{def:monotone greedy}. 
    However, in general the property depends on the profile of jobs types and their rewards, and might not hold for adversarial profiles.
    The next theorem describes the improved bound when the monotone greedy property holds.
    \begin{theorem}\label{thm:optimality2}
        If job types and rewards satisfy the \textit{monotone greedy} property,
        then, for any $\bm{\rho}$, 
        $\frac{{U}^{(g)}[\bm{\rho}]}{{U}^{\star}[\bm{\rho}]} \ge 1 - 1/e$.
    \end{theorem}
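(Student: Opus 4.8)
The plan is to analyze the global greedy assignment \emph{one configuration at a time} and to show that each greedy step closes a constant fraction of the gap to the optimum, so that a geometric--decay argument yields the $1-1/e$ factor. Fix $\bm{\rho}$, let $\sigma$ be the canonical permutation of Proposition~\ref{prop:index}, let $\bk^{(1)},\dots,\bk^{(\ggi)}$ be the global greedy configurations with weights $x^{(g)}_{\bk^{(\ell)}}[\bm{\rho}]$, and set $g_i:=\sum_{\ell=1}^{i}x^{(g)}_{\bk^{(\ell)}}[\bm{\rho}]\,U(\bk^{(\ell)})$, the reward accumulated by greedy after its first $i$ configurations, so that $g_0=0$, $g_{\ggi}=U^{(g)}[\bm{\rho}]$ and $\sum_{\ell=1}^{\ggi}x^{(g)}_{\bk^{(\ell)}}[\bm{\rho}]=1$. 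The whole argument rests on the \emph{marginal inequality}
\[
U(\bk^{(i)})\ \ge\ U^{\star}[\bm{\rho}]\ -\ g_{i-1},\qquad i=1,\dots,\ggi.
\]

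To prove it I would take an optimal solution $\mathbf{x}^{\star}[\bm{\rho}]$ of the equality version of optimization~(\ref{eq:opt1b}), with served jobs $y^{\star}_j=\sum_{\bk\in\mathcal{K}}k_j x^{\star}_{\bk}[\bm{\rho}]\le\rho_j$, and split $U^{\star}[\bm{\rho}]=\sum_{j=1}^{J}u_{\sigma_j}y^{\star}_{\sigma_j}$ at the index $i$. For $j<i$ (the job types greedy has already exhausted), the first item of Proposition~\ref{prop:index} gives $y^{\star}_{\sigma_j}\le\rho_{\sigma_j}=\sum_{\ell\le j}k^{(\ell)}_{\sigma_j}\,x^{(g)}_{\bk^{(\ell)}}[\bm{\rho}]$; summing over $j<i$, exchanging the order of summation, and using that $\bk^{(\ell)}$ is supported on $\{\sigma_\ell,\dots,\sigma_J\}$ bounds this part by $\sum_{\ell<i}x^{(g)}_{\bk^{(\ell)}}[\bm{\rho}]\,U(\bk^{(\ell)})=g_{i-1}$. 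For $j\ge i$, since the restriction of any $\bk\in\mathcal{K}$ to the types $\{\sigma_i,\dots,\sigma_J\}$ is again feasible (monotonicity of $\mathcal{K}$) and $\bk^{(i)}$ is the maximum--reward configuration in $\mathcal{K}^{\mathcal{J}^{\sigma}_i}$, one has $\sum_{j\ge i}u_{\sigma_j}k_{\sigma_j}\le U(\bk^{(i)})$ for every $\bk$, so this part is at most $\bigl(\sum_{\bk}x^{\star}_{\bk}[\bm{\rho}]\bigr)U(\bk^{(i)})=U(\bk^{(i)})$. The refinement over the proof of Theorem~\ref{thm:optimality} is that the first half of the split is charged against the \emph{running} greedy value $g_{i-1}$ rather than against all of $U^{(g)}[\bm{\rho}]$, and it is here that one must pass to the residual workload obtained by deleting $\sigma_1,\dots,\sigma_{i-1}$; the \emph{monotone greedy} hypothesis is what allows the greedy value of that residual instance to be compared with $U^{(g)}[\bm{\rho}]$ in the direction needed, whereas without it each half can only be bounded by $U^{(g)}[\bm{\rho}]$, which is exactly the source of the weaker factor $\tfrac12$.

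Granting the marginal inequality, the rest is routine. Writing $p_i:=x^{(g)}_{\bk^{(i)}}[\bm{\rho}]\in[0,1]$ with $\sum_{i=1}^{\ggi}p_i=1$, and using $g_{i-1}\le U^{(g)}[\bm{\rho}]\le U^{\star}[\bm{\rho}]$ (feasibility of the greedy assignment for (\ref{eq:opt1b})), the inequality rearranges to $U^{\star}[\bm{\rho}]-g_i\le(1-p_i)\bigl(U^{\star}[\bm{\rho}]-g_{i-1}\bigr)$, hence $U^{\star}[\bm{\rho}]-U^{(g)}[\bm{\rho}]\le U^{\star}[\bm{\rho}]\prod_{i=1}^{\ggi}(1-p_i)$; maximizing $\prod_i(1-p_i)$ over the simplex gives $p_i\equiv 1/\ggi$ and $\prod_i(1-p_i)\le(1-1/\ggi)^{\ggi}\le e^{-1}$, so $U^{(g)}[\bm{\rho}]\ge(1-1/e)\,U^{\star}[\bm{\rho}]$. (Equivalently, parametrizing greedy by the cumulative server fraction $t\in[0,1]$ as a non-increasing reward density $r(t)$ with $U^{(g)}[\bm{\rho}]=\int_0^1 r$, the marginal inequality reads $r(t)\ge U^{\star}[\bm{\rho}]-\int_0^t r$, and a Gr\"onwall comparison gives the same conclusion.) I expect the main obstacle to be a fully rigorous proof of the marginal inequality in the supercritical regime: controlling the interaction between the $\bm{\rho}$-dependent greedy ordering $\sigma$ and an optimal packing that may serve some job type more heavily than any single greedy configuration supplies. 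This is precisely the case the monotone greedy property is meant to tame, and where, when the property fails, the achievable ratio drops back to the $\tfrac12$ of Theorem~\ref{thm:optimality}.
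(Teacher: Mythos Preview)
Your approach is correct and genuinely different from the paper's. The paper first invokes the monotone greedy hypothesis to replace $\bm{\rho}$ by the packable workload $\bm{\rho}^\star:=\sum_{\bk}\bk\,x^\star_{\bk}[\bm{\rho}]$ (so that $U^{(g)}[\bm{\rho}]\ge U^{(g)}[\bm{\rho}^\star]$ while $U^{\star}[\bm{\rho}]=U^{\star}[\bm{\rho}^\star]$), and only then proves a chain $\sum_{j\ge i}q_jW^{(j)}\le W^{(i)}$ (with $q_i=z^{(i)}$, $W^{(i)}=U(\bk^{(i)})$) that relies on $\bm{\rho}$ being packable; this is wrapped into a product formula $U^{(g)}/U^{\star}=1-\prod_i(1-p_i/b_i)$ with auxiliary constants $b_i\in[0,1]$. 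Your marginal-inequality-plus-telescoping route is more direct and avoids the $b_i$'s.

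The surprise is that the proof you sketched for the marginal inequality is already complete and \emph{uses the monotone greedy hypothesis nowhere}. In your split $U^{\star}=\sum_{j<i}+\sum_{j\ge i}$, the first half needs only $y^{\star}_{\sigma_j}\le\rho_{\sigma_j}$, Proposition~\ref{prop:index}, and $z^{(\ell)}=x^{(g)}_{\bk^{(\ell)}}$ for $\ell<i\le\ggi$; the second half needs only that $\mathbf{y}^\star=\sum_\bk x^\star_\bk\,\bk$ is a convex combination of feasible configurations and that restricting any $\bk$ to $\{\sigma_i,\dots,\sigma_J\}$ stays in $\mathcal{K}$. The paper needs monotone greedy precisely because in \emph{their} analogue of the second half they bound $\sum_{j\ge i}u_{\sigma_j}\rho_{\sigma_j}$, which requires $\bm{\rho}$ itself to be packable; by working with the already-packable $\mathbf{y}^\star$ you sidestep this entirely. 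So your paragraph about ``passing to the residual workload'' and the anticipated ``main obstacle in the supercritical regime'' is misdiagnosed: there is no obstacle, and your argument as written establishes $U^{(g)}[\bm{\rho}]/U^{\star}[\bm{\rho}]\ge 1-1/e$ for \emph{every} $\bm{\rho}$ --- which is exactly the strengthening the paper leaves open after Proposition~\ref{prop:bound_tight}.
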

    \begin{proof}
        Define a workload $\label{eq:rhost}
        \bm{\rho^\star} := \sum_{\bk \in \mathcal{K}}
        \bk x^\star_{\bk}[\bm{\rho}].
        $
        We notice that ${U}^{\star}[\bm{\rho}] = {U}^{\star}[\bm{\rho^\star}]$ in LP~(\ref{eq:opt1b}). 
        Also by the monotone greedy property, ${U}^{(g)}[\bm{\rho}] \ge {U}^{(g)}[\bm{\rho^\star}]$, since $\bm{\rho} \ge \bm{\rho^\star}$. 
        Hence, it suffices to prove the theorem for instances where $\bm{\rho} = \bm{\rho}^\star$ or in other words, instances for which, in the optimal solution, workload fits exactly in servers.
        
        Consider now the projection of the workload $\bm{\rho^\star} = \bm{\rho}$ onto the global greedy
        configuration space $\{\mathbf{k}^{(i)}[\bm{\rho}], 
        \ i=1, \ldots, J \}$. 
        Since these configurations are independent, we can write
        \be\label{eq:rhost2}
        & \bm{\rho^\star} = \bm{\rho}=\sum_{i=1}^J z^{(i)}[\bm{\rho}] 
        {\bk^{(i)}},
        \ee
        for $z^{(i)}[\bm{\rho}]$ introduced in Proposition~\ref{prop:index}.
        For notational compactness, define $q_i = z^{(i)}[\bm{\rho}]$,
        $i=1, \ldots, J$, and
        $p_i = \globlt{x}^{(g)}_{\bk^{( i)}}[\bm{\rho}]$, $i=1, \ldots, \ggi$, and let $W^{(i)} := U(\bk^{(i)}) = \sum_{j=i}^J u_\perm{j} k^{(i)}_\perm{j}$. 

        Then,
        \begin{align}
        \sum_{j=i}^J q_j W^{(j)} &= 
        \sum_{j=i}^J u_\perm{j} \sum_{\ell=i}^j q_{\ell} k^{( \ell)}_\perm{j}
        \stackrel{(a)}{\le} \sum_{j=i}^J \rho_\perm{j} u_\perm{j} \nonumber \\
        &\stackrel{(b)}{\le} \sum_{j=i}^J k^{(i)}_\perm{j} u_\perm{j} = W^{(i)}.\label{eq:rewardboundq}
        \end{align}
        Inequality (a) is because $\sum_{\ell=i}^j q_{\ell} k^{(\ell)}_\perm{j} \le \rho_\perm{j}$, 
        and Inequality (b) is because we assumed there is 
        an assignment that can completely accommodate workload $\bm{\rho}$, and hence $\rho_\perm{j}$ for $j=i,\ldots,J$.
        If we remove all jobs with types $1, \ldots, i-1$ from assignment $\mathbf{x}^\star$, 
        the configurations used in the resulting assignment belong to the subset $\mathcal{K}^{\{\perm{i},\ldots,\perm{J}\}}$ 
        and $\bk^{(i)}$ is the configuration with the highest reward from this set.
        
        An equivalent representation of (\ref{eq:rewardboundq}) is that, for some constants $b_i$, $0 \le b_i \le 1$, $i=1, \ldots J$,
        \begin{equation}
        \begin{aligned}
        & b_i W^{(i)} = \sum_{j=i}^J q_j W^{(j)},\ \text{and } (b_i - q_i) W^{(i)} = b_{i+1} W^{(i+1)}.
        \end{aligned}
        \end{equation}
        For completeness, we also define $b_{J+1}=1$.
        Based on this representation, and using (\ref{eq:rhost2}) and $\bm{\rho} = \bm{\rho^\star}$ by assumption, 
        we get
        \begin{equation}
        \begin{aligned}
        & \frac{{U}^{(g)}[\bm{\rho}]}{{U}^{\star}[\bm{\rho}]} = 
        \frac{\sum_{i=1}^{\ggi} p_i W^{(i)}}{\sum_{i=1}^{J} q_i W^{(i)}} = 
        \frac{\sum_{i=1}^{\ggi} p_i \prod_{j=1}^{i-1}\frac{b_j - q_j}{b_{j+1}}W^{(1)}}
        {b_1 W^{(1)}} = \\
        & \sum_{i=1}^{\ggi} \frac{p_i}{b_i} 
        \prod_{j=1}^{i-1}\frac{b_j - q_j}{b_{j}} =
        1 - \prod_{i=1}^\ggi \left(1 - \frac{p_i}{b_i} \right).
        \end{aligned}
        \end{equation}
        The right-hand side is minimized if $b_i=1$, $i=1, \ldots, \ggi$,
        since $p_i \ge 0$. Then given $\sum_{i=1}^\ggi p_i = 1$, the expression 
        is minimized for $p_i = 1/\ggi$, $i=1, \ldots, \ggi$, and its minimum value is
        $
        1 - \left(1 - \frac{1}{\ggi}\right)^{\ggi} > 1 - e^{-1}.
        $
    \end{proof}
    
    \begin{proposition}\label{prop:bound_tight}
        The worst-case ratio of ${U}^{(g)}[\bm{\rho}]/ {U}^\star[\bm{\rho}]$ is not greater than $1 - 1/e$.
    \end{proposition}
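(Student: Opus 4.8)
The plan is to exhibit, for every integer $q\ge 2$, an instance whose greedy-to-optimal ratio equals exactly $1-(1-1/q)^q$, and then let $q\to\infty$; since $(1-1/q)^q\uparrow 1/e$, these ratios decrease to $1-1/e$, which gives the claim. The instance is a transcription of the classical example on which the greedy algorithm for maximum $q$-coverage recovers only a $1-(1-1/q)^q$ fraction of the optimum.

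Concretely, fix $q\ge 2$ and take $J=q^{\,q+1}$ job types, each with reward $u_j=1$ and workload $\rho_j=1/q$. Partition the types into $q$ blocks $O_1,\dots,O_q$ of equal size $q^{q}$ (the ``optimal'' sets), and define inside them $q$ pairwise-disjoint sets $G_1,\dots,G_q$ recursively, where $G_i$ contains exactly a $1/q$-fraction of the types of each $O_m$ not already in $G_1\cup\dots\cup G_{i-1}$; then $|G_i\cap O_m|=q^{\,q-i}(q-1)^{i-1}$ and $|G_i|=q^{\,q}(1-1/q)^{i-1}$, all integers (this is what forces $J=q^{\,q+1}$). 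Let $\mathcal K$ be the downward closure of the $0/1$ configurations $\{\mathbf 1_{O_m}\}_m\cup\{\mathbf 1_{G_i}\}_i$ together with $\mathbf 0$; this is a monotone family, which is all the model requires. Finally, index the job types so that the types of $G_1$ receive the smallest indices, then those of $G_2$, and so on, with the types of $U\setminus\bigcup_i G_i$ last; this index choice is the only freedom we use, and it is needed only to steer the deterministic tie-breaking of Definition~\ref{def:ordering}.

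The optimum is immediate: by the LP~(\ref{eq:opt1b}) the normalized optimal reward cannot exceed the total workload, so $U^{\star}[\bm\rho]\le\sum_j\rho_j=q^{\,q}$, while using each $\mathbf 1_{O_m}$ in a $1/q$-fraction of the servers fills every server and serves every job, so $U^{\star}[\bm\rho]=q^{\,q}$. For the greedy value I would track {\VPA} (equivalently, by Proposition~\ref{prop:greedy_conv}, the global greedy assignment). At the first step the maximum-reward configurations are $\mathbf 1_{G_1}$ and all $\mathbf 1_{O_m}$, each of size $q^{q}$, and the chosen indexing makes $\mathbf 1_{G_1}$ win every tie; since each type of $G_1$ has workload $1/q$, {\VPA} uses $\mathbf 1_{G_1}$ in a $1/q$-fraction of servers and exhausts all of $G_1$. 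The following iterations keep picking restrictions of the $\mathbf 1_{O_m}$, but these now contain exhausted types, so they receive $0$ servers while {\VPA} peels the exhausted $G_1$-types out of the candidate set; once they are all removed, $\mathbf 1_{G_2}$ ties the residuals $\mathbf 1_{O_m\setminus G_1}$ in reward and the indexing again selects $\mathbf 1_{G_2}$, consuming a further $1/q$ of the servers. Iterating, {\VPA} uses $\mathbf 1_{G_1},\dots,\mathbf 1_{G_q}$ in $1/q$ of the servers each (with $0$-server ``cleanup'' steps in between), exhausts all servers, and serves $|G_i|/q$ jobs from block $i$; hence $U^{(g)}[\bm\rho]=\tfrac1q\sum_{i=1}^q|G_i|=q^{\,q}\bigl(1-(1-1/q)^q\bigr)$ and ${U^{(g)}[\bm\rho]}/{U^{\star}[\bm\rho]}=1-(1-1/q)^q$, proving the proposition.

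The arithmetic is routine; the main obstacle is verifying that {\VPA} genuinely traverses $G_1,G_2,\dots,G_q$ in this order. At each stage one has to check that the largest available configuration reward is attained simultaneously by some restriction of an $\mathbf 1_{O_m}$ and by $\mathbf 1_{G_i}$, that the tie-break of Definition~\ref{def:ordering} under the chosen indexing always prefers $\mathbf 1_{G_i}$ (and beats every strictly smaller configuration), and that the interleaved $0$-server steps which remove exhausted types never give servers to an $\mathbf 1_{O_m}$-configuration and do not perturb the limiting fractions $x^{(g)}_{\mathbf 1_{G_i}}=1/q$. Minor care is also needed for the integrality of $|G_i\cap O_m|$ (hence $J=q^{\,q+1}$) and for the fact that the exact $1/q$-fractions appear only in the $L\to\infty$ limit.
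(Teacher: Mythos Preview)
Your construction is correct and yields the claimed ratio, but it takes a genuinely different route from the paper's. The paper builds a small instance with only $J$ job types, $J{+}1$ additive resources, and \emph{distinct} rewards $u_i=\tfrac{1}{J}\bigl(\tfrac{J-1}{J}\bigr)^{i-1}u$ (plus an auxiliary type of multiplicity $N{+}1$): the single-type configurations then have strictly decreasing rewards, so {\VPA} selects them in order with no ties to break, spending $1/J$ of the servers on each, while the optimum packs one job of every type together; taking $N\to\infty$ and then $J\to\infty$ gives $1-1/e$ in a two-line computation. Your route is a direct transplant of the classical max-$q$-coverage lower bound---uniform rewards, $q^{q+1}$ types, and an abstract monotone $\mathcal K$ (which the paper does permit)---but the price is that all competing maximal configurations are tied in reward, so correctness rests entirely on the lexicographic tie-break of Definition~\ref{def:ordering} together with your chosen indexing, and {\VPA} must grind through $\Theta(q^{q})$ zero-server ``cleanup'' iterations between successive $\mathbf 1_{G_i}$'s. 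You correctly flag this as the main obstacle, and the sketch can indeed be completed: the decisive observation is that once $G_1\cup\cdots\cup G_{i-1}$ has been removed from the candidate set, the first index at which $\mathbf 1_{G_i}$ and any residual $\mathbf 1_{O_m\setminus(G_1\cup\cdots\cup G_{i-1})}$ differ lies in $G_i\setminus O_m$, which under your ordering precedes everything in $G_{i+1}\cup\cdots\cup G_q\cup R$, so $\mathbf 1_{G_i}$ always wins the tie. The paper's approach buys a compact, explicitly resource-realizable instance with no tie-breaking whatsoever; yours buys the conceptual link to the coverage bound, at the cost of a far larger instance and a tie-breaking verification that is longer than the paper's entire argument.
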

    \begin{proof}
        We construct an adversarial example that achieves this bound. See Appendix~\ref{prf:optimality}.
    \end{proof}
    Hence, the global greedy assignment achieves a factor within $1/2$ to $1-1/e$ of the optimal normalized reward in ``all'' the cases. 
    Further, the bound $1-1/e$ is tight when monotone greedy property holds. 
    The assignment might actually achieve $1-1/e$ in all the cases but requires a more careful analysis. 
    In view of Corollary~\ref{cor:relation}, $\text{\VPA}(\bm{\rho} L)$ asymptotically achieves the same factor of the optimal reward. 
    In simulations in Section~\ref{sec:sim}, based on cloud VM instances, we were not able to find any scenario where the ratio is below $1-1/e$, and in fact the ratio is much better ($\approx 0.97$ on average).     
    
    However, $\text{\VPA}(\bm{\rho} L)$ requires the knowledge of $\bm{\rho}$. 
    In the next section, we propose a dynamic reservation algorithm that is appropriate for use in online settings without the knowledge of $\bm{\rho}$. 
    Its achievable normalized reward still converges to that of the global greedy assignment 
    and it can also adapt to changes in the workload.

\section{Dynamic Reservation Algorithm}\label{sec:main}
We present a Dynamic Reservation Algorithm, called {\DRA}, which makes admission decisions and configuration assignments, without the knowledge $\bm{\rho}$. We first introduce the following notations:
\begin{itemize}[leftmargin=*]
	\item Recall the indexing of servers in the same configuration as in Definition~\ref{def:systemj}. 
	We use $\ell_{\mathbf{k}, i}$ to refer to the server with configuration $\assgnk$ and index $i$.
	\item A key parameter of {\DRA} is the \textit{reservation factor} $g(L)$. 
    It is the number of empty slots (safety margin) that the algorithm ideally wants to reserve for each job type if possible. 
	For later analysis, we assume that $g(L)=\omega(\log(L))$, and is $o(L)$.
\end{itemize}

The configuration assignment occurs at \textit{update times}. 
To simplify the analysis, we consider update times to be times when a job is admitted to or departs from the system. 
To avoid preemptions, only servers that are empty (have no jobs running) can be assigned to a new configuration.

At update time $t$, {\DRA} updates the workload reference vector $\mathbf{\hat{Y}}^L(t)$ as
\be \label{eq:hatYdefinition}
\mathbf{\hat{Y}}^L(t) = \mathbf{{Y}}^L(t) + g(L)\mathbf{1},
\ee
where $\mathbf{{Y}}^L(t)$ in the vector of jobs in the system, \textit{after} any job admission or job departure at time $t$.
$g(L)$ is the reservation factor as defined earlier.

Then {\DRA} classifies the servers into two groups: \textit{Accept Group} (\texttt{AG}) and \textit{Reject Group} (\texttt{RG}).
Servers in {Accept Group} keep their current configurations and {\DRA} attempts to have all their slots filled by scheduling new jobs in them, while servers in {Reject Group} do not have desirable configurations and {\DRA} attempts to make them empty, by not scheduling new jobs in them and possibly migrating their jobs to servers in Accept Group, so they can be reassigned to other configurations.

    A pseudocode for {\DRA} is given in Algorithm~\ref{alg:main3}. It has three main components which we describe in detail below:
    
    \textbf{Classification and Reassignment Algorithm ({\EPA}).}
        This is the subroutine used by {\DRA} to classify servers and possibly reassign some of them.
    It attempts to greedily reduce the disparity between the configuration assignment in the system $\mathbf{X}^L(t)$ and the output of {\VPA} $\mathbf{\hat X}^L(t)=\textrm{\VPA}(\mathbf{\hat{Y}}^L(t))$.
    To do so, it assigns \textit{ranks} to servers in different configurations, 
    which range from $1$ to $J+1$.

    Initially, all servers are assigned rank $J+1$.
    Any empty server of rank $J+1$ can be reassigned to reduce the disparity
    between  $\mathbf{X}^L(t)$ and $\mathbf{\hat X}^L(t)$. 
    We use $\emptysrv$ to denote one of empty rank $J+1$ servers, and if no such server exists $\emptysrv=\varnothing$.

    Iterating over configurations $\assgnk[i]$ found by {\VPA}, for $i=1, \ldots, J$:
    \begin{itemize}[leftmargin=*]
        \item If $X^L_{\assgnk[i]} < \hat{X}^L_{\assgnk[i]}$, it increases $X^L_{\assgnk[i]}$ by reassigning any $\emptysrv$ to $\assgnk[i]$, until either (i) it matches $\hat{X}^L_{\assgnk[i]}$, or (ii) $\emptysrv=\varnothing$.
        In either case, all servers of configuration $\assgnk[i]$ get rank $i$.  
        \item If $X^L_{\assgnk[i]}(t) \ge \hat{X}^L_{\assgnk[i]}$, it assigns rank $i$ to all servers of configuration ${\assgnk}[i]$ 
        with indexes greater than $X^L_{\assgnk[i]}(t)-\hat{X}^L_{\assgnk[i]}(t)$.
    \end{itemize}

    We use $I^\star(t)$ to denote the first $i$ for which $X^L_{\assgnk[i]}$ cannot be matched to $\hat{X}^L_{\assgnk[i]}$, i.e. the first $i$ at which $\emptysrv=\varnothing$. If all configurations are matched,
    then $I^\star(t)=J$.
    At the end of {\EPA}, servers with rank greater than $I^\star(t)$ 
    and index $1$ in any configuration are classified 
    as Reject Group, while the rest of the servers are 
    classified as Accept Group. 

    See Figure~\ref{fig:epa_example} for an illustrative example for the state of {\EPA}.

    \begin{figure}[t]
        \centering
        \includegraphics[width=\columnwidth]{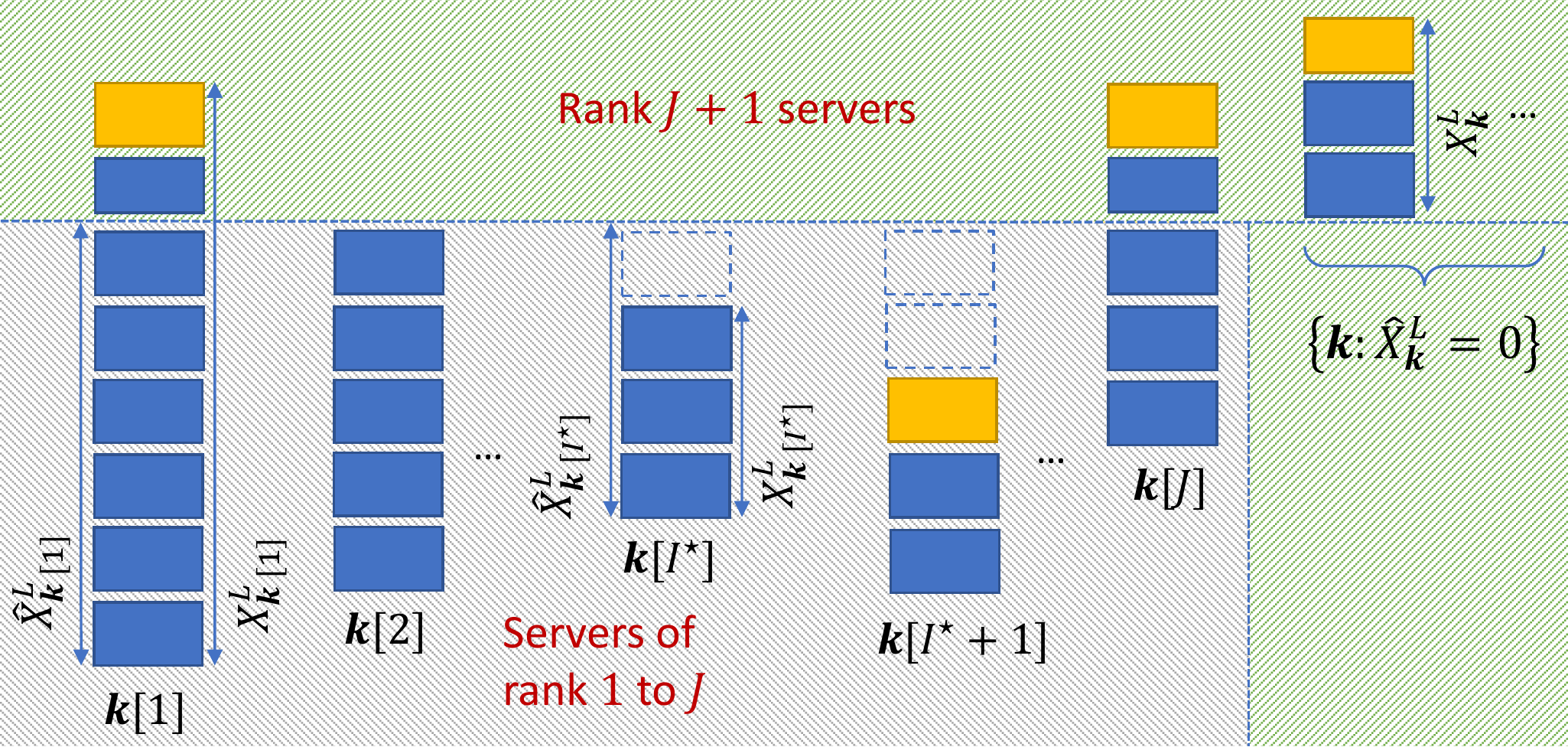}
        \caption{An example illustrating the state at the end of {\EPA}. 
            Servers in each configuration are stacked from largest
            to smallest index.
            ${\assgnk}[1], \ldots, {\assgnk}[J]$ are the configurations 
            returned by {\VPA}. The dashed boxes indicate how many more 
            servers need to be reassigned to a respective configuration 
            to match the solution of {\VPA} (horizontal line).  
            $I^\star$ is the first $i$ for which $X^L_{{\assgnk}[i]} < \hat{X}^L_{{\assgnk}[i]}$ at the end of the procedure. 
            Orange servers are the servers of Reject Group. 
        }
        \Description{An example illustrating the state at the end of {\EPA}.}
        \label{fig:epa_example}
    \end{figure}

	\textbf{Scheduling Arriving Job.} 
        When {\DRA} needs to schedule an arriving job of type $j$, 
        it places the job in one of the servers of Accept Group with empty type-$j$ slot.
        If no such server exists, the job is rejected. 
	   
        We use ${\texttt{AG}}_j$ to denote one of the servers of Accept Group with empty type-$j$ slot. If no such server exists ${\texttt{AG}}_j=\varnothing$.

	\textbf{Migrating Job after Departure.}
	Let ${\texttt{RG}}_j$ denote the highest rank server
	among the Reject Group servers with type-$j$ jobs. If no such server exists, ${\texttt{RG}}_j=\varnothing$.
        
    If a type-$j$ job departs from a server in Accept Group, {\DRA} migrates one of the type-$j$ jobs from ${\texttt{RG}}_j$ to the slot that emptied because of the departure, if ${\texttt{RG}}_j \neq \varnothing$.

	\textbf{Initialization.}\label{sec:init}
	Initially servers have no indexes or classification (and might not even have configurations), so we need to specify how the system state is initialized (say at time $0$) under {\DRA}.
	If servers do not have configurations, but have jobs in them, we initialize $\assgnk^{\ell}(0)=\schedk^\ell(0)$, i.e., the configuration of each server $\ell$ is set to its job placement. 
	If servers have configurations, we keep their existing configuration.
	Indexing among the servers of a configuration can be arbitrary. 
	We then run {\EPA} that performs classification and reassigns any 
    possibly empty servers.

	\begin{algorithm}[t]
		\caption{{\DRA:} Dynamic Reservation Algorithm}\label{alg:main3}
		\begin{algorithmic}[1]
			\Function{\textsc{CRA}}{$\mathbf{\hat{Y}}^L, \mathbf{X}^L$}
			\State $\mathbf{\hat{X}}^L \gets$ {\VPA} 
			$(\mathbf{\hat Y}^L)$. 
			\State Set rank of all servers to $J+1$.
            \State $I^\star \gets J$
			\For {$i = 1$ to {$J$}} 
			\Comment $J$ configurations found in {\VPA}
			\State $Z \gets 0$, \ $c \gets {X}^{L}_{\mathbf{k}[i]}$ 
            \Comment $c$ is the index of server
			\While {$Z < \hat{X}^{L}_{\mathbf{k}[i]}$}
			\State $Z \gets Z + 1$,\ $c \gets c - 1$ 
			\If {$c \le 0$}
            \If {$\emptysrv \neq \varnothing$} 
			\State Set rank of $\emptysrv$ to $i$.
			\State Reassign configuration of $\emptysrv$ to $\mathbf{k}[i]$.
            \Else
                \State $I^\star \gets \min(I^\star, i)$
			\EndIf
			\Else
			\State Set rank of $\ell_{\mathbf{k}[i], c}$ to $i$. 
			\EndIf
			\EndWhile
			\EndFor
			\EndFunction
		\end{algorithmic}
		\begin{algorithmic}[1]
			\Procedure{Arrival}{$j, t$} \Comment Type-$j$ arrival at time $t$
			\If {$\texttt{AG}_j \neq \varnothing$}
			\State Schedule job in $\texttt{AG}_j$.
			\State \textsc{CRA} 
			$(\mathbf{Y}^L(t) + g(L)\mathbf{1}, \mathbf{X}^L(t))$
			\Else
			\State Reject job.
			\EndIf
			\EndProcedure
		\end{algorithmic}
		\begin{algorithmic}[1]
			\Procedure{Departure}{$j, t$} \Comment Type-$j$ departure at time $t$
            \If {${\texttt{RG}}_j \neq \varnothing$ and the slot emptied is in Accept Group} 
			\State Migrate the job in ${\texttt{RG}}_j$ to the slot that emptied.
			\EndIf
            \State \textsc{CRA}
            $(\mathbf{Y}^L(t) + g(L)\mathbf{1}, \mathbf{X}^L(t))$
			\EndProcedure
		\end{algorithmic}
	\end{algorithm}

    \begin{remark}
    	Notice the duality of actions performed on arrivals and departures for any job type: 
        jobs are admitted/migrated to empty slots in servers of Accept Group, and depart/migrate from filled slots in servers of Reject Group. 
        The number of servers in Reject Group under our algorithm is at most one per configuration, i.e., at most $\gcnt$ servers (constant independent of $L$) which is negligible compared to the number of servers $L$, as $L \to \infty$. 
    	Further, job admissions and migrations are performed to slots which are already deployed in advance.
        The reservation factor $g(L)$ is critical for maintaining enough
        deployed slots in the maximum reward configurations for
        future demand.
    	
    	In contrast, a naive static reservation algorithm, that solves \dref {eq:opt1c} by replacing $\mathbf{\hat Y}$ with an estimate of workload, might require changing the configuration of a constant fraction of servers (the equivalent of Reject Group), as workload estimate changes. 
        This would result in preemptions (or migrations) in $O(L)$ interrupted servers.     
    	
    	Lastly, more accurate estimates of workload, if available, can be simply used in the input $\mathbf{\hat Y}$ to {\EPA}, and {\EPA} itself can be executed less regularly, depending on the complexity and convergence time tradeoff. 
    \end{remark}

The following theorem states the main result regarding {\DRA}.
\begin{theorem}\label{thm:main}
	Let $F^{\DRA}(L)$ be the expected reward under {\DRA} and $F^{\star}(L)$ be the  optimal expected reward in optimization \dref{eq:obj}. Then
	\ben
	\lim_{L\to \infty} \frac{F^{\DRA}(L)}{F^{\star}(L)} \geq \frac{1}{2}.
	\een
	Further, under the monotone greedy property (Definition~\ref{def:monotone greedy}),
	$$\lim_{L\to \infty}\frac{F^{DRA}(L)}{F^{\star}(L)} \geq 1-\frac{1}{e}.$$
\end{theorem}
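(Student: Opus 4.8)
The plan is to show that the normalized steady-state reward of {\DRA} converges to the greedy value,
\[
\lim_{L\to\infty}\frac{F^{\DRA}(L)}{L}=U^{(g)}[\bm{\rho}],
\]
and then read off the ratios from the static analysis of Section~\ref{sec:static_optim}. Granting this limit, since $F^{\star}(L)\le F^{\star}(L,\bm{\rho}L)$ for every Markov policy and $\frac{1}{L}F^{\star}(L,\bm{\rho}L)\to U^{\star}[\bm{\rho}]$, Theorem~\ref{thm:optimality} gives $\lim_{L} F^{\DRA}(L)/F^{\star}(L)\ge U^{(g)}[\bm{\rho}]/U^{\star}[\bm{\rho}]\ge\tfrac12$, and under the monotone greedy property Theorem~\ref{thm:optimality2} upgrades the bound to $1-\tfrac1e$. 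Converting the pathwise fluid statement into one about expectations uses the stochastic domination of $\mathbf{Y}^L(\infty)$ by a $\mathrm{Poisson}(L\bm{\rho})$ vector noted in Section~\ref{sec:model}: this yields uniform integrability of $\frac{1}{L}\sum_j u_j Y^L_j(\infty)$, so weak convergence of the scaled occupancy implies convergence of the scaled expected reward.

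\emph{Step 1 (fluid limits, and negligibility of the Reject Group).} The scaled processes $\frac{1}{L}\mathbf{X}^L(\cdot)$ and $\frac{1}{L}\mathbf{Y}^L(\cdot)$ are tight, so along a subsequence they converge to Lipschitz limits $(\mathbf{x}(t),\mathbf{y}(t))$. Since {\DRA} keeps at most one server per configuration in the Reject Group, at most $\gcnt=O(1)$ servers ever deviate from the assignment prescribed by {\VPA}, so this group contributes nothing at fluid scale; the instantaneous \emph{target} configuration vector is the (scaled) output of {\VPA} applied to $\mathbf{Y}^L(t)+g(L)\mathbf{1}$, and since $g(L)=o(L)$ the additive term vanishes in the limit, so by Proposition~\ref{prop:greedy_conv} it equals the global greedy assignment of the current fluid occupancy. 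The empty-server reassignments of {\EPA} drive $\mathbf{x}(t)$ toward this target, while each $y_j$ grows at the type-$j$ admission rate and decays at rate $\mu_j y_j$; the admission rate equals $\lambda_j$ as long as an empty type-$j$ slot exists in the Accept Group, and is throttled to the rate at which such slots free up otherwise.

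\emph{Step 2 (time-scale separation --- the crux).} The difficulty is that whether an arriving job is admitted, and whether a departure triggers a migration, is decided by the fine configuration of reserved/empty slots and their ranks --- a ``fast'' process that turns over $\Theta(L)$ times per unit time --- whereas $\mathbf{X}^L$ and $\mathbf{Y}^L$ change by only $O(1/L)$ per event at fluid scale. I would freeze the slow variables over windows of length $\delta_L$ with $\delta_L=\omega(\log L/L)$ and $\delta_L=o(1)$: over such a window the slow variables move by $O(L\delta_L)=o(L)$, negligible at fluid scale, while the fast reservation/classification process reaches quasi-stationarity. This is exactly where $g(L)=\omega(\log L)$ is needed: a Chernoff bound on the number of type-$j$ events in a window (at most $\mathrm{Poisson}(L\lambda_j\delta_L)$, using the $M/M/\infty$ domination) shows that with probability $1-o(L^{-1})$, hence simultaneously over the polynomially many windows covering any fixed horizon, the $g(L)$ empty slots reserved in the high-reward configurations are never exhausted, so {\DRA} acts over each window as if it maintained the greedy packing of $\mathbf{Y}^L$ exactly. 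Averaging the per-window drift then closes the fluid equations: the occupancy pursues the greedy packing of its own workload, with the type-$j$ admission rate capped by the number of type-$j$ slots that packing provides.

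\emph{Step 3 (convergence of the fluid limit via the LP Lyapunov function).} Finally I would introduce a Lyapunov function $\Phi$ built from LP~(\ref{eq:opt1b}) (e.g.\ a lexicographic refinement of the LP objective) designed so that the global greedy assignment is its \emph{unique} maximizer over the fluid-feasible region, and show $\Phi$ is nondecreasing along the averaged fluid trajectories and strictly increasing away from the greedy point; a LaSalle-type argument then forces every fluid limit to converge to the global greedy fixed point, at which all slots are full and $\sum_j u_j y_j=U^{(g)}[\bm{\rho}]$. One also checks the self-consistency that {\VPA} applied to this occupancy reproduces the greedy assignment of $\bm{\rho}$, which is what the $g(L)\mathbf{1}$ slack guarantees. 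Tightness plus uniqueness of the fixed point then forces the stationary laws of $\frac{1}{L}\mathbf{Y}^L(\infty)$ to concentrate on it, giving $\lim_{L}\frac{1}{L}F^{\DRA}(L)=U^{(g)}[\bm{\rho}]$ and completing the proof. I expect the main obstacle to be Step~2 together with establishing the strict Lyapunov drift of Step~3 in the supercritical regime, where configurations may be saturated and the fast and slow dynamics are genuinely coupled.
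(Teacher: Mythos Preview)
Your high-level architecture matches the paper's: establish fluid limits, handle the time-scale separation over windows of length $\omega(\log L/L)$, use an LP-based Lyapunov function to force convergence of $\mathbf{x}^{(e)}(t)$ to the global greedy point, then invoke Theorems~\ref{thm:optimality} and~\ref{thm:optimality2}. The passage from pathwise convergence to convergence of expected reward via tightness and uniform integrability is also exactly what the paper does in Appendix~\ref{sec:maintheorem_proof}.

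The substantive gap is in the mechanism you propose for Step~2. Your claim that over each window ``the $g(L)$ empty slots reserved in the high-reward configurations are never exhausted, so {\DRA} acts \dots\ as if it maintained the greedy packing of $\mathbf{Y}^L$ exactly'' is not what happens, and the paper does \emph{not} attempt to close the fluid ODE by averaging the fast process. On the boundary $\emptyx(t)=0$ (the supercritical case you flag), the deficits $q^L_{\greedk^{(i)},j}$ of~\dref{eq:q_def} oscillate between $O(1)$ and values of order $-L$ on the fast scale, and which configuration a freed server is reassigned to is governed instant-by-instant by the sign conditions~\dref{eq:qfillall}--\dref{eq:qfill}; there is no tractable quasi-stationary law of the fast process to average against, so ``closing the fluid equations'' in your sense does not go through. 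The paper sidesteps this entirely: each $[\tau_n,\tau_{n+1})$ is further subdivided by \emph{stopping times} $\tau_n^{(m)}$ driven by the $q$-process itself (Section~\ref{sec:subintervals}), and on each sub-subinterval only one-sided \emph{inequalities} on the local derivatives are proved (properties~\ref{pr:exact}--\ref{pr:bound2} of Proposition~\ref{prop:properties}). These inequalities are then plugged directly into the drift bound (Lemma~\ref{lem:Dx}) for the Lyapunov function, so convergence is obtained without ever writing down the limiting ODE on the boundary.

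A minor point on Step~3: the paper's Lyapunov function~\dref{eq:Vdef} is linear with geometrically graded weights $\LC_i>\xi\LC_{i+1}$ plus a penalty $\LC\sum_j(y_j-\rho_j)^+$; it is shown via the auxiliary LP~(\ref{eq:optZ_}) to be nonnegative and uniquely \emph{minimized} (equal to zero) at the greedy point, with $\frac{\rm d}{{\rm d}t}V<0$ elsewhere. Your sign convention is reversed, but your ``lexicographic refinement'' intuition is precisely what the geometric weighting encodes.
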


\begin{remark}
	Note that we did not make any assumption on the value of $\bm{\rho}$, and Theorem~\ref{thm:main} holds for any $\bm{\rho}$.  
    Define
    \be \label{eq:loadregion}
    \Lambda= \Big\{\mathbf{y}: \mathbf{y}\leq \sum_{\mathbf{k}\in \mathcal{K}}x_{\mathbf{k}}\mathbf{k},\text{ for }  x_{\mathbf{k}}\geq 0,  \assgnk \in \mathcal{K}, \sum_{\mathbf{k}\in \mathcal{K}}x_{\mathbf{k}} = 1\Big\}.
    \ee
	Theorem~\ref{thm:main} holds even if $\bm{\rho}$ is outside $\Lambda$. 
    In this scenario, a nonzero fraction of traffic has to be rejected even by the optimal policy.
\end{remark}

The proof of Theorem~\ref{thm:main} is based on analysis of fluid limits and a suitable Lyapunov function to show convergence, as we do next in Sections~\ref{sec:analysis} and \ref{sec:converge}.

\section{Fluid Limits under {\DRA}}\label{sec:analysis}

We first define two useful variables, which are functions of the system state, and will be used in our convergence analysis.
\begin{definition}[Effective Number of Assigned Servers]\label{def:eff}
	The effective number of servers in configuration 
	$\assgnk$ is defined as
	\begin{equation}\label{eq:effectivedef}
	X^{L(e)}_{\assgnk}(t) := 
	\min(X^L_{\assgnk}(t), {\hat X}^{L}_{\assgnk}(t)).
	\end{equation}
	Note that $X^{L(e)}_{\assgnk}(t)={\hat{X}_{\assgnk}}^{L}(t)=0$ if $ \assgnk \notin \{\greedk^{(i)}, i=1, \ldots, \gcnt\}$.
	With a minor abuse of terminology, we say the
	servers in configuration $\assgnk$ with 
    indexes from $X^{L}_{\assgnk}(t) - X^{L(e)}_{\assgnk}(t) + 1$ to $X^{L}_{\assgnk}(t)$, 
    have effective configuration $\assgnk$.
\end{definition}

\begin{remark}
    Note that $X^{L(e)}_{\assgnk}(t)$ is \textit{independent} of the indexing of servers in configuration $\assgnk$.
   Also note if $\assgnk = \assgnk[j]$, where $\assgnk[j]$, $j \leq J$, is the $j$-th configuration returned by {\VPA} at time $t$, then in {\DRA}, servers with effective configuration $\assgnk[j]$ get rank $j$, 
   and servers without effective configuration have rank $J+1$. 
\end{remark}

\begin{definition}\label{def:rgi}
    Given an $i\leq \gcnt$, Reject Group servers can be divided as $\texttt{RG}=\overline{\texttt{RG}}(i) \cup \texttt{RG}(i)$.  
    The servers with index $1$ without effective configuration
    in $\greedk^{(\ell)}$, for $\ell=1,\ldots, i$, belong to $\overline{\texttt{RG}}(i)$, while the rest of servers of Reject Group belong to $\texttt{RG}(i)$.
\end{definition}

\subsection{Effective Slot Deficit: \texorpdfstring{$q$}{q} Process }\label{sec:qproc}
The job admission and configuration assignment under {\DRA} crucially depends on the $\mathbf{q}$ process defined below.
\begin{definition}\label{def:qdef}
	For  $i \in \{1, \cdots, \qcnt\}$, and $j \in \mathcal{J}$, we define 
	\begin{equation}\label{eq:q_def}
	q^L_{\greedk^{(i)}, j}(t) := \sum_{\ell=1}^i
	X_{\greedk^{(\ell)}}^{L(e)}(t) \greedki^{(\ell)}_j - Y^L_{j}(t) - g(L).
	\end{equation}
	Note that,  $\forall j \in \mathcal{J}$, $q^L_{\greedk^{(i_2)}, j}(t)\geq q^L_{\greedk^{(i_1)}, j}(t) $ if $i_2 \geq i_1$.  	
\end{definition}
In words, $q^L_{\greedk^{(i)}, j}(t)$ measures the difference between the total number of type-$j$ slots (filled or empty) in servers that have effective configurations in the set $\{\greedk^{(\ell)}: \ell \leq i\}$ (see Definition~\ref{def:eff}),  
and the number of type-$j$ jobs in the system $Y^L_{j}(t)$ and $g(L)$ type-$j$ reservation slots.

Note that {\DRA} (specifically {\VPA}) will stop assigning configurations that have type-$j$ slots, once $Y^L_{j}(t)+g(L)$ slots can be accommodated in 
servers with effective configuration in $\{\greedk^{(\ell)}, \ell \leq i\}$. 
Since slots are created per server basis, by assigning configurations which each has at most $\Kmax$ slots,
we have     
${q}^L_{\greedk^{(i)}, j}(t) < \Kmax$.

To gain more insight, note that when $q^L_{\greedk^{(i)}, j}(t) \ge 0$ for an $i \in \{1, \cdots, \qcnt\}$, it means type-$j$ jobs have enough reservation. 
When it is negative, it indicates the deficit of slots in servers 
with effective configuration $\{\greedk^{(\ell)}, \ell \leq i\}$.  
When $q^L_{\greedk^{(i)}, j}(t) > -g(L) + J \Kmax$, for an $i \in \{1, \cdots, \gcnt\}$, a type-$j$ arrival at time $t$ will certainly find a valid empty slot ($\texttt{AG}_j \neq \varnothing$) and will be admitted. 
This is because the number of empty slots of type $j$ in Reject Group servers with any effective configuration is less than $J\Kmax$.

The $\mathbf{q}$ process also determines the configuration  
assigned by {\EPA} to an empty server $\emptysrv$ chosen for reassignment. 
The configuration would be $\greedk^{(i)}$, $i < \gcnt$, if:
\begin{subequations}
\begin{align}
\max_{j: \greedki^{(\ell)}_j > 0}
q^L_{\greedk^{(\ell)},j}(t) \ge 0,& \ \forall \ell \leq i-1, \label{eq:qfillall}
\\
\max_{j: \greedki^{(i)}_j > 0}
q^L_{\greedk^{(i)},j}(t) < 0 & \label{eq:qfill}.
\end{align}
\end{subequations}
This also implies that if only \dref{eq:qfill} holds,
the server would be assigned to one of the configurations
$\greedk^{(\ell)}$, $\ell=1, \ldots, i$.

\subsection{Existence of Fluid Limits}

We define the scaled (normalized with $L$) processes 
$\mathbf{x}^{L(e)}(t)$, $\mathbf{y}^{L}(t)$, 
as follows. 
For $i \in \{1, \ldots, \gcnt\}$, and $j \in \mathcal{J}$,
\begin{flalign*}
x^{L(e)}_{\greedk^{(i)}}(t) = \frac{1}{L}{X_{\greedk^{(i)}}^{L(e)}(t)},\quad 
y^{L}_j=\frac{1}{L} Y^{L}_j(t),
\end{flalign*}
and define $z^{L}(t):=(\mathbf{x}^{L(e)}(t),\mathbf{y}^{L}(t))$. 
We also define the space
\ben \label{eq:Zregion}
\mathcal{Z} = \Big \{(\mathbf{x}^{(e)},\mathbf{y}): \mathbf{y} \in \Lambda,\  x^{(e)}_{\greedk^{(i)}} \geq 0,
\sum_{i=1}^\gcnt x^{(e)}_{\greedk^{(i)}}\leq 1,
\ \sum_{i=1}^\gcnt x^{(e)}_{\greedk^{(i)}} {\greedk^{(i)}} \leq \mathbf{y}
\Big \}.
\een
where $\Lambda$ was defined in \dref{eq:loadregion}.

\begin{proposition}\label{prop:fl_limits}
	Consider a sequence of systems with increasing $L$, and initializations
    $\mathbf{z}^{L}(0)=(\mathbf{x}^{L(e)}(0),\mathbf{y}^{L}(0)) \in \mathcal{Z}$, as $L \to \infty$. 
    Then there is a subsequence of $L$ such that $\mathbf{x}^{L(e)}(t)\to \mathbf{x}^{(e)}(t)$, $\mathbf{y}^{L}(t) \to \mathbf{y}(t)$,  
    along the subsequence.
	Any limit  $\mathbf{z}(t):=(\mathbf{x}^{(e)}(t), \mathbf{y}(t))$, $t \geq 0$, is called a fluid limit sample path.
    The convergence is almost surely u.o.c. (uniformly over compact time intervals) and the fluid limit sample paths are Lipschitz continuous.
\end{proposition}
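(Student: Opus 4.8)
The plan is to establish the fluid limits by the standard machinery for Markov jump processes driven by Poisson clocks: a random--time--change (martingale) representation, uniform boundedness of the scaled states and of the total jump rate, a uniform (in $L$) bound on the oscillation of the scaled processes, the functional strong law of large numbers (FSLLN) for the driving Poisson processes, and finally an Arzel\`a--Ascoli / diagonal--subsequence extraction. The only non--routine ingredient is the control of the oscillation of $\mathbf{x}^{L(e)}$, which is where the structure of {\VPA} and {\EPA} enters.

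First I would write, for each $\assgnk\in\calK$ and $j\in\calJ$, the pair $(\mathbf{X}^L(t),\mathbf{Y}^L(t))$ as its value at $0$ plus a finite sum of terms $\mathcal{N}^{\mathrm a}_j(\lambda_j L t)$ (type-$j$ arrivals) and $\mathcal{N}^{\mathrm d}_j\!\bigl(\int_0^t \mu_j Y^L_j(s)\,ds\bigr)$ (type-$j$ departures), where the $\mathcal{N}^{\mathrm a}_j,\mathcal{N}^{\mathrm d}_j$ are independent unit-rate Poisson processes and each such event triggers exactly one call of {\EPA}. On a fixed horizon $[0,T]$ the number of arrivals is a.s.\ $O(L)$, and $\mathbf{Y}^L(t)$ is dominated coordinatewise by $\mathbf{Y}^L(0)$ plus those arrivals (and, in stationarity, by the $M/M/\infty$ occupancy with mean $L\bm{\rho}$ as noted in Section~\ref{sec:model}); hence $\frac1L\mathbf{X}^L$ and $\frac1L\mathbf{Y}^L$ stay in a fixed compact set and the total event rate is $O(L)$, uniformly on $[0,T]$ and a.s.\ for all large $L$. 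The FSLLN $\frac1L\mathcal{N}(L\,\cdot)\to(\cdot)$ a.s.\ u.o.c.\ then controls the driving terms and forces the (rescaled) compensated martingale parts to vanish uniformly on $[0,T]$.

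The crux is to show that each arrival or departure changes $\sum_{\assgnk}\bigl|X^{L(e)}_{\assgnk}(t^+)-X^{L(e)}_{\assgnk}(t^-)\bigr|$ by at most a constant $c_0$ independent of $L$; combined with the $O(L)$ event rate this makes $\mathbf{x}^{L(e)}(\cdot)$ (and $\mathbf{y}^{L}(\cdot)$, whose jumps are trivially $O(1)$) asymptotically equi-Lipschitz on $[0,T]$. Since $X^{L(e)}_{\assgnk}=\min(X^L_{\assgnk},\hat X^L_{\assgnk})$ and the event itself does not change $\mathbf{X}^L$, it suffices to bound the number of server reassignments carried out by the {\EPA} call that follows. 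For this I would first prove a discrete sensitivity lemma for {\VPA}: if $\hat{\mathbf Y}$ and $\hat{\mathbf Y}'$ differ by one unit in a single coordinate then $\|\text{\VPA}(\hat{\mathbf Y})-\text{\VPA}(\hat{\mathbf Y}')\|_1\le c_1$ for a constant $c_1$ depending only on $J$, $\Kmax$ and $|\calK|$ (not on $L$ or $\hat{\mathbf Y}$); reading Algorithm~\ref{alg:vpa}, every $\lceil r_j/k_j[i]\rceil$ moves by at most $1$ and the residual vector $\mathbf r$ is perturbed by a bounded amount per iteration, while a change in the selected pivot $j^\star$ (hence in the configuration sequence) can only occur at a near-tie, at which point the newly affected configurations carry only $O(1)$ servers. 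Given this lemma, an induction over update times closes the argument: by construction {\EPA} leaves the system, at each update time, either with zero configuration deficit relative to $\hat{\mathbf X}^L$ or with all empty rank-$(J+1)$ servers exhausted; in the first case the target $\hat{\mathbf X}^L$ has moved by at most $c_1$ since the previous call, so only $O(1)$ reassignments are needed now; in the second case the single intervening event can have freed at most one server (one departure, possibly accompanied by one migration out of the Reject Group), so again only $O(1)$ reassignments occur. The initial {\EPA} call at $t=0$ may reassign many servers, but that is absorbed into the initial condition $\mathbf{z}^L(0)\in\calZ$. I expect this {\VPA}-sensitivity/{\EPA}-bookkeeping step to be the main obstacle.

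With uniform boundedness and asymptotic equi-Lipschitz continuity in hand, Arzel\`a--Ascoli on $[0,T]$ together with a diagonal argument over $T\in\mathds{N}$ yields a subsequence of $L$ along which $\mathbf{x}^{L(e)}(t)\to\mathbf{x}^{(e)}(t)$ and $\mathbf{y}^{L}(t)\to\mathbf{y}(t)$ a.s.\ u.o.c., with Lipschitz limits. That the limit lies in $\calZ$ at every time follows by passing to the limit in state-independent inequalities: $x^{(e)}_{\greedk^{(i)}}\ge 0$ and $\sum_i x^{(e)}_{\greedk^{(i)}}\le 1$ from $\mathbf{0}\le\mathbf{X}^{L(e)}\le\mathbf{X}^{L}$ with $\sum_{\assgnk}X^L_{\assgnk}=L$; $\mathbf{y}\in\Lambda$ from $Y^L_j\le\sum_{\assgnk}X^L_{\assgnk}\assgnki_j$; and $\sum_i x^{(e)}_{\greedk^{(i)}}\greedk^{(i)}\le\mathbf{y}$ from the bound $q^L_{\greedk^{(i)},j}(t)<\Kmax$ of Section~\ref{sec:qproc} combined with $g(L)=o(L)$. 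Finally, note that the fast-time-scale behaviour of the reservation ($\mathbf{q}$) process flagged in the introduction is \emph{not} needed for the present statement --- it only enters when one tries to identify the dynamics of the fluid limit in the subsequent sections; here we claim only existence of Lipschitz limits along a subsequence.
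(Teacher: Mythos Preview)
Your approach is essentially the same as the paper's: both use a Poisson random--time--change representation, the FSLLN for the driving clocks, and a uniform-in-$L$ bound on the per-event jump of $\mathbf{X}^{L(e)}$, from which asymptotic Lipschitz continuity and an Arzel\`a--Ascoli / diagonal extraction follow. The paper packages the jump bound as Lemma~\ref{lem:X_change}, which gives the explicit constant $(\Kmax+1)^{i-1}$ for each of $\hat X^L_{\greedk^{(i)}}$, $X^L_{\greedk^{(i)}}$, $X^{L(e)}_{\greedk^{(i)}}$ via an induction on $i$ over the \emph{fixed} greedy enumeration $\greedk^{(1)},\dots,\greedk^{(\gcnt)}$; this is exactly your ``{\VPA} sensitivity lemma,'' but the paper's formulation sidesteps your near-tie/pivot-flip case analysis by never tracking which configurations {\VPA} actually selects in a given call. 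Your induction over update times is then unnecessary: once the per-event jump of $\hat{\mathbf X}^L$ is bounded, the per-event jump of $\mathbf X^L$ (and hence of $\mathbf X^{L(e)}$) follows directly, since {\EPA} reassigns only empty rank-$(J{+}1)$ servers and a single event produces at most $O(1)$ of those. Your closing remarks---that the limit stays in $\calZ$ and that the fast $\mathbf q$-process time scale is irrelevant for mere existence of fluid limits---are correct; the paper's appendix simply omits them as standard.
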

\begin{proof} 
	Proof is standard, and can be found in Appendix~\ref{sec:fl_limits}.
\end{proof}

\subsection{Description of Fluid Limits}\label{sec:informal}
We provide an informal description of fluid limit equations here. 
The formal definitions and proofs can be found in Appendix~\ref{proof:prop:properties}.

The properties of the fluid limit processes crucially depend on the $\mathbf{q}$ process (Definition~\ref{def:qdef}). 
First note that, from \dref{eq:q_def} and since ${q}^L_{\greedk^{(i)}, j}(t) < \Kmax$, it follows that 
\be\label{eq:yjA1}
\sum_{\ell=1}^{{\gcnt}} \greedki^{(\ell)}_j x^{(e)}_{\greedk^{(\ell)}}(t) \leq y_j(t),\
\forall j \in \calJ.
\ee

Let $\emptyx(t)$ be the fraction of servers which are empty and of rank $J+1$ 
at the fluid limit. 
When $\emptyx(t)>0$, then {\EPA} always finds empty rank $J+1$ servers available for reassignment. 
In this case, every job type will have enough empty slots, and all the arrivals will be admitted, i.e., we can find an $\epsilon$ sufficiently small such that for every job type $j$ and every time $\tau \in [t, t + \epsilon)$, $q^L_{\greedk^{(\qcnt)}, j}(\tau) \ge 0$. 
Hence, noting that at the fluid limit type-$j$ jobs arrive at rate $\lambda_j$ and existing type-$j$ jobs depart at rate $y_j(t)\mu_j$, 
\begin{subequations}
\begin{align}
&dy_j(t)/{dt} = \lambda_j - y_j(t)\mu_j,\ \forall j \in \mathcal{J},\\
&\sum_{\ell=1}^{\qcnt} \greedki^{(\ell)}_j x^{(e)}_{\greedk^{(\ell)}}(t) = y_j(t) \label{eq:noncriticalequality}, 
\end{align}
\end{subequations}
where Equality~\dref{eq:noncriticalequality} is based on (\ref{eq:q_def}) and due to the fact that $\lim_{L\to \infty} \frac{1}{L} q^L_{\greedk^{(\qcnt)}, j}(t)=0$ in this case.

A major difficulty in describing fluid limits happens on the \textit{boundary} $\emptyx(t) = 0$, i.e., when there are not always empty rank $J+1$ servers
available for reassignment when {\EPA} runs.
In this case, let $i^\star(t)$ be the \textit{largest} index in $\{1,\cdots, \gcnt-1\}$ such that for every $i \leq i^\star(t)$,
\begin{equation}\label{eq:yjA}
\sum_{\ell=1}^{i} \greedki^{(\ell)}_{j_i} x^{(e)}_{\greedk^{(\ell)}}(t) = y_{j_i}(t), \mbox{ for some } j_i \in \calJ,
\end{equation}
with the convention that $i^\star(t)=0$ if \dref{eq:yjA} does not hold for $i=1$.
If $i^\star(t) < \gcnt-1$, then for $L$ sufficiently large, and 
every time $\tau \in [t, t + \epsilon)$ for $\epsilon$ sufficiently small,
\begin{equation}\label{eq:ql0_}
\max_{j : \greedki^{(i^\star(t)+1)}_j > 0} 
q^L_{\greedk^{(i^\star(t)+1)}, j}(\tau) < 0.
\end{equation}

Based on Definition~\ref{def:rgi}, servers in $\overline{\texttt{RG}}(i^\star(t)+1)$ have higher ranks compared to those in $\texttt{RG}(i^\star(t)+1)$, so any migrations by {\DRA} will take place from $\overline{\texttt{RG}}(i^\star(t)+1)$ first.
We can then show that servers of $\overline{\texttt{RG}}(i^\star(t)+1)$
empty at the fluid scale, at a rate of at least 
\be\label{eq:rank0rate}
& \frac{\mumin}{J \Kmax \cfcnt^2} \Big(
1 - \sum_{\ell=1}^{i^\star(t)+1} x^{(e)}_{\greedk^{(\ell)}}(t) \Big),
\ee
where $\mumin := \min_{j \in \calJ} \mu_j$ (see Lemma~\ref{lem:emptyrate} in Appendix~\ref{proof:bound}).

The algorithm will reassign any such server that empties to one of configurations $\greedk^{(\ell)}$ for $\ell=1, \ldots, i^\star(t)+1$.
If instead $i^\star(t)=\gcnt-1$, then it is uncertain whether servers that
empty need to be reassigned to a new configuration or not, 
depending on whether $\max_{j \in \calJ: \greedki^{(i)}_j > 0} q^L_{\greedk^{(i)}, j}(\tau) < 0$, 
for some $i < \gcnt$ at time $\tau \in [t, t + \epsilon)$.

Hence, what we see is that, if $\emptyx(t) = 0$, when a server gets empty, it can be assigned to one of the configurations $\greedk^{(i)}$, $i=1, \ldots, i^\star(t)+1$. 
Exact characterization of these assignment rates, however, 
is not easy as they depend on values of processes ${q}^L_{\greedk^{(i)}, j}(\tau)$, $i\in \{1, \ldots, i^\star(t)\}$, $j \in \mathcal{J}$, which evolve at a much faster time scale than the scaled 
processes $\mathbf{x}^{L(e)}$ and $\mathbf{y}^L$. 
By the continuity of the fluid limit sample paths, at any regular time $t$, we can choose $\epsilon$ small enough such that for all $\tau \in [t, t+\epsilon)$, $\mathbf{y}(\tau)$, and $\mathbf{x}^{(e)}(\tau)$ are approximately constant and equal to $\mathbf{y}(t)$ and $\mathbf{x}^{(e)}(t)$, respectively (their actual change being of order $\epsilon$). 
However, over the same interval, the $\mathbf{q}^L$ process makes $O(L)$ transitions and its elements can change in the range $[-L\Kmax, \Kmax]$. 
This phenomenon is known as \textit{separation of time scales} and has been also observed in other systems, e.g.~\cite{hunt1997optimization,hunt1994large}.

To further analyze fluid limits in our setting,
we divide the interval $[t, t+\epsilon)$ into smaller intervals of length 
$\omega(\log L/L)$, 
and infer properties for the fluid limits over $[t, t+\epsilon)$ 
based on averaging the behavior of scaled processes over 
these smaller intervals, as $L \to \infty$, and $\epsilon \to 0$. 
To this end, we first make a few definitions.

Since the rate of change of any of the processes $x^{L(e)}_{\greedk^{(i)}}(\tau)$ and $y^{L}_{j}(\tau)$ over a subinterval is of interest, we give it a special name below.
\begin{definition}[Local Derivatives] Given an interval $[\tau_a, \tau_b)$, we define the ``local derivatives'' of the scaled processes as  
	\begin{eqnarray}\label{eq:local derivative}
&	{\nabla} x^{L(e)}_{\greedk^{(i)}}[\tau_a, \tau_b) := 
	\frac{x^{L(e)}_{\greedk^{(i)}}(\tau_b)
		-x^{L(e)}_{\greedk^{(i)}}(\tau_a)}{\tau_b-\tau_a} 
	\quad i=1, \ldots, \gcnt \\
&	\nabla {y^L_j}[\tau_a,\tau_b) := 
	\frac{y^{L}_{j}(\tau_b)-y^{L}_{j}(\tau_a)}{\tau_b-\tau_a}
	\quad j \in \mathcal{J}.
	\end{eqnarray}
\end{definition}
\begin{definition}\label{def:calJ}
	For any $i \leq \gtcnt-1$, we define a set
	\be\label{eq:defcalJ}
&	\exactJ{i}  := \{j \in \calJ: \greedki^{(i)}_j > 0, \
	\sum_{\ell=1}^{i} \greedki^{(\ell)}_j \globlt{x}^{(g)}_{\greedk^{(\ell)}} = 
	\rho_j\}.
	\ee
\end{definition}

\begin{definition}\label{def:gprime}
	For given positive constants $\alpha_i$, $i=1, \ldots, \gtcnt-1$,
	we define $\gtcntp{\alpha}(t)$ to be the largest index 
	at time $t$ such that $\gtcntp{\alpha}(t) \le \min(i^\star(t), \gtcnt-1)$ and
	\begin{equation}\label{eq:alpha_prop}
	\forall i \in [1, \ldots, \gtcntp{\alpha}(t)]: \
	x^{(g)}_{\greedk^{(i)}} - x^{(e)}_{\greedk^{(i)}}(t) < \alpha_{i}.
	\end{equation}
\end{definition}

\subsubsection{Subinterval construction} \label{sec:subintervals}
We first define a function $f(L)$ below, which will control the length of subintervals.	
\begin{definition}\label{def:sublen}
    The function $\subL$ is defined as
	\be \label{eq:funf}
	&	\subL := \frac{\sqrt{g(L)\log(L)}}{L}
	\ee
    where $g(L)$ is the reservation factor as defined in {\DRA}.
\end{definition}

	We divide $[t, t+\epsilon)$ into smaller intervals $[\tau_{n}, \tau_{n+1})$, such that
	\begin{equation}\label{eq:taudef}
	\begin{aligned}
	\tau_0 = t, \ \tau_n = \tau_{n-1} + D_{L, \epsilon},\ n=1, \ldots, N_{L}, 
	\end{aligned}
	\end{equation}
	where $N_{L} = \ceil{1/\subL}$ is the number of such smaller intervals, and
	$D_{L, \epsilon} = \frac{\epsilon}{N_{L}}$ is the length of each one. 
    We then further divide each $[\tau_n, \tau_{n+1})$ into a \textit{constant}  number $M_n$ of subintervals $[\tau^{(m-1)}_n, \tau^{(m)})$, $m=1, \ldots, \maxm{n}$, $\tau^{(0)}_n = \tau_n$, $\tau^{(\maxm{n})}_n = \tau_{n+1}$. 
    For every $n$, the sequence of stopping times $\tau^{(m)}_n$ is recursively generated as follows:
	   
Each time $\tau^{(m)}_n$ is associated with a \textit{driving} set of job indexes $\subind[m]$, with the initialization $\subind[0] =\varnothing$ and $\tau^{(0)}_n = \tau_n$.
    Suppose $\subind[m-1] := \{ \ji{i}: i=1, \ldots, G_{m-1} \}$ at time $\tau^{(m-1)}_n$, where $\ji{i}\in \exactJ{i}$ (Definition~\ref{def:calJ}). 
    Define $h^{\subind[m-1], (\ell)}(t)$, $\ell=1, \ldots, G_{m-1}$, to be the (unique) solution to the following system of equations
	\begin{equation}\label{eq:h_system}
	\sum_{\ell=1}^i \greedki^{(\ell)}_{\ji{i}} h^{\subind[m-1], (\ell)}(t) = 
	\lambda_{\ji{i}} - \mu_{\ji{i}} y_{\ji{i}}(t),
	\quad i=1, \ldots, G_{m-1}.
	\end{equation}
	The next $\tau^{(m)}_n$ is the \textit{earliest} time 
	$\tau \in [\tau^{(m-1)}_n, \tau_{n+1})$ such that
	$q^L_{\greedk^{(G_m)}, j}(\tau) \ge 0$
	for some $G_m \le \min(G_{m-1}+1, \gtcntp{\alpha}(t))$
	and some $j \in \exactJ{G_m}$.
	Further, if $G_m \le G_{m-1}$, we additionally require that 
	\be\label{eq:h_cond}
    \sum_{\ell=1}^{G_m}
	\greedki^{(\ell)}_{j} h^{\subind[m-1], (\ell)}(t) > \lambda_{j} - \mu_{j} y_{j}(t). 
	\ee
    At such a time $\tau$, we set $\tau_n^{(m)}=\tau$, and the driving index set is set to 
    \begin{equation}
    \subind[m] := \{ \jip{i}: i=1, \ldots, G_m \},
    \end{equation}
    where $\jip{i} = \ji{i}$ for $i=1, \ldots, G_m-1$,
    and $j^\prime_{G_m} = j$. 
    Also, $h^{\subind[m], (\ell)}(t)$, $\ell=1, \ldots, G_m$, is set to the solution of the system of equations \dref{eq:h_system} for the set $\subind[m]$. 
    If no time $\tau \in [\tau^{(m-1)}_n, \tau_{n+1})$
    satisfies the given conditions, then $m = \maxm{n}$
    and $\tau^{(\maxm{n})}_n = \tau_{n+1}$. 

	The importance of quantities $h^{\subind[m], (i)}(t)$, $i=1,\ldots, G_m$, will become evident later where we will show (see Lemma~\ref{lem:hrec} in Appendix) that 
	\begin{equation}
	{\nabla} x^{L(e)}_{\greedk^{(i)}}[\tau^{(m)}_n, \tau^{(m+1)}_n)
	= h^{\subind, (i)}(t) + \frac{o(\subL)}{\tau^{(m+1)}_n - \tau^{(m)}_n}.
	\end{equation}
    Hence, roughly, (\ref{eq:h_system}) gives the values of local derivatives, 
    while when (\ref{eq:h_cond}) occurs, the values of local derivatives change.

Note that the number of stopping times $\maxm{n}$ 
in any interval $[\tau_n, \tau_{n+1})$ is bounded. 
This is because the number of different driving sets $\subind[m]$ 
is finite and no set may appear twice in that sequence,
since the comparison (\ref{eq:h_cond})
induces a total ordering between the sets. 
Considering all possible driving set of indexes
that may appear in the sequence, we have $\maxm{n} \leq 1 + \sum_{i=1}^{\gtcntp{\alpha}(t)} \prod_{\ell=1}^i |\exactJ{i}| < \infty$.

\subsubsection{Properties of fluid limits over subintervals}	
    Given an $\epsilon_\rho > 0$, we first define the set of fluid limit states  
	\begin{equation}\label{eq:epsrho}
	\setConv[\epsilon_\rho] := \{(\mathbf{x}^{(e)},\mathbf{y}): 
    \mathbf{y} \le \bm{\rho} + \epsilon_\rho \} \cap \mathcal{Z}.
	\end{equation}
    The following lemma states the invariant property of  $\setConv[\epsilon_\rho]$.
	\begin{lemma}\label{lem:S}
	If $(\mathbf{x}^{(e)}(0),\mathbf{y}(0)) \in \mathcal{Z}$, 
    then for any $\epsilon_\rho > 0$, there is a time $T_{\epsilon_\rho} > 0$ such that for all $t \geq T_{\epsilon_\rho}$, 
	$(\mathbf{x}^{(e)}(t),\mathbf{y}(t)) \in \setConv[\epsilon_\rho]$. 
    Further, convergence is uniform over all initial states in $\mathcal{Z}$.
\end{lemma}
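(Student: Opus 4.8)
The plan is to show that the set $\setConv[\epsilon_\rho]$ is a \emph{globally attracting and invariant} region for the fluid dynamics by exploiting the stochastic domination of $\mathbf{Y}^L$ by the corresponding $M/M/\infty$ system. First I would recall, from the model section, that $\mathbf{Y}^L(\infty)$ is stochastically dominated by $\mathbf{Y}^\star(\infty)$ whose stationary law is Poisson with mean $L\bm{\rho}$; more to the point, since admitted jobs can only be a subset of all arrivals and services are exponential, each coordinate process $Y^L_j(t)$ is dominated pathwise (after a suitable coupling) by the number in an $M/M/\infty$ queue with arrival rate $\lambda_j L$ and service rate $\mu_j$ started from the same initial value. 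Passing to the fluid limit along the subsequence of Proposition~\ref{prop:fl_limits}, this gives the differential inequality $dy_j(t)/dt \le \lambda_j - \mu_j y_j(t)$ at every regular time $t$, for each $j \in \mathcal{J}$ --- the key one-sided bound. This holds regardless of whether we are on the boundary $\emptyx(t)=0$ or not, because the admission/migration mechanism of {\DRA} only ever removes arriving jobs, never adds phantom ones, so the upper envelope is always the ``admit everything'' dynamics.

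Next I would integrate this scalar inequality. From $dy_j/dt \le \lambda_j - \mu_j y_j$ and $y_j(0) \le \rho_j^{\mathrm{init}}$ for whatever initial value (which is bounded since $(\mathbf{x}^{(e)}(0),\mathbf{y}(0)) \in \mathcal{Z}$ implies $\mathbf{y}(0) \in \Lambda$, a compact set), Gr\"onwall / direct comparison gives
\begin{equation*}
y_j(t) \le \rho_j + \bigl(y_j(0) - \rho_j\bigr)^+ e^{-\mu_j t} \le \rho_j + C\, e^{-\mumin t},
\end{equation*}
where $C$ bounds $\max_j (y_j(0)-\rho_j)^+$ uniformly over $\mathbf{y}(0)\in\Lambda$ and $\mumin = \min_j \mu_j$. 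Hence for any $\epsilon_\rho>0$ there is $T_{\epsilon_\rho} := \frac{1}{\mumin}\log(C/\epsilon_\rho)$ such that $y_j(t) \le \rho_j + \epsilon_\rho$ for all $t \ge T_{\epsilon_\rho}$ and all $j$, i.e. $\mathbf{y}(t) \le \bm{\rho} + \epsilon_\rho$. Since the constant $C$ and hence $T_{\epsilon_\rho}$ depend only on $\Lambda$ (through $\sup_{\mathbf{y}\in\Lambda}\|\mathbf{y}\|_\infty$) and not on the particular initial state, the convergence is uniform over all initializations in $\mathcal{Z}$, as claimed. Finally I would note that $(\mathbf{x}^{(e)}(t),\mathbf{y}(t)) \in \mathcal{Z}$ for all $t \ge 0$ is part of the content of Proposition~\ref{prop:fl_limits} (the region $\mathcal{Z}$ is forward-invariant under the fluid dynamics, since the structural constraints $\sum_i x^{(e)}_{\greedk^{(i)}} \le 1$ and $\sum_i x^{(e)}_{\greedk^{(i)}}\greedk^{(i)} \le \mathbf{y}$ are preserved by {\DRA}'s reassignment rule and inequality \dref{eq:yjA1}), so intersecting with $\{\mathbf{y} \le \bm{\rho}+\epsilon_\rho\}$ gives membership in $\setConv[\epsilon_\rho]$ for $t \ge T_{\epsilon_\rho}$.

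The main obstacle I anticipate is making the stochastic-domination-to-fluid-limit step fully rigorous: one has to construct an explicit coupling of the $L$-th system with an auxiliary family of independent $M/M/\infty$ queues on a common probability space so that $Y^L_j(t) \le \tilde Y^{L}_j(t)$ almost surely for all $t$, then invoke the (classical) functional law of large numbers for $M/M/\infty$ to get $\tilde Y^L_j(t)/L \to \tilde y_j(t)$ solving $d\tilde y_j/dt = \lambda_j - \mu_j \tilde y_j$ u.o.c., and finally transfer the inequality to the limit using the u.o.c. convergence of $\mathbf{y}^L$ already granted by Proposition~\ref{prop:fl_limits}. The coupling itself is routine (pair up arrival epochs; for services, since an admitted job is a thinning of arrivals and service clocks are i.i.d.\ exponential, a standard monotone coupling works), but it needs to be stated carefully so that the one-sided fluid inequality is genuinely pathwise rather than merely in distribution. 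Everything after that is elementary ODE comparison, so I expect the write-up to be short once the coupling is in place.
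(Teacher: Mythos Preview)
Your proposal is correct and essentially identical to the paper's own proof: dominate $Y^L_j$ by the $M/M/\infty$ queue, pass to the fluid limit to get $y_j(t)\le \rho_j+(y_j(0)-\rho_j)e^{-\mu_j t}$, and use a uniform bound on $y_j(0)$ (the paper uses $y_j(0)\le\Kmax$ rather than compactness of $\Lambda$) to extract a single $T_{\epsilon_\rho}=\max_j T_{\epsilon_\rho,j}$. The paper dispatches this in a few lines without spelling out the coupling you worry about; your more careful treatment of that step is not needed for the level of rigor the paper adopts, but it does no harm.
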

\begin{proof} 
	See Appendix~\ref{proof:lem:S}.
\end{proof}
The following proposition states the behavior of scaled processes over the subintervals. 
\begin{proposition}\label{prop:properties}
	For every $m \in \{0, \ldots, \maxm{n}-1\}$, let
	$\subind[m] = \{\ji{i}: i=1, \ldots, G_m\}$ be the index set
	corresponding to time $\tau^{(m)}_n$, and $\ell_m := G_m + 1$.
	Then we can choose $\alpha_i$s in Definition~\ref{def:gprime}, and $\epsilon_\rho$ 
	in \dref{eq:epsrho} sufficiently small, such that, for any regular time $t \geq T_{\epsilon_\rho}$, with probability at least $1 - o(L^{-2})$, 
	all the following properties hold: 
	\begin{enumerate}[leftmargin=*,label=P.\arabic*.]
		\item \label{pr:exact}
		For every $i \in \{1, \ldots, \ell_m-1\}$,
		\be\label{eq:exact}
		\sum_{\ell=1}^{i} \greedki^{(\ell)}_\ji{i}
		\nabla {x}^{L(e)}_{\greedk^{(\ell)}}[\tau^{(m)}_n, \tau^{(m+1)}_n] = 
		\lambda_\ji{i} - \mu_\ji{i} y_\ji{i}(t) 
		+ \frac{o(\subL)}{\tau^{(m+1)}_n - \tau^{(m)}_n}
		\ee
		\item\label{pr:bound1}
		If $\ell_m < \gtcnt$, 
		\be\label{eq:bound1}
		\nabla {x}^{L(e)}_{\greedk^{(\ell_m)}}[\tau^{(m)}_{n}, \tau^{(m+1)}_n] > 
		\frac{\mumin}{2} \alpha_{\ell_m}
		+ \frac{o(\subL)}{\tau^{(m+1)}_n - \tau^{(m)}_n}
		\ee
		\item\label{pr:bound2}
		If $\ell_m = \gtcnt$, 
		\begin{flalign}
		& \nabla {x}^{L(e)}_{\greedk^{(\gtcnt)}}
		[\tau^{(m)}_{n}, \tau^{(m+1)}_n] > 
		 \frac{o(\subL)}{\tau^{(m+1)}_n - \tau^{(m)}_n}+ \nonumber \\
		& \min 
		\left\{\frac{\mumin}{J \Kmax \cfcnt^2}
		\Big(1 - \sum_{i=1}^{\gtcnt}
			{x}^{(e)}_{\greedk^{(i)}}(t)\Big)
		- \sum_{i=1}^{\gtcnt-1} \left(
		\nabla {x}^{L(e)}_{\greedk^{(i)}}
		[\tau^{(m)}_{n}, \tau^{(m+1)}_n]\right)^+, \right. \nonumber \\
		& \left. \min_{j: \greedki^{(\gtcnt)}_j > 0}
		\frac{\lambda_j - \mu_j y_j(t) 
			- \sum_{i=1}^{\gtcnt-1} \greedki^{(i)}_j 
			\nabla {x}^{L(e)}_{\greedk^{(i)}}
			[\tau^{(m)}_{n}, \tau^{(m+1)}_n]}{\greedki^{(\gtcnt)}_j} 
		\right\} \label{eq:bound2}
		\end{flalign}
	\end{enumerate}
\end{proposition}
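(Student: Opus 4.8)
The plan is to establish Proposition~\ref{prop:properties} by a careful time-scale separation argument on each subinterval $[\tau^{(m)}_n, \tau^{(m+1)}_n)$, using concentration of the fast $\mathbf{q}^L$ process while the slow processes $\mathbf{x}^{L(e)}$, $\mathbf{y}^L$ stay nearly constant. The length of each such subinterval is $\Theta(f(L)) = \Theta(\sqrt{g(L)\log L}/L)$, so over it the slow processes move by $O(f(L)) = o(1)$, yet the $\mathbf{q}^L$ process makes $\Theta(L f(L)) = \Theta(\sqrt{g(L)L\log L}) = \omega(\log L)$ transitions, enough for averaging. The first step is to fix the driving set $\subind[m] = \{\ji{i}\}$ and argue that, by the construction of stopping times, for every $i \le G_m$ the quantity $q^L_{\greedk^{(i)}, \ji{i}}(\tau)$ oscillates in a narrow band near $0$ (above $0$ only momentarily, and bounded below by $-K$): indeed $q^L_{\greedk^{(i)},\ji{i}} < \Kmax$ always, and the construction ensures it does not become ``too negative'' because {\EPA} keeps reassigning empty servers to configurations $\greedk^{(\ell)}$, $\ell \le G_m$, whenever $\max_j q^L_{\greedk^{(i)},j} < 0$ (conditions \dref{eq:qfillall}--\dref{eq:qfill}).

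The heart of the argument is \ref{pr:exact}. Here I would write the exact balance equation for $\sum_{\ell=1}^i \greedki^{(\ell)}_{\ji{i}} X^{L(e)}_{\greedk^{(\ell)}}(\tau)$ over the subinterval: its increments come from (a) new type-$\ji{i}$ arrivals admitted (rate $\approx \lambda_{\ji{i}} L$), (b) type-$\ji{i}$ departures ($\approx \mu_{\ji{i}} Y^L_{\ji{i}}(\tau)$), (c) reassignments of empty servers into $\greedk^{(\ell)}$ for $\ell \le i$, and (d) boundary corrections from the $\min(\cdot,\hat X)$ in the definition of $X^{L(e)}$. Because $q^L_{\greedk^{(i)},\ji{i}}(\tau)$ starts and ends in the $O(1)$ band near $0$ (this is exactly what the stopping-time construction guarantees, using $\ji{i} \in \exactJ{i}$ and $G_m \le \gtcntp{\alpha}(t)$), the net change in $q^L_{\greedk^{(i)},\ji{i}}$ over $[\tau^{(m)}_n,\tau^{(m+1)}_n)$ is $O(1)$, hence $o(Lf(L))$. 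Rearranging \dref{eq:q_def} then forces $\sum_{\ell=1}^i \greedki^{(\ell)}_{\ji{i}} (X^{L(e)}_{\greedk^{(\ell)}}(\tau^{(m+1)}_n) - X^{L(e)}_{\greedk^{(\ell)}}(\tau^{(m)}_n))$ to equal the change in $Y^L_{\ji{i}}$, which by the law of large numbers for the (conditionally) Poisson arrival/departure counts over a window of $\omega(\log L)$ events concentrates at $(\lambda_{\ji{i}} - \mu_{\ji{i}} y_{\ji{i}}(t)) L (\tau^{(m+1)}_n - \tau^{(m)}_n)$, up to an $o(Lf(L))$ error with probability $1 - o(L^{-2})$ (a Chernoff bound on Poisson deviations of order $\sqrt{g(L)\log L}$ against a mean of order $\sqrt{g(L)L\log L}$; this is where the precise choice $f(L) = \sqrt{g(L)\log L}/L$ and $g(L) = \omega(\log L)$ are used). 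Dividing by $L(\tau^{(m+1)}_n-\tau^{(m)}_n)$ gives \dref{eq:exact}.

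For \ref{pr:bound1} and \ref{pr:bound2} I would analyze the newly ``activated'' configuration $\greedk^{(\ell_m)}$ with $\ell_m = G_m+1$. By construction, on $[\tau^{(m)}_n,\tau^{(m+1)}_n)$ we have $\max_{j:\greedki^{(\ell_m)}_j>0} q^L_{\greedk^{(\ell_m)},j}(\tau) < 0$ (this is \dref{eq:ql0_} with $i^\star$ replaced by the relevant index), so every empty rank-$(J{+}1)$ server produced by {\EPA} gets reassigned to some $\greedk^{(\ell)}$, $\ell \le \ell_m$. The rate at which Reject-Group servers in $\overline{\texttt{RG}}(\ell_m)$ empty is at least the bound \dref{eq:rank0rate} (Lemma~\ref{lem:emptyrate}), namely $\frac{\mumin}{J\Kmax\cfcnt^2}(1 - \sum_{\ell=1}^{\ell_m} x^{(e)}_{\greedk^{(\ell)}}(t))$; subtracting off whatever share of those servers is absorbed by configurations $\greedk^{(\ell)}$, $\ell < \ell_m$ (whose growth rates are the $(\nabla x^{L(e)}_{\greedk^{(\ell)}})^+$ terms, controlled by \ref{pr:exact}), leaves the residual for $\greedk^{(\ell_m)}$. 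When $\ell_m < \gtcnt$, using $x^{(g)}_{\greedk^{(\ell_m)}} - x^{(e)}_{\greedk^{(\ell_m)}}(t) < \alpha_{\ell_m}$ (Definition~\ref{def:gprime}) together with Lemma~\ref{lem:S} (so $\mathbf{y}(t) \le \bm{\rho} + \epsilon_\rho$, hence there is genuine demand deficit of order $\alpha_{\ell_m}$) one reduces the residual to the clean form $\frac{\mumin}{2}\alpha_{\ell_m}$, choosing $\alpha_i$ and $\epsilon_\rho$ small so the lower-order terms are absorbed. When $\ell_m = \gtcnt$ one cannot appeal to $\alpha$ and instead keeps the two competing bottlenecks explicitly — the server-supply bottleneck and, for each $j$ with $\greedki^{(\gtcnt)}_j > 0$, the type-$j$ demand bottleneck $(\lambda_j - \mu_j y_j(t) - \sum_{i<\gtcnt}\greedki^{(i)}_j \nabla x^{L(e)}_{\greedk^{(i)}})/\greedki^{(\gtcnt)}_j$ — giving the $\min$ in \dref{eq:bound2}.

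The main obstacle I expect is making the averaging rigorous on the \emph{fast} $\mathbf{q}^L$ timescale: one must show that $q^L_{\greedk^{(i)},\ji{i}}$ genuinely returns to the $O(1)$ band at the subinterval endpoints (not just that it is bounded above), which requires tracking the interplay between the stopping-time definition (\dref{eq:h_cond} controls when the driving set changes), the {\EPA} reassignment logic, and the fact that Reject-Group servers feeding the reassignment are themselves emptying at a rate that must be shown not to stall — all while keeping every failure probability at $o(L^{-2})$ so that a union bound over the $O(N_L) = O(L/f(L))$ subintervals still tends to $0$. This is precisely the delicate part that Lemma~\ref{lem:hrec} and Lemma~\ref{lem:emptyrate} in the appendix are designed to handle, and I would lean on them for the quantitative concentration estimates rather than re-deriving everything here.
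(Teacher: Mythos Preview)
Your overall approach matches the paper's: time-scale separation on each $[\tau^{(m)}_n,\tau^{(m+1)}_n)$, pinning $q^L_{\greedk^{(i)},\ji{i}}$ near zero at the endpoints via the stopping-time construction, Chernoff bounds on Poisson counts to get the $o(L^{-2})$ failure probability, and Lemma~\ref{lem:emptyrate} for the supply of emptied servers. The paper's execution is more granular than you sketch---each subinterval is further split at the \emph{last} time $\tau'$ where $\max_j q^L_{\greedk^{(i)},j}(\tau')\ge 0$, and each piece is then handled by a dichotomy ``short interval (bound the absolute change trivially) versus long interval (apply concentration)''---and Claim~2 (that $q^L_{\greedk^{(i)},\ji{i}}(\tau^{(m)}_n)=o(Lf(L))$) is proved by induction on $m$ rather than read off directly from the stopping-time definition. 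But this is broadly consistent with what you flag as the ``main obstacle''.

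There is, however, a genuine confusion in your treatment of \ref{pr:bound1}. You invoke $x^{(g)}_{\greedk^{(\ell_m)}}-x^{(e)}_{\greedk^{(\ell_m)}}(t)<\alpha_{\ell_m}$ from Definition~\ref{def:gprime} to produce a ``demand deficit of order $\alpha_{\ell_m}$''. This has the logic inverted: that inequality says the deficit is \emph{small}, not that it is of order $\alpha_{\ell_m}$, and in any case Definition~\ref{def:gprime} only guarantees it for $i\le\gtcntp{\alpha}(t)$, whereas $\ell_m=G_m+1$ can be $\gtcntp{\alpha}(t)+1$. The paper's argument actually splits into two cases. If $\ell_m>\gtcntp{\alpha}(t)$ then \dref{eq:alpha_prop} \emph{fails} at $\ell_m$, i.e.\ $x^{(g)}_{\greedk^{(\ell_m)}}-x^{(e)}_{\greedk^{(\ell_m)}}(t)\ge\alpha_{\ell_m}$, and this large gap is what drives the growth. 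If $\ell_m\le\gtcntp{\alpha}(t)$ then one cannot use the gap at all; instead the paper uses the separate ``non-triviality'' condition \dref{eq:non_trivial} (built into the choice of $\delta$) to extract a deficit of order $\delta$ for the relevant job type. In both cases the final bound $\tfrac{\mumin}{2}\alpha_{\ell_m}$ emerges only because the $\alpha_i$'s are chosen as a specific geometric sequence \dref{eq:alphadef}, calibrated so that the cross-terms from $h^{\subind,(\ell)}$, $\ell<\ell_m$ (controlled by Lemma~\ref{lem:hbound}) are dominated. Your sketch does not identify either the case split or the role of \dref{eq:non_trivial}, and an attempt along the lines you describe would stall at this step.
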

In words, (\ref{pr:exact}) states that, roughly, for any $i < \ell_m$, 
there is a job type $\ji{i}$ such that
each of the effective number of servers with configurations 
$\{\greedk^{(\ell)}$ for $\ell=1, \ldots, i\}$ changes at a rate 
that can accommodate exactly additional type-$\ji{i}$ arrivals.

(\ref{pr:bound1}) states that effective number of servers with
configuration $\greedk^{(\ell_m)}$ increases by an amount proportional to $\alpha_{\ell_m}$.
This implies that the rate at which ${x}^{(e)}_{\greedk^{(\ell_m)}}(t)$ 
converges to the global greedy solution is lower bounded by a constant 
independent of the system state.

(\ref{pr:bound2}) describes the change in the effective number of servers in $\greedk^{(\gtcnt)}$, the last configuration of the global greedy solution.
The change either satisfies	the same condition as 
(\ref{pr:exact}) or it is bounded by the difference of
how fast Reject Group servers empty (based on \dref{eq:rank0rate} 
for $i^\star(t)=\gtcnt-1$) and at what rate they are assigned to configurations $\greedk^{(i)}$ for $i < \gtcnt$.

\begin{proof}[Proof of Proposition~\ref{prop:properties}]
The proof, including all supporting Lemmas,
is provided in Appendix~\ref{proof:prop:properties}.
\end{proof}

\section{Convergence Analysis}\label{sec:converge}
We show that the fluid limit of the effective configuration process $\mathbf{x}^{(e)}(t)$ (which is a lower bound on the number of servers in each configuration) converges to the global greedy solution $\mathbf{x}^{(g)}$.
\begin{theorem}\label{thm:convergence}
	Consider the fluid limits of the system under {\DRA},
	under any workload $\bm{\rho}$, and any initial state
	$\statez(0) \in \mathcal{Z}$.
	Then
	\begin{equation}
	\lim_{t \to \infty} x^{(e)}_{\bk}(t) =
	\globlt{x}^{(g)}_{\bk}, \quad \bk \in \mathcal{K}^{(g)}.
	\end{equation}
\end{theorem}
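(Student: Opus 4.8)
The plan is to prove Theorem~\ref{thm:convergence} by a Lyapunov argument: I would exhibit a linear functional $\Phi$ of the effective-configuration vector $\mathbf{x}^{(e)}$, built from an auxiliary linear program, that is (i) non-decreasing along every fluid limit sample path and (ii) uniquely maximized, over the fluid states consistent with workload $\bm{\rho}$ (those with $\sum_{\ell\le i}\greedki^{(\ell)}_j x^{(e)}_{\greedk^{(\ell)}}\le\rho_j$ for all $i,j$, cf.\ \dref{eq:yjA1}, together with the defining constraints of $\mathcal{Z}$), at the global greedy assignment $\mathbf{x}^{(g)}[\bm{\rho}]$ of Definition~\ref{def:gga}. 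A convenient concrete choice is $\Phi(\mathbf{x}^{(e)})=\sum_{i=1}^{\gcnt-1} w_i\,x^{(e)}_{\greedk^{(i)}}$, with the configurations in the reward ordering $\greedk^{(1)}\succ\cdots\succ\greedk^{(\gcnt)}$ and weights $w_1\gg w_2\gg\cdots$ decaying fast enough (the precise rate being dictated by the constants in Proposition~\ref{prop:properties}) that maximizing $\Phi$ on the above region becomes the lexicographic ``fill the highest-reward configuration as far as the workload allows, then the next'' rule, whose unique optimizer is $\mathbf{x}^{(g)}$ by Proposition~\ref{prop:index} and Definition~\ref{def:gga}. As a preliminary reduction, Lemma~\ref{lem:S} lets me discard a finite initial segment and assume the sample path lies in $\setConv[\epsilon_\rho]$ from some $T_{\epsilon_\rho}$ on, with $\epsilon_\rho$ and the $\alpha_i$ of Definition~\ref{def:gprime} chosen small enough for Proposition~\ref{prop:properties} to apply; Lemma~\ref{lem:S} also gives the uniformity over initial states in $\mathcal{Z}$ that the theorem requires.

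The core of the proof is a \emph{state-independent} lower bound on the drift of $\Phi$, obtained by summing the local-derivative estimates of Proposition~\ref{prop:properties} over the subinterval construction of Section~\ref{sec:subintervals}. Fix a regular time $t\ge T_{\epsilon_\rho}$, partition $[t,t+\epsilon)$ into the $N_L=\ceil{1/\subL}$ intervals $[\tau_n,\tau_{n+1})$, and each of those into its constant number $\maxm{n}$ of subintervals $[\tau_n^{(m)},\tau_n^{(m+1)})$ with driving set $\subind[m]$ and frontier index $\ell_m$. On such a subinterval: \dref{eq:exact} forces, for every $i<\ell_m$, the weighted prefix $\sum_{\ell\le i}\greedki^{(\ell)}_{\ji{i}}\nabla x^{L(e)}_{\greedk^{(\ell)}}[\tau_n^{(m)},\tau_n^{(m+1)})$ to equal $\mu_{\ji{i}}(\rho_{\ji{i}}-y_{\ji{i}}(t))\ge0$ up to an $o(\subL)/(\tau_n^{(m+1)}-\tau_n^{(m)})$ error, which, since $\ji{i}\in\exactJ{i}$ (Definition~\ref{def:calJ}), pins the prefix sums $\sum_{\ell\le i}x^{(e)}_{\greedk^{(\ell)}}$ to drift monotonically toward their global greedy targets $\sum_{\ell\le i}x^{(g)}_{\greedk^{(\ell)}}$; \dref{eq:bound1} (or \dref{eq:bound2} when $\ell_m=\gtcnt$) gives $\nabla x^{L(e)}_{\greedk^{(\ell_m)}}[\tau_n^{(m)},\tau_n^{(m+1)})\ge\frac{\mumin}{2}\alpha_{\ell_m}$ up to the same error, a frontier drift that is bounded below by a positive constant as long as $\mathbf{x}^{(e)}(t)$ still differs from $\mathbf{x}^{(g)}$. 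Choosing the weights $w_i$ so that the $w_{\ell_m}$-term dominates all $w_i$-terms with $i>\ell_m$ --- whose configurations can only lose effective servers through reassignment, and at a rate bounded by a state-independent constant --- these estimates combine into $\nabla\Phi\ge -o(\subL)/(\tau_n^{(m+1)}-\tau_n^{(m)})$ on every subinterval, and into $\nabla\Phi\ge c(\delta)>0$ at times when $\|\mathbf{x}^{(e)}(t)-\mathbf{x}^{(g)}\|\ge\delta$. One then passes to the fluid limit: the bad event of Proposition~\ref{prop:properties} has probability $o(L^{-2})$ per subinterval, so by a union bound over the $O(L)$ subintervals and a Borel--Cantelli (or subsequence) argument, exploiting $g(L)=\omega(\log L)$, almost surely all these bounds hold simultaneously for all large $L$; the accumulated error over $[t,t+\epsilon)$ is $o(1)$; and letting $\epsilon\to0$ shows that $\Phi(\mathbf{x}^{(e)}(\cdot))$ is non-decreasing with upper Dini derivative at least $c(\delta)>0$ at every regular time at which $\mathbf{x}^{(e)}$ is at distance $\ge\delta$ from $\mathbf{x}^{(g)}$.

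The conclusion is a standard Lyapunov/LaSalle wrap-up: $\Phi(\mathbf{x}^{(e)}(t))$ is non-decreasing and bounded above by $\sum_i w_i$, hence converges; if $\mathbf{x}^{(e)}(t)$ did not converge to $\mathbf{x}^{(g)}$ then, by Lipschitz continuity of the fluid sample paths (Proposition~\ref{prop:fl_limits}), $\mathbf{x}^{(e)}(t)$ would stay at distance $\ge\delta$ from $\mathbf{x}^{(g)}$ on a set of times of infinite total length for some $\delta>0$, forcing $\Phi\to\infty$ --- a contradiction. Hence $x^{(e)}_{\bk}(t)\to x^{(g)}_{\bk}$ for every $\bk\in\mathcal{K}^{(g)}$; for configurations $\bk$ outside the global greedy ones we have $x^{(g)}_{\bk}=0$, and the same drift analysis applied to the corresponding tail terms of $\Phi$ gives $x^{(e)}_{\bk}(t)\to0$ (equivalently, the whole proof can be organized as an induction on the greedy index $i$, settling $\greedk^{(1)},\greedk^{(2)},\dots$ in turn using \dref{eq:exact}--\dref{eq:bound2}). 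I expect the middle step to be the main obstacle: turning the per-subinterval, high-probability local-derivative estimates into a clean, state-independent drift inequality for $\Phi$ at the fluid scale --- i.e.\ making the separation of time scales rigorous and, in particular, tuning the Lyapunov weights $w_i$ together with $\alpha_i$ and $\epsilon_\rho$ so that the guaranteed frontier drift is never cancelled by the reassignment dynamics of the lower-priority configurations. Everything after that drift inequality is routine.
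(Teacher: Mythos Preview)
Your approach is essentially the paper's: a linear Lyapunov function in $\mathbf{x}^{(e)}$ with geometrically decaying weights indexed by the greedy configurations, shown to be uniquely optimized at $\mathbf{x}^{(g)}$ via an LP argument, and shown to have the right drift by summing the local-derivative estimates of Proposition~\ref{prop:properties} over the subintervals of Section~\ref{sec:subintervals}, then passing to the fluid limit via a union bound and Borel--Cantelli. The paper's $V(\mathbf{z})=\sum_i \LC_i(\globlt{x}^{(g)}_{\greedk^{(i)}}-x^{(e)}_{\greedk^{(i)}})+\LC\sum_j(y_j-\rho_j)^+$ is, up to an additive constant and a sign, your $\Phi$ plus one extra term.

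That extra term is not cosmetic, and its absence is a genuine gap in your plan. Your drift argument rests on the claim that \ref{pr:exact} forces $\sum_{\ell\le i}\greedki^{(\ell)}_{\ji{i}}\nabla x^{L(e)}_{\greedk^{(\ell)}}=\mu_{\ji{i}}(\rho_{\ji{i}}-y_{\ji{i}}(t))\ge0$. But Lemma~\ref{lem:S} only places the trajectory in $\setConv[\epsilon_\rho]$, where $y_{\ji{i}}(t)$ may exceed $\rho_{\ji{i}}$ by up to $\epsilon_\rho$, so this quantity can be strictly negative. Because the offending indices $i<\ell_m$ carry the \emph{largest} weights $w_i$, a contribution of order $w_1\mumax\epsilon_\rho$ cannot be absorbed by the frontier gain $w_{\ell_m}\tfrac{\mumin}{2}\alpha_{\ell_m}$ under any weight ratio $w_i/w_{i+1}$ large enough to make the LP-uniqueness argument go through (the paper needs $\xi>2\Kmax+1$ for Proposition~\ref{prop:VS_} and in fact the much larger $\xi$ of \dref{eq:xidef} for the drift). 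The paper's remedy is precisely the penalty $\LC\sum_j(y_j-\rho_j)^+$ with $\LC>4\LC_1$: in the drift computation (Lemma~\ref{lem:Dx}) the negative prefix contributions are recast, via a triangular change of basis, as $\sum_i \LC_i^\prime(\lambda_{\ji{i}}-\mu_{\ji{i}}y_{\ji{i}}(t))$, and whenever a summand is negative it is cancelled against the derivative $-\mu_j(y_j-\rho_j)^+$ of the penalty (see \dref{eq:Zp1b}, \dref{eq:inter2}). The same penalty is also what makes the LP in Proposition~\ref{prop:VS_} have $\mathbf{x}^{(g)}$ as its \emph{unique} optimizer over $\setConv[\epsilon_\rho]$; on your constraint set ``$\sum_{\ell\le i}\greedki^{(\ell)}_j x^{(e)}_{\greedk^{(\ell)}}\le\rho_j$'' the optimum would coincide with $\mathbf{x}^{(g)}$, but that set is not the one the fluid limit actually lives in (the binding constraint is \dref{eq:yjA1}, i.e.\ $\le y_j$, not $\le\rho_j$).

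A second, smaller point: when $\ell_m=\gtcnt$, \ref{pr:bound2} does not give a uniform lower bound $\tfrac{\mumin}{2}\alpha_{\ell_m}$; the minimum in \dref{eq:bound2} can be arbitrarily small near $\setOpt$. The paper handles this case separately (the two subcases in the proof of Lemma~\ref{lem:Dx}) and obtains a drift proportional to $V(\statez(t))$ rather than a state-independent constant. Your final LaSalle wrap-up already anticipates a $c(\delta)$ depending on the distance to $\mathbf{x}^{(g)}$, so this is easy to accommodate once noticed, but the sentence asserting that \ref{pr:bound2} yields $\nabla x^{L(e)}_{\greedk^{(\gtcnt)}}\ge\tfrac{\mumin}{2}\alpha_{\gtcnt}$ is not correct as stated.
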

\begin{proof}
	Recall that $\mathbf{z}(t) =(\mathbf{x}^{(e)}(t),\mathbf{y}(t)).$
	We want to show that $\mathbf{z}(t)$ converges
	to a point in the set $\setOpt$ defined as
	\begin{equation}\label{eq:substar}
	\setOpt := \{ \mathbf{z}:=(\mathbf{x}^{(e)},\mathbf{y}) \in \setConv[\epsilon_\rho]:
	x^{(e)}_{\assgnk} = \globlt{x}^{(g)}_{\assgnk},\ \assgnk \in \mathcal{K}^{(g)} \}.
	\end{equation}
	where $\setConv[\epsilon_\rho]$ was defined in \dref{eq:epsrho}.

	To show convergence, we use a Lyapunov function of the form
	\begin{equation}\label{eq:Vdef}
	V(\mathbf{z}(t)) := \sum_{i=1}^{\gtcnt} \LC_{i}
	\left(\globlt{x}^{(g)}_{\greedk^{(i)}} -
	{x}^{(e)}_{\greedk^{(i)}}(t) \right)
	+ \LC \sum_{j=1}^J (y_j(t) - \rho_j)^+,
	\end{equation}
where $\LC$ and $\LC_i$, $i \in \{1, \ldots, \gtcnt\}$, are positive constants satisfying
\begin{equation}\label{eq:Vcond}
	\LC > 4\LC_1, \ \ \LC_i > \xi \LC_{i+1},\ i=1, \ldots, \gtcnt-1,
\end{equation}
for a $\LC_{\gtcnt} > 0$, and a sufficiently large constant $\xi > 2\Kmax + 1$.

The constants $\epsilon_\rho$ and $\xi$ will be chosen carefully to ensure the conditions of LaSalle's invariance principle~\cite{LaSalle60, Cohen17} hold for any $\mathbf{z} \in \setConv[\epsilon_\rho]$, i.e.,
\begin{itemize}
	\item [(i)] For any $\mathbf{z} \in \setConv[\epsilon_\rho]$, we have
	$V(\mathbf{z}) \ge 0$ and
	$V(\mathbf{z}) = 0$ if and only if $\mathbf{z} \in \setOpt$,
\item [(ii)] For any $\mathbf{z}(t) \in \setConv[\epsilon_\rho] \setminus \setOpt$,
	${{\rm d}V(\mathbf{z}(t))}/{{\rm d}t} < 0$, almost surely.
\end{itemize}
	These conditions together with Lemma~\ref{lem:S} will then imply that the
	limit points of trajectory $\mathbf{z}(t)$ are in $\setOpt$.

	We state each condition as a Proposition followed by its proof.
	\begin{proposition}\label{prop:VS_}
		Consider $V(\statez)$ in \dref{eq:Vdef}, with coefficients in \dref{eq:Vcond}, for any $\xi > (2\Kmax+1)$, and
		$\epsilon_\rho > 0$.
		Then we have $V(\statez) \ge 0$ for any
		$\statez \in \setConv[\epsilon_\rho]$, and
		$V(\statez) = 0$ if and only if $\statez \in \setOpt$.
	\end{proposition}
	\begin{subproof}[Proof of Proposition~\ref{prop:VS_}]
		Consider the following maximization problem over $\bm{\eta} \in \mathds{R}^{\gtcnt}, \bm{\theta} \in \mathds{R}^{J}$, where $\eta_i$ corresponds to  ${x}^{(e)}_{\greedk^{(i)}}(t)$ and
		$\theta_j$ corresponds to $(y_j(t) - \rho_j)^+$ in \dref{eq:Vdef},
		\begin{maxi!}{\bm{\eta}, \bm{\theta}}
			{\sum_{i=1}^{\gtcnt} \LC_i \eta_i - \sum_{j=1}^J \LC \theta_j} {\label{eq:optZ_}}{}
			\label{eq:optZ_o_}
			\addConstraint{}{\textstyle \sum_{i=1}^{\gtcnt} \eta_i \le 1,}\label{eq:optZ_s_}
			\addConstraint{}{\textstyle \sum_{i=1}^{\gtcnt}
				\greedki^{(i)}_{j} \eta_i - \theta_j \le \rho_j,}
			{\quad j=1, \ldots, J}\label{eq:optZ_wl_}
			\addConstraint{}{\theta_j\leq \epsilon_\rho,} {\quad j=1, \ldots, J}
			\label{eq:optZ_eps_}
			\addConstraint{}{\eta_i\geq 0,} {\quad i=1, \ldots, \gtcnt} \label{eq:optZ_ge0_}
			\addConstraint{}{\theta_j\geq 0,} {\quad j=1, \ldots, J}
			\label{eq:optZ_ge0b_}.
		\end{maxi!}

		To prove the proposition, it is enough to show that the assignment $(\bm{\eta}^{(g)}, \bm{\theta}^{(g)})$ that corresponds to the global greedy solution $\mathbf{\globlt{x}}^{(g)}$ is the unique maximizer of the above LP.
		This assignment is
		\begin{equation}\label{eq:eta_sol}
		\begin{aligned}
		&\eta_{i}^{(g)} = \globlt{x}^{(g)}_{\bk^{(i)}},
		\ i=1, \ldots, \gtcnt, \\
		&\theta_j^{(g)} = 0, \quad j = 1, \ldots, J.
		\end{aligned}
		\end{equation}
		First note that \dref{eq:eta_sol} is a basic feasible solution for LP (\ref{eq:optZ_}), i.e., it is a corner point of the LP's Polytope, since it is on the boundary of $\gtcnt +J$ independent inequalities (equal to the number of variables).

		To show that (\ref{eq:eta_sol}) is the ``unique maximizer'', we need to verify
		that every neighboring corner point has lower objective value, and to do this, it suffices to verify that by moving along any valid direction within the Polytope, starting from assignment (\ref{eq:eta_sol}), the objective value is reduced.
		This proves that point (\ref{eq:eta_sol}) is locally optimal, which implies it is also global optimal, since the optimization
		is LP (and convex) \cite{Boyd2004}. 	In the rest of the proof, we use $\maptg{j}$ to be the mapping in
		Definition~\ref{def:mapgreedy} for $j=1, \ldots, \ggi$, and
		$\perm{j}$ to be the permutation of indexes $\{1, \ldots, J\}$ as defined in Proposition~\ref{prop:index}.

		We define $\Delta \eta_i := \eta_i^\prime - \eta_i^{(g)}$
		for $i \in \{1, \ldots, \gtcnt\}$, and
		$\Delta \theta_j := \theta_j^\prime$
		for  $j \in \{1, \ldots, J\}$, where
		$\eta_i^\prime$ and $\theta_j^\prime$ are the values of
		a feasible point.
		We prove that the change in objective is negative considering only one positive
		$\Delta \eta_i$ for some $i \in \{1, \ldots, \gtcnt\}
		\setminus \{\maptg{j}: j=1, \ldots, \ggi\}$, while the other $\Delta \eta_i$s in this set
		are $0$, and constraints (\ref{eq:optZ_s_})--(\ref{eq:optZ_ge0b_})
		are not violated.
		This suffices because any feasible point can be constructed
		as a convex summation of the changes $\Delta \eta_i$ and
		if individual changes reduce objective, their convex sum
		will reduce the objective too.

		Suppose $i^\star \in \{1, \ldots, \gtcnt\} \setminus
		\{\maptg{j}: j=1, \ldots, \ggi\}$ is the index for which
		$\Delta \eta_{i^\star} > 0$. A feasible point will necessarily
		satisfy the following set of equations, which correspond to
		$\gtcnt + J$ constraints
		(specifically, \dref{eq:optZ_s_}, \dref{eq:optZ_wl_} for
		$j \in \{\perm{j^\prime}: j^\prime = 1, \ldots, \ggi\}$,
		and \dref{eq:optZ_ge0b_} for
		$j=1, \ldots, J$) which held as
		equalities at point \dref{eq:eta_sol},
		\begin{equation}\label{eq:sys}
		\begin{aligned}
		&-\Delta\theta_j \le 0, \quad j=1, \ldots, J, \\
		&-\Delta \eta_{i^\star} < 0;\ \Delta \eta_{i} = 0, i \neq i^\star,
		i \in \{1, \ldots, \gtcnt\} \setminus \{\maptg{j}: j=1, \ldots, \ggi\}, \\
		&\greedki^{(i^\star)}_\perm{j} \Delta \eta_{i^\star} +
		\textstyle \sum_{\ell=1}^j \greedki^{(\maptg{\ell})}_\perm{j}
		\Delta \eta_\maptg{\ell} - \Delta\theta_\perm{j} \le 0,
		\quad j=1, \ldots, \ggi-1, \\
		&\textstyle  \Delta \eta_{i^\star} + \sum_{j=1}^{\ggi} \Delta \eta_\maptg{j} \le 0.
		\end{aligned}
		\end{equation}
		Notice that the conditions \dref{eq:sys} are not necessarily sufficient so
		even if all of them are satisfied the resulting point
		may be infeasible. Nevertheless, we prove that in any case the objective function will be reduced.
		The change in value of objective function is given by
		\be\label{eq:dobj}
		& \Delta F:= \sum_{\ell=1}^\ggi \LC_{\maptg{\ell}} \Delta \eta_{\maptg{\ell}} +
			\LC_{i^\star} \Delta \eta_{i^\star} - \LC \sum_{j=1}^J \Delta\theta_j.
		\ee

		Given the conditions (\ref{eq:sys}), we show (\ref{eq:dobj}) will be
		negative by finding constants $\beta > 0$,
		$\beta_j > 0$, $j=1, \ldots, \ggi$, and
		$\gamma_j > 0$, $j=1, \ldots, J$, such that
		\be \label{eq:objcase2}
		\Delta F &= & \textstyle  \beta(- \Delta \eta_{i^\star}) + \sum_{j=1}^{\ggi-1} \beta_j \left(
		\greedki^{(i^\star)}_\perm{j} \Delta \eta_{i^\star} +
		\sum_{\ell=1}^j \greedki^{(\maptg{\ell})}_{\perm{j}}
		\Delta \eta_\maptg{\ell} - \Delta \theta_\perm{j}
		\right) \nonumber
		\\
		& & \textstyle  + \beta_\ggi
		\left(\Delta \eta_{i^\star} +
		\sum_{\ell=1}^\ggi \Delta \eta_\maptg{\ell}
		\right)
		+ \sum_{j=1}^J \gamma_j \left(- \Delta \theta_j\right).
		\ee
		It is not difficult to show by matching the coefficients of
		\dref{eq:dobj} and \dref{eq:objcase2} that
		the values of $\beta$, $\beta_j$,
		for $j=1, \ldots, \ggi$ and $\gamma_j$ for $j=1, \ldots, J$,
		are strictly positive for the choice of $\LC$ and $\LC_i$'s in the
		proposition's statement. The details can be found in Appendix~\ref{prf:convergence}.
	\end{subproof}
    \begin{proposition}\label{prop:dV}
        For function $V(\statez)$, as defined in~(\ref{eq:Vdef}) and \dref{eq:Vcond}, there is a constant $\xi > 2\Kmax + 1$, such that
        if $\statez(t) \in \setConv[\epsilon_\rho] \setminus \setOpt$, then $\frac{\rm d}{{\rm d}t} V(\statez(t))<0$.
    \end{proposition}
    To prove Proposition~\ref{prop:dV}, we first prove the following 
    lemma for the local derivatives over subintervals $[\tau_n, \tau_{n+1})$ defined in Section~\ref{sec:subintervals}.

	\begin{lemma}\label{lem:Dx}
    Consider the Lyapunov function $V(\statez)$ defined in~(\ref{eq:Vdef}). 
    We can choose the constant $\xi > 2\Kmax + 1$ sufficiently large such that the following holds. 
    If at a regular time $t$, $V(\statez(t)) > \epsilon_V$, for some $\epsilon_V > 0$,  
    then there is a $\delta(\epsilon_V) > 0$ such that for any $n \in \{0, \ldots, N_{L}-1\}$,
		\begin{equation}\label{eq:rX}
		\sum_{i=1}^{\gtcnt} \LC_i
		\nabla x^{L(e)}_{\greedk^{(i)}}[\tau_{n}, \tau_{n+1}]
		> \delta(\epsilon_V) + \sum_{j=1}^J \LC \frac{\rm d}{{\rm d}t} (y_j(t) - \rho_j)^+
		+ o(1),
		\end{equation}
		with probability greater than $1 - o(L^{-2})$
	\end{lemma}
	\begin{subproof}[{Proof of Lemma~\ref{lem:Dx}}]
		The proof of Lemma~\ref{lem:Dx} is based on using (i) properties of fluid limits in Proposition~\ref{prop:properties}, and (ii) the boundedness of local derivatives (Lemma~\ref{lem:rbound} in Appendix~\ref{proof:bound}), 
        and (iii) the fact that $ \frac{\rm d}{{\rm d}t} (y_j(t) - \rho_j)^+ \le
		-\mu_j(y_j(t) - \rho_j)^+.$

		The detailed proof can be found in Appendix~\ref{proof:lem:Dx}.
	\end{subproof}
	Finally, by using Lemma~\ref{lem:Dx}, we can show that change
	of $V(\statez(t))$ is negative, almost surely, by averaging the
	change of $V(\statez(t))$ over all the subintervals
	$[\tau_n, \tau_{n+1})$ of $[t, t+\epsilon)$, as we do below.
	
	\begin{subproof}[Proof of Proposition~\ref{prop:dV}]

        Note that at any regular time $t$,
        \begin{equation}\label{eq:dV}
            \frac{\rm d}{{\rm d}t} V(\statez(t)) = -\sum_{i=1}^{\gtcnt} \LC_i \frac{\rm d}{{\rm d}t} x^{(e)}_{\greedk^{(i)}}(t)
            + \sum_{j=1}^J \LC
            \frac{\rm d}{{\rm d}t} (y_j(t) - \rho_j)^+,
        \end{equation}
        and
        $
        \frac{\rm d}{{\rm d}t} x^{(e)}_{\greedk^{(i)}}(t)=\lim_{\epsilon \to 0} \lim_{L \to \infty}  \frac{x^{L(e)}_{\greedk^{(i)}}(t+\epsilon)-x^{L(e)}_{\greedk^{(i)}}(t)}{\epsilon}.
        $
        Hence, using the division of $[t, t+\epsilon)$ into $N_{L}$ subintervals $[\tau_n, \tau_{n+1})$ of equal size, as defined in Section~\ref{sec:subintervals}, we can write
		\begin{equation*}\label{eq:loopclose}
		\begin{aligned}
        \frac{\rm d}{{\rm d}t} V(\statez(t))= & -\lim_{\epsilon \to 0} \lim_{L \to \infty}
		\frac{1}{N_{L}} \sum_{n=1}^{N_{L}}
		\sum_{i=1}^{\gtcnt} \LC_i
		\nabla x^{L(e)}_{\greedk^{(i)}} [\tau_{n}, \tau_{n+1}]	\\
		& + \sum_{j=1}^J \LC \frac{\rm d}{{\rm d}t} (y_j(t) - \rho_j)^+  \\
		\stackrel{(a)}{<} & -\delta(\epsilon_V)
		-\lim_{\epsilon \to 0} \lim_{L \to \infty}
		\frac{1}{N_{L}} \sum_{n=1}^{N_{L}}
		o(1) \stackrel{(b)}{=}  -\delta(\epsilon_V)
		< 0,
		\end{aligned}
		\end{equation*}
		where in (a) we used \dref{eq:rX} of Lemma~\ref{lem:Dx} in every subinterval $[\tau_{n}, \tau_{n+1}]$ and in (b) we used the property that
		$ \sum_{n=1}^{N_L} o(1) /N_L = o(1)$.

		Let $E_L$ be the event that
		$$
		-\frac{1}{N_{L}} \sum_{n=1}^{N_{L}}
		\sum_{i=1}^{\gtcnt} \LC_i
		\nabla x^{L(e)}_{\greedk^{(i)}} [\tau_{n}, \tau_{n+1}]
		+ \sum_{j=1}^J \LC \frac{\rm d}{{\rm d}t} (y_j(t) - \rho_j)^+ > 0.
		$$
		The probability that \dref{eq:rX} holds for all $N_L$ subintervals, is
		at least $1 - N_L o(L^{-2}) = 1 - o(L^{-1})$, which follows from $N_L = \Theta(1/\subL)$ based on
		Definition~\ref{def:sublen}. Hence, $\mathds{P}(E_L) < o(L^{-1})$, and $\frac{\rm d}{{\rm d}t} V(\statez(t)) < 0$ holds
		in probability.
		We can further show that convergence is almost sure. 
        This is because
		$\sum_{L=1}^{\infty} \mathds{P}(E_L) <
		\sum_{L=1}^{\infty} o(L^{-1}) < \infty,$
		and by the Borel-Cantelli Lemma~\cite{billingsley2008probability},
		$\frac{\rm d}{{\rm d}t} V(\statez(t))<0$, almost surely.
	\end{subproof}
	Propositions~\ref{prop:VS_} and \ref{prop:dV} complete the proof of Theorem~\ref{thm:convergence}.
\end{proof}

\begin{proof}[\textbf{Proof of Theorem~\ref{thm:main}}]
	The proof follows from Theorem~\ref{thm:convergence} and Theorems~\ref{thm:optimality} and \ref{thm:optimality2}. The details are standard and can be found in Appendix~\ref{sec:maintheorem_proof}.
\end{proof}
\section{Simulation Results}\label{sec:sim}

\subsection{Evaluation using synthetic traffic}
In this section, we evaluate the approximation ratio and convergence properties of {\DRA}.
We start by choosing the VM types considering
the VM instances offered by major cloud providers like Google Cloud,
are mainly optimized for either memory, CPU, or regular usage.
Further, instances are priced proportional to the resources they request, with each resource having a base pricing rate.
To simplify simulations, we considered instances that only have memory and CPU requirements. 
\begin{table}[ht]
\centering
	\begin{tabular}{ | c | c |c|c|c|}
		\hline	
			\multicolumn{2}{|c|}{vCPU} & \multicolumn{3}{|c|}{Memory: GB per vCPU}\\ \hline
			Small&Large& High&Low&Regular \\ \hline
				2,4, or 8& 32 or 64& 8 or 16 &1 or 2 &4 \\
		\hline
	\end{tabular}
    \vspace{0.05 in}
	\caption{The representative VM instances from Google Cloud based on combination of vCPU and Memory.}\label{table1}
	\vspace{-0.1 in}
\end{table} 
In particular, we used representative VM instances, based on combination of vCPU and memory in Table~\ref{table1}. 
Lastly, each vCPU usage generates $8$ reward per unit time, while each GB of memory generates $1$. 
This choice was made based on the relative pricing of CPU and memory of VMs offered by Google Cloud, 
according to which 8 GB memory is approximately priced as much as 1 vCPU \cite{GoogleVMpricing}. 
We generated random collections of VM types, each with three small and three large VMs, with vCPU and memory chosen randomly from Table~\ref{table1}. 
Servers always have capacity of 80 vCPUs and 640 GB of memory. 
The normalized workload $\rho_j$ for each VM type $j$ is selected uniformly at random between $0.2$ to $2$.
The statistics we obtained based on $50$ randomly generated VM collections and workloads was that, 
in $23$ of them reward of global greedy was identical to the optimal, 
on average its ratio compared to optimal was $0.972$ and in the worst case it was no less than $0.86$. 
Recall that optimal can be found by solving optimization (\ref{eq:opt1b}).
For the rest of simulations, we considered a subset of the worst-case VM collection and its corresponding workload, namely, VM types are:  (1, 1), (4, 16), (2, 32), (32, 256), and $\bm{\rho}$ rounded to $(2, 1/2, 4/3, 1)$.

To better understand how workload may affect the approximation ratio,  
we study this worst-case example and scale its workload $\bm{\rho}$ by a factor $\alpha$ that ranges from $0$ to $10$. 
Figure~\ref{fig:reward_vs_wl} shows the reward for the global greedy 
${U}^{(g)}[\alpha\bm{\rho}]$ and the optimal reward ${U}^{\star}[\alpha\bm{\rho}]$.
We notice there are two critical $\alpha$ points. 
Before the first point,
the workload is low enough such that the global greedy assignment 
can fully accommodate it, hence its reward is the same as the optimal which should also be able to accommodate the full workload. 
The second point is a point above which the workload is high such that it is possible to assign the configuration of maximum reward to all servers without leaving any slots empty. 
In this case, both the rewards will coincide again, and take the maximum possible value.

In Figure~\ref{fig:reward_vs_wl}, the two critical points
are $\alpha = 6/7$ and $\alpha = 6$. 
The worst ratio between the reward of global greedy and the optimal occurs at $\alpha = 1$, which is~$\approx 0.862$.
Note that in general ${U}^{(g)}[\alpha\bm{\rho}]$ and ${U}^{\star}[\alpha\bm{\rho}]$ might coincide even between the critical points although this is not the case for this example.

To study the impact of the number of servers $L$, we run {\DRA} in systems with various number of servers, and compare the obtained average normalized reward (normalized with $L$) with the global greedy reward ${U}^{(g)}[\bm{\rho}]$, and the optimal reward ${U}^{\star}[\bm{\rho}]$. 
The arrivals are generated at rate $\rho_j L$, and service times are exponentially distributed with mean 1. 
The result is depicted in Figure~\ref{fig:reward_vs_srv}, which clearly shows that as the number of servers $L$ becomes large, {\DRA} approaches the global greedy reward and 86\% of the optimal reward.  
Further, Figure~\ref{fig:reward_time} shows how the reward of {\DRA} evolves over time and converges to the global greedy reward when $L=180$.

\begin{figure}
	\centering
	\includegraphics[width=0.8\columnwidth
	]{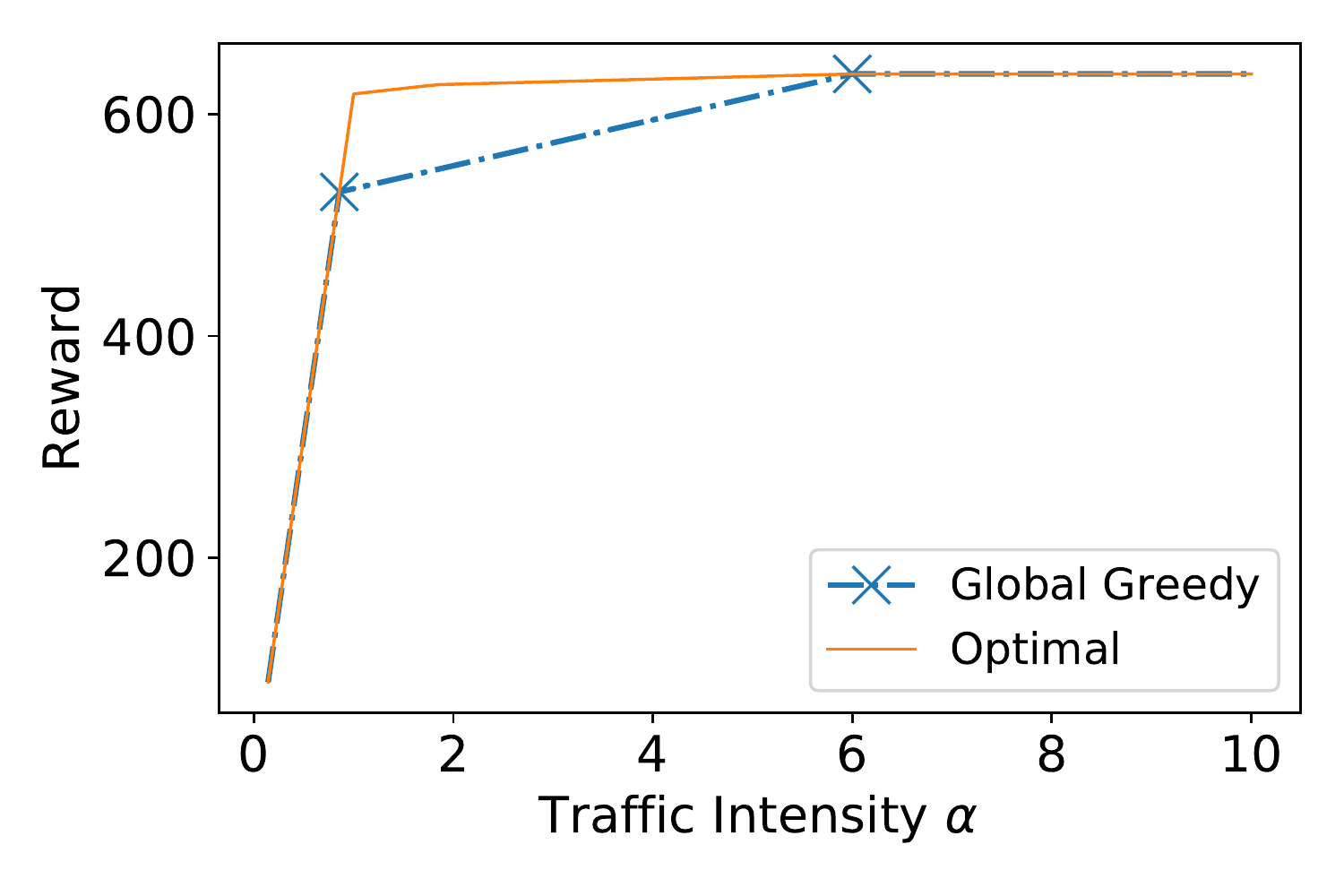}
	\caption{Global greedy vs. optimal,
		as workload $\alpha \bm{\rho}$ increases. The rewards coincide outside the marked points.}
	\Description{Global greedy reward coincides with optimal reward
		outside limit points, but not necessarily in between.}
	\label{fig:reward_vs_wl}
	\vspace{-0.1in}
\end{figure}
\begin{figure}
    \centering
    \includegraphics[width=0.9\columnwidth]{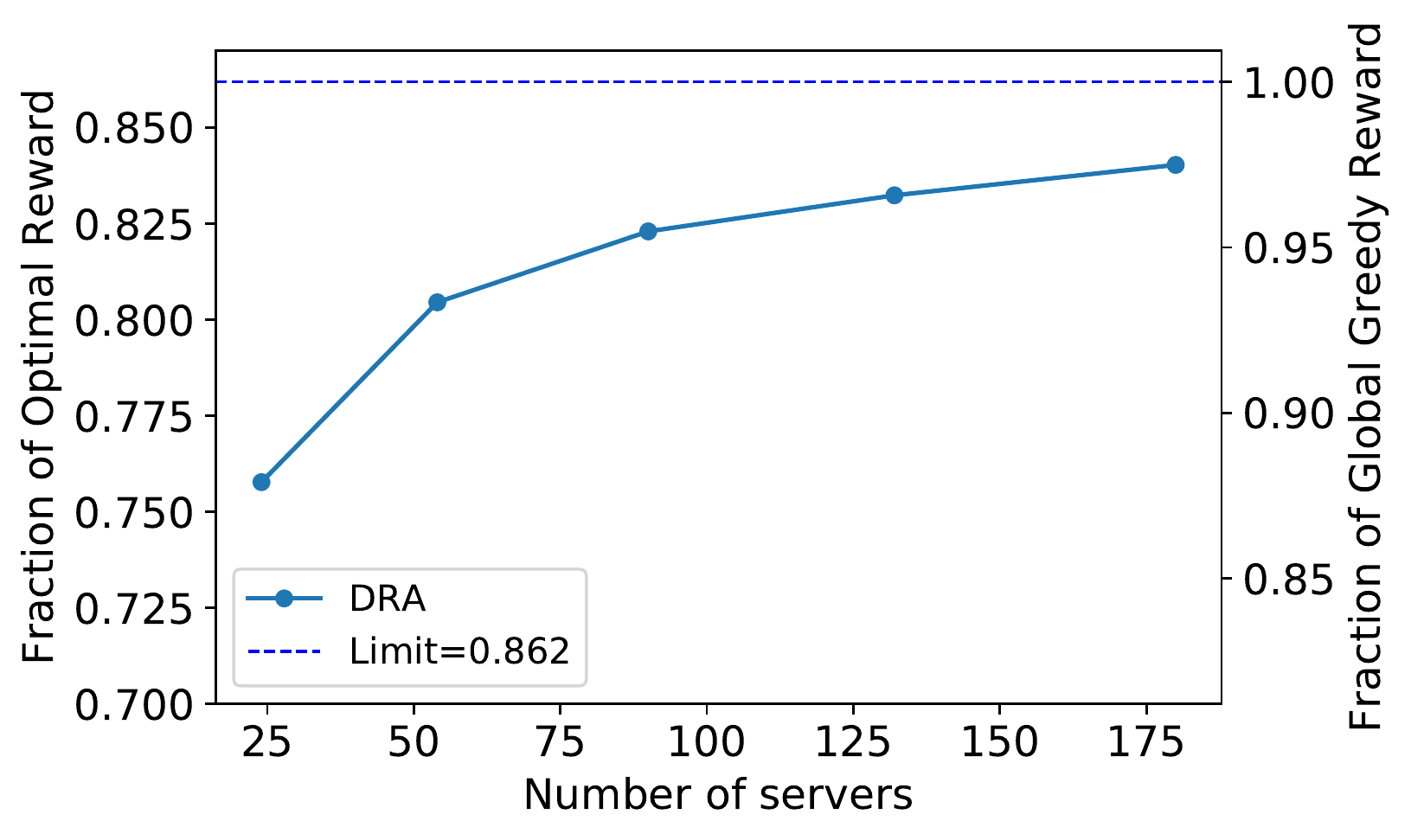}
    \caption{The reward of {\DRA} as a fraction of the optimal reward 
        (left y-axis), and that of the global greedy (right y-axis).}
    \Description{The reward of {\DRA} approaches the global greedy one
        as number of servers increase.}
    \label{fig:reward_vs_srv}
\end{figure}
\begin{figure}
    \centering
    \includegraphics[width=0.8\columnwidth]{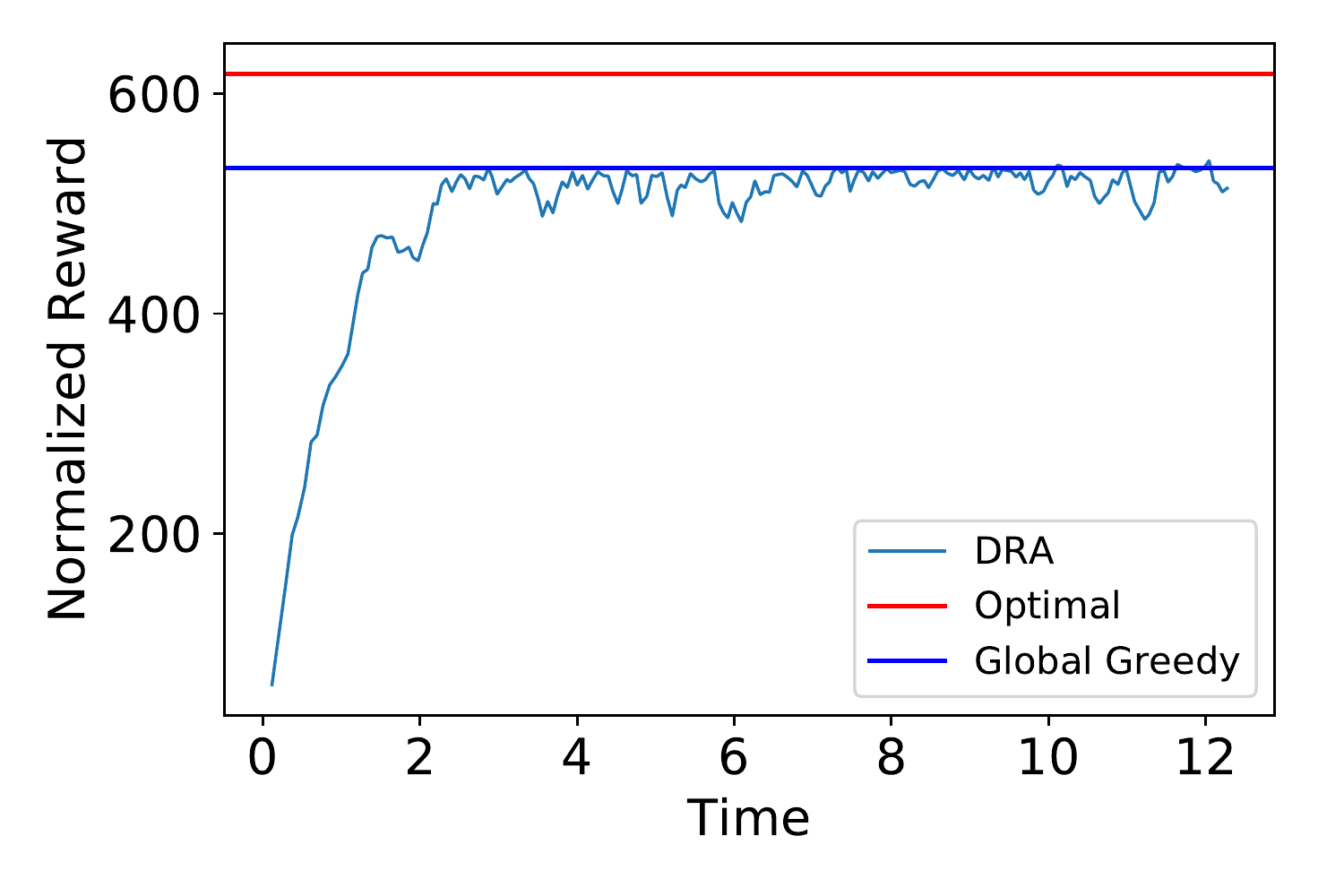}
    \caption{Convergence of the reward of {\DRA} to that of the global greedy assignment over time	when $L=180$ servers.}
    \Description{{\DRA} converges over time to global greedy in this examples of $L=180$ servers.}
    \label{fig:reward_time}
\end{figure}
\begin{figure}
	\centering
	\includegraphics[width=0.9\columnwidth
	]{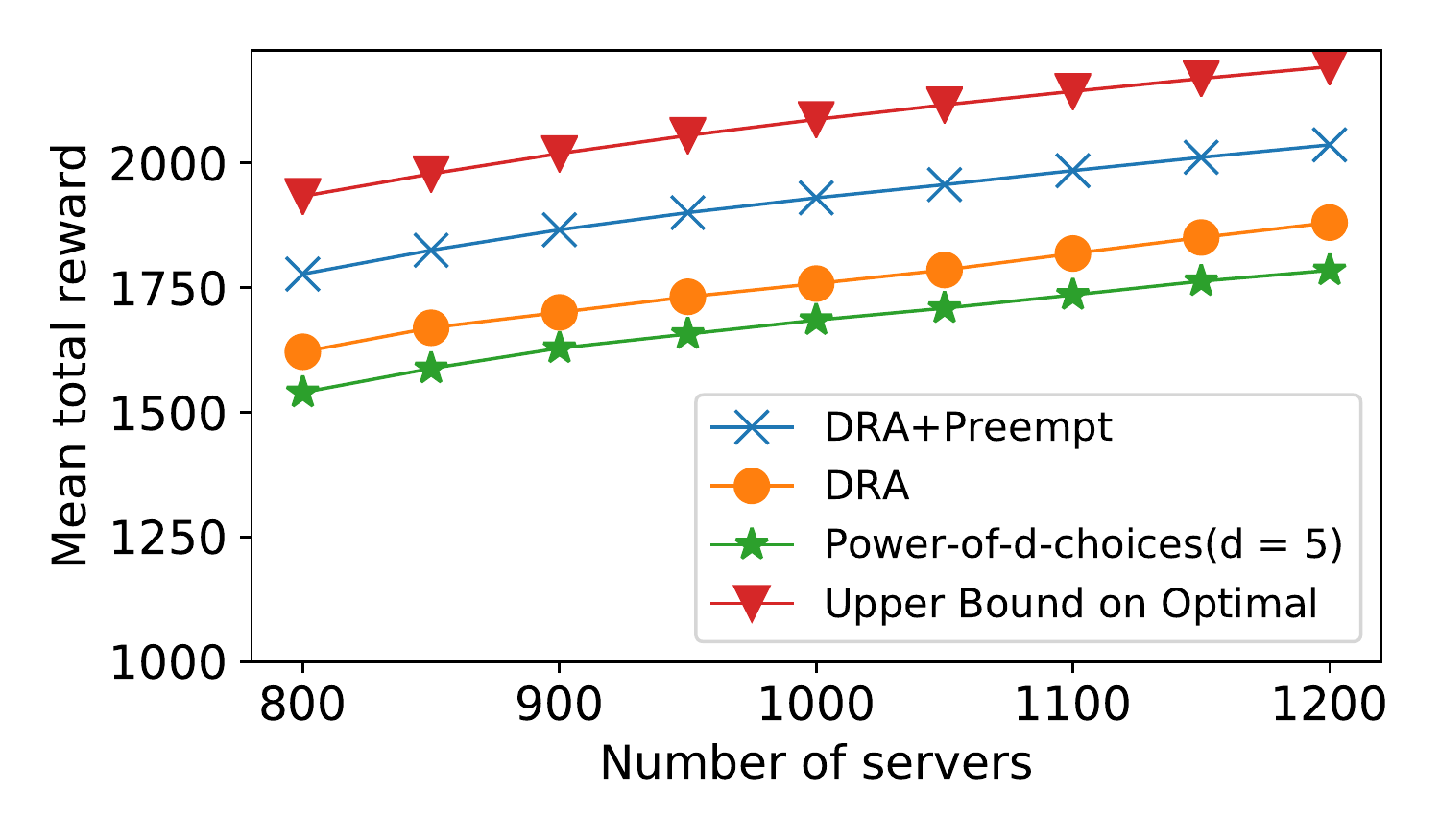}
	\caption{The comparison of rewards for different number of servers based on the Google trace.}
	\Description{The reward of {\DRA} compared to two alternatives and
		an upper bound for different number of servers when simulated
		on the Google trace. {\DRA} with preemptions is better
		than the default {\DRA} and better than the power-of-$d$-choices}
	\label{fig:gtraceSeries}
	\centering
	\includegraphics[width=0.9 \columnwidth]{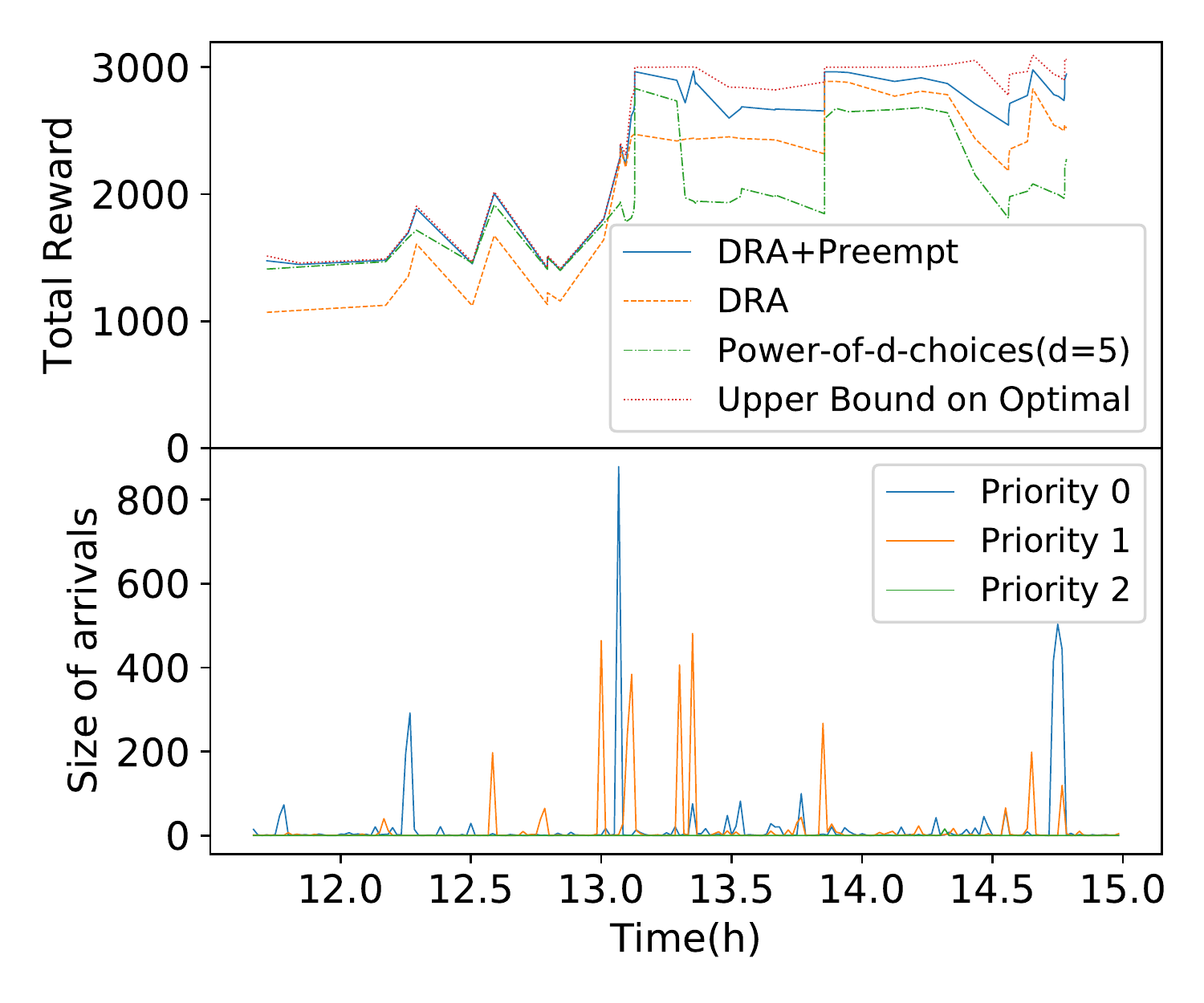}
	\caption{Comparison of the reward over time of different algorithms for a part of the Google trace.}
	\Description{Comparison of the reward of different algorithms over time for a part of the Google trace. {\DRA} with preemptions
		reacts to spikes in demand better than the other two.}
	\label{fig:gtraceDRA}
\end{figure}

\subsection{Evaluation using  real traffic trace}

We evaluate our algorithm using a more realistic setting with arrival and service times extracted from a Google cluster dataset~\cite{ClusterData}.
In particular, we extracted tasks which were completed within the time window of the trace and used the first $1$ million in all simulations.
Tasks were mapped to types by setting their resource requirements
to be the largest of the requested resources and rounding it up to the closest power of $1/2$. 
Their reward was set to be equal to their rounded size multiplied by a factor that depends on their priority. 
Factor is $1,3,9$ for priorities $0,1,2$ respectively.
Tasks have the same type if both their priority and normalized size are equal.
The size of servers is normalized to $1$.

We compare the performance of {\DRA} and three other algorithms:

    \textbf{Upper bound}: It solves optimization~(\ref{eq:opt1c}) with $\hat{\mathbf Y}(t)$ being the number of
	jobs in an infinite server system that rejects no jobs. 
    This gives an upper bound on the performance of any algorithm.
	
	\textbf{Power-of-$d$-choices}: Upon an arrival, it picks $d$ servers 
	and attempts to schedule the job arrived in the least loaded server if it fits~\cite{xie2015power}.
	We picked $d=5$, but behavior of the algorithm is not expected to 
	change significantly for larger $d$.

	\textbf{DRA+Preemption}: This is simply an extension to our algorithm that
	preempts some of the jobs of priority $0$, 
    when a job of type $j$ with priority $1$ or $2$ gets rejected. 
    Notice that preemptions of low priority jobs 
	is already considered in similar scenarios that in Google cluster
	setting~\cite{Verma2015}.
	Specifically, our algorithm attempts to preempt jobs of priority $0$ 
	starting from those of smallest size. Considering reservation factor is 
    $g(L)$ and size of type-$j$ job is $s_j$, preemptions will stop if the total size of preempted jobs is $g(L) s_j$ or no more priority $0$ jobs are available. 
	The algorithm finds which jobs to preempt, if any, the same way it finds jobs to migrate so this addition needs minimal changes in implementation.

Figure~\ref{fig:gtraceSeries} shows the performance results (the time-average of rewards) with
varying number of servers. Especially, considering preemptions in {\DRA} makes a great difference. Note that the upper bound may be impossible
to achieve by any algorithm.
 
To give more insight, in Figure~\ref{fig:gtraceDRA} we plot 
the total reward over time for all algorithms
for a part of the simulation of $1000$ servers, 
including the corresponding total size of arrivals of all job types 
of each priority. 
We notice that power-of-$d$-choices algorithm can be better than {\DRA} in parts of trace in which a spike in demand of priority 0 jobs is followed by 
a spike in demand of priority 1 jobs. 
This is because reservation of {\DRA} is not sufficient to account for spikes in demand, while power-of-$d$-choices does not efficiently use 
the resources of all servers and may have more free capacity when a spike occurs. 
{\DRA} with preemptions is particularly effective in such scenarios as it does not need to reserve resources in advance. 
In addition, it makes efficient use of the resources of all the servers the same way {\DRA} does and thus is strictly better than both of the other algorithms 
in almost all parts of the trace.

\section{Conclusions}
In this paper, we proposed a VM reservation and admission policy that operates in an online manner and can guarantee at least $1/2$ (and under certain monotone property, $1-1/e$) of the optimal expected reward. 
Assumptions such as Poisson arrivals and exponential service times are made to simplify the analysis, and the policy itself does not rely on this assumption. 
The policy strikes a balance between good VM packing and serving high priority VM requests, by maintaining only a small number $g(L)=\omega(\log L)$ of reserved VM slots at any time.
Our techniques for analysis of fluid-scale processes on the boundary in our problem, and the design of LP-based Lyapunov functions with a unique maximizer at the given desired equilibrium, can be of interest on their own.

    Although we considered that the policy classifies and reassigns
    servers at arrival and departure events, this was only to simplify 
    the analysis, and in practice {\EPA} can make such updates periodically, 
    by factoring all arrival or departures in the past period 
    in its input for the current period. 
    Further, if a more accurate estimate of the workload is available, we can incorporate that estimate in the vector $\hat{\mathbf Y}$ used by {\DRA}, to improve the convergence time. 

Moreover, the policy can be extended to a multi-pool server system, where constant fractions of servers belong to different server types. We postpone the details to a future work.         
\bibliographystyle{ACM-Reference-Format}
\bibliography{Bibl,references}

\section*{Appendix}
\appendix

\section{Proof of Proposition~\ref{prop:index}}\label{indexproof}
	We omit the notation $[\bm{\rho}]$ for compactness.
	Also we use the following notations for shorthand purposes
	\begin{equation}\label{eq:amin_assgn}
	\begin{aligned}
		&\textrm{amin}(j) :=
		\argmin_{i:k^{(j)}_i>0} \frac{\rho_i}{k^{(j)}_i} - 
		\sum_{\ell=1}^{j-1} \globl{z}^{(\ell)} 
		\frac{k^{(\ell)}_i}{k^{(j)}_i}, \\
		&\textrm{assgn}(j) := 
		\min_{i:k^{(j)}_i>0} \frac{\rho_i}{k^{(j)}_i} - 
		\sum_{\ell=1}^{j-1} \globl{z}^{(\ell)} \frac{k^{(\ell)}_i}{k^{(j)}_i}.
	\end{aligned}
	\end{equation}
	As a convention, if minimum is attained by more that one indexes, 
	the lowest one is chosen.
	We define
	\begin{equation}\label{eq:Pdef}
	\begin{aligned}
	& \sigma := (\textrm{amin}(1), 
		\textrm{amin}(2), \ldots, \textrm{amin}(J)), \\
	& \globl{z}^{(j)} := \textrm{assgn}(j), \quad 
		j=1, \ldots, J.
	\end{aligned}
	\end{equation}
	We can verify that $\sigma$ and its corresponding values
	$\globl{z}^{(j)}$ satisfy all the conditions of
	Proposition~\ref{prop:index}.
	It remains to prove that this permutation $\sigma$ is unique.

	Suppose there is another permutation
	$\sigma^\prime := \{\permp{1}{\prime},
	\permp{2}{\prime}, \ldots, \permp{J}{\prime}\}$
	that satisfies the properties of
	Proposition~\ref{prop:index} and $j$ is the lowest index for which
	$\perm{j} \neq \permp{j}{\prime}$.
	We define $D^j_i := \frac{\rho_{i}}{k^{(j)}_{i}} - 
	\sum_{\ell=1}^{j-1} \globl{z}^{(\ell)}
	\frac{k^{(\ell)}_{i}}{k^{(j)}_{i}}$ and
	compare $D^j_\perm{j}$ to $D^j_{\permp{j}{\prime}}$.
	We will reach a contradiction in all possible cases,
	which proves that permutation $\sigma$ is unique. 
	\begin{enumerate}[leftmargin=*]
		\item If $D^j_\perm{j} > D^j_{\permp{j}{\prime}}$, 
			then $\perm{j} := \textrm{amin}(j)$ is not the minimizer of \dref{eq:amin_assgn} and this contradicts the definition of $\textrm{amin}(j)$.
		\item If $D^j_\perm{j} = D^j_{\permp{j}{\prime}}$, 
			then we consider the index $j_a$ 
			for which $\perm{j} = \permp{j_a}{\prime}$
			and the index $j_b$ for which 
			$\permp{j}{\prime} = \perm{j_b}$. 
			This implies
			\begin{equation} 
			D^j_{\perm{j_b}} = D^j_{\permp{j}{\prime}} 
			= D^j_\perm{j} = D^j_{\permp{j_a}{\prime}} 
			= \globl{z}^{(j)},
			\end{equation}
			or equivalently
			\begin{equation}\label{eq:rho_ab}
			\begin{aligned}
			&\rho_{\permp{j_a}{\prime}} = \sum_{\ell=1}^{j} 
				{k}^{(\ell)}_{\permp{j_a}{\prime}} 
				\globl{z}^{(\ell)} = \rho_{\permp{j}{\prime}}, \\
			&\rho_{\perm{j_b}} = \sum_{\ell=1}^{j} 
				{k}^{(\ell)}_{\perm{j_b}}
				\globl{z}^{(\ell)} = \rho_{\perm{j}}.
			\end{aligned}
			\end{equation}
			We also notice that $j < j_a$,
			since $\perm{j} \neq \permp{i}{\prime}$ for $i=1, \ldots, j$
			and similarly $j < j_b$.
			Then, considering (\ref{eq:rho_ab}), 
			assumption (\ref{eq:rho_xg2}),
			$j < j_a$ and $j < j_b$, we get 
			$\permp{j}{\prime} < \permp{j_a}{\prime}$ and
			$\perm{j} < \perm{j_b}$ which are contradictory as 
			they imply
			$$
			\perm{j} < \perm{j_b} = \permp{j}{\prime} < 
			\permp{j_a}{\prime} = \perm{j}.
			$$
		\item If $D^j_\perm{j} < D^j_{\permp{j}{\prime}}$, 
			then we consider the index $j_a$ 
			for which $\perm{j} = \permp{j_a}{\prime}$.
			Then for permutation $\sigma^\prime$ to be valid, 
			there should be constants $\globl{z}^{(\ell)} \ge 0$
			for $\ell = 1, \ldots, j_a$ such that,
			\begin{equation}
			\rho_{\permp{j_a}{\prime}} = \sum_{\ell=1}^{j_a} 
			{k}^{(\ell)}_{\permp{j_a}{\prime}}
			\globl{z}^{(\ell)}.
			\end{equation}
			On the other hand,
			\begin{equation}\label{eq:contrA}
			\sum_{\ell=1}^{j_a} 
			{k}^{(\ell)}_{\permp{j_a}{\prime}}
			\globl{z}^{(\ell)} =
			\rho_{\permp{j_a}{\prime}} \equiv \rho_\perm{j}
			\stackrel{(a)}{<} \sum_{\ell=1}^{j} 
			{k}^{(\ell)}_{\permp{j_a}{\prime}} \globl{z}^{(\ell)}.
			\end{equation}
			where (a) is a consequence of 
			$D^j_\perm{j} < D^j_{\permp{j}{\prime}}$
			when $\globl{z}^{(j)} = D^j_{\permp{j}{\prime}}$.
			From (\ref{eq:contrA}), we also get
			\begin{equation}
				\sum_{\ell=j+1}^{j_a} 
				{k}^{(\ell)}_{\permp{j_a}{\prime}} \globl{z}^{(\ell)} < 0,
			\end{equation}
			which contradicts the assumption $\globl{z}^{(\ell)} \ge 0$
			for $\ell = j+1, \ldots, j_a$, if we consider
			${k}^{(\ell)}_{\permp{j_a}{\prime}} \ge 0$
			for $\ell = j+1, \ldots, j_a$. 
	\end{enumerate}

\section{Proof of Proposition~\ref{prop:greedy_conv}}\label{sec:greedy_conv}
    For $\bk \not \in \calK^{(g)}$, it is obvious that
    ${\hat{X}^L_{\bk}} = 0$,
    since $\bk$ is never assigned by {\VPA} for any input.
    Thus $\lim_{L\to \infty}\frac{\hat{X}^L_{\bk}}{L} = 0 = x^{(g)}_{\bk}.$ Hence, it remains to prove the proposition for
    $\bk \in \calK^{(g)}$, i.e., for $\greedk^{(i)}$, $i=1, \ldots, \gcnt$.
    For this we will use the following Lemma.
    \begin{lemma}\label{lem:hat_conv}
        For $i=1, \ldots, \gcnt$, 
        \begin{equation}\label{eq:hat_conv}
        \lvert {\hat X}^{L}_{\greedk^{(i)}}
        - Lx^{(g)}_{\greedk^{(i)}} \rvert \le (\Kmax+1)^{i-1},
        \end{equation} 
        where ${\hat X}^{L}_{\greedk^{(i)}} = {\VPA} (L\bm{\rho})$ when the number of servers is $L$
        and $\Kmax$ is an upper bound on the maximum number of jobs in any configuration.
    \end{lemma}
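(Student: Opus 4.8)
The plan is to analyze {\VPA} on input $L\bm\rho$ and to prove, by induction on the iteration index $i$ (running up to the index $\ggi$ of Definition~\ref{def:gga} at which the servers get exhausted), the three statements: (a) the configuration $\assgnk[i]$ chosen in iteration $i$ equals the global greedy configuration $\bk^{(i)}$ of Proposition~\ref{prop:index}; (b) the removed type $j^\star$ is $\perm{i}$, so $\mathcal{J}[i+1]=\{\perm{i+1},\dots,\perm{J}\}$; and (c) $E_i:=\lvert{\hat X}^{L}_{\bk^{(i)}}-L\,x^{(g)}_{\bk^{(i)}}\rvert\le(\Kmax+1)^{i-1}$. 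For $i=1$ this is immediate: $\assgnk[1]=\textsc{MaxReward}(\mathcal{K}^{\mathcal{J}})=\bk^{(1)}$ by definition; by \dref{eq:rho_xg}, $\rho_{\perm{1}}/k^{(1)}_{\perm{1}}=z^{(1)}\le \rho_j/k^{(1)}_j$ for every other candidate type, so $j^\star=\perm{1}$; and ${\hat X}^{L}_{\bk^{(1)}}=\min(\lceil z^{(1)}L\rceil,L)$ differs from $L\,x^{(g)}_{\bk^{(1)}}=L\min(z^{(1)},1)$ by at most $1$.

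For the inductive step, assume (a)--(c) through iteration $i-1<\ggi$. Then {\VPA}'s residual vector is $r_j=\rho_jL-\sum_{\ell=1}^{i-1}k^{(\ell)}_j{\hat X}^{L}_{\bk^{(\ell)}}$, and it selects $\assgnk[i]=\textsc{MaxReward}(\mathcal{K}^{\{\perm{i},\dots,\perm{J}\}})=\bk^{(i)}$. From \dref{eq:rho_xg}, for every $j\ge i$ with $k^{(i)}_{\perm{j}}>0$ one has $\bigl(\rho_{\perm{j}}-\sum_{\ell<i}k^{(\ell)}_{\perm{j}}z^{(\ell)}\bigr)/k^{(i)}_{\perm{j}}\ge z^{(i)}$, with equality at $j=i$; the inductive hypothesis gives $\lvert r_{\perm{j}}-L\bigl(\rho_{\perm{j}}-\sum_{\ell<i}k^{(\ell)}_{\perm{j}}z^{(\ell)}\bigr)\rvert\le\Kmax\sum_{\ell<i}E_\ell=O(1)$, so for $L$ large the minimiser $j^\star$ of $\lceil r_j/k^{(i)}_j\rceil$ is $\perm{i}$, which proves (a)--(b). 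Since $i<\ggi$ the servers are not exhausted, ${\hat X}^{L}_{\bk^{(i)}}=\lceil r_{\perm{i}}/k^{(i)}_{\perm{i}}\rceil$, and using $\rho_{\perm{i}}=\sum_{\ell\le i}k^{(\ell)}_{\perm{i}}z^{(\ell)}$ we get $L\,x^{(g)}_{\bk^{(i)}}=Lz^{(i)}=\bigl(\rho_{\perm{i}}L-\sum_{\ell<i}k^{(\ell)}_{\perm{i}}Lz^{(\ell)}\bigr)/k^{(i)}_{\perm{i}}$; subtracting, absorbing one unit of rounding slack, and using $k^{(\ell)}_{\perm{i}}/k^{(i)}_{\perm{i}}\le\Kmax$,
\[
E_i\le 1+\Kmax\sum_{\ell=1}^{i-1}E_\ell ,
\]
which with $E_1\le1$ unwinds to $E_i\le(\Kmax+1)^{i-1}$, giving (c).

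It remains to treat the terminal steps. At $i=\ggi$, since $\sum_{\ell\le\ggi}z^{(\ell)}\ge1$ we have (for $L$ large) ${\hat X}^{L}_{\bk^{(\ggi)}}=N=L-\sum_{\ell<\ggi}{\hat X}^{L}_{\bk^{(\ell)}}$, whereas $L\,x^{(g)}_{\bk^{(\ggi)}}=L-\sum_{\ell<\ggi}Lz^{(\ell)}$, so $E_\ggi=\lvert\sum_{\ell<\ggi}({\hat X}^{L}_{\bk^{(\ell)}}-Lz^{(\ell)})\rvert\le\sum_{\ell<\ggi}(\Kmax+1)^{\ell-1}\le(\Kmax+1)^{\ggi-1}$. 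For $i>\ggi$ no servers remain, so ${\hat X}^{L}_{\bk^{(i)}}=0=L\,x^{(g)}_{\bk^{(i)}}$ up to the same $O(1)$ slack, and any configuration not on the greedy trajectory contributes $0=0$. Re-indexing via $\bk^{(\ell)}\equiv\greedk^{(\maptg{\ell})}$ with $\maptg{\ell}\ge\ell$ (Definition~\ref{def:mapgreedy}), and reading ${\hat X}^{L}_{\mathbf 0}$ as the count of servers that {\VPA} leaves unassigned, yields the claimed bound for every $\greedk^{(i)}$, $i=1,\dots,\gcnt$.

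The step that needs the most care is keeping (a)--(b) intact under the accumulated error: one must ensure the $O(1)$ rounding error carried between iterations never flips the argmin in {\VPA}. When the residual ratios $\rho_j/k^{(i)}_j$ are pairwise distinct this is automatic for $L$ large (the ceilings are then strictly ordered), but exact ties do occur, and there one must invoke the tie-resolution rule of part~2 of Proposition~\ref{prop:index} (together with the matching lowest-index tie-break in {\VPA}) to pin the run to the unique greedy permutation; once this correspondence is fixed, (c) is just the one-line recursion above.
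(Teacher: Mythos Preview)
Your proof is correct and follows essentially the same inductive route as the paper: both argue that {\VPA} at iteration $i$ selects $\bk^{(i)}$ and removes type $\perm{i}$ (for $L$ large), then derive the identical recurrence $E_i\le 1+\Kmax\sum_{\ell<i}E_\ell$ from the accumulated rounding error, with the terminal step at $i=\ggi$ handled by summing the prior errors. Your explicit flagging of the tie-breaking subtlety is in fact more careful than the paper's own proof, which asserts that the minimiser $j$ in {\VPA} matches the one in the global greedy assignment with ``details omitted.''
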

    \begin{proof}
        For shorthand purposes, define  ${\hat X}^{a}_{i} := {\hat X}^{L}_{\greedk^{(i)}}(t) $ and
        ${\hat X}^{b}_{i} := Lx^{(g)}_{\greedk^{(i)}}$.        
        By definition of global greedy assignment, for every $i \in \{1, \ldots, \gcnt\}$,
        one of the following holds:
        \begin{enumerate}
        \item There is some $j \in \mathcal{J}$ 
        such that $L \rho_{j}$ fits exactly in
        ${\hat X}^{b}_{\ell}$ servers assigned to $\greedk^{(\ell)}$ 
        for $\ell=1, \ldots, i$, i.e.,
        \begin{equation}\label{eq:Xb1}
            L \rho_{j} = \sum_{\ell=1}^i {\hat X}^{b}_{\ell} \greedki^{(\ell)}_{j}.
        \end{equation}
        \item All servers are assigned to one of the configurations
        $\greedk^{(\ell)}$ for $\ell=1, \ldots, i$, i.e.,
        \begin{equation}\label{eq:Xb2}
        {\hat X}^{b}_{i} = L - \sum_{\ell=1}^{i-1} {\hat X}^{b}_{i}.
        \end{equation}
        \end{enumerate}
    
        Similarly, for {\VPA}, there is an index $I_L$ such that one of the following holds:
        \begin{enumerate}
        \item For $i \in \{1,\ldots, I_L-1\}$, there is $j \in \calJ$
        such that $L \rho_{j}$ jobs fit in ${\hat X}^{a}_{\ell}$ servers 
        assigned to $\greedk^{(\ell)}$ for $\ell=1, \ldots, i$, 
        but not in ${\hat X}^{a}_{\ell}$ servers assigned to $\greedk^{(\ell)}$ for 
        $\ell=1, \ldots, i-1$ and ${\hat X}^{a}_{i} - 1$ servers
        assigned to $\greedk^{(i)}$.
        This implies that
        \begin{equation}\label{eq:Xa1}
        \begin{aligned}
        {\hat X}^{a}_{i} \greedki^{(i)}_j &\ge 
        L \rho_j
        - \sum_{\ell=1}^{i-1} {\hat X}^{a}_{\ell} \greedki^{(\ell)}_j, 
        \\
        ({\hat X}^{a}_{i} - 1) \greedki^{(i)}_j &< 
        L \rho_j
        - \sum_{\ell=1}^{i-1} {\hat X}^{a}_{\ell} \greedki^{(\ell)}_j.
        \end{aligned} 
        \end{equation}
        \item For $i \in \{I_L, \ldots, \gcnt\}$, all servers are assigned 
        to one of the configurations $\greedk^{(\ell)}$ for $\ell=1, \ldots, i$, i.e.,
        \begin{equation}\label{eq:Xa2}
        {\hat X}^{a}_{i} = L - 
        \sum_{\ell=1}^{i-1} {\hat X}^{a}_{\ell}.
        \end{equation}
        \end{enumerate}

        We can show inductively that for any $i \in \{1, \ldots, I_L-1\}$ 
        and for large enough $L$, 
        if (\ref{eq:Xa1}) holds then (\ref{eq:Xb1}) holds for the same job type $j$. 
        By assuming otherwise we can easily reach a contradiction 
        (details are omitted).
        This means we can replace $L \rho_j$ in (\ref{eq:Xa1}) with 
        the right hand side of (\ref{eq:Xb1}).
        Also with similar arguments we can prove that 
        if (\ref{eq:Xa2}) holds then (\ref{eq:Xb2}) holds as well. 
        Therefore, for $i=1$, we either get 
        ${\hat X}^{a}_{1} = {\hat X}^{b}_{1} = L$ or
        ${\hat X}^{a}_{1} \greedki^{(1)}_j \ge {\hat X}^{b}_{1} \greedki^{(1)}_j >
        ({\hat X}^{a}_{1}-1) \greedki^{(1)}_j$.
        Hence, in either case, we have $|{\hat X}^{a}_{1} - {\hat X}^{b}_{1}| < 1$,
        which proves (\ref{eq:hat_conv}) for $i=1$. Now suppose the statement is true for indexes $1, \ldots, i-1$. 
        We show that it is also true for $i$. 
        
        If (\ref{eq:Xa1}) holds, then by replacing
        $L \rho_j$ in (\ref{eq:Xa1}) with 
        the right-hand-side of (\ref{eq:Xb1}), we get
        \ben
        & |{\hat X}^{a}_{i} - {\hat X}^{b}_{i}| < 
        1 + \sum_{\ell=1}^{i-1} 
        \frac{\greedki^{(\ell)}_{j}}{\greedki^{(i)}_{j}} 
        |X^{a}_{\ell} - X^{b}_{\ell}|.
        \een
        Hence, noting that $\frac{\greedki^{(\ell)}_{j}}{\greedki^{(i)}_{j}} 
        \le \Kmax$, $\ell=1,\ldots, i-1$, we get
        \ben
        |{\hat X}^{a}_{i} - {\hat X}^{b}_{i}| \le
        1 + \sum_{\ell=1}^{i-1} 
        \Kmax (1+\Kmax)^{\ell-1} = (1+\Kmax)^{i-1}.
        \een
        
        If instead (\ref{eq:Xa2}) holds, then since (\ref{eq:Xb2}) also holds,
        and we get
        \ben
        |{\hat X}^{a}_{i} - {\hat X}^{b}_{i}| & \le &
        \sum_{\ell=1}^{i-1} |X^{a}_{\ell} - X^{b}_{\ell}|\\
        & \le &
        \sum_{\ell=1}^{i-1} 
        (1+\Kmax)^{\ell-1} < (1+\Kmax)^{i-1}.
        \een
        This completes the proof of (\ref{eq:hat_conv}) for arbitrary $i$.
    \end{proof}
    The proposition then follows since for any $i \in \{1, \ldots, \gcnt\}$,
    \begin{equation*}
    \lim_{L \to \infty}
    \left \lvert \frac{{\hat X}^{L}_{\greedk^{(i)}}}{L}
    - x^{(g)}_{\greedk^{(i)}} \right\rvert \le 
    \lim_{L \to \infty}
    \frac{(\Kmax+1)^{i-1}}{L} = 0.
    \end{equation*} 

\section{Proof of Proposition~\ref{prop:bound_tight}}\label{prf:optimality}

	Consider a system with $J$ job types. 
    Suppose type-$i$ jobs, for each $i=1, \ldots, J-1$, 
	can fit $J$ times in an empty server, and type-$J$ jobs can fit $N+1$ times.
	Suppose the configuration that uses $1$ job of each type $i$ and 
	$N$ jobs of type $J$ is feasible as well.
	The aforementioned configurations will be maximal if we assume we have
	$J+1$ resources and 
	\begin{itemize}[leftmargin=*]
		\item each type-$i$ job, $i=1, \ldots, J-1$, occupies $1/J$ of 
		resource $i$ and $1/J$ of resource $J+1$.
		\item each type-$J$ job occupies $1/(N+1)$ of resource $J$ and 
		$1/(JN)$ of resource $J+1$.
	\end{itemize}
	Assume that each type-$i$ job, $i=1, \ldots, J-1$, gives reward
	$u_i = \frac{1}{J}\left(\frac{J-1}{J}\right)^{i-1} u$, and each type-$J$ job
	gives a reward $u_J = \frac{1}{N+1} \left(\frac{J-1}{J}\right)^{J-1}u$.
	Let the workload $\bm{\rho}$ be such that $\rho_i = 1$ for $i=1, \ldots, J-1$ and $\rho_J = N$.
	
	In this example, the global greedy assignment assigns only the $J$ 
	configurations that consist of a single job type and each one is assigned to
	$\frac{1}{J}$ fraction of servers.
	The normalized reward of $\mathbf{x}^{(g)}$ is
	\begin{equation}
	\begin{aligned}
	& U^{(g)}(J,N) = \frac{1}{J} \left(\sum_{i=1}^{J-1} J \frac{1}{J} 
	\left(\frac{J-1}{J}\right)^{i-1} + (N+1) \frac{1}{N+1}
	\left(\frac{J-1}{J}\right)^{J-1} \right)u \\
	& = \left(1 - \left(1 - 1/J\right)^{J} \right) u.
	\end{aligned}
	\end{equation}
	The optimal assignment assigns the configuration that uses $1$ job of each type $i$ and $N$ jobs of type $J$ to all servers.
    The normalized reward of $\mathbf{x}^{\star}$ is therefore
	\begin{equation}
	\begin{aligned}
	& U^{\star}(J, N) =  \left(\sum_{i=1}^{J-1} \frac{1}{J} 
	\left(\frac{J-1}{J}\right)^{i-1} + N \frac{1}{N+1}
	\left(\frac{J-1}{J}\right)^{J-1} \right)u = \\
	&  \left( 1 - \left(1 - 1/J\right)^{J-1} + \frac{N}{N+1} 
	\left(1 - 1/J\right)^{J-1}
	\right) u.
	\end{aligned}
	\end{equation}
		
	From these, the result is obvious, as
	\begin{equation}
	\begin{aligned}
	&\lim_{N \to \infty} \frac{ U^{(g)}(J,N) }{ U^{\star}(J, N) } = \\
	&\frac{\left(1 - \left(1 - 1/J\right)^{J} \right)}
	{\left( 1 - \left(1 - 1/J\right)^{J-1} + \left(1 - 1/J\right)^{J-1}
		\right)} = 1 - \left(1 - 1/J\right)^{J},
	\end{aligned}
	\end{equation}
	and
	\begin{equation}
	\lim_{J \to \infty} \lim_{N \to \infty} 
	\frac{ U^{(g)}(J,N) }{ U^{\star}(J, N) } =
	1 - \lim_{J \to \infty} \left(1 - 1/J\right)^{J}  =
	(1 - 1/e). 
	\end{equation}

\section{Proof of Proposition~\ref{prop:fl_limits}}\label{sec:fl_limits}

	For the proof of this proposition  
	we will need the following Lemma
	\begin{lemma}\label{lem:X_change}
		For $i=1, \ldots, \gcnt$, 
		the absolute jump in ${\hat X}^{L}_{\greedk^{(i)}}(t)$,
		${X}^{L}_{\greedk^{(i)}}(t)$, ${X}^{L(e)}_{\greedk^{(i)}}(t)$, 
		after a job arrival or departure event, is at most $(\Kmax+1)^{i-1}$, where 
		$\Kmax$ is an upper bound on the maximum number of jobs in any configuration.
	\end{lemma}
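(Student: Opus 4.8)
The plan is to prove all three jump bounds simultaneously by induction on the greedy‑configuration index $i=1,\dots,\gcnt$, at each step reusing the bounds already established for the lower‑indexed configurations $\greedk^{(1)},\dots,\greedk^{(i-1)}$. The basic observation that organizes the argument is that at a single event (one job admission or departure at time $t$) the vector $\mathbf Y^L$ — and hence the reference vector $\mathbf{\hat Y}^L=\mathbf Y^L+g(L)\mathbf 1$ fed to {\VPA} by the very next call of {\EPA} — changes by exactly one basis vector $\pm\mathbf e_j$, whereas the configuration counts $\mathbf X^L$ are untouched by the admission/departure/migration itself and are altered only by the reassignments {\EPA} performs at the end of the event. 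Thus the three quantities have three distinct sources of variation: $\hat X^L_{\greedk^{(i)}}$ moves because the {\VPA} input moved by a unit; $X^L_{\greedk^{(i)}}$ moves because {\EPA} reassigns empty servers; and $X^{L(e)}_{\greedk^{(i)}}=\min\!\bigl(X^L_{\greedk^{(i)}},\hat X^L_{\greedk^{(i)}}\bigr)$ will inherit its bound from the other two.

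For $\hat X^L_{\greedk^{(i)}}$ I would reuse the recursion behind Lemma~\ref{lem:hat_conv} almost verbatim, with ``$L\bm\rho$'' replaced by the two inputs $\mathbf{\hat Y}^L(t^-)$ and $\mathbf{\hat Y}^L(t^+)$, which differ in a single coordinate $j$ by one. As in Lemma~\ref{lem:hat_conv} one first argues (by the same contradiction argument sketched there) that the two runs of {\VPA} select the same bottleneck job type at each relevant iteration, so the per‑iteration count is the ceiling of a common residual coordinate divided by $\greedki^{(i)}_{j}$; a unit perturbation of that residual, together with the at‑most‑$\Kmax$ per‑server slot granularity, then yields
\[
\bigl|\Delta\hat X^L_{\greedk^{(i)}}\bigr|\ \le\ 1+\sum_{\ell=1}^{i-1}\frac{\greedki^{(\ell)}_{j}}{\greedki^{(i)}_{j}}\,\bigl|\Delta\hat X^L_{\greedk^{(\ell)}}\bigr|\ \le\ 1+\Kmax\sum_{\ell=1}^{i-1}(\Kmax+1)^{\ell-1}\ =\ (\Kmax+1)^{i-1},
\]
using the induction hypothesis for $\ell<i$ ($\Delta$ denoting the jump at $t$; configurations not picked by {\VPA} have $\hat X^L=0$ and are trivial). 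Given this, the $X^{L(e)}$ bound is immediate once the $X^L$ bound below is in hand, via the elementary inequality $\bigl|\min(a',b')-\min(a,b)\bigr|\le\max(|a'-a|,|b'-b|)$.

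The substantive part — and the main obstacle — is the bound on $\Delta X^L_{\greedk^{(i)}}$. I would write it as (servers reassigned \emph{to} $\greedk^{(i)}$ by this call of {\EPA}) minus (empty servers that were \emph{in} $\greedk^{(i)}$ and were reassigned away). For the first term, use that {\EPA} is lazy: it only moves servers to bring $\mathbf X^L$ up toward $\mathbf{\hat X}^L$, so the number placed into $\greedk^{(i)}$ is at most $\bigl(\hat X^L_{\greedk^{(i)}}(t^+)-X^L_{\greedk^{(i)}}(t^-)\bigr)^+$; since {\EPA} ran at the previous event (leaving $X^L_{\greedk^{(i)}}\ge\hat X^L_{\greedk^{(i)}}$ whenever $\greedk^{(i)}$ was matched, a relation unchanged by the admission/departure/migration of event $t$) and $\mathbf{\hat Y}^L$ has moved only by one unit since, this is at most $(\Kmax+1)^{i-1}$ by the first part, while in the unmatched case all empty servers were already exhausted and at most one newly drained server can land in $\greedk^{(i)}$. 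For the second term I would invoke the structural properties of {\EPA}: after any of its calls the only rank‑$(J+1)$ servers sitting in a non‑empty configuration are the surplus servers with the smallest indices, the Reject Group contains at most one server per configuration, and at most one server is newly drained per event (only Reject‑Group servers are emptied, and a single departure or migration removes at most one job from such a server). Combining these facts with the induction hypothesis for the lower‑indexed configurations is intended to yield the matching bound $(\Kmax+1)^{i-1}$. The delicate point is precisely this last accounting of how many empty servers {\EPA} can pull out of $\greedk^{(i)}$ in a single sweep: it forces one to track the server \emph{indices} through consecutive {\EPA} calls rather than just the aggregate counts, and that is where I expect the proof to need the most care.
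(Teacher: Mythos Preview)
Your overall plan is the paper's: bound $\hat X^L_{\greedk^{(i)}}$ first by the same recursion as in Lemma~\ref{lem:hat_conv} (with the two {\VPA} inputs differing in one coordinate by one), then bound $X^L_{\greedk^{(i)}}$ by an increase/decrease case split, and read off $X^{L(e)}_{\greedk^{(i)}}$ as the minimum. The matched-case bound on the number of servers reassigned \emph{into} $\greedk^{(i)}$ is fine.

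The concrete gap is in your unmatched case. You claim that when $X^L_{\greedk^{(i)}}(t^-)<\hat X^L_{\greedk^{(i)}}(t^-)$, ``all empty servers were already exhausted and at most one newly drained server can land in $\greedk^{(i)}$''. That is not correct: what was exhausted at $t^-$ were the empty \emph{rank-$(J{+}1)$} servers; the empty servers themselves are still sitting inside $\greedk^{(\ell)}$ for $\ell<i$ (where the previous {\EPA} put them). When the new {\EPA} runs, ranks are reset, $\hat X^L_{\greedk^{(\ell)}}$ may have dropped by up to $(\Kmax+1)^{\ell-1}$ for each $\ell<i$, and the resulting surplus servers in those configurations---some of which are still empty---become rank-$(J{+}1)$ again and can all be pulled into $\greedk^{(i)}$. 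So the number of servers that can be placed into $\greedk^{(i)}$ in the unmatched case is bounded not by $1$ but by
\[
1\;+\;\sum_{\ell=1}^{i-1}\bigl|\Delta X^L_{\greedk^{(\ell)}}\bigr|\;\le\;1+\sum_{\ell=1}^{i-1}(\Kmax+1)^{\ell-1}\;=\;(\Kmax+1)^{i-1},
\]
where the $1$ is your ``newly drained'' server (the only server outside $\{\greedk^{(\ell)}:\ell\le i\}$ that can be empty after a single event) and the sum uses the induction hypothesis on $X^L_{\greedk^{(\ell)}}$ for $\ell<i$. This is exactly the accounting the paper does in its Case~2; once you insert it, your ``reassigned to'' and ``reassigned away'' decomposition goes through and matches the paper's increase/decrease split.
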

	\begin{proof}
		We first prove the result for ${\hat X}^{L}_{\greedk^{(i)}}(t)$.
        We consider \\
        $\sum_{i=1}^{\gcnt-1} {\hat X}^{L}_{\greedk^{(i)}}(t) < L$ before and after an event, 
        as otherwise the range of change of any ${\hat X}^{L}_{\greedk^{(i)}}(t)$ will be even smaller, because of the extra constraint.
		Consider an arrival or departure event takes place. We denote the 
		values ${\hat X}^{L}_{\greedk^{(i)}}(t)$ for 
		$i \in \{1, \ldots, \gcnt\}$, as given by 
		Algorithm~\ref{alg:vpa}, before and after the event by 
        ${\hat X}^{a}_{i}$ and ${\hat X}^{b}_{i}$ respectively.

		We define $i^\star$  to be the first index in $\{1, \ldots, \gcnt\}$
		for which ${\hat X}^{a}_{i^\star} \neq {\hat X}^{b}_{i^\star}$ so 
		for $\ell \in \{1, \ldots, i^\star-1\}$ we have
		${\hat X}_{\ell} := {\hat X}^{a}_{\ell} = {\hat X}^{b}_{\ell}$.
		We also define for $j \in \{1, \ldots, J\}$,
		$Y_{j} := Y^L_{j}(t) + g(L)$, where $Y^L_{j}(t)$ 
		is the number of jobs in the system before the event.
		Finally we define $\zeta$ to be $1$ 
		if the event is arrival and $-1$ if the event is departure.
		
		We prove by induction that for $i \ge i^\star$, 
		$|X^{a}_{i} - X^{b}_{i}| \le (1 + \Kmax)^{i-i^\star}$.
		We start with the base case $i = i^\star$.
		Before any event, we know there is some $j \in \mathcal{J}$ 
		such that $Y_j$ jobs fit in 
		${\hat X}^{a}_{\ell}$ servers assigned to $\greedk^{(\ell)}$ 
		for $\ell=1, \ldots, i$, 
        but not in ${\hat X}^{a}_{\ell}$ servers assigned to $\greedk^{(\ell)}$ for $\ell=1, \ldots, i-1$ and ${\hat X}^{a}_{i} - 1$ servers assigned to $\greedk^{(i)}$. 
        This implies
		\begin{flalign}
		{\hat X}^{a}_{i} \greedki^{(i)}_j \ge Y_j 
		- \sum_{\ell=1}^{i-1} {\hat X}_{\ell} \greedki^{(\ell)}_j, \mbox{ and }  ({\hat X}^{a}_{i} - 1) \greedki^{(i)}_j < Y_j
		- \sum_{\ell=1}^{i-1} {\hat X}_{\ell} \greedki^{(\ell)}_j.\nonumber 
		\end{flalign}
		We can use similar argument after the event when $Y_j$ changes to $Y_j+\zeta$, i.e.,
		\begin{flalign}
        {\hat X}^{b}_{i} \greedki^{(i)}_j \ge Y_j + \zeta
		- \sum_{\ell=1}^{i-1} {\hat X}_{\ell} \greedki^{(\ell)}_j,\  ({\hat X}^{b}_{i} - 1) \greedki^{(i)}_j < Y_j + \zeta
		- \sum_{\ell=1}^{i-1} {\hat X}_{\ell} \greedki^{(\ell)}_j. \label{eq:dXbase} \nonumber
		\end{flalign}
		Algebraic manipulations based on this set of equations shows 
		\ben
		& |{\hat X}^{a}_{i} - {\hat X}^{b}_{i}| < \frac{|\zeta| + 1}{\greedki^{(i)}_j} \le 2,
		\een
		or equivalently $|{\hat X}^{a}_{i} - {\hat X}^{b}_{i}| \le 1$.
		
		Now consider $i > i^\star$ and suppose for $\ell = i^\star, \ldots, i-1$, 
		$|X^{a}_{\ell} - X^{b}_{\ell}| \le (1+\Kmax)^{\ell-i}$. 
		Similar to the arguments for the base case, the following equations have to hold for a job type $j^\prime \in \mathcal{J}$,
		\begin{flalign*}
		 &\textstyle \sum_{\ell=i^\star}^{i} {\hat X}^{a}_{\ell} 
		\greedki^{(\ell)}_{j^\prime} \ge Y_{j^\prime} 
		- \sum_{\ell=1}^{i^\star-1} {\hat X}_{\ell} \greedki^{(\ell)}_{j^\prime}, &\\
		 & \textstyle \sum_{\ell=i^\star}^{i-1} {\hat X}^{a}_{\ell} 
		\greedki^{(\ell)}_{j^\prime} 
		+ ({\hat X}^{a}_{i} - 1)
		\greedki^{(i)}_{j^\prime} < Y_{j^\prime}
		- \sum_{\ell=1}^{i^\star-1} {\hat X}_{\ell} \greedki^{(\ell)}_{j^\prime}, &\\
		 & \textstyle \sum_{\ell=i^\star}^{i} {\hat X}^{b}_{\ell} 
		\greedki^{(\ell)}_{j^\prime} \ge Y_{j^\prime} 
		- \sum_{\ell=1}^{i^\star-1} {\hat X}_{\ell} \greedki^{(\ell)}_{j^\prime}, &\\
		 & \textstyle \sum_{\ell=i^\star}^{i-1} {\hat X}^{b}_{\ell} 
		\greedki^{(\ell)}_{j^\prime} 
		+ ({\hat X}^{b}_{i} - 1)
		\greedki^{(i)}_{j^\prime} < Y_{j^\prime}
		- \sum_{\ell=1}^{i^\star-1} {\hat X}_{\ell} \greedki^{(\ell)}_{j^\prime}.&
		\end{flalign*}
		With algebraic manipulations, we get
		\ben
		& |{\hat X}^{a}_{i} - {\hat X}^{b}_{i}| <
		1 + \sum_{\ell=i^\star}^{i-1} 
		\frac{\greedki^{(\ell)}_{j^\prime}}{\greedki^{(i)}_{j^\prime}} 
		|X^{a}_{\ell} - X^{b}_{\ell}|.
		\een
        Hence, considering $\frac{\greedki^{(\ell)}_{j^\prime}}{\greedki^{(i)}_{j^\prime}} \le \Kmax$ for $\ell=i^\star,\ldots, i-1$, we get
		\ben
		|{\hat X}^{a}_{i} - {\hat X}^{b}_{i}| \le
		1 + \sum_{\ell=i^\star}^{i-1} 
		\Kmax (1+\Kmax)^{\ell-i^\star} = (1+\Kmax)^{i-i^\star}.
		\een
		
		The result for ${X}^{L}_{\greedk^{(i)}}(t)$ then follows by noticing:
		\begin{enumerate}[leftmargin=*]
			\item 
            If ${X}^{L}_{\greedk^{(i)}}(t) \ge {\hat X}^{L}_{\greedk^{(i)}}(t)$ then 
				after an event ${X}^{L}_{\greedk^{(i)}}(t)$ may 
				not increase more than what ${\hat X}^{L}_{\greedk^{(i)}}(t)$
				does, which is at most $(\Kmax+1)^{i-1}$.
				Similarly, it may not decrease more than the increase 
				of ${X}^{L}_{\greedk^{(\ell)}}(t)$ for $\ell=1, \ldots, i-1$
				which is at most
				\ben
				& \sum_{\ell=1}^{i-1} (\Kmax+1)^{\ell-1} < (\Kmax+1)^{i-1},
				\een 
				or more than the decrease of ${\hat X}^{L}_{\greedk^{(i)}}(t)$
				which is again $(\Kmax+1)^{i-1}$. 
				Notice that the last claim assumes ${X}^{L}_{\greedk^{(i)}}(t) = {\hat X}^{L}_{\greedk^{(i)}}(t)$, 
                because in case ${X}^{L}_{\greedk^{(i)}}(t) > 
				{\hat X}^{L}_{\greedk^{(i)}}(t)$ it means 
				server of Reject Group assigned to $\greedk^{(i)}$ 
                is not empty so maximum decrease of ${X}^{L}_{\greedk^{(i)}}(t)$ is $1$	when that server empties.
			\item 
            If ${X}^{L}_{\greedk^{(i)}}(t) < {\hat X}^{L}_{\greedk^{(i)}}(t)$ then no server not assigned to
			a configuration $\greedk^{(\ell)}$ for $\ell=1, \ldots, i-1$
			will be empty. Then	after an event ${X}^{L}_{\greedk^{(i)}}(t)$ 
			may not increase more than the decrease of 
			${X}^{L}_{\greedk^{(\ell)}}(t)$ for $\ell=1, \ldots, i-1$
			which is at most 
			\ben
			& \sum_{\ell=1}^{i-1} (\Kmax+1)^{\ell-1} \le (\Kmax+1)^{i-1}-1.
			\een
			and decrease of ${X}^{L}_{\mathbf{k}}(t)$ with 
			$\mathbf{k} \not \in \{\greedk^{(\ell)}: \ell=1, \ldots, i\}$,
			which is at most $1$ since none of them was empty and at most
			one may empty after each event.
			Thus, the total decrease of all servers that may be reassigned to
			$\greedk^{(i)}$ is no more than $(\Kmax+1)^{i-1}$.
			Also decrease is at most $(\Kmax+1)^{i-1}$ following the same argument as in the case ${X}^{L}_{\greedk^{(i)}}(t) \ge {\hat X}^{L}_{\greedk^{(i)}}(t)$.
			\item 
            For any $\mathbf{k} \not \in 
			\{\greedk^{(i)}:i=1,\ldots,\gcnt-1\}$, $X^L_\mathbf{k}$ may only
			decrease and the decrease will be at most
			\ben
			& \sum_{\ell=1}^{\gcnt-1} (\Kmax+1)^{\ell-1} < (\Kmax+1)^{\gcnt-1}.
			\een 
		\end{enumerate}
		Finally, it trivially follows that the maximum change of
		${X}^{L(e)}_{\greedk^{(i)}}(t)$ is $(\Kmax+1)^{i-1}$ as well, 
		for $i=1,\ldots, \gcnt-1$, since
		$$\mathbf{X}^{L(e)}_{\greedk^{(i)}}(t) = 
		\min(\mathbf{\hat X}^{L}_{\greedk^{(i)}}(t), 
		\mathbf{X}^{L}_{\greedk^{(i)}}(t)).$$
	\end{proof}

	We can now prove the existence of fluid limits of the process
	${X}^{L(e)}_{\assgnk}(t)$, for ${\assgnk}=\greedk^{(i)}$, $i =1, \ldots, \gcnt.$
	For each job type $j$, we define two independent unit-rate 
	Poisson processes $\Pi^a_i(\cdot)$ and $\Pi^d_i(\cdot)$. 
	By the Functional Strong Law of Large Numbers, almost surely,
	\begin{equation}\label{eq:uoc}
	\frac{\Pi^a_i(Lt)}{L} \to t, \quad u.o.c. \quad
	\frac{\Pi^d_i(Lt)}{L} \to t, \quad u.o.c.
	\end{equation}
	where u.o.c means uniformly over compact time intervals. 
	
	Define $h^a_{j, \assgnk}(\mathbf{S}^L(t))$ and  $h^d_{j, \assgnk}(\mathbf{S}^L(t))$ to be the amount of change in 
	${X}^{L(e)}_{\assgnk}(t)$ due to an arrival and departure of a type-$j$ job, respectively, at state $\mathbf{S}^L(t)$. 
	Then the process ${X}^{L(e)}_{\assgnk}(t)$ can be described as
	\begin{equation}\label{eq:nkL}
	{X}^{L(e)}_{\assgnk}(t) = {X}^{L(e)}_{\assgnk}(0) +	A^{L}_{\assgnk}(0,t) - D^{L}_{\assgnk}(0,t)
	\end{equation}
	where, for any $0 \le t_1 < t_2 $, without loss of generality, we construct the arrival and departure processes for the $L$-th system, and the corresponding jumps, as 
	\begin{align*}
	A^{L}_{\assgnk}(t_1,t_2)&=\sum_{j=1}^{J} \sum_{n=1}^{\Pi^a_j (
	\int_{t_1}^{t_2}  \lambda_j L ds)}h^a_{j, \assgnk}(\mathbf{S}^L(T_n)),\\
		D^{L}_{\assgnk}(t_1,t_2)& =\sum_{j=1}^{J} \sum_{n=1}^{\Pi^d_j (
	\int_{t_1}^{t_2} \mu_j Y^L_j(s) ds)} h^d_{j, \assgnk}(\mathbf{S}^L(T_n)),
    \end{align*}
	where $T_n$ is the time of the $n$-th jump in corresponding Poisson processes.
	By Lemma~\ref{lem:X_change}, $|h^{a}_{j, \assgnk}(\mathbf{S}^L(t))|, |h^{b}_{j, \assgnk}(\mathbf{S}^L(t))|\leq (1 + \Kmax)^{\gcnt-1}:=M$. 
	Then the scaled processes $\frac{1}{L}A^{L}_{\assgnk}(t_1,t_2)$ and $\frac{1}{L}D^{L}_{\assgnk}(t_1, t_2)$ in (\ref{eq:nkL}) are asymptotically Lipschitz continuous, 
    which implies that they have a convergent subsequence \cite{ethier2009markov}. 
    This is because for any $t_1 < t_2$, 
	\begin{equation}
	\begin{aligned}
	&\limsup_{L} \frac{1}{L} \left \lvert A^{L}_{\assgnk}(t_1,t_2) \right \rvert \le 
	\limsup_{L} \frac{1}{L} \Pi^a_j \left(
	\int_{t_1}^{t_2} \lambda_j L ds
	\right)M  \\
	&= \limsup_{L} \frac{1}{L} \Pi^a_j \left(
	\lambda_j L (t_2 - t_1) 
	\right)M  = \lambda_j M(t_2-t_1),
	\end{aligned}
	\end{equation}
	where we used \dref{eq:uoc} to get almost sure convergence. 
    We can similarly bound $\frac{1}{L} D^{L}_{\assgnk}(t_1,t_2) $ by noting that $Y^L_j(s) \le L \Kmax$. 
    Hence, with the stated initialization, the scaled process
	${X}^{L(e)}_{\assgnk}(t)/L$ converges to a Lipschitz continuous sample path ${{x}}_{\assgnk}^{(e)}(t)$ along the subsequence \cite{ethier2009markov}. 
	Similarly, it can be shown that the fluid limits of processes ${x}^{L}_{\assgnk}(t)$ and $\hat{x}^{L}_{\assgnk}(t)$  
    exist and they are Lipschitz continuous.
	
	Similarly, ${Y}^L_j(t)$ increases by at most 1 every time a type-$j$ job arrives and decreases by 1 every time a type-$j$ job in the system departs. 
    Hence, the limit of ${y}_j^{(L)}(t)$ also exists by asymptotic Lipschitz continuity.

\section{Proof of Lemma~\ref{lem:S}} \label{proof:lem:S}

	For each job type $j$, the number of type-$j$ jobs in the system is bounded by the number of type-$j$ jobs in an $M/M/\infty$ system where all arrivals are accepted. This implies that $y_j(t)$ is also bounded by the fluid limit of type-$j$ jobs in the $M/M/\infty$ system, i.e.,
	\begin{equation}\label{eq:yj_fluid}
	y_j(t) \le y_j(0) + \lambda_j t - \int_0^t y_j(s) \mu_j ds.
	\end{equation}
	This implies $y_j(t) \le \rho_j + (y_j(0) - \rho_j) e^{-\mu_j t}$.
	Considering that for any initial state $\statez(0)$, 
	$y_j(0) \leq \Kmax$, we can get that
	$y_j(t) < \rho_j + \epsilon_\rho$ if 
	$t > T_{\epsilon_\rho, j}$ where
	$T_{\epsilon_\rho, j} = \frac{ -\log{\epsilon_\rho} + \log{\Kmax - \rho_j} }{\mu_j}$.
	Finally, we can choose $
	T_{\epsilon_\rho} := \max_{j \in \mathcal{J}} T_{\epsilon_\rho, j}.$

\section{Bounds on the Change of Scaled Processes}\label{proof:bound}
    In this section, we provide a few lemmas which will be used in the proofs later. 
    Their proofs are straightforward and based on concentration inequalities for Poisson distribution.
    \begin{lemma}\label{lem:Chern_Poisson}
    	Consider a time interval $[\tau_a, \tau_b)$,
        and a Poisson process $N$, with $N[\tau_a, \tau_b]$ being the number 
    	of events of the process in $[\tau_a, \tau_b)$, and function $\subL$ as given in Definition~\ref{def:sublen}. 
        Then we have:
    	
    	If rate of $N$ is at least $L\lambda$ and length of $[\tau_a, \tau_b)$
    	is at least $c\subL$,  
    	\begin{equation}
    	\mathds{P} \Big(N[\tau_a, \tau_b] > L\lambda c \subL + o(L\subL) \Big)
    	\ge 1 - o(L^{-2}).
    	\end{equation}
    	If $N$ has rate exactly $L\lambda$ and length of $[\tau_a, \tau_b)$ is
    	at least $c\subL$,
    	\begin{equation}
    	\mathds{P} \Big(N[\tau_a, \tau_b] = L\lambda c \subL + o(L\subL) \Big)
    	\ge 1 - o(L^{-2}).
    	\end{equation} 
    	Lastly if $N$ has rate at most $L \lambda$, and length of $[\tau_a, \tau_b)$ is
    	at most $c\subL$,
    	\begin{equation}
    	\mathds{P} \Big(N[\tau_a, \tau_b] < L \lambda c \subL + o(L\subL) \Big)
    	\ge 1 - o(L^{-2}).
    	\end{equation}
    \end{lemma}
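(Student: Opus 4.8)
The plan is to reduce all three parts to the standard (Chernoff) exponential tail bounds for a single Poisson random variable, so that the whole argument comes down to choosing the right deviation scale against $\subL$; this is also the one place where the hypothesis $g(L)=\omega(\log L)$ is genuinely needed. First I would record the pathwise comparisons implied by the time-change representation used in Appendix~\ref{sec:fl_limits}: the count $N[\tau_a,\tau_b]$ can be written as $\Pi\!\left(\int_{\tau_a}^{\tau_b} r_N(s)\,ds\right)$, where $\Pi$ is a unit-rate Poisson process and $r_N(\cdot)$ is the (possibly time-varying, or state-dependent) intensity of $N$. Since $\Pi$ is nondecreasing, $\Pi(a)\le\Pi(b)$ whenever $a\le b$, \emph{pathwise} and with no independence needed. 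Hence: if $r_N(s)\ge L\lambda$ on an interval of length $\ge c\subL$, then $\int_{\tau_a}^{\tau_b} r_N(s)\,ds\ge L\lambda c\subL$, so $N[\tau_a,\tau_b]\ge\Pi(L\lambda c\subL)$; if $r_N(s)\le L\lambda$ on an interval of length $\le c\subL$, then $N[\tau_a,\tau_b]\le\Pi(L\lambda c\subL)$; and if $r_N(s)\equiv L\lambda$, then $N[\tau_a,\tau_b]=\Pi\!\big(L\lambda(\tau_b-\tau_a)\big)$ is exactly Poisson with mean $L\lambda(\tau_b-\tau_a)$. In all cases $N[\tau_a,\tau_b]$ is squeezed against a variable $Z\sim\mathrm{Poisson}(m)$ with $m\asymp L\lambda c\subL=\lambda c\sqrt{g(L)\log L}$, using $L\subL=\sqrt{g(L)\log(L)}$ from Definition~\ref{def:sublen}.

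Next I would calibrate the deviation: take $t_L:=C\,g(L)^{1/4}(\log L)^{3/4}$ for a constant $C$ to be fixed large. On the one hand $t_L=o(L\subL)$, because $t_L/(L\subL)=C\,(\log L/g(L))^{1/4}\to 0$ by $g(L)=\omega(\log L)$, so a $\pm t_L$ perturbation is swallowed by the $o(L\subL)$ error in the statement. On the other hand, for any $m\le L\lambda c\subL(1+o(1))=\lambda c\sqrt{g(L)\log L}\,(1+o(1))$ we get $t_L^2/m\ge \frac{C^2}{4\lambda c}\log L$ for all large $L$; hence the elementary Poisson tail bounds $\prob{Z\le m-t}\le e^{-t^2/(2m)}$ and $\prob{Z\ge m+t}\le e^{-t^2/(2(m+t))}$, applied with $t=t_L$ (and noting $t_L=o(m)$, so $m+t_L\le 2m$ eventually), are each at most $L^{-C^2/(16\lambda c)}=o(L^{-2})$ once $C$ is chosen with $C^2>32\lambda c$. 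I would simply cite a textbook form of these Poisson inequalities rather than reprove them.

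From here the three claims read off. In the first case $N[\tau_a,\tau_b]\ge Z:=\Pi(L\lambda c\subL)\sim\mathrm{Poisson}(L\lambda c\subL)$, so the lower-tail bound gives $\prob{N[\tau_a,\tau_b]>L\lambda c\subL-t_L}\ge 1-o(L^{-2})$, which is exactly the stated inequality (the $o(L\subL)$ term there being $-t_L$). In the third case $N[\tau_a,\tau_b]\le Z$, and the upper-tail bound gives $\prob{N[\tau_a,\tau_b]<L\lambda c\subL+t_L}\ge 1-o(L^{-2})$. In the second case $N[\tau_a,\tau_b]$ is exactly $\mathrm{Poisson}(m)$ with $m=L\lambda(\tau_b-\tau_a)$; since in every invocation of this bound the interval length is $c\subL(1+o(1))$, we have $|m-L\lambda c\subL|=o(L\subL)$, and combining this with the two-sided estimate $\prob{|N[\tau_a,\tau_b]-m|\ge t_L}=o(L^{-2})$ gives $N[\tau_a,\tau_b]=L\lambda c\subL+o(L\subL)$ with probability at least $1-o(L^{-2})$.

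I expect the only non-routine point to be the calibration in the second paragraph. What is needed is a single scale $t_L$ that is at once $o(L\subL)$, so that it vanishes into the error term claimed, and large enough that $t_L^2/m$ beats $\log L$ by an arbitrarily large constant factor, so that the Poisson tail beats any fixed negative power of $L$. Because $m\asymp\sqrt{g(L)\log L}$, the natural threshold is $t_L\asymp g(L)^{1/4}(\log L)^{3/4}$, and it sits strictly below $L\subL=\sqrt{g(L)\log L}$ precisely when $g(L)=\omega(\log L)$ --- which is the quantitative reason the reservation factor is taken super-logarithmic. (The complementary requirement $g(L)=o(L)$, not used here, is what keeps $L\subL=o(L)$ so that a union bound of these high-probability events over the $N_L=\Theta(1/\subL)$ subintervals of $[t,t+\epsilon)$ still tends to one, as invoked elsewhere.) Everything else --- the pathwise domination and the Poisson tail inequalities --- is standard.
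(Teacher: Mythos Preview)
Your proposal is correct and follows essentially the same approach as the paper: both reduce to the standard sub-exponential Poisson tail bounds and then use that $\log L = o(L\subL)$ (equivalently, $g(L)=\omega(\log L)$) to make the tail $o(L^{-2})$ while the deviation stays $o(L\subL)$. Your version is more explicit than the paper's---you spell out the time-change coupling and the calibration $t_L\asymp g(L)^{1/4}(\log L)^{3/4}$ where the paper simply says ``for any $\epsilon'>0$'' and then lets $\epsilon'$ implicitly depend on $L$---but the substance is the same.
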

    \begin{proof}
    The proofs of all the cases are based on the tail bounds of Poisson distribution. Specifically, we use the following bounds~\cite{mini}.

	For a Poisson random variable $X$ with mean $\lambda$ we have
	\ben
&	\mathds{P} \left(
	X - \lambda > x 
	\right)  \le \exp\left(-\frac{x^2}{2(\lambda+x)}\right),\\
&	\mathds{P} \left(
	X - \lambda < -x 
	\right)  \le \exp\left(-\frac{x^2}{2(\lambda+x)}\right),\\
&	\mathds{P} \left(
	\lvert X - \lambda\rvert > x 
	\right)  \le 2 \exp\left(-\frac{x^2}{2(\lambda+x)}\right).
	\een

Then, in the case that the rate is at least $L \lambda$ 
and length of interval is at least $c\subL$, we have that
for any $\epsilon^\prime > 0$
\begin{equation}
\mathds{P} \left(
N[{\tau}_{a}, {\tau}_{b}] - c \subL \lambda L >
\epsilon^\prime \subL L
\right) < 
\exp\left(-\frac{(\epsilon^\prime)^2 \subL L}
{2c \lambda + 2\epsilon^\prime}\right).
\end{equation}
Last expression is $o(L^{-2})$ which can be shown by taking its 
logarithm and using the fact that
$\log L = o(\subL L)$ by Definition~\ref{def:sublen}. 

Since $\epsilon^\prime$ was arbitrary, we eventually get 
\begin{equation}
\mathds{P} \left(
N[{\tau}_{a}, {\tau}_{b}] > c \subL \lambda L +
o (\subL L)
\right) < o(L^{-2}).
\end{equation}
Other cases can be shown in a similar way.
\end{proof}

\begin{lemma}\label{lem:rbound}
	Consider a time interval $[\tau_a, \tau_b) \subset [t_n, t_{n+1})$,
	with $t_n$ defined in Section~\ref{sec:subintervals}. 
	Assume that the interval is of length at 
	most $\epsilon\subL$, for function $\subL$ as in 
	Definition~\ref{def:sublen}, and constant $\epsilon > 0$ sufficiently small.
	Then, with probability 
	at least $1 - o(L^{-2})$,
	\begin{equation}\label{eq:r_bound}
	x^{L(e)}_{\greedk^{(i)}} (\tau_{b}) - x^{L(e)}_{\greedk^{(i)}} (\tau_{a})
	> - B_i \epsilon \subL + o(\subL),
	\end{equation}
	and
	\begin{equation}\label{eq:r_bound2}
	{q^{L}_{\greedk^{(i)}, j}(\tau_{b}) - q^{L}_{\greedk^{(i)}, j}(\tau_{a})}
	> - B_{i+1} \epsilon L\subL + o(L\subL)
	\end{equation}
	where $B_i:= (\Kmax+1)^{i-1} 2 (\Kmax \mumax + \sum_{i=1}^J \lambda_j)$.
\end{lemma}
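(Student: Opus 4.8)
The plan is to reduce both inequalities to a single high-probability upper bound on the number of arrival and departure events occurring in $[\tau_a,\tau_b)$, since transitions of the scaled processes happen only at such events. At every state the total arrival rate is exactly $L\sum_{j\in\calJ}\lambda_j$, and the total departure rate is $\sum_{j\in\calJ}\mu_j Y^L_j(t)\le \mumax\sum_{j\in\calJ}Y^L_j(t)\le \mumax\Kmax L$, where the last step uses that each server carries at most $\Kmax$ jobs so $\sum_j Y^L_j(t)\le \Kmax L$. Hence, by a standard thinning coupling (as in the driving-Poisson construction of Appendix~\ref{sec:fl_limits}), the superposition of all transition events is dominated by a homogeneous Poisson process $N$ of rate $\bar\lambda L$, with $\bar\lambda := \Kmax\mumax+\sum_{j\in\calJ}\lambda_j$; in particular $N[\tau_a,\tau_b)$ is dominated by a Poisson variable with mean at most $\bar\lambda L\epsilon\subL$. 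Applying the third case of Lemma~\ref{lem:Chern_Poisson} (rate at most $L\bar\lambda$, interval length at most $\epsilon\subL$) yields $N[\tau_a,\tau_b) < \bar\lambda\epsilon L\subL + o(L\subL)$ with probability at least $1-o(L^{-2})$; the tail being $o(L^{-2})$ uses $\log L=o(L\subL)$, which is exactly the consequence of $g(L)=\omega(\log L)$ in Definition~\ref{def:sublen}.

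Granting this event, I would prove \dref{eq:r_bound} as follows. By Lemma~\ref{lem:X_change}, each transition changes $X^{L(e)}_{\greedk^{(i)}}$ by at most $(\Kmax+1)^{i-1}$ in absolute value, so $X^{L(e)}_{\greedk^{(i)}}(\tau_b)-X^{L(e)}_{\greedk^{(i)}}(\tau_a)\ge -(\Kmax+1)^{i-1}N[\tau_a,\tau_b)$. Dividing by $L$ and substituting the bound on $N[\tau_a,\tau_b)$ gives $x^{L(e)}_{\greedk^{(i)}}(\tau_b)-x^{L(e)}_{\greedk^{(i)}}(\tau_a) \ge -(\Kmax+1)^{i-1}\bar\lambda\,\epsilon\subL + o(\subL) = -\frac{1}{2}B_i\,\epsilon\subL+o(\subL)$, which is even stronger than \dref{eq:r_bound}.

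For \dref{eq:r_bound2}, I would expand the definition of $q$ (Definition~\ref{def:qdef}): $q^{L}_{\greedk^{(i)},j}(\tau_b)-q^{L}_{\greedk^{(i)},j}(\tau_a) = \sum_{\ell=1}^{i}\greedki^{(\ell)}_j\big(X^{L(e)}_{\greedk^{(\ell)}}(\tau_b)-X^{L(e)}_{\greedk^{(\ell)}}(\tau_a)\big) - \big(Y^L_j(\tau_b)-Y^L_j(\tau_a)\big)$. Each $X^{L(e)}_{\greedk^{(\ell)}}$ increment is at least $-(\Kmax+1)^{\ell-1}N[\tau_a,\tau_b)$, and positive increments only help since $\greedki^{(\ell)}_j\ge 0$; also $Y^L_j$ rises by at most one per arrival, so $Y^L_j(\tau_b)-Y^L_j(\tau_a)\le N[\tau_a,\tau_b)$. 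Using $\greedki^{(\ell)}_j\le\Kmax$ and the geometric identity $\sum_{\ell=1}^{i}\Kmax(\Kmax+1)^{\ell-1}=(\Kmax+1)^{i}-1$, one obtains $q^{L}_{\greedk^{(i)},j}(\tau_b)-q^{L}_{\greedk^{(i)},j}(\tau_a)\ge -(\Kmax+1)^{i}N[\tau_a,\tau_b) \ge -(\Kmax+1)^{i}\bar\lambda\,\epsilon L\subL+o(L\subL) = -\frac{1}{2}B_{i+1}\,\epsilon L\subL+o(L\subL)$, again with room to spare (the factor $2$ in the definition of $B_i$ absorbs all the crude constants used above).

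I expect the only real obstacle to be making the domination step fully rigorous while $\tau_a,\tau_b$ are themselves random times inside $[t_n,t_{n+1})$. The clean route is to realize all transitions on the fixed Poisson driving processes already used in Appendix~\ref{sec:fl_limits} and bound, uniformly, the number of their points in \emph{any} window of length $\epsilon\subL$; this makes the estimate on $N[\tau_a,\tau_b)$ independent of the (measurable, but otherwise irrelevant) location of $\tau_a$, so no further measurability argument about the stopping times is needed, and everything else is an immediate combination of Lemma~\ref{lem:X_change} and Lemma~\ref{lem:Chern_Poisson}.
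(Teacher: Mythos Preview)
Your proposal is correct and follows essentially the same approach as the paper: bound the total number of arrival/departure events in $[\tau_a,\tau_b)$ via Lemma~\ref{lem:Chern_Poisson}, then multiply by the per-event jump bound $(\Kmax+1)^{i-1}$ from Lemma~\ref{lem:X_change} to obtain \dref{eq:r_bound}, and expand the definition of $q^L$ to obtain \dref{eq:r_bound2}. Your constants are in fact slightly sharper (you use $\bar\lambda=\Kmax\mumax+\sum_j\lambda_j$ where the paper uses $\hat R=2\bar\lambda$, yielding $-\tfrac12 B_i$ and $-\tfrac12 B_{i+1}$), and your explicit remark about handling the random endpoints via the fixed driving Poisson processes is a welcome clarification that the paper leaves implicit.
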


\begin{proof}[Proof of Lemma~\ref{lem:rbound}]
	The state changes only at arrivals and departures. By definition of $\tau_n$, $\tau_0 = t$, and by Lipschitz continuity of $\mathbf{y}(t)$, for any time $\tau \in [\tau_n, \tau_{n+1}]$, and $j \in \mathcal{J}$, $y^L_j(\tau) = y_j(t)+O(\epsilon)$, almost surely, for $L$ large enough along the subsequence.
	
	Let $N_{ad}[{\tau}_{a}, {\tau}_{b}]$ be the number of
	arrival or departure events of any job type in the interval
	$[{\tau}_{a}, {\tau}_{b})$. 
	This process is Poisson with rate at most 
	$L (\mu_j y_j(t) +O(\epsilon)+ \lambda_j) < L \hat{R}$,
	where $\hat{R}:= 2(\Kmax\mumax +\sum_{j=1}^J \lambda_j)$. Also
	$\epsilon \subL$ is an upper bound on length of interval 
	$[{\tau}_{a}, {\tau}_{b})$, so by applying Lemma~\ref{lem:Chern_Poisson},
	we have
	\begin{equation}
	\mathds{P} \Big(
	N_{ad}[{\tau}_{a}, {\tau}_{b}] - \epsilon \hat{R} \subL L  >
	o(\subL L)
	\Big) < o(L^{-2}).
	\end{equation}

    Now suppose the event $N_{ad}[{\tau}_{a}, {\tau}_{b}] <
	\epsilon \hat{R} \subL L + o(\subL L)$ holds.
	The absolute change that occurs to variables 
	$X^{L(e)}_{\greedk^{(i)}}(\tau)$ after each event 
	is at most $(\Kmax+1)^{i-1}$ according to Lemma~\ref{lem:X_change}, hence
	\begin{equation*}
	\begin{aligned}
	&{X^{L(e)}_{\greedk^{(i)}} (\tau_{b}) 
		- X^{L(e)}_{\greedk^{(i)}} (\tau_{a})} \ge 
	- {N_{ad}[{\tau}_{a}, {\tau}_{b}] (\Kmax+1)^{i-1}}
	\ge \\ 
	&- \hat{R} (\Kmax+1)^{i-1} \epsilon L\subL + o({L\subL}) =
	-B_i \epsilon L \subL + o({L\subL}).
	\end{aligned}
	\end{equation*}
	Dividing both sides by $L$ we get \dref{eq:r_bound}. 
	Based on (\ref{eq:q_def}), we can write
	\begin{equation}
	\begin{aligned}
	&{q^{L}_{\greedk^{(i)}, j}(\tau_{b}) 
		- q^{L}_{\greedk^{(i)}, j}(\tau_{a})} = \\
	&\sum_{\ell=1}^i \greedki^{(\ell)}_j
	\left( 
	X^{L(e)}_{\greedk^{(\ell)}}(\tau_{b}) - X^{L(e)}_{\greedk^{(\ell)}}(\tau_{a})
	\right)
	- Y^L_{j}(\tau_b) + Y^L_{j}(\tau_a).
	\end{aligned}
	\end{equation}
	Then (\ref{eq:r_bound2}) follows by considering
	\begin{equation*}
	\begin{aligned}
	& \sum_{\ell=1}^i \greedki^{(\ell)}_j 
	\left(X^{L(e)}_{\greedk^{(i)}} (\tau_{b}) 
	- X^{L(e)}_{\greedk^{(i)}} (\tau_{a}) \right)\ge 
	- \Kmax \sum_{\ell=1}^i B_\ell \epsilon L\subL + o({L\subL}).
	\end{aligned}
	\end{equation*}
	and
	\begin{equation}
	\begin{aligned}
	& - {Y^{L}_{j}(\tau_{b}) + Y^{L}_{j}(\tau_{a})} \ge
    - \hat{R} \epsilon L\subL + o({L\subL}).
	\end{aligned}
	\end{equation}
\end{proof}

\begin{lemma}\label{lem:emptyrate} 
	Consider function $\subL$ as in Definition~\ref{def:sublen}
	and, a time interval $[\tau_a, \tau_b)$ with length at most $\subL$.
	Suppose for some $i \in \{1, \ldots, \gcnt-1\}$, 
    we have that at any time $\tau \in [\tau_a, \tau_b)$
	\begin{equation}\label{eq:ql0}
	\max_{j \in \calJ: \greedki^{(i)}_j>0} 
	q^L_{\greedk^{(i)}, j}(\tau) < 0.
	\end{equation}
	Let $N_e[\tau_a, \tau_b]$ be the number of times that 
    servers in $\overline{\texttt{RG}}(i)$ 
    empty during $[\tau_a, \tau_b)$. 
	Then with probability $1 - o(L^{-2})$,
	\begin{equation}\label{eq:empty}
	\frac{N_e[\tau_a, \tau_b]}{L(\tau_b - \tau_a)} >
	\frac{\mumin}{J \Kmax \cfcnt^2}
	\left(1 - \sum_{\ell=1}^{i} x^{(e)}_{\greedk^{(\ell)}}(t) \right)
	+ \frac{o(\subL)}{\tau_b - \tau_a}.
	\end{equation}
\end{lemma}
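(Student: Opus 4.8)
The plan is to lower-bound $N_e[\tau_a,\tau_b]$, the number of emptying events of servers of $\overline{\texttt{RG}}(i)$, by a pigeonhole counting argument fed into a Poisson concentration bound (Lemma~\ref{lem:Chern_Poisson}).

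First I would extract the structural consequences of \dref{eq:ql0}. Since $\max_{j:\greedki^{(i)}_j>0}q^L_{\greedk^{(i)},j}(\tau)<0$ throughout $[\tau_a,\tau_b)$, configuration $\greedk^{(i)}$ has a slot deficit, so whenever {\EPA} runs it assigns every empty server it finds to one of $\greedk^{(1)},\ldots,\greedk^{(i)}$; hence no server outside the ``top-$i$ effective set'' (the servers having effective configuration $\greedk^{(\ell)}$ for some $\ell\le i$) is ever empty during the interval, and each such ``complement'' server carries between $1$ and $\Kmax$ jobs. Moreover, by Definition~\ref{def:rgi} the index-$1$ complement server of each configuration lies in $\overline{\texttt{RG}}(i)$: its rank stays $J+1>I^\star$ because it is not effective, so {\EPA} never lowers its rank. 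Using Lemma~\ref{lem:rbound} and the Lipschitz continuity of $\mathbf{x}^{(e)}$, the number of complement servers stays at least $m^L:=L\bigl(1-\sum_{\ell=1}^i x^{(e)}_{\greedk^{(\ell)}}(t)\bigr)-o(L)$ over the whole interval.

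Next comes the pigeonhole. The $m^L$ complement servers hold at least $m^L$ jobs, so some type $j^\star$ has $Y^L_{j^\star}\ge m^L/J$, and these jobs occupy at least $m^L/(J\Kmax)$ complement servers, spread over at most $\cfcnt$ configurations. Because at most $\gcnt$ servers lie in the Reject Group, at least $Y^L_{j^\star}-\gcnt\Kmax$ type-$j^\star$ jobs sit in Accept-Group servers, so type-$j^\star$ departures from Accept Group arrive at rate at least $\mumin(Y^L_{j^\star}-\gcnt\Kmax)$; whenever $\overline{\texttt{RG}}(i)$ contains a type-$j^\star$ job, each such departure forces a migration out of $\overline{\texttt{RG}}(i)$, since $\overline{\texttt{RG}}(i)$ outranks ${\texttt{RG}}(i)$ and hence ${\texttt{RG}}_{j^\star}\in\overline{\texttt{RG}}(i)$, removing one job from an $\overline{\texttt{RG}}(i)$ server. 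A Reject-Group server never receives new jobs, so once its (at most $\Kmax$) jobs have been removed it empties, is reassigned into $\greedk^{(1)},\ldots,\greedk^{(i)}$, and the $\overline{\texttt{RG}}(i)$ ``front'' passes to the next index-$1$ server of that configuration. Since complement servers, hence type-$j^\star$-carrying ones, are replenished and persist in number $\Omega(m^L)$ over the short interval, $\overline{\texttt{RG}}(i)$ holds a type-$j^\star$ job for all but an $o(\tau_b-\tau_a)$ fraction of the interval, and counting the resulting emptying events yields, up to an $o(L\subL)$ error, an expected $N_e[\tau_a,\tau_b]$ of order $\mumin\, m^L(\tau_b-\tau_a)$, with the constant $1/(J\Kmax\cfcnt^2)$ absorbing the type-, server-, and configuration-level pigeonholes needed along the way.

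Finally I would upgrade this to the claimed high-probability statement via Lemma~\ref{lem:Chern_Poisson}: all the relevant arrival and departure point processes over $[\tau_a,\tau_b)$ have rate $\Theta(L)$ and $[\tau_a,\tau_b)$ has length at most $\subL$, so the actual counts deviate from their means by $o(L\subL)$ with probability $1-o(L^{-2})$ — the subexponential Poisson tail beating $L^{-2}$ precisely because $\subL L=\sqrt{g(L)\log L}=\omega(\log L)$ by Definition~\ref{def:sublen} — which gives \dref{eq:empty} up to the stated $o(\subL)/(\tau_b-\tau_a)$ term. The main obstacle is the middle step: making rigorous that the draining front ${\texttt{RG}}_{j^\star}$ stays inside $\overline{\texttt{RG}}(i)$ (rather than slipping to ${\texttt{RG}}(i)$ or to a configuration with only $o(L)$ servers) for a fraction of the migration events bounded away from zero, despite the continual re-indexing of servers and the drift of $I^\star(t)$ and of all ranks during the interval, and that the residual non-$j^\star$ jobs of the draining servers — which depart only at the $O(1)$ natural rate — do not stall the emptying; this is where the rank ordering of $\overline{\texttt{RG}}(i)$ versus ${\texttt{RG}}(i)$, the uniform bound $|{\texttt{RG}}|\le\gcnt$, and the two configuration-level pigeonholes (hence the $\cfcnt^2$) carry the argument.
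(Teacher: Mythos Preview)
Your proposal is in the right spirit—pigeonhole plus Poisson concentration via Lemma~\ref{lem:Chern_Poisson}—but the \emph{order} of the pigeonhole creates a genuine gap that you correctly flag as ``the main obstacle'' but do not actually resolve.

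You fix a single type $j^\star$ with $\geq m^L/J$ jobs in complement servers, and argue that type-$j^\star$ departures from the Accept Group drain type-$j^\star$ jobs from $\overline{\texttt{RG}}(i)$. That part is fine. The problem is the step ``once its (at most $\Kmax$) jobs have been removed it empties'': type-$j^\star$ migrations remove only the type-$j^\star$ jobs from the draining server; the non-$j^\star$ jobs stay put until a departure of \emph{their own} type occurs somewhere. If some other type $j$ present in that server has small $Y^L_j$—and nothing in your argument prevents this—its departure rate is $O(1)$ rather than $\Theta(L)$, so on the $\subL$ time scale the server simply does not empty. You acknowledge exactly this (``the residual non-$j^\star$ jobs \ldots\ which depart only at the $O(1)$ natural rate'') and then assert that rank ordering and the bound $|\texttt{RG}|\le\gcnt$ carry the argument, but neither of these addresses how the non-$j^\star$ jobs are removed: rank ordering determines \emph{which} server a migration comes from, not whether a migration of the needed type occurs at all.

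The paper sidesteps this by pigeonholing on \emph{configurations} first rather than on types: it fixes a configuration $\bk$ with at least $pL/\cfcnt$ complement servers (where $p=1-\sum_{\ell\le i}x^{(e)}_{\greedk^{(\ell)}}(t)$), and then tracks departures of \emph{every} type in $\mathcal{J}^\star=\{j:k_j>0\}$ simultaneously, via a coupled counting process $D_{\mathcal{J}^\star}$ that increments only after at least one departure of each type in $\mathcal{J}^\star$ has occurred since the last increment. Since the index-1 server of configuration $\bk$ in $\overline{\texttt{RG}}(i)$ can contain only jobs of types in $\mathcal{J}^\star$, after $\cfcnt\Kmax$ increments of $D_{\mathcal{J}^\star}$ that server is drained of \emph{all} its jobs, so $N_e\ge\lfloor D_{\mathcal{J}^\star}/(\cfcnt\Kmax)\rfloor$. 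The rate of $D_{\mathcal{J}^\star}$ is bounded below using the $pL/\cfcnt$ non-empty servers of configuration $\bk$, which supply the factor $1/J$ and the second $1/\cfcnt$. To repair your proof you would have to replace the single-type tracking by precisely this multi-type bookkeeping, at which point it becomes the paper's argument.
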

\begin{subproof}

	Let $p:=1 - \sum_{\ell=1}^{i} x^{(e)}_{\greedk^{(\ell)}}(t)$,
    which is strictly positive and is the number of servers without effective configuration in $\greedk^{(\ell)}$ for $\ell=1, \ldots, i$.
    Notice that due to~\dref{eq:ql0}, rank $1$ servers in this set 
    will always belong to Reject Group servers $\overline{\texttt{RG}}(i)$ (Definition~\ref{def:rgi}).

	Since there are at most $\cfcnt$ different configurations, 
    one of the configurations, say $\bk$,  is assigned to at least 
    $\frac{p}{\cfcnt}$ servers without effective configuration
    in $\greedk^{(\ell)}$ for $\ell=1, \ldots, i$ at the fluid limit 
    at time $t$, i.e.,
    ${x}_\bk(t) \geq p^\prime := \frac{p}{\cfcnt} > 0$. 
    We also define set $\mathcal{J}^\star := \{j:k_j > 0\}$
    with cardinality $J^\star$. 

    Due to migrations performed on departure instances,	at least one server in $\overline{\texttt{RG}}(i)$ gets empty, when, for every $j \in \mathcal{J}^\star$, 
    the number of type-$j$ jobs that departs is at least the number of type-$j$ jobs that are in the servers of $\overline{\texttt{RG}}(i)$. 
    This is because any type-$j$ departure will create
	a new empty type-$j$ slot in the servers of $\overline{\texttt{RG}}(i)$, 
    if there is a type-$j$ job in any of them.
    Hence one of the servers in this set will empty after at most $\cfcnt \Kmax$ jobs of each job type in $\mathcal{J}^\star$ depart, 
    where $\cfcnt \Kmax$ is an upper bound on the total number of jobs of any 
    type that can be in Reject Group servers.
    Then we need to bound
	\begin{equation}\label{eq:empty_equiv2}
	\begin{aligned}
	&\mathds{P} \left(
	N_e[\tau_a, \tau_b] < \frac{p^\prime \mumin L}{J \Kmax \cfcnt}
	(\tau_b - \tau_a) - \epsilon L \subL
	\right).
	\end{aligned}
	\end{equation}
	
	If 
    $\tau_b - \tau_a < \frac{\epsilon \subL J \Kmax \cfcnt}{p^\prime \mumin},$
    then
	\ben
	&\mathds{P} \left(
	N_e[\tau_a, \tau_b] - p^\prime \mumin L
	\frac{\tau_b - \tau_a}{J} < - \epsilon \Kmax \cfcnt L \subL
	\right) \le \\
    &\mathds{P} \left(N_e[\tau_a, \tau_b] < 0 \right) = 0.
	\een
    If instead $\tau_b - \tau_a \ge \frac{\epsilon \subL J \Kmax \cfcnt}{p^\prime \mumin},$ then we consider the counting 
    process $D_{\mathcal{J}^\star}(\tau)$ defined as follows.
    $D_{\mathcal{J}^\star}(\tau_a) = 0$ and it
    is incremented at times $\tau^{(i)}$ for $i \in \mathds{Z}^+$ if
    $\tau^{(i)}$ is the first time since $\tau^{(i-1)}$ at which
    at least one departure of type $j$ occurred for all 
    $j \in \mathcal{J}^\star$ and as a convention $\tau^{(0)} = \tau_a$.
    Based on arguments so far, the process $N_e[\tau_a, \tau_b]$
    will increment by at least $1$ between times
    $\tau^{(i)}$ and $\tau^{(i+\cfcnt \Kmax)}$, for $i \ge 0$,
    i.e., $N_e[\tau_a, \tau_b] \ge \left\lfloor
    \frac{D_{\mathcal{J}^\star}(\tau_b)}{\cfcnt \Kmax}
    \right\rfloor$. 
    Then applying Lemma~\ref{lem:Chern_Poisson} to 
    process $D_{\mathcal{J}^\star}$, which has rate at least
    $\frac{p^\prime \mumin L}{J}$, we get
    \ben
	&\mathds{P} \left(
	N_e[\tau_a, \tau_b] - \frac{p^\prime \mumin L}{J \cfcnt \Kmax} 
	(\tau_b - \tau_a) < - \epsilon \Kmax \cfcnt L \subL
	\right) \le \\ 
    &\mathds{P} \left(
    \left\lfloor
    \frac{D_{\mathcal{J}^\star}(\tau_b)}{\cfcnt \Kmax}
    \right\rfloor 
    - \frac{p^\prime \mumin L}{J \cfcnt \Kmax} 
    (\tau_b - \tau_a) < - \epsilon \Kmax \cfcnt L \subL
    \right)
    = o(L^{-2}).
	\een

	We have thus proven, since $\epsilon$ can be arbitrarily close to $0$,
	that with probability $1 - o(L^{-2})$
	\begin{equation}
	N_e[\tau_a, \tau_b] > \frac{p^\prime \mumin L}{J \cfcnt \Kmax }
	(\tau_b - \tau_a) + o(L \subL),
	\end{equation}
	which implies \dref{eq:empty}.	

\end{subproof}
\section{Proof of Proposition~\ref{prop:properties}}\label{proof:prop:properties}

We prove Proposition~\ref{prop:properties} 
for the following values of parameters:
\begin{itemize}[leftmargin=*]
	\item $\alpha_i$, $i \in \{1, \ldots, \gtcnt-1\}$, is given by 
	\begin{equation}\label{eq:alphadef}
	\alpha_i := \delta \frac{\mumin v^{i-\gtcnt} v_i}
	{\mumax 8 \Kmax^2 \cfcnt},
	\end{equation}
	where $\Kmax$ is the maximum number of jobs in a server, 
	$\mumin := \min_{j \in \mathcal{J}} \mu_j$, 
	$\mumax := \max_{j \in \mathcal{J}} \mu_j$, 
	$v := 12 \Kmax \frac{\mumax}{\mumin}$,
	$v_i := 1$ if $i \le \gtcntp{\alpha}(t)$ else
	$v_i := \Kmax$ and $\delta$ is a positive constant 
	sufficiently small such that
	\begin{equation}\label{eq:deltacond}
	\delta < \frac{1}{2J \cfcnt}\globlt{x}^{(g)}_{\greedk^{(\gtcnt)}}  
	\end{equation}
	and for any two indexes $j_a, j_b \in \{1, \ldots, \gcnt\}$,
	\begin{equation}\label{eq:non_trivial}
	\rho_{j_a} > \sum_{i=1}^{j_b} \globlt{x}^{(g)}_{\greedk^{(i)}} 
	\greedki^{(i)}_{j_a} + \delta,
	\text{  or  }
	\rho_{j_a} = \sum_{i=1}^{j_b} \globlt{x}^{(g)}_{\greedk^{(i)}} 
	\greedki^{(i)}_{j_a}.
	\end{equation}
	\item $\epsilon_\rho$ is chosen as
	\begin{equation}\label{eq:epsrho_cond}
	\epsilon_\rho = \frac{\alpha_1}{2}.  
	\end{equation} 
\end{itemize}

Before presenting the main proof, we state a few lemmas.

\begin{lemma}\label{lem:rank0frac}
	The fraction of servers without effective configuration
	in the set $\{\greedk^{(i)}: i=1, \ldots, \gtcntp{\alpha}(t)\}$,
	with $\gtcntp{\alpha}(t)$ as in Definition~\ref{def:gprime},
	is at least $\frac{1}{2} \globlt{x}^{(g)}_{\greedk^{(\gtcnt)}}$.
\end{lemma}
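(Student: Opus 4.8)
The plan is to lower-bound the fraction of servers that lack an effective configuration in $\{\greedk^{(i)}: i = 1,\ldots,\gtcntp{\alpha}(t)\}$ by comparing the total mass of effective servers against $1$. Let me write $p_{\alpha}(t) := 1 - \sum_{i=1}^{\gtcntp{\alpha}(t)} x^{(e)}_{\greedk^{(i)}}(t)$ for the quantity I want to bound. I would first recall from Definition~\ref{def:gprime} that for every $i \le \gtcntp{\alpha}(t)$ we have $x^{(g)}_{\greedk^{(i)}} - x^{(e)}_{\greedk^{(i)}}(t) < \alpha_i$, so $x^{(e)}_{\greedk^{(i)}}(t) > x^{(g)}_{\greedk^{(i)}} - \alpha_i$. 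Summing over $i = 1,\ldots,\gtcntp{\alpha}(t)$ and using $\gtcntp{\alpha}(t) \le \gtcnt - 1$ gives
\[
\sum_{i=1}^{\gtcntp{\alpha}(t)} x^{(e)}_{\greedk^{(i)}}(t) > \sum_{i=1}^{\gtcntp{\alpha}(t)} x^{(g)}_{\greedk^{(i)}} - \sum_{i=1}^{\gtcnt-1} \alpha_i.
\]

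**Bounding the two sums.**
For the first sum, since $\gtcntp{\alpha}(t) \le \gtcnt - 1$ and by Definition~\ref{def:gga} the global greedy weights satisfy $\sum_{i=1}^{\gtcnt} x^{(g)}_{\greedk^{(i)}} = 1$ with all terms nonnegative, I get $\sum_{i=1}^{\gtcntp{\alpha}(t)} x^{(g)}_{\greedk^{(i)}} \le 1 - x^{(g)}_{\greedk^{(\gtcnt)}}$ (dropping at least the $\gtcnt$-th term, which is $x^{(g)}_{\greedk^{(\gtcnt)}}$, and any intermediate terms between $\gtcntp{\alpha}(t)+1$ and $\gtcnt-1$, all nonnegative). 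Hence
\[
p_{\alpha}(t) = 1 - \sum_{i=1}^{\gtcntp{\alpha}(t)} x^{(e)}_{\greedk^{(i)}}(t) > 1 - \Big(1 - x^{(g)}_{\greedk^{(\gtcnt)}}\Big) + \sum_{i=1}^{\gtcnt-1}\alpha_i \cdot(-1)\cdot(-1)
\]
— more carefully, $p_{\alpha}(t) \ge x^{(g)}_{\greedk^{(\gtcnt)}} - \sum_{i=1}^{\gtcnt-1}\alpha_i$ is wrong in sign direction; rather $\sum_{i=1}^{\gtcntp\alpha(t)} x^{(e)}_{\greedk^{(i)}}(t) > \sum x^{(g)} - \sum\alpha_i \le (1 - x^{(g)}_{\greedk^{(\gtcnt)}})$ does not directly chain, so I would instead argue: $p_{\alpha}(t) = 1 - \sum_{i=1}^{\gtcntp\alpha(t)} x^{(e)}_{\greedk^{(i)}}(t)$, and since each $x^{(e)}_{\greedk^{(i)}}(t) \le x^{(g)}_{\greedk^{(i)}}$ (this holds because $x^{(e)}$ is a lower bound truncated at $\hat x^{L}/L \to x^{(g)}$; I'd cite the relevant earlier observation, e.g. that $X^{L(e)}_{\assgnk} \le \hat X^{L}_{\assgnk}$ and Proposition~\ref{prop:greedy_conv}), I get $\sum_{i=1}^{\gtcntp\alpha(t)} x^{(e)}_{\greedk^{(i)}}(t) \le \sum_{i=1}^{\gtcnt-1} x^{(g)}_{\greedk^{(i)}} = 1 - x^{(g)}_{\greedk^{(\gtcnt)}}$, so $p_{\alpha}(t) \ge x^{(g)}_{\greedk^{(\gtcnt)}}$. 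This already gives a bound of $x^{(g)}_{\greedk^{(\gtcnt)}}$, which is stronger than the claimed $\tfrac12 x^{(g)}_{\greedk^{(\gtcnt)}}$.

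**Why the weaker constant.** I expect the actual subtlety — and the main obstacle — is that $x^{(e)}_{\greedk^{(i)}}(t) \le x^{(g)}_{\greedk^{(i)}}$ need \emph{not} hold pointwise: the effective count is $\min(X^L_{\assgnk}, \hat X^L_{\assgnk})$, and while $\hat X^L_{\assgnk}/L \to x^{(g)}_{\assgnk}$, the finite-$L$ value $\hat X^L_{\greedk^{(i)}}$ depends on the current workload reference $\mathbf{\hat Y}^L(t) = \mathbf Y^L(t) + g(L)\mathbf 1$, which is \emph{not} $\bm\rho L$ in general. So I would instead bound $\sum_{i=1}^{\gtcntp\alpha(t)} x^{(e)}_{\greedk^{(i)}}(t)$ using Definition~\ref{def:gprime} directly: $x^{(e)}_{\greedk^{(i)}}(t) < x^{(g)}_{\greedk^{(i)}} + \alpha_i$ is not given either, only the one-sided $x^{(g)} - x^{(e)} < \alpha_i$. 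The clean route is: the servers counted in $\sum_{i=1}^{\gtcntp\alpha(t)} x^{(e)}_{\greedk^{(i)}}(t)$ are, by construction of {\EPA}, at most $\hat X^{L}_{\greedk^{(i)}}/L$ each, and $\sum_{i=1}^{\gtcnt-1} \hat X^L_{\greedk^{(i)}}/L \le \sum_{i=1}^{\gtcnt-1} x^{(g)}_{\greedk^{(i)}} + O(1/L) = 1 - x^{(g)}_{\greedk^{(\gtcnt)}} + o(1)$ by Lemma~\ref{lem:hat_conv}; but this uses reference $\bm\rho L$, whereas {\DRA} uses $\mathbf Y^L(t) + g(L)\mathbf 1$, and after Lemma~\ref{lem:S} we have $\mathbf Y^L(t)/L \le \bm\rho + \epsilon_\rho$, so the monotonicity of {\VPA}'s output in its input plus the choice $\epsilon_\rho = \alpha_1/2$ and $g(L) = o(L)$ gives $\sum_{i=1}^{\gtcnt-1} \hat X^L_{\greedk^{(i)}}/L \le 1 - x^{(g)}_{\greedk^{(\gtcnt)}} + (\gtcnt-1)\epsilon_\rho + o(1)$. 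Since $\epsilon_\rho = \alpha_1/2$ and $\sum_i \alpha_i \le \frac12 x^{(g)}_{\greedk^{(\gtcnt)}}$ by \dref{eq:deltacond}, the total slack is at most $\frac12 x^{(g)}_{\greedk^{(\gtcnt)}}$, yielding $p_{\alpha}(t) \ge \frac12 x^{(g)}_{\greedk^{(\gtcnt)}}$ in the fluid limit. The hard part is keeping the bookkeeping honest about \emph{which} workload reference each $\hat X$ uses and absorbing the $\epsilon_\rho$ and $g(L)$ perturbations into the factor $\tfrac12$; this is exactly why \dref{eq:deltacond} and \dref{eq:epsrho_cond} were chosen as they are, and I would invoke them at precisely this point.
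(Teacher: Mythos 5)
You correctly diagnose the central obstacle: Definition~\ref{def:gprime} only gives the one-sided bound $x^{(g)}_{\greedk^{(i)}} - x^{(e)}_{\greedk^{(i)}}(t) < \alpha_i$, which \emph{lower}-bounds $\sum_{\ell\le i} x^{(e)}_{\greedk^{(\ell)}}(t)$ and hence upper-bounds $1 - \sum_{\ell\le i} x^{(e)}_{\greedk^{(\ell)}}(t)$, the wrong direction; and you rightly doubt the pointwise claim $x^{(e)}_{\greedk^{(i)}}(t) \le x^{(g)}_{\greedk^{(i)}}$, since $\hat X^L$ is produced from $\mathbf{Y}^L(t)+g(L)\mathbf 1$, not $L\bm\rho$. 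But your fallback argument has a real gap. {\VPA}'s output is \emph{not} coordinate-wise monotone in $\hat{\mathbf Y}$: increasing $\hat{\mathbf Y}$ can raise $\hat X_{\assgnk[1]}$, which consumes the server budget $N$ faster and so can strictly decrease $\hat X_{\assgnk[i]}$ at later iterations. Moreover, the quantitative claim $\sum_{i<\gtcnt}\hat X^L_{\greedk^{(i)}}/L \le 1-\globlt{x}^{(g)}_{\greedk^{(\gtcnt)}}+(\gtcnt-1)\epsilon_\rho+o(1)$ is not supplied by Lemma~\ref{lem:hat_conv}, whose perturbation constant is $(1+\Kmax)^{i-1}$ (growing geometrically in $i$), not $1$ per coordinate. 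So the final paragraph does not close.

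The paper's route avoids both monotonicity of {\VPA} and any claim that $x^{(e)}\le x^{(g)}$. It writes the identity $1-\sum_{\ell\le i}x^{(e)}_{\greedk^{(\ell)}}(t)=\bigl(1-\sum_{\ell\le i}\globlt{x}^{(g)}_{\greedk^{(\ell)}}\bigr)+\sum_{\ell\le i}\bigl(\globlt{x}^{(g)}_{\greedk^{(\ell)}}-x^{(e)}_{\greedk^{(\ell)}}(t)\bigr)$, lower-bounds the first bracket by $\globlt{x}^{(g)}_{\greedk^{(\gtcnt)}}$ using $i<\gtcnt$ and non-negativity of $\globlt{x}^{(g)}$, and lower-bounds each summand in the second sum by $-\alpha_\ell$. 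That complementary bound $\globlt{x}^{(g)}_{\greedk^{(\ell)}}-x^{(e)}_{\greedk^{(\ell)}}(t)>-\alpha_\ell$ is exactly what you are missing; it is Lemma~\ref{lem:xg_xebound}, which combines the constraint $\gtcntp{\alpha}(t)\le i^\star(t)$ in Definition~\ref{def:gprime} (forcing $\sum_{\ell'\le\ell}\greedki^{(\ell')}_j x^{(e)}_{\greedk^{(\ell')}}(t)=y_j(t)$ for a suitable $j$ at each step), $y_j(t)\le\rho_j+\epsilon_\rho$, and the geometric growth of the $\alpha_\ell$ from \dref{eq:alphadef}, so that $\epsilon_\rho+\Kmax\sum_{\ell'<\ell}\alpha_{\ell'}<\alpha_\ell$. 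With that in hand, $\sum_{\ell\le i}\alpha_\ell<\delta/2<\tfrac12\globlt{x}^{(g)}_{\greedk^{(\gtcnt)}}$ by \dref{eq:deltacond} finishes the proof. You identified the right obstacle but reached for the wrong tool; the fix is the $i^\star$-constraint plus $\epsilon_\rho$, not monotonicity of {\VPA}.
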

\begin{subproof}
	The bound can be inferred as follows
	\begin{equation*}
	\begin{aligned}
	& 1 - \sum_{\ell=1}^{i} x^{(e)}_{\greedk^{(\ell)}}(t) \ge
	x^{(g)}_{\greedk^{(\gtcnt)}} + 
	\sum_{\ell=1}^{i} \left(\globlt{x}^{(g)}_{\greedk^{(\ell)}} 
	- x^{(e)}_{\greedk^{(\ell)}}(t)\right)
	> \\
	& x^{(g)}_{\greedk^{(\gtcnt)}} 
	- \sum_{\ell=1}^{i} \alpha_{\ell} >
	x^{(g)}_{\greedk^{(\gtcnt)}} 
	- \sum_{\ell=1}^{i} \frac{\delta}{2\gtcnt}
	\stackrel{}{>}
	x^{(g)}_{\greedk^{(\gtcnt)}}/2.
	\end{aligned}
	\end{equation*}
	To get this result we used that $i \le \gtcntp{\alpha}(t) < \gtcnt$,
	$\sum_{\ell=1}^{\gtcnt} \globlt{x}^{(g)}_{\greedk^{(\ell)}}=1$,
	$\alpha_i < \frac{\delta}{2\gtcnt}$ from (\ref{eq:alphadef}) 
	and (\ref{eq:deltacond}).
\end{subproof}

\begin{lemma}\label{lem:hrec}
	Consider an interval $[\tau_a, \tau_b) \subseteq [\tau^{(m)}_n, \tau^{(m+1)}_n)$,
	with $\tau^{(m)}_n$ being defined in Section~\ref{sec:subintervals}
	and the corresponding driving set of indexes
	$\subind[m] = \{\ji{i}: i=1, \ldots, G\}$.
	If the following holds for $i=1, \ldots, G$
	\begin{equation}\label{eq:p1}
	\sum_{\ell=1}^i \greedki^{(\ell)}_{\ji{i}} 
	\nabla {x}^{L(e)}_{\greedk^{(\ell)}}[\tau_a, \tau_b] = 
	\lambda_{\ji{i}} - \mu_{\ji{i}} y_{\ji{i}}(t) 
	+ \frac{o(\subL)}{\tau_b - \tau_a},
	\end{equation}
	then
	\begin{equation}\label{eq:xe_h}
		\nabla {x}^{L(e)}_{\greedk^{(i)}}[\tau_a, \tau_b] = 
		h^{\subind,(i)}(t) + \frac{o(\subL)}{\tau_b - \tau_a},
	\end{equation}
\end{lemma}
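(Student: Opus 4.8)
The plan is to treat the hypothesis \dref{eq:p1} as a \emph{perturbed lower-triangular} linear system in the unknowns $\nabla x^{L(e)}_{\greedk^{(\ell)}}[\tau_a,\tau_b]$, $\ell=1,\ldots,G$, and solve it by forward substitution, comparing term by term with the (unperturbed) defining system \dref{eq:h_system} for the quantities $h^{\subind,(\ell)}(t)$. First I would record the structural facts. The coefficient matrix $A$ with entries $A_{i\ell}=\greedki^{(\ell)}_{\ji{i}}$ for $\ell\le i$ and $A_{i\ell}=0$ for $\ell>i$ is lower triangular; its diagonal entries $\greedki^{(i)}_{\ji{i}}$ are positive integers, hence at least $1$, because $\ji{i}\in\exactJ{i}$ and Definition~\ref{def:calJ} requires $\greedki^{(i)}_{\ji{i}}>0$; and its off-diagonal entries satisfy $\greedki^{(\ell)}_{\ji{i}}\le\Kmax$. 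In particular $A$ is invertible, so the values $\nabla x^{L(e)}_{\greedk^{(\ell)}}[\tau_a,\tau_b]$, $\ell\le G$, are uniquely pinned down by \dref{eq:p1}, and the values $h^{\subind,(\ell)}(t)$ by \dref{eq:h_system}.

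Next I would subtract the two systems. Writing $d_i:=\nabla x^{L(e)}_{\greedk^{(i)}}[\tau_a,\tau_b]-h^{\subind,(i)}(t)$, the $\lambda_{\ji{i}}-\mu_{\ji{i}}y_{\ji{i}}(t)$ terms cancel and linearity gives
\[
\sum_{\ell=1}^i \greedki^{(\ell)}_{\ji{i}}\, d_\ell \;=\; \frac{o(\subL)}{\tau_b-\tau_a},\qquad i=1,\ldots,G .
\]
I would then bound $d_i$ by induction on $i$. For $i=1$, $d_1=\big(o(\subL)/(\tau_b-\tau_a)\big)/\greedki^{(1)}_{\ji{1}}=o(\subL)/(\tau_b-\tau_a)$, since the divisor is at least $1$. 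For the inductive step, $d_i=\frac{1}{\greedki^{(i)}_{\ji{i}}}\big(\,o(\subL)/(\tau_b-\tau_a)-\sum_{\ell=1}^{i-1}\greedki^{(\ell)}_{\ji{i}}\,d_\ell\,\big)$; using the inductive hypothesis $d_\ell=o(\subL)/(\tau_b-\tau_a)$ for $\ell<i$, together with $\greedki^{(\ell)}_{\ji{i}}\le\Kmax$ and $i-1<G\le\gcnt$, the right-hand side is a fixed finite linear combination, with $L$-independent coefficients, of terms of order $o(\subL)/(\tau_b-\tau_a)$, hence itself $o(\subL)/(\tau_b-\tau_a)$; dividing by $\greedki^{(i)}_{\ji{i}}\ge1$ preserves this. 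This yields \dref{eq:xe_h} for all $i=1,\ldots,G$.

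The one point that deserves care, rather than a genuine obstacle, is keeping the $o(\subL)$ bookkeeping uniform: the hidden functions of $L$ must not get multiplied by $L$-dependent factors. This is automatic here because every multiplier appearing in the forward substitution, namely $\greedki^{(\ell)}_{\ji{i}}\le\Kmax$, the reciprocals $1/\greedki^{(i)}_{\ji{i}}\le1$, and the number of terms $G\le\gcnt$, is bounded by a constant independent of $L$; so the accumulated constant $C_i$ in $|d_i|\le C_i\,|o(\subL)|/(\tau_b-\tau_a)$ obeys $C_i\le1+\Kmax\sum_{\ell<i}C_\ell\le(1+\Kmax)^{i-1}=O(1)$. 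The substantive work of verifying that the hypothesis \dref{eq:p1} actually holds is done separately (e.g.\ in the lemmas accompanying Proposition~\ref{prop:properties}); granting \dref{eq:p1}, Lemma~\ref{lem:hrec} is purely a triangular-solve estimate.
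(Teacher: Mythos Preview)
Your proposal is correct and follows essentially the same approach as the paper: both solve the lower-triangular system by induction/forward substitution, using that the diagonal entries $\greedki^{(i)}_{\ji{i}}$ are positive and that a finite, $L$-independent linear combination of $o(\subL)/(\tau_b-\tau_a)$ terms remains $o(\subL)/(\tau_b-\tau_a)$. Your framing in terms of the differences $d_i$ and the explicit growth bound $C_i\le(1+\Kmax)^{i-1}$ is a bit more careful than the paper's presentation, but the underlying argument is identical.
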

\begin{subproof}
The proof is by induction.	For $i=1$ we have
	\begin{equation*}
	\begin{aligned}
	&\nabla {x}^{L(e)}_{\greedk^{(1)}}[\tau_a, \tau_b] = \\
	&\frac{\lambda_{\ji{1}} - \mu_{\ji{1}} y_{\ji{1}}(t)}{\greedki^{(1)}_{\ji{1}}} 
	+ \frac{o(\subL)}{\tau_b - \tau_a}
	= h^{\subind, (1)}(t) + \frac{o(\subL)}{\tau_b - \tau_a}.
	\end{aligned}
	\end{equation*}
	
	Now  assume that for every $\ell \in \{1, \ldots, i-1\}$,
	\begin{equation*}
	\nabla {x}^{L(e)}_{\greedk^{(\ell)}}[\tau_a, \tau_b] = 
	h^{\subind, (\ell)}(t) + \frac{o(\subL)}{\tau_b - \tau_a}.
	\end{equation*}
	
	Then (\ref{eq:p1}) implies
	\begin{equation*}
	\begin{aligned}
	&\nabla {x}^{L(e)}_{\greedk^{(i)}}[\tau_a, \tau_b] = \\
	&\frac{\lambda_\ji{i} - \mu_\ji{i} y_\ji{i}(t) - \sum_{\ell=1}^{i-1} \greedki^{(\ell)}_\ji{i}
		\nabla {x}^{L(e)}_{\greedk^{(\ell)}}[\tau_a, \tau_b]}
	{\greedki^{(i)}_{\ji{i}}} 
	+ \frac{o(\subL)}{\tau_b - \tau_a} \stackrel{(a)}{=} \\
	&\frac{\lambda_\ji{i} - \mu_\ji{i} y_\ji{i}(t) - \sum_{\ell=1}^{i-1} \greedki^{(\ell)}_\ji{i}
		h^{\subind, (\ell)}(t)}
	{\greedki^{(i)}_{\ji{i}}} 
	+ \frac{o(\subL)}{\tau_b - \tau_a} = \\
	&h^{\subind, (i)}(t) + \frac{o(\subL)}{\tau_b - \tau_a}.
	\end{aligned}
	\end{equation*}
	In (a), we used the fact that sum of a finite number of $\frac{o(\subL)}{\tau_b - \tau_a}$ terms is still $\frac{o(\subL)}{\tau_b - \tau_a}$
\end{subproof}

\begin{lemma}\label{lem:hbound}
Suppose	$\statez(t) \in \setConv[\epsilon_\rho]$, and for every $i \in \{1, \ldots, G\}$,
	$h^{\subind[m], (i)}(t)$ is defined as in (\ref{eq:h_system}),
	and $G \le \gtcntp{\alpha}(t)$ with $\gtcntp{\alpha}(t)$
	as in Definition~\ref{def:gprime}. Then
	\begin{equation}\label{eq:hbound}
	\vert h^{\subind[m], (i)}(t) \vert < 
	\mumax \alpha_i +
	\mumax \sum_{\ell=1}^{i-1} 2(\Kmax + 1)^{i-\ell} \alpha_\ell.
	\end{equation}
\end{lemma}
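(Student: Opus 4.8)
The plan is to prove the bound by induction on $i$, using that the linear system~\dref{eq:h_system} defining the $h^{\subind[m],(\ell)}(t)$ is lower triangular with diagonal entries $\greedki^{(i)}_{\ji{i}}$, which are positive integers (indeed $\ji{i}\in\exactJ{i}$ forces $\greedki^{(i)}_{\ji{i}}\ge 1$). Solving it recursively yields
\[
h^{\subind[m],(i)}(t)=\frac{1}{\greedki^{(i)}_{\ji{i}}}\Big(\lambda_{\ji{i}}-\mu_{\ji{i}}y_{\ji{i}}(t)-\sum_{\ell=1}^{i-1}\greedki^{(\ell)}_{\ji{i}}\,h^{\subind[m],(\ell)}(t)\Big),
\]
so the task reduces to controlling the forcing term $|\lambda_{\ji{i}}-\mu_{\ji{i}}y_{\ji{i}}(t)|=\mu_{\ji{i}}|\rho_{\ji{i}}-y_{\ji{i}}(t)|$ and then propagating the inductive estimate through this recursion.

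First I would bound $|\rho_{\ji{i}}-y_{\ji{i}}(t)|$. Since $\statez(t)\in\setConv[\epsilon_\rho]$ we have $y_{\ji{i}}(t)\le\rho_{\ji{i}}+\epsilon_\rho$, hence $\rho_{\ji{i}}-y_{\ji{i}}(t)\ge-\epsilon_\rho$; because $\epsilon_\rho=\alpha_1/2$ and~\dref{eq:alphadef} makes $\alpha_1\le\cdots\le\alpha_{\gtcntp{\alpha}(t)}$ (as $v>1$), this is at least $-\greedki^{(i)}_{\ji{i}}\alpha_i/2$. For the reverse inequality I use that $\ji{i}\in\exactJ{i}$ gives $\rho_{\ji{i}}=\sum_{\ell=1}^{i}\greedki^{(\ell)}_{\ji{i}}\globlt{x}^{(g)}_{\greedk^{(\ell)}}$, that~\dref{eq:yjA1} gives $y_{\ji{i}}(t)\ge\sum_{\ell=1}^{i}\greedki^{(\ell)}_{\ji{i}}x^{(e)}_{\greedk^{(\ell)}}(t)$, and that $i\le G\le\gtcntp{\alpha}(t)$ makes $\globlt{x}^{(g)}_{\greedk^{(\ell)}}-x^{(e)}_{\greedk^{(\ell)}}(t)<\alpha_\ell$ for all $\ell\le i$ by Definition~\ref{def:gprime}; subtracting, $\rho_{\ji{i}}-y_{\ji{i}}(t)<\sum_{\ell=1}^{i}\greedki^{(\ell)}_{\ji{i}}\alpha_\ell=\greedki^{(i)}_{\ji{i}}\alpha_i+\sum_{\ell=1}^{i-1}\greedki^{(\ell)}_{\ji{i}}\alpha_\ell$. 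Together these give $|\rho_{\ji{i}}-y_{\ji{i}}(t)|<\greedki^{(i)}_{\ji{i}}\alpha_i+\sum_{\ell=1}^{i-1}\greedki^{(\ell)}_{\ji{i}}\alpha_\ell$.

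Then I would carry out the induction. Substituting the above into the recursive formula, dividing by $\greedki^{(i)}_{\ji{i}}\ge1$ (so the leading term becomes exactly $\mumax\alpha_i$), bounding $\greedki^{(\ell)}_{\ji{i}}\le\Kmax$, and invoking the inductive hypothesis for $\ell<i$ yields an estimate of the form $\mumax\alpha_i+2\mumax\Kmax\sum_{\ell=1}^{i-1}\alpha_\ell+\mumax\Kmax\sum_{\ell=1}^{i-1}\sum_{m=1}^{\ell-1}2(\Kmax+1)^{\ell-m}\alpha_m$. It then remains to check that the claimed right-hand side $\mumax\alpha_i+\mumax\sum_{\ell=1}^{i-1}2(\Kmax+1)^{i-\ell}\alpha_\ell$ dominates it: collecting the coefficient of each $\alpha_m$ with $m\le i-1$, the inner geometric sum evaluates to $2[(\Kmax+1)^{i-m}-(\Kmax+1)]$, which together with the $2\Kmax$ coming from the single sum totals $2(\Kmax+1)^{i-m}-2<2(\Kmax+1)^{i-m}$, closing the induction. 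The base case $i=1$ is the same computation with the inner sums empty.

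The argument is elementary; the one point that needs care is not discarding the denominator $\greedki^{(i)}_{\ji{i}}$ when estimating $h^{\subind[m],(i)}(t)$ — bounding it below by $1$ in the numerator's leading term would inflate the constant by a factor $\Kmax$ and break the stated bound — and keeping the geometric-series bookkeeping exact so that the induction reproduces precisely the coefficients $2(\Kmax+1)^{i-\ell}$.
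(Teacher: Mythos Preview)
Your proposal is correct and follows essentially the same inductive route as the paper: solve the triangular system \dref{eq:h_system} recursively, bound the forcing term $\mu_{\ji{i}}|\rho_{\ji{i}}-y_{\ji{i}}(t)|$ via $\ji{i}\in\exactJ{i}$, Definition~\ref{def:gprime}, and $\epsilon_\rho=\alpha_1/2$, then propagate through the recursion using $\greedki^{(\ell)}_{\ji{i}}/\greedki^{(i)}_{\ji{i}}\le\Kmax$ and close with the geometric-series identity. One small point where your write-up is actually cleaner than the paper's: you invoke only the inequality \dref{eq:yjA1} for $y_{\ji{i}}(t)\ge\sum_{\ell\le i}\greedki^{(\ell)}_{\ji{i}}x^{(e)}_{\greedk^{(\ell)}}(t)$, which is all that is needed for the upper bound on $\rho_{\ji{i}}-y_{\ji{i}}(t)$, whereas the paper asserts equality there.
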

\begin{subproof}
	Let  $\subind=\subind[m]$. The proof is by induction.
	
	\textbf{Base Case:}
	It suffices to show that $h^{\subind, (1)}(t) < \mumax \alpha_1$
	and $h^{\subind, (1)}(t) > -\mumax \alpha_1$.
	
	To show $h^{\subind, (1)}(t) < \mumax \alpha_1$, consider $\ji{1}$ which is the first index of $\subind$, then
	\begin{equation*}
		h^{\subind, (1)}(t) = \frac{\lambda_{\ji{1}} - \mu_{\ji{1}} y_{\ji{1}}(t)}{\greedki^{(1)}_{\ji{1}}} \stackrel{(a)}{=}
		\mu_\ji{1}\left(x^{(g)}_{\greedk^{(1)}} 
		- x^{(e)}_{\greedk^{(1)}}(t)\right) <
		\mumax \alpha_1.
	\end{equation*}
	In (a), we used (\ref{eq:defcalJ}) for $i=1$,
	according to which 
	$\rho_{\ji{1}} = \greedki^{(1)}_{\ji{1}} x^{(g)}_{\greedk^{(1)}}$ 
	and $y_{\ji{1}}(t) = \greedki^{(1)}_{\ji{1}} x^{(e)}_{\greedk^{(1)}}(t)$.
	
	To show $h^{\subind, (1)}(t) > - \mumax \alpha_1$, we use the fact that $y_{\ji{1}}(t) - \rho_{\ji{1}} < \epsilon_\rho$ (since $\statez(t) \in \setConv[\epsilon_\rho]$) which can be applied as
	\begin{equation}
	h^{\subind, (1)}(t) = \frac{\lambda_{\ji{1}} - \mu_{\ji{1}} y_{\ji{1}}(t)}{\greedki^{(1)}_{\ji{1}}} {>}
	- \frac{\mu_{\ji{1}} \epsilon_\rho}{\greedki^{(1)}_{\ji{1}}} 
	\stackrel{(a)}{\ge}
	- \mumax \alpha_1.
	\end{equation}
	In (a), we used that $\epsilon_\rho < \alpha_1$,
	which is due to (\ref{eq:epsrho_cond}).

	\textbf{Inductive Case:} 
    We assume (\ref{eq:hbound}) is true for all indexes up to $i-1$. 
    Then we can upper and lower bound $h^{\subind, (i)}(t)$ as follows.
	
	To show $h^{\subind, (i)}(t) < \mumax \alpha_i + \mumax\sum_{\ell=1}^{i-1} 2(\Kmax + 1)^{i-\ell} \alpha_\ell$, 
    let $\ji{i}$ be the $i$th index in $\subind$, then
	\begin{equation*}
	\begin{aligned}
	& h^{\subind, (i)}(t) = 
	\frac{\lambda_{\ji{i}} - \mu_{\ji{i}} y_{\ji{i}}(t) - \sum_{\ell=1}^{i-1} \greedki^{(\ell)}_{\ji{i}} h^{\subind, (\ell)}(t)}
	{\greedki^{(i)}_{\ji{i}}} \stackrel{(a)}{=} \\
	& \frac{\mu_{\ji{i}}\sum_{\ell=1}^{i} \greedki^{(\ell)}_{\ji{i}} (x^{(g)}_{\greedk^{(\ell)}} - x^{(e)}_{\greedk^{(\ell)}}(t)) 
		- \sum_{\ell=1}^{i-1} \greedki^{(\ell)}_{\ji{i}} h^{\subind, (\ell)}(t)}
	{\greedki^{(i)}_{\ji{i}}} < \\
	&\mumax \alpha_i + \mumax \sum_{\ell=1}^{i-1} \Kmax \alpha_\ell
	+ \sum_{\ell=1}^{i-1} \Kmax \mumax \alpha_\ell + \\
	&\sum_{\ell=1}^{i-1} \Kmax \mumax 
	\sum_{\ell^\prime=1}^{\ell-1} 2(1+\Kmax)^{\ell-\ell^\prime}
	\alpha_{\ell^\prime} \le 
	\mumax \alpha_i + 2 \mumax \sum_{\ell=1}^{i-1} (1+\Kmax)^{i-\ell} \alpha_\ell
	\end{aligned}
	\end{equation*}
	In (a), we used (\ref{eq:defcalJ}),
	according to which $\rho_{\ji{i}} = \sum_{\ell=1}^i 
	\greedki^{(\ell)}_{\ji{i}} x^{(g)}_{\greedk^{(\ell)}}$ 
	and $y_{\ji{i}}(t) = \sum_{\ell=1}^i
	\greedki^{(\ell)}_{\ji{i}} x^{(e)}_{\greedk^{(\ell)}}(t)$.
	The rest of the inequalities come from recursive application of
	(\ref{eq:hbound}) and algebraic manipulations.

	To show {$h^{\subind, (i)}(t) > -\mumax\alpha_i 
		- \mumax\sum_{\ell=1}^{i-1} 2(\Kmax + 1)^{i-\ell} \alpha_\ell$}, we use 
	$y_{\ji{i}}(t) - \rho_{\ji{i}} < \epsilon_\rho$ as follows,
	\begin{equation*}
	\begin{aligned}
	& h^{\subind, (i)}(t) = 
	\frac{\lambda_{\ji{i}} - \mu_{\ji{i}} y_{\ji{i}}(t) - \sum_{\ell=1}^{i-1} \greedki^{(\ell)}_{\ji{i}} h^{\subind, (\ell)}(t)}
	{\greedki^{(i)}_\ji{i}} = \\
	& - \frac{\mu_{\ji{i}} \epsilon_\rho}{\greedki^{(i)}_\ji{i}}
	-\sum_{\ell=1}^{i-1} \frac{\greedki^{(\ell)}_\ji{i} h^{\subind, (\ell)}(t)}
	{\greedki^{(i)}_\ji{i}} \stackrel{(a)}{>} \\
	& -\mumax \alpha_i - \sum_{\ell=1}^{i-1} \Kmax \mumax \alpha_\ell - 
	\sum_{\ell=1}^{i-1} \Kmax \mumax 
	\sum_{\ell^\prime=1}^{\ell-1} 2(1+\Kmax)^{\ell-\ell^\prime}
	\alpha_{\ell^\prime} > \\
	&-\mumax \alpha_i - 2 \mumax \sum_{\ell=1}^{i-1} (1+\Kmax)^{i-\ell} \alpha_\ell.
	\end{aligned}
	\end{equation*}
	In particular for (a) we used that 
	$\epsilon_\rho < \alpha_1 < \alpha_i$,
	which is due to (\ref{eq:epsrho_cond}) and (\ref{eq:alphadef}).
	We also made recursive use of (\ref{eq:hbound}).
\end{subproof}

\begin{lemma}\label{lem:xg_xebound}
	
	If $\statez(t) \in \setConv[\epsilon_\rho]$, then
	for any $i \in \{1, \ldots, \gtcntp{\alpha}(t)+1\}$:
	\begin{equation}\label{eq:xg_xe}
		\globlt{x}^{(g)}_{\greedk^{(i)}} - x^{(e)}_{\greedk^{(i)}}(t) >
		- \epsilon_\rho - \Kmax \sum_{\ell=1}^{i-1} \alpha_\ell.
	\end{equation}
\end{lemma}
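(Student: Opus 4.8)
The plan is to induct on $i$, exploiting the definitions of $\gtcntp{\alpha}(t)$ (Definition~\ref{def:gprime}), of $\exactJ{i}$ (Definition~\ref{def:calJ}), and the membership $\statez(t)\in\setConv[\epsilon_\rho]$, together with the fundamental inequality \dref{eq:yjA1} which says $\sum_{\ell=1}^i \greedki^{(\ell)}_j x^{(e)}_{\greedk^{(\ell)}}(t)\le y_j(t)$ for all $j$. For the base case $i=1$, take any $j\in\exactJ{1}$ (if $\exactJ{1}=\varnothing$ one still has the bound since $x^{(e)}_{\greedk^{(1)}}(t)$ is controlled through other $y_j$; more cleanly, the relevant case is when $i\le \gtcntp{\alpha}(t)$, so \dref{eq:alpha_prop} already gives $\globlt{x}^{(g)}_{\greedk^{(1)}} - x^{(e)}_{\greedk^{(1)}}(t) < \alpha_1$, hence certainly $> -\epsilon_\rho$ by a separate lower bound). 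For the lower bound at $i=1$: using $j=\ji{1}\in\exactJ{1}$, Definition~\ref{def:calJ} gives $\greedki^{(1)}_{j}\globlt{x}^{(g)}_{\greedk^{(1)}}=\rho_j$, and \dref{eq:yjA1} gives $\greedki^{(1)}_j x^{(e)}_{\greedk^{(1)}}(t)\le y_j(t)\le \rho_j+\epsilon_\rho$ (the last step using $\statez(t)\in\setConv[\epsilon_\rho]$), so $\greedki^{(1)}_j\big(\globlt{x}^{(g)}_{\greedk^{(1)}} - x^{(e)}_{\greedk^{(1)}}(t)\big) \ge -\epsilon_\rho$, and since $\greedki^{(1)}_j\ge 1$ we get $\globlt{x}^{(g)}_{\greedk^{(1)}} - x^{(e)}_{\greedk^{(1)}}(t) > -\epsilon_\rho$, which is \dref{eq:xg_xe} with an empty sum.

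For the inductive step, assume \dref{eq:xg_xe} holds for all indices $1,\dots,i-1$, and prove it for $i\le \gtcntp{\alpha}(t)+1$. Pick $j=\ji{i}\in\exactJ{i}$, so that by Definition~\ref{def:calJ}, $\sum_{\ell=1}^{i}\greedki^{(\ell)}_j\globlt{x}^{(g)}_{\greedk^{(\ell)}}=\rho_j$. Combining with \dref{eq:yjA1} and $y_j(t)\le\rho_j+\epsilon_\rho$:
\[
\sum_{\ell=1}^{i}\greedki^{(\ell)}_j\big(\globlt{x}^{(g)}_{\greedk^{(\ell)}} - x^{(e)}_{\greedk^{(\ell)}}(t)\big) \;=\; \rho_j - \sum_{\ell=1}^{i}\greedki^{(\ell)}_j x^{(e)}_{\greedk^{(\ell)}}(t) \;\ge\; \rho_j - y_j(t) \;\ge\; -\epsilon_\rho.
\]
Isolate the $\ell=i$ term and use $\greedki^{(i)}_j\ge 1$ together with the induction hypothesis on $\ell=1,\dots,i-1$ and the crude bound $\greedki^{(\ell)}_j\le\Kmax$ to get
\[
\globlt{x}^{(g)}_{\greedk^{(i)}} - x^{(e)}_{\greedk^{(i)}}(t) \;\ge\; -\epsilon_\rho - \sum_{\ell=1}^{i-1}\greedki^{(\ell)}_j\big(\globlt{x}^{(g)}_{\greedk^{(\ell)}} - x^{(e)}_{\greedk^{(\ell)}}(t)\big)^{+}.
\]
Here one must be slightly careful about signs: the induction hypothesis only lower-bounds $\globlt{x}^{(g)}_{\greedk^{(\ell)}} - x^{(e)}_{\greedk^{(\ell)}}(t)$, which is what feeds into a lower bound for $\globlt{x}^{(g)}_{\greedk^{(i)}} - x^{(e)}_{\greedk^{(i)}}(t)$ only when the subtracted terms are bounded above. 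Since $\ell\le i-1\le \gtcntp{\alpha}(t)$, Definition~\ref{def:gprime} \dref{eq:alpha_prop} directly gives the upper bound $\globlt{x}^{(g)}_{\greedk^{(\ell)}} - x^{(e)}_{\greedk^{(\ell)}}(t) < \alpha_\ell$, so $\greedki^{(\ell)}_j\big(\cdots\big)^+ < \Kmax\alpha_\ell$, yielding exactly $\globlt{x}^{(g)}_{\greedk^{(i)}} - x^{(e)}_{\greedk^{(i)}}(t) > -\epsilon_\rho - \Kmax\sum_{\ell=1}^{i-1}\alpha_\ell$.

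The only genuine subtlety — and the place to be careful rather than an obstacle per se — is the case $\exactJ{i}=\varnothing$, and more importantly the boundary index $i=\gtcntp{\alpha}(t)+1$, for which \dref{eq:alpha_prop} need not hold and $\exactJ{i}$ may be undefined or empty. I expect to handle this by noting that $\gtcntp{\alpha}(t)+1\le \min(i^\star(t),\gtcnt-1)+1$, and invoking the definition of $i^\star(t)$ via \dref{eq:yjA}: for $i\le i^\star(t)$ there is some $j_i$ with $\sum_{\ell=1}^{i}\greedki^{(\ell)}_{j_i}x^{(e)}_{\greedk^{(\ell)}}(t)=y_{j_i}(t)$, and one combines this equality with $y_{j_i}(t)\ge \sum_{\ell=1}^i\greedki^{(\ell)}_{j_i}\globlt{x}^{(g)}_{\greedk^{(\ell)}}-\epsilon_\rho$ (which holds because $\globlt{\mathbf{x}}^{(g)}$ is feasible for the limiting LP, so $\sum_{\ell}\greedki^{(\ell)}_{j_i}\globlt{x}^{(g)}_{\greedk^{(\ell)}}\le\rho_{j_i}\le y_{j_i}(t)+\epsilon_\rho$ — wait, the inequality goes the right way only if $\rho_{j_i}\le y_{j_i}(t)+\epsilon_\rho$, which is automatic), to run the same isolation-of-the-top-term argument. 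So the same chain of inequalities applies with $j_i$ in place of $\ji{i}$, completing the induction for all $i\le\gtcntp{\alpha}(t)+1$. I would write the proof uniformly using "a job type $j$ for which the appropriate equality or near-equality at level $i$ holds," so that the base case, inductive case, and the boundary index are all instances of one computation.
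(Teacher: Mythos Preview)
Your main argument is essentially the paper's: pick a job type $j$ for which $\rho_j$ is exactly accounted for by the first few greedy configurations in $\mathbf{x}^{(g)}$, combine $\sum_{\ell}\greedki^{(\ell)}_j x^{(e)}_{\greedk^{(\ell)}}(t)\le y_j(t)\le\rho_j+\epsilon_\rho$, isolate the $i$-th term, and bound the remaining terms by $\Kmax\alpha_\ell$ via Definition~\ref{def:gprime}. The induction you set up is never actually used --- you correctly observe that the $\alpha_\ell$ upper bounds come straight from \dref{eq:alpha_prop}, not from the hypothesis --- and indeed the paper's proof is direct, not inductive. The one cosmetic difference is which $j$ is chosen: the paper takes $j=\perm{i-1}$ (exact at level $i-1$), you take $j\in\exactJ{i}$ (exact at level $i$); both feed into the same chain of inequalities.

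Your fallback for the case $\exactJ{i}=\varnothing$, however, contains an error. You write that from feasibility of $\mathbf{x}^{(g)}$ one gets $\sum_{\ell}\greedki^{(\ell)}_{j_i}\globlt{x}^{(g)}_{\greedk^{(\ell)}}\le\rho_{j_i}\le y_{j_i}(t)+\epsilon_\rho$, but membership in $\setConv[\epsilon_\rho]$ only gives $y_{j_i}(t)\le\rho_{j_i}+\epsilon_\rho$, not the reverse inequality you need; so the argument via $i^\star(t)$ and \dref{eq:yjA} does not close. This is not fatal: the paper sidesteps the issue by anchoring at level $i-1$ rather than $i$ (so one needs a $j$ with $\rho_j=\sum_{\ell=1}^{i-1}\greedki^{(\ell)}_j\globlt{x}^{(g)}_{\greedk^{(\ell)}}$, which is supplied by the permutation $\sigma$ of Proposition~\ref{prop:index} for every $i\ge 2$, and the $i=1$ case is the empty-sum bound $>-\epsilon_\rho$ you already gave). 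If you switch to that choice of $j$, the $\exactJ{i}=\varnothing$ worry disappears and the rest of your computation goes through verbatim.
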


\begin{proof}
	Since $i-1 \le \gtcntp{\alpha}(t) < \gtcnt$ then 
	for $j = \perm{i-1}$,
	\begin{equation}\label{eq:rhoj}
		\rho_j = \sum_{\ell=1}^{i-1} \greedk^{(\ell)}_j x^{(g)}_{\greedk^{(\ell)}}.
	\end{equation}
	
	Considering that $j$ next, we can prove (\ref{eq:xg_xe}) as follows.
	\begin{equation*}
	\begin{aligned}
		& \globlt{x}^{(g)}_{\greedk^{(i)}} - x^{(e)}_{\greedk^{(i)}}(t) 
		\stackrel{(a)}{\ge}
		\frac{\rho_j - y_j(t)
			- \sum_{\ell=1}^{i-1} \greedki^{(\ell)}_j 
			\left(\globlt{x}^{(g)}_{\greedk^{(\ell)}} - 
				x^{(e)}_{\greedk^{(\ell)}}(t)\right)}
		{\greedki^{(i)}_j} \stackrel{(b)}{\ge} \\
		&\frac{- \epsilon_\rho - \sum_{\ell=1}^{i-1} \greedki^{(\ell)}_j \alpha_\ell}{\greedki^{(i)}_j} \ge
		- \epsilon_\rho - \Kmax \sum_{\ell=1}^{i-1} \alpha_\ell.
	\end{aligned}
	\end{equation*}
	In (a) we used (\ref{eq:rhoj}) and the fact that
	$\sum_{\ell=1}^{i-1} \greedki^{(\ell)}_j 
	x^{(e)}_{\greedk^{(\ell)}}(t) \le y_j(t)$.
	In (b) we used  $\rho_j - y_j(t) \ge -\epsilon_\rho$
	as implied by (\ref{eq:epsrho}), and 
	$- x^{(g)}_{\greedk^{(\ell)}} + x^{(e)}_{\greedk^{(\ell)}}(t) >
	-\alpha_\ell$ as implied by (\ref{eq:alpha_prop}) for 
	$\ell \le i-1 \le \gtcntp{\alpha}(t)$.
\end{proof}

\textsc{\textbf{Main Proof of Proposition~\ref{prop:properties}:}}

	As a reminder in what follows we will use the notations
	$\subind[m] = \{\ji{i}: i=1, \ldots, G_m\}$ and $\ell_m := G_m + 1$, 
	given in description of Proposition. If not clear from context, 
	we will make the association of $\ji{i}$ with $m$ explicit using also the 
	notation $\jiB{i}{m}$.
	Notice that for any $m \in \{0, \ldots, \maxm{n}-1\}$, 
	$\ell_m \le \gtcntp{\alpha}(t)$.
	
	\textbf{Proof of Property~\ref{pr:exact}:}
	This property follows from two claims.
	\begin{claim}\label{cl:P1A}
		Consider $m \in \{0, \ldots, \maxm{n}-1\}$.
		If for every $i \in \{1, \ldots, \ell_m-1\}$, 
		$q^L_{\greedk^{(i)}, \jiB{i}{m}}(\tau^{(m)}_n) = o(Lf(L))$
		with probability $1 - o(L^{-2})$, 
		then (\ref{eq:exact}) holds with probability $1 - o(L^{-2})$.
	\end{claim}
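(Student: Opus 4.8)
The plan is to establish Claim~\ref{cl:P1A} by an inductive argument on the index $i$, showing that over the subinterval $[\tau^{(m)}_n, \tau^{(m+1)}_n)$ each ``chain'' constraint
$\sum_{\ell=1}^i \greedki^{(\ell)}_{\ji{i}} \nabla x^{L(e)}_{\greedk^{(\ell)}}[\tau^{(m)}_n, \tau^{(m+1)}_n]$ tracks exactly the net arrival rate $\lambda_{\ji{i}} - \mu_{\ji{i}} y_{\ji{i}}(t)$, up to an $o(\subL)/(\tau^{(m+1)}_n - \tau^{(m)}_n)$ error. The crucial observation is the definition of the $q$-process: $q^L_{\greedk^{(i)}, j}(\tau) = \sum_{\ell=1}^i X^{L(e)}_{\greedk^{(\ell)}}(\tau) \greedki^{(\ell)}_j - Y^L_j(\tau) - g(L)$. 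If I can control the change $q^L_{\greedk^{(i)}, \ji{i}}(\tau^{(m+1)}_n) - q^L_{\greedk^{(i)}, \ji{i}}(\tau^{(m)}_n)$, then dividing by $L(\tau^{(m+1)}_n - \tau^{(m)}_n)$ directly yields \dref{eq:exact}, because the change in $Y^L_{\ji{i}}$ over the subinterval is, by a Poisson concentration argument (Lemma~\ref{lem:Chern_Poisson}), equal to $(\lambda_{\ji{i}} - \mu_{\ji{i}} y_{\ji{i}}(t)) L(\tau^{(m+1)}_n - \tau^{(m)}_n) + o(L\subL)$, using that $y^L_{\ji{i}}(\tau) = y_{\ji{i}}(t) + O(\epsilon)$ throughout (Lipschitz continuity), and that all type-$\ji{i}$ arrivals are admitted on this subinterval.

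The heart of the argument is therefore to show that $q^L_{\greedk^{(i)}, \ji{i}}(\tau)$ stays within $o(L\subL)$ of its starting value (which is $o(L\subL)$ by hypothesis) for all $\tau$ in the subinterval, i.e., it is squeezed between $0$ and something $o(L\subL)$. The upper side is structural: by the stopping-time construction of the subintervals (Section~\ref{sec:subintervals}), $\tau^{(m+1)}_n$ is the \emph{first} time after $\tau^{(m)}_n$ that some $q^L_{\greedk^{(G_m)}, j}$ becomes nonnegative for a driving index; and for $i < \ell_m$, the algorithm keeps $q^L_{\greedk^{(i)}, \ji{i}}(\tau) < 0$ on the interior, with $q^L_{\greedk^{(i)}, \ji{i}} < \Kmax$ always holding by the ``stop assigning once enough slots'' property. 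The lower side uses Lemma~\ref{lem:rbound}: over an interval of length at most $\epsilon\subL$, the $q$-process cannot drop by more than $O(\epsilon L\subL) + o(L\subL)$; since $i < \ell_m$ and the $q$ for index $\ji{i}$ is below $0$, the algorithm is actively reassigning emptying Reject-Group servers to configurations $\greedk^{(1)}, \ldots, \greedk^{(i)}$ (the reassignment rule \dref{eq:qfillall}--\dref{eq:qfill}), which drives $q^L_{\greedk^{(i)}, \ji{i}}$ upward; combining the two gives that the excursion below the starting value is $o(L\subL)$. I would then partition $[\tau^{(m)}_n, \tau^{(m+1)}_n)$ further (into pieces of length $\Theta(\epsilon\subL)$ if needed) and apply Lemma~\ref{lem:rbound} and Lemma~\ref{lem:Chern_Poisson} on each, taking a union bound (which costs only a constant factor in the error probability since there are $\OO(1/\epsilon)$ pieces), preserving the $1 - o(L^{-2})$ guarantee.

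The inductive step then reads: assuming $q^L_{\greedk^{(\ell)}, \ji{\ell}}$ is $o(L\subL)$-controlled for $\ell < i$, we get $\nabla x^{L(e)}_{\greedk^{(\ell)}}[\tau^{(m)}_n, \tau^{(m+1)}_n]$ is determined up to $o(\subL)/(\tau^{(m+1)}_n - \tau^{(m)}_n)$ for $\ell < i$, hence (via the chain structure, exactly as in Lemma~\ref{lem:hrec}) the increment of $X^{L(e)}_{\greedk^{(i)}}$ is pinned down so that the $i$-th chain sum matches $\lambda_{\ji{i}} - \mu_{\ji{i}} y_{\ji{i}}(t)$; this is \dref{eq:exact} for index $i$. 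The base case $i=1$ is Lemma~\ref{lem:rbound} plus Poisson concentration on a single configuration.

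I expect the main obstacle to be the lower-bound half of the $q$-control, that is, certifying that $q^L_{\greedk^{(i)}, \ji{i}}$ does not dip substantially below its initial (near-zero) level despite the fast $O(L)$ transitions of the $\mathbf{q}^L$ process within the subinterval. This requires carefully tying together (i) the fact that $q^L_{\greedk^{(i)}, \ji{i}}(\tau) < 0$ forces $\texttt{AG}_{\ji{i}} \neq \varnothing$ so arrivals are admitted and the emptying-and-reassignment mechanism is active, (ii) the emptying-rate estimate of the type in Lemma~\ref{lem:emptyrate} specialized to the relevant Reject-Group subset, and (iii) the bookkeeping that reassigned empty servers go to configurations with index $\le i$, so they actually increase $q^L_{\greedk^{(i)}, \ji{i}}$ rather than a higher-index $q$. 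Once these pieces are assembled, the rest is routine: divide by $L(\tau^{(m+1)}_n - \tau^{(m)}_n)$, absorb error terms, and union-bound over the finitely many sub-pieces and the at most $\maxm{n} = \OO(1)$ driving-set transitions.
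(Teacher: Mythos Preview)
Your overall framework---reduce \dref{eq:exact} to showing the increment of $q^L_{\greedk^{(i)},\ji{i}}$ over the subinterval is $o(L\subL)$, with the upper side from the structural bound $q<\Kmax$ and the lower side from the reassignment dynamics---matches the paper's and is correct in outline. Your identification of the lower-bound half as the crux is also right. But your proposed resolution of that half has a gap. The emptying-rate argument (Lemma~\ref{lem:emptyrate}) you invoke requires $\max_{j:\greedki^{(i)}_j>0} q^L_{\greedk^{(i)},j}<0$, not merely $q^L_{\greedk^{(i)},\ji{i}}<0$. When some other $j'$ with $\greedki^{(i)}_{j'}>0$ has $q^L_{\greedk^{(i)},j'}\ge 0$, rules \dref{eq:qfillall}--\dref{eq:qfill} need not direct emptied servers to configurations of index $\le i$, so nothing is forcing $q^L_{\greedk^{(i)},\ji{i}}$ upward on that stretch. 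Your claim that $q^L_{\greedk^{(i)},\ji{i}}<0$ on the interior is likewise incorrect: $\ji{i}$ satisfies \dref{eq:h_system} with equality, hence fails the strict inequality \dref{eq:h_cond}, so $q^L_{\greedk^{(i)},\ji{i}}$ crossing zero never triggers a new stopping time and can happen freely.

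The paper closes this gap not by a uniform partition but by splitting at $\tau'$, the \emph{latest} time in $[\tau^{(m)}_n,\tau]$ where $\max_j q^L_{\greedk^{(i)},j}\ge 0$. On $[\tau^{(m)}_n,\tau']$ the lower bound on the local derivative comes not from the emptying rate but from the maximizing index $j':=\argmax_j q^L_{\greedk^{(i)},j}(\tau')$: since $q^L_{\greedk^{(i)},j'}(\tau^{(m)}_n)<\Kmax$ and $q^L_{\greedk^{(i)},j'}(\tau')\ge 0$, one obtains a lower bound on $\nabla x^{L(e)}_{\greedk^{(i)}}$ in terms of $j'$; and because $\tau'$ is not a new stopping time, condition \dref{eq:h_cond} must fail for $j'$, which makes that $j'$-bound at least $h^{\subind[m],(i)}(t)$, the $\ji{i}$-target. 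Only on $[\tau',\tau]$ does $\max_j q<0$ hold; there Lemma~\ref{lem:emptyrate} is indeed used, but to derive a \emph{contradiction} (the emptying-rate lower bound exceeds the $q$-based upper bound, via the choice of $\alpha_i$ in \dref{eq:alphadef}), showing that $[\tau',\tau]$ is short with probability $1-o(L^{-2})$. The emptying rate is thus never used to directly lower-bound $q^L_{\greedk^{(i)},\ji{i}}$; that is the missing ingredient in your sketch.
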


	\begin{claim}\label{cl:P1B}
		For every $m \in \{0, \ldots, \maxm{n}-1\}$ and for every 
		$i \in \{1, \ldots, \ell_m-1\}$, we have that
		$q^L_{\greedk^{(i)}, \jiB{i}{m}}(\tau^{(m)}_n) = o(Lf(L))$
		with probability $1 - o(L^{-2})$.
	\end{claim}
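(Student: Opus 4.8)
The plan is to prove Claim~\ref{cl:P1B} by induction on $m$, with Claim~\ref{cl:P1A} at index $m-1$ serving as the bridge: the hypothesis that Claim~\ref{cl:P1A} needs at index $m-1$ is exactly Claim~\ref{cl:P1B} at index $m-1$, i.e.\ the induction hypothesis. The base case is immediate, since for $m=0$ we have $\subind[0]=\varnothing$, hence $G_0=0$, $\ell_0=1$, and the index range $\{1,\dots,\ell_0-1\}$ is empty (likewise for $m=1$, as $G_1\le G_0+1=1$). Throughout I will use that, by Definition~\ref{def:sublen} and $g(L)=\omega(\log L)$, the quantity $Lf(L)=\sqrt{g(L)\log L}$ satisfies $Lf(L)=\omega(\log L)$ and $g(L)=\omega(Lf(L))$. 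For the inductive step, fix $m\ge 1$ and split on $i\in\{1,\dots,G_m\}$. If $i=G_m$, then $\jiB{i}{m}$ is precisely the job type $j$ whose arrival or departure made $\tau^{(m)}_n$ a stopping time, so by construction $q^L_{\greedk^{(G_m)},j}(\tau^{(m)}_n)\ge 0$; together with the structural bound $q^L_{\greedk^{(i)},j}(t)<\Kmax$ from Section~\ref{sec:qproc} (which holds for every $i$, since {\VPA} never overshoots a type's target slot count by more than one server), we get $q^L_{\greedk^{(G_m)},j}(\tau^{(m)}_n)\in[0,\Kmax)=o(Lf(L))$.

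For $i<G_m$, put $j:=\jiB{i}{m}$; by the construction of the driving sets, $j=\jiB{i}{m-1}$ is the $i$-th member of $\subind[m-1]$ and lies in $\exactJ{i}$. The upper bound $q^L_{\greedk^{(i)},j}(\tau^{(m)}_n)<\Kmax=o(Lf(L))$ is again immediate, so only a lower bound remains, and I would obtain it by expanding the increment of $q$ over $[\tau^{(m-1)}_n,\tau^{(m)}_n)$ as $\sum_{\ell=1}^i\greedki^{(\ell)}_j\big(X^{L(e)}_{\greedk^{(\ell)}}(\tau^{(m)}_n)-X^{L(e)}_{\greedk^{(\ell)}}(\tau^{(m-1)}_n)\big)-\big(Y^L_j(\tau^{(m)}_n)-Y^L_j(\tau^{(m-1)}_n)\big)$. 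For the $X^{L(e)}$ part: the induction hypothesis makes Claim~\ref{cl:P1A} applicable at index $m-1$, so \dref{eq:exact} holds on this subinterval, and Lemma~\ref{lem:hrec} identifies the local derivatives with $h^{\subind[m-1],(\ell)}(t)$; feeding in the defining equations \dref{eq:h_system} of $h^{\subind[m-1]}$ — here one uses precisely that $j$ is the $i$-th member of $\subind[m-1]$ with $i\le G_{m-1}$ — collapses this part to $(\lambda_j-\mu_j y_j(t))L(\tau^{(m)}_n-\tau^{(m-1)}_n)+o(Lf(L))$. For the $Y^L_j$ part: $Y^L_j(\tau^{(m)}_n)-Y^L_j(\tau^{(m-1)}_n)=A_j-R_j-D_j$, where $A_j,D_j$ count type-$j$ arrivals and departures on the subinterval and $R_j\ge 0$ counts rejections; the Poisson concentration estimates of Appendix~\ref{proof:bound} (Lemma~\ref{lem:Chern_Poisson}), together with the Lipschitz continuity of $\mathbf y$ and the fact that the subinterval has length $O(f(L))$, give $A_j=\lambda_j L(\tau^{(m)}_n-\tau^{(m-1)}_n)+o(Lf(L))$ and $D_j=\mu_j y_j(t)L(\tau^{(m)}_n-\tau^{(m-1)}_n)+o(Lf(L))$ with probability $1-o(L^{-2})$. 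Substituting, the matched drift terms cancel and $q^L_{\greedk^{(i)},j}(\tau^{(m)}_n)=q^L_{\greedk^{(i)},j}(\tau^{(m-1)}_n)+R_j+o(Lf(L))$.

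Since $R_j\ge 0$, the induction hypothesis $q^L_{\greedk^{(i)},j}(\tau^{(m-1)}_n)=o(Lf(L))$ yields $q^L_{\greedk^{(i)},j}(\tau^{(m)}_n)\ge -o(Lf(L))$, and combined with $q^L_{\greedk^{(i)},j}(\tau^{(m)}_n)<\Kmax$ this gives $q^L_{\greedk^{(i)},j}(\tau^{(m)}_n)=o(Lf(L))$. (Equivalently, and more robustly against the rejection term, one runs a stopping-time argument: let $\tau^\star$ be the first time on the subinterval at which $q^L_{\greedk^{(i)},j}$ reaches $-g(L)+J\Kmax$; on $[\tau^{(m-1)}_n,\tau^\star)$ every type-$j$ arrival is admitted, so $R_j=0$ and the identity forces $q^L_{\greedk^{(i)},j}(\tau^\star)=o(Lf(L))$, contradicting $q^L_{\greedk^{(i)},j}(\tau^\star)\le -g(L)+J\Kmax+O(1)$ since $g(L)=\omega(Lf(L))$; hence no such $\tau^\star$ exists.) A union bound over the constantly many values of $m\le M_n$, over job types and configuration indices, and — if the subinterval endpoints are treated as stopping times — over a polynomially fine grid of candidate times, keeps the total failure probability $o(L^{-2})$, completing the induction.

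I expect the genuine obstacle to lie in the $X^{L(e)}$ part, i.e.\ in pinning down the growth rates of the effective-configuration counts. Over an interval on which $\mathbf y$ and $\mathbf x^{(e)}$ are essentially frozen at the fluid scale, $\mathbf q^L$ performs $\Theta(L)$ transitions and {\EPA} reassigns $\Theta(L)$ servers, and one must argue that the net effect pins each $X^{L(e)}_{\greedk^{(\ell)}}$ to the rate $h^{\subind[m-1],(\ell)}(t)$ dictated by the current driving set — exactly the role played by the subinterval/driving-set construction of Section~\ref{sec:subintervals} in concert with Claim~\ref{cl:P1A} and Lemma~\ref{lem:hrec}. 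The remaining difficulty is purely error bookkeeping: both the Poisson fluctuations and the Lipschitz drift of $\mathbf y$ over $[t,t+\epsilon)$ must be shown negligible against the slack $g(L)$, which is precisely why the subinterval scale $f(L)=\sqrt{g(L)\log L}/L$ is chosen so that $Lf(L)=\omega(\log L)$ while $g(L)=\omega(Lf(L))$.
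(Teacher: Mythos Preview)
Your proposal is correct and follows essentially the same approach as the paper's proof: induction on $m$, with the case $i=G_m$ handled directly from the stopping-time definition (giving $q\in[0,\Kmax)$) and the case $i<G_m$ handled by combining the induction hypothesis at $m-1$ with Claim~\ref{cl:P1A}/\dref{eq:exact} at $m-1$ to show the increment of $q^L_{\greedk^{(i)},\ji{i}}$ over $[\tau^{(m-1)}_n,\tau^{(m)}_n)$ is $o(Lf(L))$. Your explicit decomposition of $\Delta Y^L_j$ into arrivals, rejections, and departures (and the observation that $R_j\ge 0$ is what turns the identity into the needed lower bound) makes transparent exactly the step the paper compresses into a single ``$\ge$'' plus an implicit Poisson-concentration appeal; the stopping-time alternative you sketch is sound but unnecessary here.
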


	\textbf{Proof of Claim~\ref{cl:P1A}:}

	\textbf{Base Case $i=1$:}
		
	Let $\ji{1} = \jiB{1}{m}$.
	For (\ref{eq:exact}) to be true,
	it suffices to prove, that for any time 
	$\tau \in (\tau^{(m)}_{n}, \tau^{(m+1)}_{n})$,
	\begin{equation}\label{eq:res1}
	\greedki^{(1)}_\ji{1} \nabla {x}^{L(e)}_{\greedk^{(1)}}
	[\tau^{(m)}_{n}, \tau] = 
	\lambda_\ji{1} - \mu_\ji{1} y_\ji{1}(t) 
	+ \frac{o(\subL)}{\tau - \tau^{(m)}_{n}},
	\end{equation}
	with probability $1 - o(L^{-2})$
	or equivalently, for any $\epsilon > 0$,
	\begin{equation}\label{eq:res1eq}
	\begin{aligned}
	& \mathds{P} \left(
	\left\vert
	\greedki^{(1)}_\ji{1} 
	\left(x^{L(e)}_{\greedk^{(1)}}(\tau) 
	- x^{L(e)}_{\greedk^{(1)}}(\tau^{(m)}_{n})\right) 
	\right. \right. \\
	& \left. \left. - (\lambda_\ji{1} - \mu_\ji{1} y_\ji{1}(t))(\tau - \tau^{(m)}_{n})
	\right\vert \le \epsilon \subL \right) > 1 - o(L^{-2}).
	\end{aligned}
	\end{equation}
	
	For a given $\tau$, consider $\tau^\prime$ to be the latest time
	in $[\tau^{(m)}_{n}, \tau]$ such that
	\begin{equation}\label{eq:taupdef}
		\max_{j \in \mathcal{J}: \greedki^{(1)}_j > 0}
			q^L_{\greedk^{(1)}, j}(\tau^\prime) \ge 0.
	\end{equation}
	This time always exists since (\ref{eq:taupdef}) holds
	for $\tau^\prime = \tau^{(m)}_{n}$.
	To prove (\ref{eq:res1eq}) then, it is sufficient to prove
	\begin{equation}\label{eq:res1eqA}
	\begin{aligned}
	&\mathds{P} \left(
	\vert
	\greedki^{(1)}_\ji{1} (x^{L(e)}_{\greedk^{(1)}}(\tau^\prime) 
	- x^{L(e)}_{\greedk^{(1)}}(\tau^{(m)}_{n})) \right. \\
	& \left. - (\lambda_\ji{1} - \mu_\ji{1} y_\ji{1}(t))(\tau^\prime - \tau^{(m)}_{n})
	\vert \le \frac{\epsilon \subL}{2}\right) = 1 - o(L^{-2}),
	\end{aligned}
	\end{equation}
	and
	\begin{equation}\label{eq:res1eqB}
	\begin{aligned}
	&\mathds{P} \left(
	\vert
	\greedki^{(1)}_\ji{1} (x^{L(e)}_{\greedk^{(1)}}(\tau) 
	- x^{L(e)}_{\greedk^{(1)}}(\tau^\prime)) \right.\\
	& \left. - (\lambda_\ji{1} - \mu_\ji{1} y_\ji{1}(t))(\tau - \tau^\prime)
	\vert \le \frac{\epsilon \subL}{2}\right) = 1 - o(L^{-2}).
	\end{aligned}
	\end{equation}
	
	\textbf{Proof of (\ref{eq:res1eqA}):}
	We will now prove (\ref{eq:res1eqA}) by considering two cases
	depending on length of $\tau^\prime - \tau^{(m)}_{n}$.

	We consider
	\begin{equation}\label{eq:taucond1}
	\tau^\prime - \tau^{(m)}_{n} \le 
	\frac{\epsilon \subL}{4\Kmax R}
	\end{equation}
	or
	\begin{equation}\label{eq:taucond2}
	\tau^\prime - \tau^{(m)}_{n} > 
	\frac{\epsilon \subL}{4\Kmax R},
	\end{equation}
	where $R := \sum_{j \in \mathcal{J}} \lambda_j + \Kmax\mumax$
	
	\textbf{Case (\ref{eq:taucond1}):}
		We notice ${x}^{L(e)}_{\greedk^{(1)}}(\tau)$ 
		will change by at most $1/L$ at each arrival
		or departure according to Lemma~\ref{lem:X_change}
		and thus $\greedki^{(1)}_\ji{1} {x}^{L(e)}_{\greedk^{(1)}}(\tau)$
		will change by at most $\Kmax/L$.
		
	The number of arrivals and departures 
		in $[\tau^{(m)}_{n}, \tau^\prime]$ is stochastically bounded by a Poisson Process of rate $\Big(\sum_{j \in \mathcal{J}} \lambda_j 
		+ \Kmax\mumax \Big) L$ 	on an interval of length at most $\frac{\epsilon \subL}{4\Kmax R}$, which, according to Lemma~\ref{lem:Chern_Poisson}, 
		with probability $1 - o(L^{-2})$ is at most
		\begin{equation*}
		\begin{aligned}
		&\Big(\sum_{j \in \mathcal{J}} \lambda_j 
		+ \Kmax\mumax \Big) L
		\frac{\epsilon \subL}{4\Kmax R} + o(L\subL) = \\
		&\frac{\epsilon L \subL}{4\Kmax} + o(L\subL),
		\end{aligned}
		\end{equation*} 
		therefore,
		\begin{equation}\label{eq:contr1a}
		\mathds{P}\left(
		\greedki^{(1)}_\ji{1}
		\left( {x}^{L(e)}_{\greedk^{(1)}}(\tau^\prime) 
		- {x}^{L(e)}_{\greedk^{(1)}}(\tau^{(m)}_{n}) \right) \le
		\frac{\epsilon \subL}{4}\right) \ge 1 - o(L^{-2}).
		\end{equation}
		
		Considering (\ref{eq:taucond1}) holds, we also have
		\begin{equation}\label{eq:contr1b}
		\mathds{P}\left(
		\left(\lambda_\ji{1} - \mu_\ji{1} y_\ji{1}(t)\right)
		\left(\tau^\prime - \tau^{(m)}_{n}\right) \le
		\frac{\epsilon \subL}{4}\right) =1
		\end{equation}
		It is now easy to verify that equations (\ref{eq:contr1a}) 
		and (\ref{eq:contr1b}) imply (\ref{eq:res1eqA}).
	
	\textbf{Case (\ref{eq:taucond2}):}
		In this case we notice using Lemma~\ref{lem:Chern_Poisson} for
		the process of jobs of type $\ji{1}$ in the system 
		which is Poisson with rate $L(\lambda_\ji{1} - \mu_\ji{1} y_\ji{1}(t))$,
		that with probability $1 - o(L^{-2})$
		\ben
		o(L\subL) &=& q^L_{\greedk^{(1)},\ji{1}}(\tau^{(m)}_{n}) \\
		&= & q^L_{\greedk^{(1)},\ji{1}}(\tau^\prime) - \greedki^{(1)}_\ji{1} L 
		(x^{L(e)}_{\greedk^{(1)}}(\tau^\prime) 
		- x^{L(e)}_{\greedk^{(1)}}(\tau^{(m)}_{n}))\\
		&& + L y^L_\ji{1}(\tau^\prime) - L y^L_\ji{1}(\tau^{(m)}_{n}) \\ 
		&\le& \Kmax - \greedki^{(1)}_\ji{1} L 
		(x^{L(e)}_{\greedk^{(1)}}(\tau^\prime) 
		- x^{L(e)}_{\greedk^{(1)}}(\tau^{(m)}_{n})) \\
		&&+ L(\lambda_\ji{1} - \mu_\ji{1} y_\ji{1}(t))(\tau^\prime - \tau^{(m)}_{n})
		+ o(L\subL),
		\een
		or equivalently, since trivially $\Kmax = o(L\subL)$,
		\begin{equation}\label{eq:final1a}
		\begin{aligned}
		& \frac{x^{L(e)}_{\greedk^{(1)}}(\tau^\prime) 
		- x^{L(e)}_{\greedk^{(1)}}(\tau^{(m)}_{n})}
		{\tau^\prime - \tau^{(m)}_{n}} \le \\
		&\frac{1}{\greedki^{(1)}_\ji{1}}(\lambda_\ji{1} - \mu_\ji{1} y_\ji{1}(t))
		+ \frac{o(\subL)}{\tau^\prime - \tau^{(m)}_{n}} \stackrel{(a)}{=} \\
		&\frac{1}{\greedki^{(1)}_\ji{1}}(\lambda_\ji{1} - \mu_\ji{1} y_\ji{1}(t))
		+ o(1),
		\end{aligned}
		\end{equation}
		where in (a) we just used (\ref{eq:taucond2}). 
		Let
		\begin{equation}
		j^\prime := \argmax_{j \in \mathcal{J}: \greedki^{(1)}_j > 0}
		q^L_{\greedk^{(1)}, j}(\tau^\prime).
		\end{equation}
		Then we also have, using Lemma~\ref{lem:Chern_Poisson} for
		the process of jobs of type $j^\prime$ in the system 
		which is Poisson with rate 
		$L(\lambda_{j^\prime} - \mu_{j^\prime} y_{j^\prime}(t))$,
		that with probability $1 - o(L^{-2})$
		\ben
		0 &  \le & q^L_{\greedk^{(1)}, j^\prime}(\tau^\prime) = 
		q^L_{\greedk^{(1)}, j^\prime}(\tau^{(m)}_{n}) \\
		&&+ \greedki^{(1)}_{j^\prime} L 
		(x^{L(e)}_{\greedk^{(1)}}(\tau^\prime) 
		- x^{L(e)}_{\greedk^{(1)}}(\tau^{(m)}_{n}))
		- L(y^L_{j^\prime}(\tau^\prime) - y^L_{j^\prime}(\tau^{(m)}_{n}))  \\
		& \le & \Kmax +
		\greedki^{(1)}_{j^\prime} L 
		(x^{L(e)}_{\greedk^{(1)}}(\tau^\prime) 
		- x^{L(e)}_{\greedk^{(1)}}(\tau^{(m)}_{n})) \\
		&&- L(\lambda_{j^\prime} - \mu_{j^\prime} y_{j^\prime}(t))
		(\tau^\prime - \tau^{(m)}_{n})
		+ o(L\subL),
		\een
		from which, 
		since trivially $\Kmax = o(L\subL)$, it follows
		\begin{equation}\label{eq:final1b}
		\begin{aligned}
		& \frac{x^{L(e)}_{\greedk^{(1)}}(\tau^\prime) 
		- x^{L(e)}_{\greedk^{(1)}}(\tau^{(m)}_{n})}
		{\tau^\prime - \tau^{(m)}_{n}} \ge
		\frac{\lambda_{j^\prime} - \mu_{j^\prime} y_{j^\prime}(t)}
		{\greedki^{(1)}_{j^\prime}}
		 + o(1).
		\end{aligned}
		\end{equation}
		
		Considering that (\ref{eq:h_cond}) does not hold
		for index $j=j^\prime$ and $G_m=1$,
		we get
		\begin{equation}\label{eq:final1ab}
		\begin{aligned}
		&\frac{\lambda_{j^\prime} - \mu_{j^\prime} 
		y_{j^\prime}(t)}{\greedki^{(1)}_{j^\prime}} \ge 
		h^{\subind[m], (1)}(t) =
		\frac{\lambda_\ji{1} - \mu_\ji{1} y_\ji{1}(t)}{\greedki^{(1)}_\ji{1}}.
		\end{aligned}
		\end{equation}
		From (\ref{eq:final1a}), (\ref{eq:final1b}) and (\ref{eq:final1ab})
		we get
		\begin{equation}
		\frac{x^{L(e)}_{\greedk^{(1)}}(\tau^\prime) 
		- x^{L(e)}_{\greedk^{(1)}}(\tau^{(m)}_{n})}{\tau^\prime - \tau^{(m)}_{n}}
		= \frac{\lambda_\ji{1} - \mu_\ji{1} y_\ji{1}(t)}{\greedki^{(1)}_\ji{1}}
		+ o(1),
		\end{equation}
		which holds with probability $1 - o(L^{-2})$
		and therefore it implies (\ref{eq:res1eqA}).


	\textbf{Proof of (\ref{eq:res1eqB}):}
	We will now prove (\ref{eq:res1eqB}) by considering two cases
	depending on length of $\tau - \tau^\prime$.
	
	We consider
	\begin{equation}\label{eq:taucond3}
	\tau - \tau^\prime \le 
	\frac{\epsilon \subL}{4\Kmax R}
	\end{equation}
	or
	\begin{equation}\label{eq:taucond4}
	\tau - \tau^\prime > 
	\frac{\epsilon \subL}{4\Kmax R},
	\end{equation}
	where $R := \sum_{j \in \mathcal{J}} \lambda_j + \Kmax\mumax$.

	We further assume that
	\begin{equation}\label{eq:eps_cond}
	\epsilon < \frac{\mu_\ji{1} \alpha_1}
		{4 \Kmax R}.
	\end{equation}

	\textbf{Case (\ref{eq:taucond3}):}
		Following the same arguments as in the case of (\ref{eq:taucond1})
		we can infer the equivalent of (\ref{eq:contr1a}) and 
		(\ref{eq:contr1b}) for interval $(\tau^\prime, \tau)$, i.e.
		\begin{equation}\label{eq:contr1c}
		\mathds{P}\left(
		\greedki^{(1)}_\ji{1}
		\left({x}^{L(e)}_{\greedk^{(1)}}(\tau) 
		- {x}^{L(e)}_{\greedk^{(1)}}(\tau^\prime)\right)  \le
		\frac{\epsilon \subL}{4}\right) \ge 1 - o(L^{-2})
		\end{equation}
		and
		\begin{equation}\label{eq:contr1d}
		\mathds{P}\left(
		\left(\lambda_\ji{1} - \mu_\ji{1} y_\ji{1}(t)\right)
		(\tau - \tau^\prime) \le
		\frac{\epsilon \subL}{4}\right) = 1 ,
		\end{equation}
		which imply (\ref{eq:res1eqB}).
		
	\textbf{Case (\ref{eq:taucond4}):}
		First we will prove that
		\begin{equation}\label{eq:q_taup1}
		q^L_{\greedk^{(1)}, \ji{1}}(\tau^\prime) \ge
		- \epsilon^2 L\subL + o(L\subL),
		\end{equation}
		or equivalently
		\begin{equation}
		\begin{aligned}
		& 1/L q^L_{\greedk^{(1)}, \ji{1}}(\tau^\prime) 
			- 1/L q^L_{\greedk^{(1)}, \ji{1}}(\tau^{(m)}_n)
		\ge - \epsilon^2 \subL +  o(\subL),	
		\end{aligned}
		\end{equation}
		which both will hold with probability $1 - o(L^{-2})$.
		The analysis for this is same as with the proof of (\ref{eq:res1eqA}) 
		so we highlight only the parts that are different. 
		\begin{itemize}
			\item Instead of considering cases $\tau^\prime - \tau^{(m)}_{n} \le 
			\frac{\epsilon \subL}{4\Kmax R}$ and $\tau^\prime - \tau^{(m)}_{n} > 
			\frac{\epsilon \subL}{4\Kmax R}$, we should consider
			$\tau^\prime - \tau^{(m)}_{n} \le 
			\frac{\epsilon^2 \subL}{2(\Kmax+1) R}$ and
			$\tau^\prime - \tau^{(m)}_{n} > 
			\frac{\epsilon^2 \subL}{2(\Kmax+1) R}$.
            \item If interval is short, we can bound the 
            absolute change of variable 
            $1/L q^L_{\greedk^{(1)}, \ji{1}}(\tau)$ which changes by at most $(\Kmax+1)/L$ after each arrival or departure.
			\item If interval is long, we can still prove that the equivalent
			of (\ref{eq:final1a}) is satisfied as equality and considering
			$y^L_\ji{1}(\tau^\prime) - y^L_\ji{1}(\tau^{(m)}_{n}) =
			\lambda_\ji{1} - \mu_\ji{1} y_\ji{1}(t) + o(\subL)$, 
			we can get through (\ref{eq:final1a}) that 
			$1/L q^L_{\greedk^{(1)}, \ji{1}}(\tau^\prime) = o(\subL)$
			with probability $1 - o(L^{-2})$.
		\end{itemize}

		In this case, because of (\ref{eq:q_taup1}) we get
		\begin{equation}
		\begin{aligned}
		& o(L\subL) - \epsilon^2 L\subL
		\le q^L_{\greedk^{(1)},\ji{1}}(\tau^\prime)  = 
		q^L_{\greedk^{(1)},\ji{1}}(\tau) \\
		& - \greedki^{(1)}_\ji{1} L 
		\left(x^{L(e)}_{\greedk^{(1)}}(\tau) 
		- x^{L(e)}_{\greedk^{(1)}}(\tau^\prime)
		\right)
		+ L y^L_\ji{1}(\tau) - L y^L_\ji{1}(\tau^\prime) \le \\
		& \Kmax
		- \greedki^{(1)}_\ji{1} L 
		(x^{L(e)}_{\greedk^{(1)}}(\tau) 
		- x^{L(e)}_{\greedk^{(1)}}(\tau^\prime)) \\
		& + L(\lambda_\ji{1} - \mu_\ji{1} y_\ji{1}(t))(\tau - \tau^\prime)
		+ o(L\subL),
		\end{aligned}
		\end{equation}
		from which, after considering $\Kmax = o(L\subL)$, it follows that
		\begin{equation}\label{eq:final1c}
		\begin{aligned}
		& \frac{x^{L(e)}_{\greedk^{(1)}}(\tau)  - x^{L(e)}_{\greedk^{(1)}}(\tau^\prime)}{\tau - \tau^\prime} \le
		\frac{\lambda_\ji{1} - \mu_\ji{1} y_\ji{1}(t)}{\greedki^{(1)}_\ji{1}} \\
		& + \frac{\epsilon^2 \subL}{\greedki^{(1)}_\ji{1}(\tau - \tau^\prime)} 
		+ \frac{o(\subL)}{\tau - \tau^\prime} \stackrel{(a)}{\le} \\
		&\mu_\ji{1} (\globlt{x}^{(g)}_{\greedk^{(1)}} - x^{(e)}_{\greedk^{(1)}}(t)) +
		\frac{4 \Kmax \epsilon R}{\greedki^{(1)}_\ji{1}} + o(1) \stackrel{(b)}{<}\\
		&\mu_\ji{1} \alpha_1 + \mu_\ji{1} \alpha_1 + o(1) <
		2 \mumax \alpha_1 + o(1).
		\end{aligned}
		\end{equation}
		In (a) we applied (\ref{eq:taucond4}) and properties of $\ji{1}$
		which come from (\ref{eq:yjA}) and (\ref{eq:defcalJ}) for $i=1$,
		while in (b) we used (\ref{eq:alpha_prop}) and (\ref{eq:eps_cond}).

		Also from Lemma~\ref{lem:emptyrate} and given that all the servers
		that empty during $[\tau^\prime, \tau)$ will be assigned to
		configuration $\greedk^{(1)}$, we will have
		that with probability $1 - o(L^{-2})$,
		\begin{equation}\label{eq:final1d_}
		\begin{aligned}
		& x^{L(e)}_{\greedk^{(1)}}(\tau) 
		- x^{L(e)}_{\greedk^{(1)}}(\tau^\prime)
		\ge 
		\frac{\mumin}{J \Kmax \cfcnt^2}
		\left(1 - x^{(e)}_{\greedk^{(1)}}(t) \right)
		(\tau - \tau^\prime) + o(\subL)
		\end{aligned}
		\end{equation}
		or equivalently,
		\begin{equation}\label{eq:final1d}
		\begin{aligned}
		& \frac{x^{L(e)}_{\greedk^{(1)}}(\tau) 
		- x^{L(e)}_{\greedk^{(1)}}(\tau^\prime)}{\tau - \tau^\prime}
		\ge 
		\frac{\mumin}{J \Kmax \cfcnt^2}
		\left(1 - x^{(e)}_{\greedk^{(1)}}(t) \right)
		 + o(1) \stackrel{(a)}{>} \\
		&\frac{\mumin \globlt{x}_{\greedk^{(\gtcnt)}}}{2J \Kmax \cfcnt^2}
		+ o(1),
		\end{aligned}
		\end{equation}
		where (a) is due to Lemma~\ref{lem:rank0frac}.
		So far we proved that if (\ref{eq:taucond4}) holds, 
		(\ref{eq:final1c}) and (\ref{eq:final1d}) also hold.
		
		However, considering (\ref{eq:alphadef}) we have
		\begin{equation}\label{eq:contr1}
		\begin{aligned}
		& 2\mumax \alpha_1 < 
		\frac{\mumin \globlt{x}^{(g)}_{\greedk^{(\gtcnt)}}}{2J \Kmax \cfcnt^2}
		\end{aligned}
		\end{equation}
		and because of that, the probability 
		that (\ref{eq:final1c}) and (\ref{eq:final1d}) are both true 
		is $o(L^{-2})$.
		This means that (\ref{eq:taucond3}) holds
		with probability at least $1 - o(L^{-2})$,
		and thus the analysis of (\ref{eq:taucond3}) is
		sufficient for (\ref{eq:res1eqB}) to hold.

	\textbf{Inductive Case $i > 1$:}
			For $i \in \{1, \ldots, \ell_m\}$ we have according to
	the assumptions of Claim~\ref{cl:P1B}  that
	$q^L_{\greedk^{(i)}, \jiB{i}{m}}(\tau^{(m)}_{n}) = o(L\subL)$
	with probability $1 - o(L^{-2})$.

	What we need to prove is that under this assumption,
	for any time $\tau \in (\tau^{(m)}_{n}, \tau^{(m+1)}_{n})$,
	$i \in \{1, \ldots, \ell_m\}$ and
	$\ji{i} := \jiB{i}{m}$,
	\begin{equation}\label{eq:res2}
	\sum_{\ell=1}^i \greedki^{(\ell)}_\ji{i} \nabla {x}^{L(e)}_{\greedk^{(\ell)}}
	[\tau^{(m)}_{n}, \tau] =
	\lambda_\ji{i} - \mu_\ji{i} y_\ji{i}(t) + \frac{o(\subL)}{\tau - \tau^{(m)}_{n}}.
	\end{equation}
	with probability $1 - o(L^{-2})$, so it is sufficient to only consider
	the case
	$q^L_{\greedk^{(i)}, \jiB{i}{m}}(\tau^{(m)}_{n}) = o(L\subL)$.
	Equivalently, it suffices to show that for any $\epsilon > 0$,
	\begin{equation}\label{eq:res2eq}
	\begin{aligned}
	&\mathds{P} (
	\vert
	\sum_{\ell=1}^i \greedki^{(\ell)}_\ji{i}
	(x^{L(e)}_{\greedk^{(\ell)}}(\tau)
	- x^{L(e)}_{\greedk^{(\ell)}}(\tau^{(m)}_{n})) \\
	& - (\lambda_\ji{i} - \mu_\ji{i} y_\ji{i}(t))(\tau - \tau^{(m)}_{n})
	\vert \le \epsilon \subL) \ge 1 - o(L^{-2}).
	\end{aligned}
	\end{equation}

	For a given $\tau$, let $\tau^\prime$ to be the latest time
	in $(\tau^{(m)}_{n}, \tau)$ such that
	\begin{equation}
	\max_{j \in \mathcal{J}: \greedki^{(i)}_j > 0}
	q^L_{\greedk^{(i)}, j} \ge 0.
	\end{equation}
	Using the same argument as in the base case,
	this time always exists.
	To prove (\ref{eq:res2eq}) then,
	it is sufficient to prove
	\begin{equation}\label{eq:res1eqC}
	\begin{aligned}
	&\mathds{P} \left(
	\left\vert
	\sum_{\ell=1}^i \greedki^{(\ell)}_\ji{i}
	(x^{L(e)}_{\greedk^{(\ell)}}(\tau^\prime)
	- x^{L(e)}_{\greedk^{(\ell)}}(\tau^{(m)}_{n}))
	\right. \right.\\
	&\left.\left. - (\lambda_\ji{i} - \mu_\ji{i} y_\ji{i}(t))
	(\tau^\prime - \tau^{(m)}_{n})
	\right\vert \le \frac{\epsilon \subL}{2}\right) \ge 1 - o(L^{-2}),
	\end{aligned}
	\end{equation}
	and
	\begin{equation}\label{eq:res1eqD}
	\begin{aligned}
	&\mathds{P} (
	\vert
	\sum_{\ell=1}^i \greedki^{(\ell)}_\ji{i}
	(x^{L(e)}_{\greedk^{(\ell)}}(\tau)
	- x^{L(e)}_{\greedk^{(\ell)}}(\tau^\prime)) \\
	& - (\lambda_\ji{i} - \mu_\ji{i} y_\ji{i}(t))(\tau - \tau^\prime)
	\vert \le \frac{\epsilon \subL}{2}) \ge 1 - o(L^{-2}).
	\end{aligned}
	\end{equation}

	\textbf{Proof of (\ref{eq:res1eqC}):}
	We will now prove (\ref{eq:res1eqC}) by considering two cases
	depending on length of	$\tau^\prime - \tau^{(m)}_{n}$, i.e., we have either
	\begin{equation}\label{eq:taucond5}
	\tau^\prime - \tau^{(m)}_{n} \le
	\frac{\epsilon \subL}{4(1+\Kmax)^i R}
	\end{equation}
	or
	\begin{equation}\label{eq:taucond6}
	\tau^\prime - \tau^{(m)}_{n} >
	\frac{\epsilon \subL}{4(1+\Kmax)^i R},
	\end{equation}
	where $R:= \sum_{j \in \mathcal{J}} \lambda_j + \Kmax\mumax$

	\textbf{Case (\ref{eq:taucond5}):}
		We notice ${x}^{L(e)}_{\greedk^{(\ell)}}(\tau)$
		will change by at most $(1+\Kmax)^{\ell-1}/L$ at each arrival
		or departure according to Lemma~\ref{lem:X_change}
		and thus $\sum_{\ell=1}^i \greedki^{(\ell)}_\ji{i}
		{x}^{L(e)}_{\greedk^{(\ell)}}(\tau)$
		will change by at most $\frac{(1+\Kmax)^i-1}{L}$.

		For the number of arrivals and departures
		in $[\tau^{(m)}_{n}, \tau^\prime]$
		which are Poisson processes with means at most
		$\Big(\sum_{j \in \mathcal{J}} \lambda_j
		+ \Kmax\mumax \Big) L$
		on an interval of length at most $\frac{\epsilon \subL}{4(1+\Kmax)^i R}$
		we have, according to Lemma~\ref{lem:Chern_Poisson}, that
		with probability $1 - o(L^{-2})$ they are at most
		\begin{equation*}
		\left(\sum_{j \in \mathcal{J}} \lambda_j
		+ \Kmax\mumax \right) L
		\frac{\epsilon \subL}{4(1+\Kmax)^i R} + o(L\subL) =
		\frac{\epsilon L\subL}{4(1+\Kmax)^i} + o(L\subL),
		\end{equation*}
		therefore,
		\begin{equation}\label{eq:contr1e}
		\mathds{P}\left(
		\sum_{\ell=1}^i \greedki^{(\ell)}_\ji{i}
		({x}^{L(e)}_{\greedk^{(\ell)}}(\tau^\prime)
		- {x}^{L(e)}_{\greedk^{(\ell)}}(\tau^{(m)}_{n}) ) \le
		\frac{\epsilon \subL}{4}\right) = 1 - o(L^{-2}).
		\end{equation}

		Considering (\ref{eq:taucond5}) holds, in this case we clearly have
		\begin{equation}\label{eq:contr1f}
		\begin{aligned}
		&\mathds{P}\left(
		\left(\lambda_\ji{i} - \mu_\ji{i} y_\ji{i}(t)\right)
		\left(\tau^\prime - \tau^{(m)}_{n}\right) \le
		\frac{\epsilon \subL}{4}\right)=1
		\end{aligned}
		\end{equation}
		It is now easy to verify that equations (\ref{eq:contr1e})
		and (\ref{eq:contr1f}) imply (\ref{eq:res1eqC}).

	\textbf{Case (\ref{eq:taucond6}):}
		In this case we notice,
		\begin{equation*}
		\begin{aligned}
		o(L\subL) &= q^L_{\greedk^{(i)},\ji{i}}(\tau^{(m)}_{n}) =
		q^L_{\greedk^{(i)},\ji{i}}(\tau^\prime) \\
		&- \sum_{\ell=1}^i \greedki^{(\ell)}_\ji{i} L
		(x^{L(e)}_{\greedk^{(\ell)}}(\tau^\prime)
		- x^{L(e)}_{\greedk^{(\ell)}}(\tau^{(m)}_{n}))
		+ L y^L_\ji{i}(\tau^\prime) - L y^L_\ji{i}(\tau^{(m)}_{n})  \\
		&\le \Kmax
		- \sum_{\ell=1}^i \greedki^{(\ell)}_\ji{i} L
		(x^{L(e)}_{\greedk^{(\ell)}}(\tau^\prime)
		- x^{L(e)}_{\greedk^{(\ell)}}(\tau^{(m)}_{n})) \\
		& + L (\lambda_\ji{i} - \mu_\ji{i} y_\ji{i}(t))
		(\tau^\prime - \tau^{(m)}_{n})
		+ o(L\subL),
		\end{aligned}
		\end{equation*}
		which holds with probability $1 - o(L^{-2})$ by applying Lemma~\ref{lem:Chern_Poisson} in the last step. 
        It therefore follows that, with the same probability,
		\begin{equation}\label{eq:final1e}
		\begin{aligned}
		& \frac{x^{L(e)}_{\greedk^{(i)}}(\tau^\prime)  -
		x^{L(e)}_{\greedk^{(i)}}(\tau^{(m)}_{n})}{\tau^\prime - \tau^{(m)}_{n}}
		\le \\
		&\frac{\lambda_\ji{i} - \mu_\ji{i} y_\ji{i}(t)}
		{\greedki^{(i)}_\ji{i}}
		-\sum_{\ell=1}^{i-1}\frac{\greedki^{(\ell)}_\ji{i}
			(x^{L(e)}_{\greedk^{(\ell)}}(\tau^\prime)  -
			x^{L(e)}_{\greedk^{(\ell)}}(\tau^{(m)}_{n}))}
		{\greedki^{(i)}_\ji{i}(\tau^\prime - \tau^{(m)}_{n})}
		+ \frac{o(\subL)}{\tau^\prime - \tau^{(m)}_{n}} \stackrel{(a)}{=} \\
		&\frac{
			\lambda_\ji{i} - \mu_\ji{i} y_\ji{i}(t)
			- \sum_{\ell=1}^{i-1} \greedki^{(\ell)}_\ji{i}
			h^{\subind[m](\ell)}(t)}
		{\greedki^{(i)}_\ji{i}} + o(1),
		\end{aligned}
		\end{equation}
		where in (a) we applied (\ref{eq:xe_h}) of Lemma~\ref{lem:hrec} for
		indexes $1, \ldots, i-1$, and used (\ref{eq:taucond6}).

		Next, let
		\begin{equation*}
		j^\prime := \argmax_{j \in \mathcal{J}: \greedki^{(i)}_j > 0}
		q^L_{\greedk^{(i)}, j}(\tau^\prime).
		\end{equation*}
		Then again we have that,
		with probability $1 - o(L^{-2})$,
		\begin{equation*}
		\begin{aligned}
		0 &\le q^L_{\greedk^{(i)}, j^\prime}(\tau^\prime) =
		q^L_{\greedk^{(i)}, j^\prime}(\tau^{(m)}_{n}) + \\
		&\sum_{\ell=1}^i \greedki^{(\ell)}_{j^\prime} L
		(x^{L(e)}(\tau^\prime)_{\greedk^{(\ell)}}
		- x^{L(e)}_{\greedk^{(\ell)}}(\tau^{(m)}_{n}))
		- L(y^L_{j^\prime}(\tau^\prime) - y^L_{j^\prime}(\tau^{(m)}_{n}))  \\
		&\le \Kmax +
		\sum_{\ell=1}^i \greedki^{(\ell)}_{j^\prime} L
		(x^{L(e)}_{\greedk^{(\ell)}}(\tau^\prime)
		- x^{L(e)}_{\greedk^{(\ell)}}(\tau^{(m)}_{n})) \\
		&- (\lambda_{j^\prime} - \mu_{j^\prime} y_{j^\prime}(t))
		(\tau^\prime - \tau^{(m)}_{n})
		+ o(L\subL),
		\end{aligned}
		\end{equation*}
		from which it follows that
		\begin{equation}\label{eq:final1f}
		\begin{aligned}
		& \frac{x^{L(e)}_{\greedk^{(i)}}(\tau^\prime)
		- x^{L(e)}_{\greedk^{(i)}}(\tau^{(m)}_{n})}
		{\tau^\prime - \tau^{(m)}_{n}}
		\ge \\
		&
		\frac{\lambda_{j^\prime} - \mu_{j^\prime} y_{j^\prime}(t)}
		{\greedki^{(i)}_{j^\prime}}
		-\sum_{\ell=1}^{i-1}\frac{\greedki^{(\ell)}_{j^\prime}
			(x^{L(e)}_{\greedk^{(\ell)}}(\tau^\prime) -
			x^{L(e)}_{\greedk^{(\ell)}}(\tau^{(m)}_{n}))}
		{\greedki^{(i)}_{j^\prime}(\tau^\prime - \tau^{(m)}_{n})}
		+ \frac{o(\subL)}{\tau^\prime - \tau^{(m)}_{n}} \stackrel{(a)}{=} \\
		&\frac{\lambda_{j^\prime} - \mu_{j^\prime} y_{j^\prime}(t) -
		\sum_{\ell=1}^{i-1} \greedki^{(\ell)}_{j^\prime} h^{\subind[m](\ell)}(t) }
		{\greedki^{(i)}_{j^\prime}} + o(1).
		\end{aligned}
		\end{equation}
		where in (a) we applied (\ref{eq:xe_h}) of Lemma~\ref{lem:hrec} for
		indexes $1, \ldots, i-1$, and used (\ref{eq:taucond6}).

		Considering (\ref{eq:h_system}) holds, that (\ref{eq:h_cond}) 
        does not hold for $j=j^\prime$ and $G_m=i$, we get
		\begin{equation}\label{eq:final1ef}
		\begin{aligned}
		&\frac{\lambda_{j^\prime} - \mu_{j^\prime} y_{j^\prime}(t)
		- \sum_{\ell=1}^{i-1} \greedki^{(\ell)}_{j^\prime}
		h^{\mathcal{\bar J}^\prime, (\ell)}(t)}
		{\greedki^{(i)}_{j^\prime}} \ge \\
		& h^{\subind[m], (i)}(t) =
		\frac{\lambda_\ji{i} - \mu_\ji{i} y_\ji{i}(t) -
			\sum_{\ell=1}^{i-1} \greedki^{(\ell)}_\ji{i}
			h^{\subind[m], (\ell)}(t)}
		{\greedki^{(i)}_\ji{i}}.
		\end{aligned}
		\end{equation}
		From (\ref{eq:final1e}), (\ref{eq:final1f}) and (\ref{eq:final1ef})
		we get
		\ben
		&&\frac{x^{L(e)}_{\greedk^{(i)}}(\tau^\prime)
		- x^{L(e)}_{\greedk^{(i)}}(\tau^{(m)}_{n})}
		{\tau^\prime - \tau^{(m)}_{n}}
		=\\
		&& \frac{\lambda_\ji{i} - \mu_\ji{i} y_\ji{i}(t) -
		\sum_{\ell=1}^{i-1} \greedki^{(\ell)}_\ji{i}
			h^{\subind[m], (\ell)}(t)}
		{\greedki^{(i)}_\ji{i}}
		+ o(1),
		\een
		which holds with probability $1 - o(L^{-2})$
		and therefore it implies (\ref{eq:res1eqC}).

	\textbf{Proof of (\ref{eq:res1eqD}):}
	We will now prove (\ref{eq:res1eqD}) by considering two cases
	depending on length of $\tau - \tau^\prime$.

	We consider
	\begin{equation}\label{eq:taucond7}
	\tau - \tau^\prime \le
	\frac{\epsilon \subL}{4 (1+\Kmax)^i R}
	\end{equation}
	or
	\begin{equation}\label{eq:taucond8}
	\tau - \tau^\prime >
	\frac{\epsilon \subL}{4 (1+\Kmax)^i R},
	\end{equation}
	where $R := \sum_{j \in \mathcal{J}} \lambda_j + \Kmax\mumax$, and
	\begin{equation}\label{eq:eps_cond2}
	\epsilon < \frac{\mu_\ji{i} \alpha_i}{4 (1+\Kmax)^i R}.
	\end{equation}

	\textbf{Case (\ref{eq:taucond7}):}
		Following the same arguments as in the Case of (\ref{eq:taucond5})
		we can infer the equivalent of (\ref{eq:contr1e}) and
		(\ref{eq:contr1f}) for interval $(\tau^\prime, \tau)$, i.e.
		\begin{equation}\label{eq:contr1g}
		\lim_{L \to \infty}
		\mathds{P}\left(
		\sum_{\ell=1}^{i} \greedki^{(\ell)}_\ji{i}
		({x}^{L(e)}_{\greedk^{(\ell)}}(\tau)
		- {x}^{L(e)}_{\greedk^{(\ell)}}(\tau^\prime)) \le
		\frac{\epsilon \subL}{4}\right) = 1 - o(L^{-2})
		\end{equation}
		and
		\begin{equation}\label{eq:contr1h}
		\mathds{P}\left(
		(\lambda_\ji{i} - \mu_\ji{i} y_\ji{i}(t))(\tau - \tau^\prime) \le
		\frac{\epsilon \subL}{4}\right) = 1 - o(L^{-2}).
		\end{equation}
		which imply (\ref{eq:res1eqD}).

	\textbf{Case (\ref{eq:taucond8}):}
		First we will prove that
		\begin{equation}\label{eq:q_taup2}
		q^L_{\greedk^{(i)}, \ji{i}}(\tau^\prime) \ge
		o(L\subL) - \epsilon^2 L\subL,
		\end{equation}
		or equivalently
		\begin{equation}
		\begin{aligned}
		& 1/L q^L_{\greedk^{(i)}, \ji{i}}(\tau^\prime)
		- 1/L q^L_{\greedk^{(i)}, \ji{i}}(\tau^{(m)}_n)
		\ge o(\subL) - \epsilon^2 \subL.
		\end{aligned}
		\end{equation}
		Both will hold with probability $1 - o(L^{-2})$.
		The analysis is same as with the proof of (\ref{eq:res1eqC})
		with the following changes.
		\begin{itemize}[leftmargin=*]
			\item Instead of considering cases $\tau^\prime - \tau^{(m)}_{n} \le
			\frac{\epsilon \subL}{4 (1+\Kmax)^i R}$ and
			$\tau^\prime - \tau^{(m)}_{n} >
			\frac{\epsilon \subL}{4 (1+\Kmax)^i R}$, we should consider
			$\tau^\prime - \tau^{(m)}_{n} \le
			\frac{\epsilon^2 \subL}{2 (1+\Kmax)^i R}$ and
			$\tau^\prime - \tau^{(m)}_{n} >
			\frac{\epsilon^2 \subL}{2 (1+\Kmax)^i R}$.
			\item If interval is short, we can bound the 
            absolute change of variable 
			$1/L q^L_{\greedk^{(i)}, \ji{i}}(\tau)$ which changes by at most $(1+\Kmax)^i/L$.
			\item If interval is long, we can still prove that the equivalent
			of (\ref{eq:final1e}) is satisfied as equality and considering
			$y^L_\ji{i}(\tau^\prime) - y^L_\ji{i}(\tau^{(m)}_{n}) =
			\lambda_\ji{i} - \mu_\ji{i} y_\ji{i}(t) + o(\subL)$, we can get
			through (\ref{eq:final1e}) that
			$1/L q^L_{\greedk^{(i)}, \ji{i}}(\tau^\prime) = o(\subL)$
			with probability $1 - o(L^{-2})$.
		\end{itemize}
		In this case, because of (\ref{eq:q_taup2}) we get
		\begin{equation*}
		\begin{aligned}
		& o(L\subL) - \epsilon^2 L\subL
		\le q^L_{\greedk^{(i)},\ji{i}}(\tau^\prime) =
		q^L_{\greedk^{(i)},\ji{i}}(\tau) \\
		&- \sum_{\ell=1}^i \greedki^{(\ell)}_\ji{i} L
		(x^{L(e)}_{\greedk^{(\ell)}}(\tau)
		- x^{L(e)}_{\greedk^{(\ell)}}(\tau^\prime))
		+ L y^L_\ji{i}(\tau) - L y^L_\ji{i}(\tau^\prime) \le \\
		& \Kmax
		- \sum_{\ell=1}^i \greedki^{(\ell)}_\ji{i} L
		(x^{L(e)}_{\greedk^{(\ell)}}(\tau)
		- x^{L(e)}_{\greedk^{(\ell)}}(\tau^\prime))
		+ L (\lambda_\ji{i} - \mu_\ji{i} y_\ji{i}(t))(\tau - \tau^\prime) \\
		& + o(L\subL),
		\end{aligned}
		\end{equation*}
		from which it follows
		\begin{equation}\label{eq:final1g}
		\begin{aligned}
		& \frac{x^{L(e)}_{\greedk^{(i)}}(\tau)
		- x^{L(e)}_{\greedk^{(i)}}(\tau^\prime)}{\tau - \tau^\prime} \le
		\frac{\lambda_\ji{i} - \mu_\ji{i} y_\ji{i}(t)}
		{\greedki^{(i)}_\ji{i}} \\
		& - \sum_{\ell=1}^{i-1} \frac{\greedki^{(\ell)}_\ji{i}
		\left(x^{L(e)}_{\greedk^{(\ell)}}(\tau)
		- x^{L(e)}_{\greedk^{(\ell)}}(\tau^\prime)\right) - \epsilon^2 \subL}
		{\greedki^{(i)}_\ji{i}(\tau - \tau^\prime)}
		+ \frac{o(\subL)}{\tau - \tau^\prime} 	\stackrel{(a)}{<} \\
		& \mu_\ji{i} \sum_{\ell=1}^i
		\frac{\greedki^{(\ell)}_\ji{i}}{\greedki^{(i)}_\ji{i}}
		(\globlt{x}^{(g)}_{\greedk^{(\ell)}} - x^{(e)}_{\greedk^{(\ell)}}(t))
		- \sum_{\ell=1}^{i-1}
		\frac{\greedki^{(\ell)}_\ji{i}
			h^{\subind[m], (\ell)}(t)}{\greedki^{(i)}_\ji{i}} \\
		& + \frac{\epsilon 4 (1+\Kmax)^i R}{\greedki^{(i)}_\ji{i}}
		+ o(1)
		\stackrel{(b)}{<}
		\mumax (\alpha_i + \sum_{\ell=1}^{i-1}\Kmax \alpha_\ell) \\
		&+ \sum_{\ell=1}^{i-1}
		\Kmax\mumax
		\left(\alpha_\ell + \sum_{\ell^\prime=1}^{\ell-1}
		2(1+\Kmax)^{\ell-\ell^\prime} \alpha_{\ell^\prime}\right) + \\
		& 4 \epsilon (1 + \Kmax)^i R + o(1) \stackrel{(c)}{<}
		2 \mumax \alpha_i +
		2 \mumax \sum_{\ell=1}^{i-1} (1 + \Kmax)^{i-\ell} \alpha_\ell + o(1).
		\end{aligned}
		\end{equation}
		In (a) we applied the properties of $\ji{i}$, from (\ref{eq:yjA}), (\ref{eq:defcalJ}), and
		(\ref{eq:xe_h}) of Lemma~\ref{lem:hrec}, for indexes $1, \ldots, i-1$, and then replaced $\tau - \tau^\prime$ with its bound
		from \dref{eq:taucond8}. In (b), we used property (\ref{eq:alpha_prop}), and the fact that  $\frac{\greedki^{(\ell)}_\ji{i}}{\greedki^{(i)}_\ji{i}} \le \Kmax$
		for $\ell=1, \ldots, i-1$.
		In (c), we used (\ref{eq:eps_cond2}) and simplified.

		Also from Lemma~\ref{lem:emptyrate} and given that all the servers
		that empty during $[\tau^\prime, \tau)$ will be assigned to
		configuration $\greedk^{(\ell)}$ for $\ell=1, \ldots, i$,
		we will have that with probability $1 - o(L^{-2})$,
		\begin{equation}\label{eq:final1h}
		\begin{aligned}
		& \frac{x^{L(e)}_{\greedk^{(i)}}(\tau)
		- x^{L(e)}_{\greedk^{(i)}}(\tau^\prime)}{\tau - \tau^\prime}
		\ge	\frac{\mumin}{J \Kmax \cfcnt^2}
		\left(1 - \sum_{\ell=1}^{i} x^{(e)}_{\greedk^{(\ell)}}(t) \right) \\
		&- \sum_{\ell=1}^{i-1}
		\frac{\left(x^{L(e)}_{\greedk^{(\ell)}}(\tau)
		- x^{L(e)}_{\greedk^{(\ell)}}(\tau^\prime)\right)^+}
		{\tau - \tau^\prime}
		+ \frac{o(\subL)}{\tau - \tau^\prime} \stackrel{(a)}{=} \\
		& \frac{\mumin}{J \Kmax \cfcnt^2}
		\left(1 - \sum_{\ell=1}^{i} x^{(e)}_{\greedk^{(\ell)}}(t) \right)
		- \sum_{\ell=1}^{i-1} h^{\subind[m], (\ell)}(t)^+ \\
		& + o(1) \stackrel{(b)}{>}
		\frac{\mumin}{J \Kmax \cfcnt^2}
		\frac{\globlt{x}_{\greedk^{(\gtcnt)}}}{2}
		- 2 \mumax
		\sum_{\ell=1}^{i-1} \frac{(1 + \Kmax)^{i-\ell}}{\Kmax}
		\alpha_\ell + o(1).
		\end{aligned}
		\end{equation}
		In (a) we applied (\ref{eq:xe_h}) of Lemma~\ref{lem:hrec}
		for indexes $1, \ldots, i-1$ and used (\ref{eq:taucond8}).
		In (b) we applied Lemma~\ref{lem:rank0frac}
		and equation (\ref{eq:hbound}) for
		indexes $1, \ldots, i-1$ and simplified.

		So far we proved that if (\ref{eq:taucond8}) holds,
		(\ref{eq:final1g}) and (\ref{eq:final1h}) also hold.

		However, considering (\ref{eq:alphadef}) we have
		\begin{equation}\label{eq:contr2}
		\begin{aligned}
		&2 \mumax \alpha_i +
		2 \mumax \sum_{\ell=1}^{i-1} (1 + \Kmax)^{i-\ell} \alpha_\ell
		\stackrel{(a)}{<} \\
		& \frac{\mumin}{J \Kmax \cfcnt^2}\frac{\globlt{x}_{\greedk^{(\gtcnt)}}}{2}
		- 2 \mumax \sum_{\ell=1}^{i-1} \frac{(1 + \Kmax)^{i-\ell}}{\Kmax}
		\alpha_\ell
		\end{aligned}
		\end{equation}
		and because of that, the probability
		that (\ref{eq:final1g}) and (\ref{eq:final1h})
		are both true is $o(L^{-2})$.
		This means that (\ref{eq:taucond7}) holds
		with probability at least $1 - o(L^{-2})$,
		and thus the analysis of (\ref{eq:taucond7}) is
		sufficient for (\ref{eq:res1eqB}) to hold.

	\textbf{Proof of Claim~\ref{cl:P1B}:}

We will prove the result inductively on $m$.
For $m=0$, $\ell_m = 0$ so there is nothing to prove,
so we will start with the base case $m=1$
and then move on to the inductive step.

\textbf{Base case $m=1$:}

In this case $\ell_m=1$ and
$0 \le q^L_{\greedk^{(1)}, \jiB{1}{1}}(\tau^{(1)}_n) < \Kmax$
from which it trivially follows that 
$q^L_{\greedk^{(1)}, \jiB{1}{1}}(\tau^{(1)}_n) = o(L\subL)$.


\textbf{Inductive case $m>1$, Base case $i=\ell_m$:}
In this case
$$
0 \le q^L_{\greedk^{(i)}, \jiB{i}{m}}(\tau^{(m)}_n) < \Kmax
$$
from which it trivially follows that 
$$q^L_{\greedk^{(i)}, \jiB{i}{m}}(\tau^{(m)}_n) = o(L\subL).$$

\textbf{Inductive case $m>1$, Base case $i<\ell_m$:}
In this case, we know by the induction hypothesis that
$q^L_{\greedk^{(i)}, \jiB{i}{m-1}}(\tau^{(m-1)}_n) = o(L\subL)$. 
Further, notice that $\jiB{i}{m-1} = \jiB{i}{m}$ for $i=1, \ldots, \ell_m-1$
so for simplicity we will refer to both as $\ji{i}$.
Then we have that with probability $1 - o(L^{-2})$,
\begin{equation}
\begin{aligned}
&q^L_{\greedk^{(i)}, \ji{i}}(\tau^{(m)}_n) \ge 
q^L_{\greedk^{(i)}, \ji{i}}(\tau^{(m-1)}_n) + \\
&L \left(
\sum_{\ell=1}^{i} \greedki^{(\ell)}_{\ji{i}} 
\left(x^{(e)}_{\greedk^{(\ell)}}(\tau^{(m)}_n) -
x^{(e)}_{\greedk^{(\ell)}}(\tau^{(m-1)}_n) \right)
- (\lambda_{\ji{i}} - \mu_{\ji{i}} y_{\ji{i}}(t)) \right) \times \\
& (\tau^{(m)}_n - \tau^{(m-1)}_n) + o(L\subL) \stackrel{(a)}{=}
o(L\subL).
\end{aligned}
\end{equation}
In (a) we have used that (\ref{eq:exact}) holds
for index $m-1$ in place of $m$.

	\textbf{Proof of Property~\ref{pr:bound1}:}
			\begin{definition}\label{def:lastj}
		For an index $i$ for which $1 \le i \le \ell_m$, the time ${\tau}^{(m,i)}_{n}$ 
		is defined as the latest time in $[\tau^{(m)}_{n}, \tau^{(m+1)}_{n})$, such that
		\begin{equation}
		\max_{j \in \mathcal{J}: \greedki^{(i)}_j > 0}
		{q^L_{\greedk^{(i)}, j}({\tau}^{(m,i)}_{n})} \ge 0.
		\end{equation}
		Also let $\jix{i}:=\arg\max_{j \in \mathcal{J}: \greedki^{(i)}_j > 0}
		{q^L_{\greedk^{(i)}, j}({\tau}^{(m,i)}_{n})}$.
	\end{definition}

	For (\ref{pr:bound1}) to be true,
	it suffices that for any $\epsilon > 0$
	\begin{equation}\label{eq:res3eq}
	\begin{aligned}
	&\mathds{P} \Big(
	x^{(e)}_{\greedk^{(\ell_m)}}(\tau^{(m+1)}_{n}) 
	- x^{(e)}_{\greedk^{(\ell_m)}}(\tau^{(m)}_{n}) \\
	& - \mumin \frac{\alpha_{\ell_m}}{2} (\tau - \tau^{(m)}_{n})
	> - \epsilon \subL\Big) = 1 - o(L^{-2}).
	\end{aligned}
	\end{equation}

	To prove (\ref{eq:res3eq}), it is sufficient to prove
	\begin{equation}\label{eq:ralpha1}
	\begin{aligned}
	&\mathds{P} \left(
	x^{(e)}_{\greedk^{(\ell_m)}}({\tau}^{(m,\ell_m)}_{n}) 
	- x^{(e)}_{\greedk^{(\ell_m)}}(\tau^{(m)}_{n}) \right. \\
	& \left. - \mumin \frac{\alpha_{\ell_m}}{2} 
	\left({\tau}^{(m,\ell_m)}_{n} - \tau^{(m)}_{n}\right)
	> - \frac{\epsilon \subL}{2}
	\right) = 1 - o(L^{-2}),
	\end{aligned}
	\end{equation}
	and
	\begin{equation}\label{eq:ralpha2}
	\begin{aligned}
	&\mathds{P} \left(
	x^{(e)}_{\greedk^{(\ell_m)}}(\tau^{(m+1)}_{n}) 
	- x^{(e)}_{\greedk^{(\ell_m)}}(\tau^{(m,\ell_m)}_{n}) \right. \\
	&\left. - \mumin \frac{\alpha_{\ell_m}}{2} (\tau^{(m+1)}_{n} - \tau^{(m,\ell_m)}_{n})
	> - \frac{\epsilon \subL}{2} \right) = 1 - o(L^{-2}).
	\end{aligned}
	\end{equation}

	\textbf{Proof of (\ref{eq:ralpha1}):}
	We will now prove (\ref{eq:ralpha1}) by considering two cases
	depending on length of $\tau^{(m,\ell_m)}_{n} - \tau^{(m)}_{n}$.
	
	We consider
	\begin{equation}\label{eq:taucond2_1}
	\tau^{(m,\ell_m)}_{n} - \tau^{(m)}_{n} \le 
	\frac{\epsilon \subL}{4(1+\Kmax)^{\ell_m-1} R}
	\end{equation}
	or
	\begin{equation}\label{eq:taucond2_2}
	\tau^{(m,\ell_m)}_{n} - \tau^{(m)}_{n} > 
	\frac{\epsilon \subL}{4(1+\Kmax)^{\ell_m-1} R}
	\end{equation}
	where $R:= \sum_{j \in \mathcal{J}} \lambda_j 
	+ \Kmax\mumax$
	
	\textbf{Case (\ref{eq:taucond2_1}):}
		We notice ${x}^{(e)}_{\greedk^{(\ell_m)}}(\tau)$ 
		will change by at most $(1+\Kmax)^{\ell_m-1}/L$ at each arrival
		or departure according to Lemma~\ref{lem:X_change}.

		Using Lemma~\ref{lem:Chern_Poisson}, with probability $1 - o(L^{-2})$, The number of arrivals and departures 
		in $[\tau^{(m)}_{n}, \tau^{(m,\ell_m)}_{n}]$
		can be bounded by
		\begin{equation}
		\begin{aligned}
		&\left(\sum_{j \in \mathcal{J}} \lambda_j 
		+ \Kmax\mumax \right) L
		\frac{\epsilon \subL}{4(1+\Kmax)^{\ell_m-1} R} + o(L\subL) = \\
		&\frac{\epsilon L \subL}{4(1+\Kmax)^{\ell_m-1}} + o(L\subL),
		\end{aligned}
		\end{equation} 
		therefore
		\begin{equation}\label{eq:contr2a}
		\mathds{P}\left(
		{x}^{(e)}_{\greedk^{(\ell_m)}}(\tau^{(m,\ell_m)}_{n}) 
		- {x}^{(e)}_{\greedk^{(\ell_m)}}(\tau^{(m)}_{n})  \le
		\frac{\epsilon \subL}{4}\right) = 1 - o(L^{-2}).
		\end{equation}
		
		Considering (\ref{eq:taucond2_1}) holds, we trivially have
		\begin{equation}\label{eq:contr2b}
		\begin{aligned}
		&\mathds{P}\left(
		\mumin \frac{\alpha_{\ell_m}}{2}(\tau^{(m,\ell_m)}_{n} - \tau^{(m)}_{n}) \le
		\frac{\epsilon \subL}{4}\right) = 1 
		\end{aligned}
		\end{equation}
		It is now easy to verify that (\ref{eq:contr2a}) 
		and (\ref{eq:contr2b}) imply (\ref{eq:ralpha1}).
		
	\textbf{Case (\ref{eq:taucond2_2}):}

		Let $j := \jix{\ell_m}$.
		In this case, we have
		\begin{equation}
		\begin{aligned}
		0 & \le q^L_{\greedk^{(\ell_m)},j}(\tau^{(m, \ell_m)}_{n}) = 
		q^L_{\greedk^{(\ell_m)},j}(\tau^{(m)}_{n}) \\
		&+ \sum_{\ell=1}^{\ell_m} \greedki^{(\ell)}_j L 
		\left(x^{L(e)}_{\greedk^{(\ell)}}(\tau^{(m, \ell_m)}_{n})  
		- x^{L(e)}_{\greedk^{(\ell)}}(\tau^{(m)}_{n})\right) \\
		& - L \left( y^L_j(\tau^{(m, \ell_m)}_{n}) 
		- y^L_j(\tau^{(m)}_{n}) \right)  \\
		&\le \Kmax	+ \sum_{\ell=1}^{\ell_m} \greedki^{(\ell)}_j L
		(x^{L(e)}_{\greedk^{(\ell)}}(\tau^{(m, \ell_m)}_{n}) 
		- x^{L(e)}_{\greedk^{(\ell)}}(\tau^{(m)}_{n})) \\
		&- L(\lambda_j - \mu_j y_j(t))
		\left(\tau^{(m, \ell_m)}_{n} - \tau^{(m)}_{n}\right)
		+ o(L\subL),
		\end{aligned}
		\end{equation}
		which holds with probability $1 - o(L^{-2})$ by application of 
		Lemma~\ref{lem:Chern_Poisson} in the last step.
		If $\ell_m > \gtcntp{\alpha}(t)$ then,
		since (\ref{eq:alpha_prop}) does not hold for
		$i=\ell_m$, we have
		\begin{equation} \label{eq:P2b1}
		\frac{\lambda_j - \mu_j y_j(t)}{\greedki^{(\ell_m)}_j} >
		\frac{\mu_j \alpha_{\ell_m} + \mu_j
			\sum_{\ell=1}^{\ell_m-1} \greedki^{(\ell)}_j 
			(\globlt{x}^{(g)}_{\greedk^{(\ell)}} - x^{(e)}_{\greedk^{(\ell)}}(t))
		}{\greedki^{(\ell_m)}_j}, 
		\end{equation}
		otherwise, by property (\ref{eq:non_trivial}) for $j_a = j$
		and $j_b = \ell_m$ we have
		\begin{equation} \label{eq:P2b2}
		\frac{\lambda_j - \mu_j y_j(t)}{\greedki^{(\ell_m)}_j} >
		\frac{\mu_j \delta + \mu_j
			\sum_{\ell=1}^{\ell_m} \greedki^{(\ell)}_j 
			(\globlt{x}^{(g)}_{\greedk^{(\ell)}} - x^{(e)}_{\greedk^{(\ell)}}(t))
		}{\greedki^{(\ell_m)}_j}. 
		\end{equation}

		Based on the above inequality, we can write
		\begin{equation}\label{eq:final2a}
		\begin{aligned}
		& \frac{x^{L(e)}_{\greedk^{(\ell_m)}}(\tau^{(m, \ell_m)}_{n})  - 
		x^{L(e)}_{\greedk^{(\ell_m)}}(\tau^{(m)}_{n})}
		{\tau^{(m, \ell_m)} - \tau^{(m)}_{n}} \ge \\
		& \frac{\lambda_j - \mu_j y_j(t)}{\greedki^{(\ell_m)}_j} 
		- \sum_{\ell=1}^{\ell_m-1} 
		\frac{\greedki^{(\ell)}_j
		\left(x^{(e)}_{\greedk^{(\ell)}}(\tau^{(m, \ell_m)}_{n}) 
		- x^{(e)}_{\greedk^{(\ell)}}(\tau^{(m)}_{n})\right)		
		}
		{\greedki^{(\ell_m)}_j (\tau^{(m, \ell_m)}_{n} - \tau^{(m)}_{n})} \\
		& + \frac{o(\subL)}{\tau^{(m, \ell_m)}_n - \tau^{(m)}_{n}}
		\stackrel{(a)}{\ge} 
		\mu_j \min\left(
		\frac{\delta}{\greedki^{(\ell_m)}_j} -
		(\globlt{x}^{(g)}_{\greedk^{(\ell_m)}} - x^{(e)}_{\greedk^{(\ell_m)}}(t)), 
		\alpha_{\ell_m}
		\right) \\
		& + \frac{\mu_j
			\sum_{\ell=1}^{\ell_m-1} \greedki^{(\ell)}_j 
			(\globlt{x}^{(g)}_{\greedk^{(\ell)}} - x^{(e)}_{\greedk^{(\ell)}}(t))
			- \sum_{\ell=1}^{\ell_m-1} 
			\greedki^{(\ell)}_j h^{\subind[m], (\ell)}(t)
		}{\greedki^{(\ell_m)}_j} \\
		& + o(1) \stackrel{(b)}{\ge} \left(
		\mumin \min \left(\delta 
		-(\epsilon_\rho + \Kmax \sum_{\ell=1}^{\ell_m-1} \alpha_{\ell}),
		\alpha_{\ell_m} \right)
		\right. \\
		&\left. - \mumax
			\sum_{\ell=1}^{\ell_m-1} \Kmax
			(\epsilon_\rho + \Kmax \sum_{\ell^\prime=1}^{\ell-1} \alpha_{\ell^\prime}) \right. \\
		&\left.
			- \sum_{\ell=1}^{\ell_m-1} \Kmax \mumax 
			\left(\alpha_\ell +
			\sum_{\ell^\prime=1}^{\ell-1} 2(1 + \Kmax)^{\ell-\ell^\prime} 
			\alpha_{\ell^\prime} \right)
		\right) + o(1) \stackrel{(c)}{>} \\
		& \frac{\mumin \alpha_{\ell_m}}{2} + o(1).
		\end{aligned}
		\end{equation}
		In (a) we used: 1) (\ref{eq:P2b1}) and (\ref{eq:P2b2}), 
		2) equation (\ref{eq:hbound}) for
		indexes $1, \ldots, \ell_m-1$, and 3) replaced
		$\tau^{(m, \ell_m)} - \tau^{(m)}_{n}$ with its bound from 
		(\ref{eq:taucond2_2}). In (b) we used Lemma~\ref{lem:xg_xebound}
		for indexes $1, \ldots, \ell_m$. Finally, in (c) we used 
		(\ref{eq:alphadef}) and (\ref{eq:epsrho_cond}). 		

		We have thus shown by (\ref{eq:final2a}) that
		$$\frac{x^{L(e)}_{\greedk^{(\ell_m)}}(\tau^{(m, \ell_m)}_{n})  - 
		x^{L(e)}_{\greedk^{(\ell_m)}}(\tau^{(m)}_{n})}
		{\tau^{(m, \ell_m)} - \tau^{(m)}_{n}} > 
		\frac{\mumin \alpha_{\ell_m}}{2} + o(1)
		$$
		with probability $1 - o(L^{-2})$
		which is equivalent to (\ref{eq:ralpha1}).
	
	\textbf{Proof of (\ref{eq:ralpha2}):}
	We will now prove (\ref{eq:ralpha2}) by considering two cases
	depending on length of $\tau^{(m+1)}_{n} - \tau^{(m, \ell_m)}_{n}$ 
	and reach a contradiction for each of them.

	We consider
	\begin{equation}\label{eq:taucond2_3}
	\tau^{(m+1)}_{n} - \tau^{(m,\ell_m)}_{n} \le 
	\frac{\epsilon \subL}{4(1+\Kmax)^{\ell_m-1}(\sum_{j \in \mathcal{J}} \lambda_j 
		+ \Kmax\mumax)}
	\end{equation}
	or
	\begin{equation}\label{eq:taucond2_4}
	\tau^{(m+1)}_{n} - \tau^{(m, \ell_m)}_{n} > 
	\frac{\epsilon \subL}{4(1+\Kmax)^{\ell_m-1}(\sum_{j \in \mathcal{J}} \lambda_j 
		+ \Kmax\mumax)}
	\end{equation}
	
	\textbf{Case (\ref{eq:taucond2_3}):}
		Following the same arguments as in the Case of (\ref{eq:taucond2_1})
		we can infer the equivalent of (\ref{eq:contr2a}) and 
		(\ref{eq:contr2b}) for interval 
		$(\tau^{(m, \ell_m)}_{n}, \tau^{(m+1)}_{n})$, i.e.
		\begin{equation}\label{eq:contr2c}
		\mathds{P}\left(
		{x}^{(e)}_{\greedk^{(\ell_m)}}(\tau^{(m+1)}_{n}) 
		- {x}^{(e)}_{\greedk^{(\ell_m)}}(\tau^{(m,\ell_m)}_{n}) \le
		\frac{\epsilon \subL}{4}\right) = 1 - o(L^{-2}),
		\end{equation}
		and
		\begin{equation}\label{eq:contr2d}
		\mathds{P}\left(
		\frac{\mumin \alpha_{\ell_m}}{2}(\tau^{(m+1)}_{n} 
		- \tau^{(m,\ell_m)}_{n}) \le
		\frac{\epsilon \subL}{4}\right) = 1 - o(L^{-2}).
		\end{equation}
		which imply (\ref{eq:ralpha2}).

	\textbf{Case (\ref{eq:taucond2_4}):}
		In this case we notice, using Lemma~\ref{lem:emptyrate}
		that with probability $1 - o(L^{-2})$
		\begin{equation}\label{eq:final2b_}
		\begin{aligned}
		& (x^{L(e)}_{\greedk^{(\ell_m)}}(\tau^{(m+1)}_{n} ) 
		- x^{L(e)}_{\greedk^{(\ell_m)}}(\tau^{(m,\ell_m)}_{n})) \\
		& + \sum_{\ell=1}^{\ell_m-1}
		\left(x^{L(e)}_{\greedk^{(\ell)}}(\tau^{(m+1)}_{n} ) 
		- x^{L(e)}_{\greedk^{(\ell)}}(\tau^{(m,\ell_m)}_{n})\right)^+ > \\
		& \frac{\mumin}{J \Kmax \cfcnt^2}
		\left(1 - \sum_{\ell=1}^{\ell_m} x^{(e)}_{\greedk^{(\ell)}}(t) \right)
		\left(\tau^{(m+1)}_{n} - \tau^{(m,\ell_m)}_{n} \right) + o(\subL),
		\end{aligned}
		\end{equation}
		or equivalently
		\begin{equation}\label{eq:final2b}
		\begin{aligned}
		&\frac{x^{L(e)}_{\greedk^{(\ell_m)}}(\tau^{(m+1)}_{n}) 
		- x^{L(e)}_{\greedk^{(\ell_m)}}(\tau^{(m,\ell_m)}_{n})}
		{\tau^{(m+1)}_{n} - \tau^{(m,\ell_m)}_{n}} 
		\stackrel{(a)}{>} \\
		&\frac{\mumin}{J \Kmax \cfcnt^2}
		\left(1 - \sum_{\ell=1}^{\ell_m} x^{(e)}_{\greedk^{(\ell)}}(t) \right)
		- \sum_{\ell=1}^{\ell_m-1} h^{(\ell)}(t)^+
		+ \frac{o(\subL)}{\tau^{(m+1)}_{n} - \tau^{(m,\ell_m)}_{n}}
		\stackrel{(b)}{>} \\
		& \frac{\mumin}{J \Kmax \cfcnt^2}
		\frac{\globlt{x}^{(g)}_{\greedk^{(\gtcnt)}}}{2}
		- \sum_{\ell=1}^{\ell_m-1} \mumax \left(
		\alpha_\ell + 2\sum_{\ell^\prime=1}^{\ell-1} 
		(1 + \Kmax)^{\ell-\ell^\prime} \alpha_{\ell^\prime}\right)
		+ o(1)
		\stackrel{(c)}{>} \\
		& \frac{\mumin \alpha_{\ell_m}}{2} + o(1).
		\end{aligned}
		\end{equation}
		Inequality (a) comes from applying to (\ref{eq:final2b_}),
		the equation (\ref{eq:xe_h}) of Lemma~\ref{lem:hrec}
		for indexes $1, \ldots, \ell_m-1$. 
		In (b) we used Lemma~\ref{lem:rank0frac} 
		and equation (\ref{eq:hbound}) for
		indexes $1, \ldots, \ell_m-1$.
		Finally, in (c) we used (\ref{eq:alphadef}).
		
		We have thus shown through (\ref{eq:final2b}) that
		$$\frac{x^{L(e)}_{\greedk^{(\ell_m)}}(\tau^{(m+1)}_{n})  - 
		x^{L(e)}_{\greedk^{(\ell_m)}}(\tau^{(m, \ell_m)}_{n})}
		{ \tau^{(m+1)}_{n} - \tau^{(m, \ell_m)}} > 
		\frac{\mumin \alpha_{\ell_m}}{2} + o(1)
		$$
		with probability $1 - o(L^{-2})$,
		which is equivalent to (\ref{eq:ralpha2}).

	\textbf{Proof of Property~\ref{pr:bound2}:}
			Consider $j = \jix{\gtcnt}$ given from Definition~\ref{def:lastj}.
	
	For (\ref{pr:bound2}) to be true,
	it suffices to prove that for any $\epsilon > 0$,
	\begin{equation}\label{eq:res3eqA}
	\begin{aligned}
	&P_1 := \mathds{P} \left(
	\sum_{\ell=1}^{\gtcnt} \greedki^{(\ell)}_j
	\left( x^{(e)}_{\greedk^{(\ell)}}(\tau^{(m,\gtcnt)}_{n}) 
	- x^{(e)}_{\greedk^{(\ell)}}(\tau^{(m)}_{n}) \right) 
	\right. \\
	& \left. 
	- (\lambda_j - \mu_j y_j(t)) ({\tau}^{(m,\gtcnt)}_{n} - \tau^{(m)}_{n})
	> - \frac{\epsilon \subL}{2}
	\right) = 1 - o(L^{-2}),
	\end{aligned}
	\end{equation}
	and
	\begin{equation}\label{eq:res3eqB}
	\begin{aligned}
	&P_2 := \mathds{P} \Bigg(
	x^{(e)}_{\greedk^{(\gtcnt)}}(\tau^{(m+1)}_{n}) 
	- x^{(e)}_{\greedk^{(\gtcnt)}}(\tau^{(m,\gtcnt)}_{n}) \\
	& + \sum_{\ell=1}^{\gtcnt-1}
	(x^{(e)}_{\greedk^{(\ell)}}(\tau^{(m+1)}_{n}) 
	- x^{(e)}_{\greedk^{(\ell)}}(\tau^{(m,\gtcnt)}_{n}))^+ \\
	& - \mumin \frac{1 - \sum_{i=1}^{\gtcnt} {x}^{(e)}_{\greedk^{(i)}}(t)}
	{J \Kmax \cfcnt^2} (\tau^{(m+1)}_{n} - \tau^{(m,\gtcnt)}_{n})
	> - \frac{\epsilon \subL}{2}\Bigg) \\
	&= 1 - o(L^{-2}).
	\end{aligned}
	\end{equation}

	
	To show why this is sufficient we first introduce the following notations
	\begin{equation}
	\begin{aligned}
	&f_1[\tau_a, \tau_b] := \mumin
	\frac{1 - \sum_{i=1}^{\gtcnt} 
		{x}^{(e)}_{\greedk^{(i)}}(t)}{J \Kmax \cfcnt^2}
	-\sum_{i=1}^{\gtcnt-1}  \nabla {x}^{(e)}_{\greedk^{(i)}}
	[\tau_a, \tau_b]^+ \\
	&f_2[\tau_a, \tau_b] := \min_{j \in \mathcal{J}}
	\left(
	\frac{\lambda_j - \mu_j y_j(t) 
		- \sum_{i=1}^{\gtcnt-1} \greedki^{(i)}_j 
		\nabla {x}^{(e)}_{\greedk^{(i)}}
		[\tau_a, \tau_b]}{\greedki^{(\gtcnt)}_j} \right)
	\end{aligned}
	\end{equation}
	
	Then if (\ref{eq:res3eqA}) and (\ref{eq:res3eqB})
	indeed hold, we can get that
	\begin{equation}
	\begin{aligned}
	&\mathds{P} \left( x^{(e)}_{\greedk^{(\gtcnt)}}(\tau^{(m+1)}_{n}) 
	- x^{(e)}_{\greedk^{(\gtcnt)}}(\tau^{(m)}_{n}) > \right. \\
	&\left. \min ( f_1[\tau^{(m)}_{n}, \tau^{(m+1)}_{n}], 
	f_2[\tau^{(m)}_{n}, \tau^{(m+1)}_{n}]) 
	- \epsilon \subL \right) \ge \\
	&\mathds{P} \left(x^{(e)}_{\greedk^{(\gtcnt)}}(\tau^{(m, \gtcnt)}_{n}) 
	- x^{(e)}_{\greedk^{(\gtcnt)}}(\tau^{(m)}_{n}) > \right. \\
	& \left. \min(f_1[\tau^{(m)}_{n}, \tau^{(m, \gtcnt)}_{n}], 
	f_2[\tau^{(m)}_{n}, \tau^{(m, \gtcnt)}_{n}]) 
	- \frac{\epsilon \subL}{2} \right)  \\
	& \mathds{P} \left(x^{(e)}_{\greedk^{(\gtcnt)}}(\tau^{(m+1)}_{n}) 
	- x^{(e)}_{\greedk^{(\gtcnt)}}(\tau^{(m, \gtcnt)}_{n}) > \right. \\
	&\left. \min(f_1[\tau^{(m, \gtcnt)}_{n}, \tau^{(m+1)}_{n}], 
	f_2[\tau^{(m, \gtcnt)}_{n}, \tau^{(m+1)}_{n}]) 
	- \frac{\epsilon \subL}{2}\right) \\
	& \ge P_1 P_2 = 1 - o(L^{-2}).
	\end{aligned}
	\end{equation}

	\textbf{Proof of (\ref{eq:res3eqA}):}
	We will now prove (\ref{eq:res3eqA}) by considering two cases
	depending on length of
	$\tau^{(m,\gtcnt)}_{n} - \tau^{(m)}_{n}$.
	
	We consider
	\begin{equation}\label{eq:taucond3_1}
	\tau^{(m,\gtcnt)}_{n} - \tau^{(m)}_{n} \le 
	\frac{\epsilon \subL}{4(1+\Kmax)^{\gtcnt}R}
	\end{equation}
	or
	\begin{equation}\label{eq:taucond3_2}
	\tau^{(m,\gtcnt)}_{n} - \tau^{(m)}_{n} > 
	\frac{\epsilon \subL}{4(1+\Kmax)^{\gtcnt}R},
	\end{equation}
	where $R := \sum_{j \in \mathcal{J}} \lambda_j + \Kmax\mumax$.
	
	\textbf{Case (\ref{eq:taucond3_1}):}
		We notice ${x}^{L(e)}_{\greedk^{(\ell)}}(\tau)$ 
		will change by at most $(1+\Kmax)^{\ell-1}/L$ at each arrival
		or departure according to Lemma~\ref{lem:X_change}
		and thus 
		$\sum_{\ell=1}^{\gtcnt} \greedki^{(\ell)}_j 
		{x}^{L(e)}_{\greedk^{(\ell)}}(\tau)$
		will change by at most $(1+\Kmax)^\gtcnt/L - 1/L$.

		Using Lemma~\ref{lem:Chern_Poisson}, 
		with probability $1 - o(L^{-2})$, the number of arrivals and departures 
		in interval $[\tau^{(m)}_{n}, \tau^{(m,\gtcnt)}_{n}]$
		of length at most 
		$\frac{\epsilon \subL}{4(1+\Kmax)^{\gtcnt} R}$
        is at most
		\begin{equation*}
		\begin{aligned}
		&\left(\sum_{j \in \mathcal{J}} \lambda_j 
		+ \Kmax\mumax \right) L
		\frac{\epsilon \subL}{4(1+\Kmax)^\gtcnt R
		} + o(L\subL) = \\
		&\frac{\epsilon L\subL}{4(1+\Kmax)^\gtcnt} 
		+ o(L\subL),
		\end{aligned}
		\end{equation*} 
		therefore
		\begin{equation}\label{eq:contr3a}
		\mathds{P}\left(
		\sum_{\ell=1}^{\gcnt} \greedki^{(\ell)}_j
		({x}^{(e)}_{\greedk^{(\ell)}}(\tau^{(m,\gtcnt)}_{n}) 
		- {x}^{(e)}_{\greedk^{(\ell)}}(\tau^{(m)}_{n}) ) \le
		\frac{\epsilon \subL}{4}\right) = 1 - o(L^{-2}).
		\end{equation}

		Considering (\ref{eq:taucond3_1}) holds, we will also have
		\begin{equation}\label{eq:contr3b}
		\mathds{P}\left(
		(\lambda_j - \mu_j y_j(t))(\tau^{(m,\gtcnt)}_{n} - \tau^{(m)}_{n}) \le
		\frac{\epsilon \subL}{4}\right) = 1.
		\end{equation}
		It is now easy to verify that equations (\ref{eq:contr3a}) 
		and (\ref{eq:contr3b}) imply (\ref{eq:res3eqA}).

	\textbf{Case (\ref{eq:taucond3_2}):}
		In this case we notice, using Lemma~\ref{lem:Chern_Poisson} for
		the process of jobs of type $j$ in the system 
		which is Poisson with rate 
		$L(\lambda_{j} - \mu_{j} y_{j}(t))$,
		that with probability $1 - o(L^{-2})$
		\begin{equation}
		\begin{aligned}
		0 & \le q^L_{\greedk^{(\gtcnt)},j}(\tau^{(m, \ell_m)}_{n}) = 
		q^L_{\greedk^{(\gtcnt)},j}(\tau^{(m)}_{n}) + \\
		&\sum_{\ell=1}^{\gtcnt} \greedki^{(\ell)}_j L 
		(x^{L(e)}_{\greedk^{(\ell)}}(\tau^{(m, \ell_m)}_{n})  
		- x^{L(e)}_{\greedk^{(\ell)}}(\tau^{(m)}_{n})) \\
		& - L y^L_j(\tau^{(m, \gtcnt)}_{n}) - L y^L_j(\tau^{(m)}_{n})  \\
		&\le \Kmax
		+ \sum_{\ell=1}^{\gtcnt} \greedki^{(\ell)}_j L
		\left(x^{L(e)}_{\greedk^{(\ell)}}(\tau^{(m, \gtcnt)}_{n}) 
		- x^{L(e)}_{\greedk^{(\ell)}}(\tau^{(m)}_{n})\right) \\
		&- L(\lambda_j - \mu_j y_j(t))(\tau^{(m, \gtcnt)}_{n} - \tau^{(m)}_{n})
		+ o(L\subL),
		\end{aligned}
		\end{equation}
		from which it follows 
		\begin{equation}\label{eq:final2c}
		\begin{aligned}
		& \sum_{\ell=1}^{\gtcnt} \greedki^{(\ell)}_j
		\left(x^{L(e)}_{\greedk^{(\gtcnt)}}(\tau^{(m, \gtcnt)}_{n})  - 
		x^{L(e)}_{\greedk^{(\gtcnt)}}(\tau^{(m)}_{n})\right) \ge \\
		& (\lambda_j - \mu_j y_j(t))
		(\tau^{(m, \gtcnt)}_{n} - \tau^{(m)}_{n})
		+ o(\subL),
		\end{aligned}
		\end{equation}
		which implies (\ref{eq:res3eqA}).

	\textbf{Proof of (\ref{eq:res3eqB}):}
	We will now prove (\ref{eq:res3eqB}) by considering two cases
	depending on length of
	$\tau^{(m+1)}_{n} - \tau^{(m,\gtcnt)}_{n}$ and 
	reach a contradiction for each of them.
	
	We consider
	\begin{equation}\label{eq:taucond3_3}
	\tau^{(m+1)}_{n} - \tau^{(m,\gtcnt)}_{n} \le 
	\frac{\epsilon \subL}{4(1+\Kmax)^{\gtcnt-1}R}
	\end{equation}
	or
	\begin{equation}\label{eq:taucond3_4}
	\tau^{(m+1)}_{n} - \tau^{(m,\gtcnt)}_{n} > 
	\frac{\epsilon \subL}{4(1+\Kmax)^{\gtcnt-1}R}
	\end{equation}
	where $R := \sum_{j \in \mathcal{J}} \lambda_j + \Kmax\mumax$.
	
	\textbf{Case (\ref{eq:taucond3_3}):}
		We notice ${x}^{(e)}_{\greedk^{(\ell)}}(\tau)$ 
		will change by at most $(1+\Kmax)^{\ell-1}/L$ at each arrival
		or departure according to Lemma~\ref{lem:X_change} and thus 
		(${x}^{(e)}_{\greedk^{(\gtcnt)}}(\tau) +
		\sum_{\ell=1}^{\gtcnt-1} {x}^{(e)}_{\greedk^{(\ell)}}(\tau)^+$)
		will change by at most $\frac{(1+\Kmax)^\gtcnt}{\Kmax L}$.
		
		Using Lemma~\ref{lem:Chern_Poisson}, 
		with probability $1 - o(L^{-2})$, the number of arrivals and departures 
		in interval $[\tau^{(m,\gtcnt)}_{n}, \tau^{(m+1)}_{n}]$
		of length at most 
		$\frac{\epsilon \subL}{4(1+\Kmax)^{\gtcnt-1} R}$, is at most
		\begin{equation*}
		\begin{aligned}
		&\left(\sum_{j \in \mathcal{J}} \lambda_j 
		+ \Kmax\mumax \right) L
		\frac{\epsilon \subL}{4(1+\Kmax)^{\gtcnt-1} R
		} + o(L\subL) = \\
		&\frac{\epsilon L\subL}{4(1+\Kmax)^{\gtcnt-1}} + o(L\subL),
		\end{aligned}
		\end{equation*} 
		therefore
		\begin{equation}\label{eq:contr3c}
		\begin{aligned}
		&\mathds{P}\left(
		\sum_{\ell=1}^\gtcnt \greedki^{(\ell)}_j
		\left({x}^{(e)}_{\greedk^{(\ell)}}(\tau^{(m+1)}_{n}) 
		- {x}^{(e)}_{\greedk^{(\ell)}}(\tau^{(m, \gtcnt)}_{n}) \right) \le
		\frac{\epsilon \subL}{4}\right) = \\
		&1 - o(L^{-2}).
		\end{aligned}
		\end{equation}
		
		Considering (\ref{eq:taucond3_3}) holds, we will also have
		\begin{equation}\label{eq:contr3d}
		\mathds{P}\left(
		\mumin\frac{1 - \sum_{i=1}^{\gtcnt} {x}^{(e)}_{\greedk^{(i)}}(t)}
		{J \Kmax \cfcnt^2}
		\left(\tau^{(m+1)}_{n} - \tau^{(m, \gtcnt)}_{n}\right) \le
		\frac{\epsilon \subL}{4}\right) = 1.
		\end{equation}
		It is now easy to verify that equations (\ref{eq:contr3c}) 
		and (\ref{eq:contr3d}) imply (\ref{eq:res3eqB}).

	\textbf{Case (\ref{eq:taucond3_4}):}
		In this case we notice that whenever a server
		without effective configuration in set
		$\mathcal{\bar K} := \{\greedk^{(\ell)}:\ell=1, \ldots, \gtcnt \}$
		empties, it will be assigned to one of the configurations
		of $\mathcal{\bar K}$.
				This statement is equivalent to the following, considering the bound 
		of Lemma~\ref{lem:emptyrate},
		\begin{equation*}
		\begin{aligned}
		& \left(x^{L(e)}_{\greedk^{(\gtcnt)}}(\tau^{(m+1)}_{n} ) 
		- x^{L(e)}_{\greedk^{(\gtcnt)}}(\tau^{(m,\gtcnt)}_{n})\right) \\
		&+ \sum_{\ell=1}^{\gtcnt-1}
		\left(x^{L(e)}_{\greedk^{(\ell)}}(\tau^{(m+1)}_{n} ) 
		- x^{L(e)}_{\greedk^{(\ell)}}(\tau^{(m,\gtcnt)}_{n})\right)^+ \\
		& - \frac{\mumin}{J \Kmax \cfcnt^2}
		\left(1 - \sum_{\ell=1}^{\gtcnt} x^{(e)}_{\greedk^{(\ell)}}(t) \right)
		(\tau^{(m+1)}_{n} - \tau^{(m,\gtcnt)}_{n}) > o(\subL).
		\end{aligned}
		\end{equation*}
		Since this holds based on Lemma~\ref{lem:emptyrate} with 
		probability $1-o(L^{-2})$, it implies (\ref{eq:res3eqB}).

\section{Details of Proof of Proposition~\ref{prop:VS_}}\label{prf:convergence}

	We notice that the last equation of the system (\ref{eq:sys}) 
	is the same as the previous ones, if $\Delta \eta_{i^\star}$
	has a coefficient $\greedki^{(i^\star)}_\perm{\ggi}:=1$, 
	$\Delta \eta_\maptg{j}$ has a coefficient	$\greedki^{(\maptg{\ell})}_\perm{\ggi}:=1$  
	for $\ell=1, \ldots, \ggi$ and $\Delta\theta_\perm{\ggi} = 0$.
	Thus, in what follows we analyze the system in its most general form
	where
	$$
	\Delta \eta_{i^\star} + \sum_{j=1}^{\ggi} \Delta \eta_\maptg{j} \le 0
	$$
	is replaced with
	$$
	\greedki^{(i^\star)}_\perm{\ggi} \Delta \eta_{i^\star} + 
	\sum_{\ell=1}^\ggi \greedki^{(\maptg{\ell})}_\perm{\ggi}
	\Delta \eta_\maptg{\ell} - \Delta\theta_\perm{\ggi} \le 0.
	$$
	As we showed in the main proof of Proposition~\ref{prop:VS_}, 
	the values of $\beta$ and $\beta_j$, $j=1, \ldots, \ggi$, 
	$\gamma_j$, $j=1, \ldots, J$, which we want to prove they are positive, 
	are given by the following system of equations
	\begin{equation}\label{eq:Zij}
	\begin{aligned}
	& \textstyle \LC_\maptg{\ell} = \sum_{j=\ell}^\ggi \beta_j \greedki^{(\maptg{\ell})}_\perm{j} \\
	& \textstyle \LC_{i^\star} = - \beta + \sum_{j=1}^\ggi \beta_j \greedki^{(i^\star)}_\perm{j} \\
	& \LC = \gamma_{j} + \beta_{j} \quad j \in \{\sigma_\ell: 
		\ell = 1, \ldots, \ggi \} \\
	& \LC = \gamma_j \quad j \in \{\sigma_\ell: 
		\ell = \ggi+1, \ldots, J \}.
	\end{aligned} 
	\end{equation}
	It is straightforward from (\ref{eq:Zij}) that 
	$\beta_\ggi = \frac{\LC_\perm{\ggi}}{\greedki^{(\maptg{\ggi})}_\perm{\ggi}} > 0$.
	We will now show $\greedki^{(\maptg{\ell})}_\perm{\ell} \beta_{\ell} 
	> \LC_\maptg{\ell}/2 > 0$ 
	for $\ell = 1, \ldots, \ggi-1$, when $\LC_i > (2\Kmax+1) \LC_{i+1}$,
	$i=1, \ldots \gtcnt-1$ based on assumptions.
	For shorthand purposes we also define $C := 2\Kmax+1$. 
	The proof is as follows
	\begin{equation}
	\begin{aligned}
		&\greedki^{(\maptg{\ell})}_\perm{\ell} \beta_{\ell} = \LC_\maptg{\ell}
		- \sum_{j=\ell+1}^\ggi \greedki^{(\maptg{\ell})}_\perm{j} \beta_{j}
		\stackrel{(a)}{\ge} \LC_\maptg{\ell} - \Kmax 
		\sum_{j=\ell+1}^\ggi \LC_\maptg{j} \stackrel{(b)}{>} \\
		&\LC_\maptg{\ell} -
		\Kmax \sum_{j=\ell+1}^\ggi \LC_\maptg{\ell} C^{\ell-j} >
		\LC_\maptg{\ell}\left(1 - \frac{\Kmax}{C-1}\right) = \LC_\maptg{\ell}/2.
	\end{aligned}
	\end{equation}
	In (a) we used (\ref{eq:Zij}), according to which, considering 
	$\beta_{\ell^\prime} > 0$ for $\ell^\prime = \ell+1, \ldots, \ggi$,
	we have $\greedki^{(\maptg{j})}_\perm{j} \beta_{j} <
	\LC_\maptg{j}$ or $\greedki^{(\maptg{\ell})}_\perm{j} \beta_{j} < 
	\frac{\greedki^{(\maptg{\ell})}_\perm{j}}
	{\greedki^{(\maptg{j})}_\perm{j}} 
	\LC_{\maptg{j}} \leq K \LC_{\maptg{j}}$ and
	in (b) we used that for $\ell < j$,
	$\LC_\maptg{j} < C^{\ell-j} \LC_{\maptg{j} + \ell-j} \le 
	C^{\ell-j} \LC_{\maptg{\ell}}$. 

	To prove $\beta > 0$, suppose $m$ is the lowest index for which
	$\greedki^{(i^\star)}_{\perm{m}} > 0$.
	Then we will have
	\begin{equation*}
		\beta = - \LC_{i^\star} 
		+ \sum_{j=1}^\ggi \beta_j \greedki^{(i^\star)}_\perm{j} >
		- \LC_{i^\star} + \beta_m \greedki^{(i^\star)}_\perm{m} 
		\stackrel{(a)}{>} - \LC_{i^\star} +
		\frac{\greedki^{(i^\star)}_{i_m}}{\greedki^{(\maptg{m})}_\perm{m}}
		\LC_{\perm{m}}/2 \stackrel{(b)}{>} 0.
	\end{equation*}
	Inequality (a) uses just that $\greedki^{(\maptg{\ell})}_\perm{\ell} \beta_{\ell} 
	> \LC_\perm{\ell}/2$ for $\ell = m$, which we have already proved.
	Inequality (b) follows considering that $\maptg{m} < i^\star$ and that
	$\LC_{i^\star} < {\LC_{i^\star+1}}/C < 
	\frac{\greedki^{(i^\star)}_\perm{m}}{\greedki^{(\maptg{m})}_\perm{m}}
	\LC_{\maptg{m}}/2$. 
	
	To show that $\maptg{m} < i^\star$ we notice that 
	$\greedk^{(i^\star)}$ and $\greedk^{(\maptg{m})}$ are two 
	different configurations whose job types belong to
	$\{\perm{m}, \ldots, \perm{J}\}$ and $\greedk^{(\maptg{m})}$ is the 
	configuration of maximum reward that has this property,
	so its index as given from Definition~\ref{def:greedy_conf}
	should be lower then $i^\star$.
	
	Lastly we need to show that $\gamma_j > 0$ for $j=1, \ldots, J$.
	If $j \in \{\ggi+1, \ldots, J\}$ then $\gamma_j = \LC > 0$.
	If $j \in \{1, \ldots, \ggi\}$ then by using that
	for $\ell=1, \ldots, \gtcnt$ we have $\LC > \LC_\ell$ because of 
	\dref{eq:Vcond} and 
	$\LC_\maptg{\ell} \ge \beta_{\ell}$ because of (\ref{eq:Zij}), we get
	\begin{equation}
	\gamma_{j} = \LC - \beta_{j} > \LC_\maptg{j} - \beta_{j} \ge 0.
	\end{equation}

\section{Proof of Lemma~\ref{lem:Dx}}\label{proof:lem:Dx}
	Consider the function $\subL$ as in Definition~\ref{def:sublen}. 
	Recall that $\LC_i > \xi \LC_{i+1}$ for $i=1, \ldots, \gtcnt-1$ and 
	$\LC_{\gtcnt} > 0$. We choose $\xi$ such that:
	\begin{equation}\label{eq:xidef}
	\xi > \frac{\mumax}{\mumin} \left(12 \Kmax^2
	+ 16 \cfcnt \Kmax^2\frac{2(\mumax \Kmax + \sum_{j=1}^J \lambda_j)}{\delta}
	\left(12\Kmax\frac{\mumax}{\mumin}\right)^\gcnt
	\right),
	\end{equation}
	and $\LC$ is chosen such that $\LC > 4\LC_1.$
	We first show the following lemma.
	\begin{lemma}\label{lem:overm}
		For any $m \in \{1, \ldots, \maxm{n}\}$, with probability greater than $1 - o(L^{-2})$,
		\begin{equation}\label{eq:r_weightedC}
		\begin{aligned}
		& \sum_{i=1}^{\gtcnt} \LC_i
		\left(
		x^{L(e)}_{\greedk^{(i)}} ({\tau}^{(m+1)}_{n}) 
		- x^{L(e)}_{\greedk^{(i)}} ({\tau}^{(m)}_{n})
		\right) + \\
		&\sum_{j=1}^J \LC ({\tau}^{(m+1)}_{n} - {\tau}^{(m)}_{n}) 
		(\mu_j y_j(t) - \lambda_j)^+ \ge \\
		&\delta(\epsilon_V)({\tau}^{(m+1)}_{n} - {\tau}^{(m)}_{n}) 
		+ o({\subL}).
		\end{aligned}
		\end{equation}
	\end{lemma}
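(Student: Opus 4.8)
The plan is to divide the asserted inequality through by $\Delta := \tau^{(m+1)}_n - \tau^{(m)}_n$ (which, by the construction of Section~\ref{sec:subintervals}, is $\Theta(\subL)$) and reduce it to the statement that, on the probability-$(1-o(L^{-2}))$ event on which the conclusions of Proposition~\ref{prop:properties} and Lemma~\ref{lem:rbound} all hold,
\[
S:=\sum_{i=1}^{\gtcnt}\LC_i\,\nabla x^{L(e)}_{\greedk^{(i)}}[\tau^{(m)}_n,\tau^{(m+1)}_n]\;+\;\LC\sum_{j=1}^{J}(\mu_j y_j(t)-\lambda_j)^+\;\ge\;\delta(\epsilon_V)+\frac{o(\subL)}{\Delta}.
\]
I would then split the index set $\{1,\dots,\gtcnt\}$ at $\ell_m=G_m+1$ (recall $\ell_m\le\gtcntp{\alpha}(t)$). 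For $i\le G_m$, Property~\ref{pr:exact} combined with Lemma~\ref{lem:hrec} gives $\nabla x^{L(e)}_{\greedk^{(i)}}[\tau^{(m)}_n,\tau^{(m+1)}_n]=h^{\subind[m],(i)}(t)+\frac{o(\subL)}{\Delta}$. For $i=\ell_m$, Property~\ref{pr:bound1} (if $\ell_m<\gtcnt$) or Property~\ref{pr:bound2} (if $\ell_m=\gtcnt$) supplies a strictly positive lower bound on $\nabla x^{L(e)}_{\greedk^{(\ell_m)}}$. For $i>\ell_m$, Lemma~\ref{lem:rbound} gives the crude bound $\nabla x^{L(e)}_{\greedk^{(i)}}[\tau^{(m)}_n,\tau^{(m+1)}_n]>-B_i\epsilon+o(1)$, which is harmless once $\epsilon\to0$ is taken.

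The crux is controlling the weighted sum $\sum_{i\le G_m}\LC_i h^{\subind[m],(i)}(t)$, which a priori could be large and negative because the weights $\LC_i$ are largest for small $i$. Here I would use that $\ji{i}\in\exactJ{i}$, so $\rho_{\ji{i}}=\sum_{\ell=1}^{i}\greedki^{(\ell)}_{\ji{i}}\globlt{x}^{(g)}_{\greedk^{(\ell)}}$, and that $\tau^{(m)}_n$ is by construction a time at which $q^L_{\greedk^{(i)},\ji{i}}(\tau^{(m)}_n)\ge0$, so $y_{\ji{i}}(t)=\sum_{\ell=1}^{i}\greedki^{(\ell)}_{\ji{i}}x^{(e)}_{\greedk^{(\ell)}}(t)+o(1)$; the defining equations \dref{eq:h_system} then become the lower-triangular system $\sum_{\ell=1}^{i}\greedki^{(\ell)}_{\ji{i}}h^{\subind[m],(\ell)}(t)=\mu_{\ji{i}}(\rho_{\ji{i}}-y_{\ji{i}}(t))$, $i=1,\dots,G_m$. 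Inverting it exactly as in the proof of Proposition~\ref{prop:VS_}, one obtains $\sum_{i\le G_m}\LC_i h^{\subind[m],(i)}(t)=\sum_{i\le G_m}\beta_i\,\mu_{\ji{i}}(\rho_{\ji{i}}-y_{\ji{i}}(t))$, where the dual coefficients $\beta_i$ (determined by $\sum_{i\ge\ell}\beta_i\greedki^{(\ell)}_{\ji{i}}=\LC_\ell$) satisfy $0<\beta_i\le\LC_1$ — this is where the hierarchy $\LC_\ell>\xi\LC_{\ell+1}$ with $\xi>2\Kmax+1$ is invoked. Combining with the overload terms and using $\LC>4\LC_1$,
\[
\sum_{i\le G_m}\LC_i h^{\subind[m],(i)}(t)+\LC\sum_{j=1}^{J}(\mu_j y_j(t)-\lambda_j)^+\;\ge\;\sum_{i\le G_m}(\LC-\beta_i)\,\mu_{\ji{i}}(y_{\ji{i}}(t)-\rho_{\ji{i}})^+\;\ge\;0,
\]
since a summand $\beta_i\mu_{\ji{i}}(\rho_{\ji{i}}-y_{\ji{i}}(t))$ is negative only when $y_{\ji{i}}(t)>\rho_{\ji{i}}$, in which case it is over-compensated by $\LC\mu_{\ji{i}}(y_{\ji{i}}(t)-\rho_{\ji{i}})$. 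Hence all the net positive drift comes from the $\ell_m$ term: $S\ge\LC_{\ell_m}\nabla x^{L(e)}_{\greedk^{(\ell_m)}}-\epsilon\sum_{i>\ell_m}\LC_i B_i+\frac{o(\subL)}{\Delta}+o(1)$.

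It then remains to bound $\LC_{\ell_m}\nabla x^{L(e)}_{\greedk^{(\ell_m)}}$ below by a positive constant. If $\ell_m<\gtcnt$, Property~\ref{pr:bound1} gives $\ge\tfrac{\mumin}{2}\LC_{\ell_m}\alpha_{\ell_m}$, and since $\ell_m\le\gtcntp{\alpha}(t)$ both $\alpha_{\ell_m}$ and $\LC_{\ell_m}$ are bounded below by fixed positive constants, so for $\epsilon$ small enough $S$ exceeds a fixed $\delta(\epsilon_V)>0$. If $\ell_m=\gtcnt$, Property~\ref{pr:bound2} gives the minimum of two quantities; I would lower bound the first using Lemma~\ref{lem:rank0frac} (the empty, rank-$(J{+}1)$ server fraction is at least $\tfrac12\globlt{x}^{(g)}_{\greedk^{(\gtcnt)}}$, which is where $V(\statez(t))>\epsilon_V$ is used to keep $1-\sum_{i\le\gtcnt}x^{(e)}_{\greedk^{(i)}}(t)$ bounded away from $0$) together with Lemma~\ref{lem:hbound} (to control $\sum_{i<\gtcnt}(\nabla x^{L(e)}_{\greedk^{(i)}})^+$ by $O(\mumax\sum_\ell\alpha_\ell)$), and the second using property \dref{eq:non_trivial}; the explicit large value of $\xi$ in \dref{eq:xidef} is precisely what makes the first of these dominate the reassignment losses. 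Multiplying back by $\Delta$ and bookkeeping the $o(\subL)$, $o(1)\Delta$ terms then yields the lemma with probability $1-o(L^{-2})$. The main obstacle is the $\ell_m=\gtcnt$ case: balancing the rate at which Reject-Group servers empty (Lemma~\ref{lem:emptyrate}) against the rate at which those servers get re-assigned to lower-index configurations, which is exactly what forces the constant in \dref{eq:xidef}; a secondary difficulty is fixing the parameters in the right order ($\epsilon_\rho$ and the $\alpha_i$ first via Proposition~\ref{prop:properties}, then $\delta(\epsilon_V)$, then $\epsilon\to0$, then $L\to\infty$) so that every $O(\epsilon)$, $o(1)$, and $o(\subL)/\Delta$ term is genuinely negligible while the dual-positivity argument still goes through with the prescribed weights.
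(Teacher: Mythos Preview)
Your treatment of the case $\ell_m<\gtcnt$ is essentially the paper's argument: invert the lower-triangular system from \ref{pr:exact} to express $\sum_{i<\ell_m}\LC_i\nabla x^{L(e)}_{\greedk^{(i)}}$ as $\sum_i\beta_i\mu_{\ji{i}}(\rho_{\ji{i}}-y_{\ji{i}}(t))$, and then absorb any negative summands into the overload term $\LC\sum_j(\mu_jy_j-\lambda_j)^+$. Two small fixes: the compensation inequality you display is not quite right because different $i$ may share the same $\ji{i}$, so a single $\LC(\mu_jy_j-\lambda_j)^+$ may have to cover several $\beta_i$ terms; the paper handles this by allocating weight $\LC 2^{-i}$ to index $i$ (so that $\sum_i 2^{-i}\dsone(\ji{i}=j)<1$). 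Also, $\Delta=\tau^{(m+1)}_n-\tau^{(m)}_n$ is a stopping-time difference and is \emph{not} $\Theta(\subL)$ in general; only the enclosing interval $[\tau_n,\tau_{n+1})$ has length $\Theta(\epsilon\subL)$. This does not break the argument, but you should keep the $o(\subL)$ terms additive rather than divide by $\Delta$.

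The genuine gap is in the case $\ell_m=\gtcnt$. You try to obtain a uniformly positive lower bound on $\nabla x^{L(e)}_{\greedk^{(\gtcnt)}}$ alone, invoking Lemma~\ref{lem:rank0frac} and the hypothesis $V(\statez(t))>\epsilon_V$ to keep $1-\sum_{i\le\gtcnt}x^{(e)}_{\greedk^{(i)}}(t)$ bounded away from zero. This cannot work: Lemma~\ref{lem:rank0frac} only controls $1-\sum_{i\le\gtcntp{\alpha}(t)}x^{(e)}_{\greedk^{(i)}}$ with $\gtcntp{\alpha}(t)\le\gtcnt-1$, and the quantity $1-\sum_{i\le\gtcnt}x^{(e)}_{\greedk^{(i)}}(t)=\sum_{i\le\gtcnt}(\globlt{x}^{(g)}_{\greedk^{(i)}}-x^{(e)}_{\greedk^{(i)}}(t))$ vanishes at the target $\setOpt$ and is not bounded below by $V(\statez(t))$ (e.g.\ $V$ can be large purely through the overload terms $(y_j-\rho_j)^+$ while $x^{(e)}=\globlt{x}^{(g)}$). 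Likewise the second branch of \ref{pr:bound2} gives nothing positive near $\setOpt$ when the active $j$ satisfies the equality alternative in \dref{eq:non_trivial}. The paper does \emph{not} try to bound $\nabla x^{L(e)}_{\greedk^{(\gtcnt)}}$ by itself here; instead it combines \ref{pr:bound2} with \ref{pr:exact} for $i<\gtcnt$ (choosing constants $\LC_i^\star$ via a system analogous to your $\beta$'s, now augmented with a $\gtcnt$-th equation) so that the total lower bound is a nonnegative function $W^a_{\gtcnt}(\statez)$ (or $W^b_{\gtcnt}$, depending on which branch of the minimum in \ref{pr:bound2} is active) that vanishes exactly on $\setOpt$ and is positively homogeneous and subadditive in $\statez-\setOpt$. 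Finite-dimensional norm equivalence then gives $W^{a,b}_{\gtcnt}(\statez(t))\ge c\,V(\statez(t))>c\,\epsilon_V$, and \emph{this} is where the dependence of $\delta(\epsilon_V)$ on $\epsilon_V$ enters. Without this norm-equivalence step your $\ell_m=\gtcnt$ case does not close.
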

	\begin{proof}
		We will use \ref{pr:exact}, \ref{pr:bound1} and \ref{pr:bound2} to refer
		to the properties in Proposition~\ref{prop:properties} and whenever
		we apply such a property any resulting relation holds with probability
		$1 - o(L^{-2})$.
		
		For compactness we also define
		\begin{equation*}
		W^{(m)}(t) := \sum_{i=1}^{\gtcnt} \LC_i
		\nabla x^{L(e)}_{\greedk^{(i)}} [{\tau}^{(m)}_{n}, {\tau}^{(m+1)}_{n}]
		+ \sum_{j=1}^J \LC (\mu_j y_j(t) - \lambda_j)^+.
		\end{equation*}
		Our objective is thus to find $\delta(\epsilon_V)$ such that
		\begin{equation}
		W^{(m)}(t) \ge \delta(\epsilon_V) + 
		\frac{o({\subL})}{{\tau}^{(m+1)}_{n} - {\tau}^{(m)}_{n}},
		\end{equation}
		with probability $1 - o(L^{-2})$.
		
		We will analyze two separate cases depending on whether 
		$\ell_m < \gtcnt$ or $\ell_m = \gtcnt$.
		
		\textbf{Case $\ell_m < \gtcnt$:} 
		Considering Lemma~\ref{lem:rbound} and Property~\ref{pr:bound1},
		we first can get the following bound
		\begin{equation}\label{eq:bound3}
		\begin{aligned}
		& \sum_{i=\ell_m}^{\gtcnt} \LC_i \left(		
		x^{L(e)}_{\greedk^{(i)}} ({\tau}^{(m+1)}_{n}) 
		- x^{L(e)}_{\greedk^{(i)}} ({\tau}^{(m)}_{n}) \right)
		> \\
		& \left(\LC_{\ell_m} \alpha_{\ell_m} - \sum_{i=\ell_m+1}^{\gtcnt} \LC_i B_i\right)
		\left({\tau}^{(m+1)}_{n} - {\tau}^{(m)}_{n}\right)
		+ o(\subL) \stackrel{(a)}{>} \\
		& \LC_{\ell_m} \alpha_{\ell_m}/2({\tau}^{(m+1)}_{n} - {\tau}^{(m)}_{n}) + {o(\subL)},
		\end{aligned}
		\end{equation}
		where (a) follows from definitions of $\xi$, for which 
		$\LC_i > \xi \LC_{i+1}$, and $\alpha_{\ell_m}$ given in
		(\ref{eq:xidef}) and (\ref{eq:alphadef}) respectively.
		If we further show that
		\begin{equation}\label{eq:inter1}
			\sum_{i=1}^{\ell_m-1} 
			\LC_i \nabla x^{(e)}_{\greedk^{(i)}}
			[\tau^{(m)}_{n}, \tau^{(m+1)}_{n}]
			+ \sum_{j=1}^J \LC (\mu_j y_j(t) - \lambda_j)^+ \ge 
			\frac{o(\subL)}{\tau^{(m+1)}_n - \tau^{(m)}_{n}},
		\end{equation}
		then it follows from \dref{eq:inter1} and \dref{eq:bound3} that 
		\begin{equation}\label{eq:delta1}
		\begin{aligned}
		W^{(m)}(t) \ge \LC_{\ell_m}\alpha_{\ell_m}/2 + \frac{o(\subL)}
		{\tau^{(m+1)}_{n} - \tau^{(m)}_{n}}.
		\end{aligned}
		\end{equation}

		One way to show (\ref{eq:inter1}) is to find constants $\LC_i^\star$ for 
		$i=1, \ldots,\ell_m-1$ such that
		\begin{equation}\label{eq:Zp1}
		\begin{aligned}
		&\sum_{i=1}^{\ell_m-1} 
		\LC_i \nabla x^{L(e)}_{\greedk^{(i)}}
		[\tau^{(m)}_{n}, \tau^{(m+1)}_{n}]
		+ \sum_{j=1}^J \LC \mu_j(y_j(t) - \rho_j)^+ \ge \\
		&\sum_{i=1}^{\ell_m-1} 
		\LC_i^\star
		(\lambda_{\ji{i}} - \mu_{\ji{i}} y_{\ji{i}}(t))
		+ \frac{o(\subL)}{\tau^{(m+1)}_n - \tau^{(m)}_{n}}
		\end{aligned}
		\end{equation}
		and 
		\begin{equation}\label{eq:Zp2}
		\LC_i^\star
		(\lambda_{\ji{i}} - \mu_{\ji{i}} y_{\ji{i}}(t)) \ge 0
		\quad i=1, \ldots, \ell_m-1.
		\end{equation}
		
		A choice of constants $\LC_i^\star$ that satisfies (\ref{eq:Zp1}) 
		and (\ref{eq:Zp2}) for $i=1,\ldots,\ell_m-1$ is
		\begin{equation*}
			\LC_i^\star = \LC_i^\prime - \dsone ( \lambda_{\ji{i}} -\mu_{\ji{i}} y_{\ji{i}}(t) < 0) \LC 2^{-i}
		\end{equation*}	
		where constants $\LC^\prime_{i}$ are given by the 
		following system of equations 
		\begin{equation}\label{eq:Z_Zp}
			\sum_{\ell=i}^{\ell_m-1} 
			\greedk^{(\ell)}_{\ji{i}} \LC_{\ell}^\prime = \LC_i
			\quad i = 1,\ldots, \ell_m-1.
		\end{equation}

		We will now justify why this choice of $\LC_i^\star$ satisfies
		(\ref{eq:Zp1}) and (\ref{eq:Zp2}).
		The requirement (\ref{eq:Zp1}) can be inferred by adding 
		the next two relationships.

		The first relationship is 
		\begin{equation}\label{eq:Zp1a}
		\begin{aligned}
		&\sum_{i=1}^{\ell_m-1}
		\LC_i^\prime (\lambda_{\ji{i}} - \mu_{\ji{i}} y_{\ji{i}}(t)) \stackrel{(a)}{=} \\
		&\sum_{i=1}^{\ell_m-1}
		\LC_i^\prime \sum_{\ell=1}^{i} \greedki^{(\ell)}_\ji{i}
		\nabla {x}^{L(e)}_{\greedk^{(\ell)}}[\tau^{(m)}_n, \tau^{(m+1)}_n]
		+ \frac{o(\subL)}{\tau^{(m+1)}_n - \tau^{(m)}_{n}} \stackrel{(b)}{=} \\
		&\sum_{i=1}^{\ell_m-1} \LC_i 
		\nabla {x}^{L(e)}_{\greedk^{(i)}}[\tau^{(m)}_n, \tau^{(m+1)}_n]
		+ \frac{o(\subL)}{\tau^{(m+1)}_n - \tau^{(m)}_{n}},
		\end{aligned}
		\end{equation}
		where (a) is due to \ref{pr:exact}  
		for $i=1, \ldots, \ell_m-1$ and (b) is due to (\ref{eq:Z_Zp}).
		The second relationship is
		\begin{equation}\label{eq:Zp1b}
		\begin{aligned}
		&- \sum_{i=1}^{\ell_m-1}
		\LC 2^{-i} ( \lambda_{\ji{i}}-\mu_{\ji{i}} y_{\ji{i}}(t) )
		\dsone(\mu_{\ji{i}} y_{\ji{i}}(t) > \lambda_{\ji{i}}) = \\
		&\sum_{j=1}^J \LC (\mu_{j} y_{j}(t) - \lambda_{j})^+
		\sum_{i=1}^{\ell_m-1}  2^{-i} \dsone(\ji{i}=j) \le 
		\sum_{j=1}^J \LC (\mu_j y_{j}(t) - \lambda_{j})^+.
		\end{aligned}
		\end{equation}
		
		Finally, we should prove that (\ref{eq:Zp2}) also holds. 
		If $\lambda_{\ji{i}} - \mu_{\ji{i}} y_{\ji{i}}(t) \ge 0$ it 
		suffices that $\LC^\star_{i} > 0$ or $\LC^\prime_{i} > 0$.
		For this, we will show recursively that 
		\begin{equation}\label{eq:Zrec}
		2 \LC_i > \LC^\prime_{i} > 0, \quad i=1,\ldots,\ell_m-1,
		\end{equation}
		For $i = \ell_m-1$, using \dref{eq:Z_Zp}, we have
		\begin{equation}
			2 \LC_{\ell_m-1} > \LC_{\ell_m-1} \frac{1}{\greedk^{(\ell_m-1)}_{\ji{\ell_m-1}}} 
			= \LC^\prime_{\ell_m-1} > 0,
		\end{equation}
		while for $i < \ell_m-1$,
		\begin{equation}
		\begin{aligned}
		&2 \LC_{i} > \LC_{i} \frac{1}{\greedk^{(i)}_{\ji{i}}} 
		- \sum_{\ell=i+1}^{\ell_m-1}  \frac{\greedk^{(\ell)}_{\ji{i}}}{\greedk^{(i)}_{\ji{i}}} \LC^\prime_\ell \\
		& \stackrel{(a)}{=} \LC^\prime_{i} > \frac{1}{\greedk^{(i)}_{\ji{i}}}
		\left( \LC_{i} - \sum_{\ell=i+1}^{\ell_m-1} \Kmax \LC^\prime_\ell \right) \\
		&> \frac{1}{\greedk^{(i)}_{\ji{i}}}
		\left(\LC_{i} - \sum_{\ell=i+1}^{\ell_m-1} \Kmax 2 \LC_\ell\right) 
		\stackrel{(b)}{>} \frac{1}{\greedk^{(i)}_{\ji{i}}}\LC_{\ell_m-1} > 0,
		\end{aligned}
		\end{equation}
		where in (a) we used \dref{eq:Z_Zp}, and in (b) we used $\LC_\ell > (2\Kmax + 1) \LC_{\ell+1}$ for 
		any $\ell < \ell_m-1$. Notice that this claim is consistent with 
		assumption $\LC_\ell > \xi \LC_{\ell+1}$, since $\xi > 2\Kmax + 1$.
		
		If $\lambda_{\ji{i}} - \mu_{\ji{i}} y_{\ji{i}}(t) < 0$, it 
		suffices that $\LC^\star_{i} < 0$ or $\LC^\prime_{i} < 2^{-i} \LC$. We can get this result from (\ref{eq:Zrec}), which we proved
		earlier, as follows
		\begin{equation}\label{eq:inter2}
		\LC^\prime_{i} < 2 \LC_i \le 2 (2\Kmax + 1)^{-i+1} \LC_1 <
		2^{-i} \LC.
		\end{equation}
		where the last inequality is because $4 \LC_1 < \LC$.

		\textbf{Case $\ell_m = \gtcnt$:} For notation compactness, for 
		$j \in \mathcal J$ such that $\greedki^{(\gtcnt)}_j > 0$, we define:
		\begin{equation*}
		\begin{aligned}
		f_j(t)  := 
		\frac{1}{\greedki^{(\gtcnt)}_j} \Big(\lambda_j - \mu_j y_j(t) 
			- \sum_{i=1}^{\gtcnt-1} \greedki^{(i)}_j 
			h^{\subind[m],(i)}(t)\Big),
		\end{aligned}
		\end{equation*}
		\begin{equation*}
		f^\star(t)  := \frac{\mumin}{J \Kmax \cfcnt^2}
		\Big(1 - \sum_{i=1}^{\gtcnt}{x}^{(e)}_{\greedk^{(i)}}(t)\Big)
		- \sum_{i=1}^{\gtcnt-1} \left( h^{\subind[m],(i)}(t) \right)^+,
		\end{equation*} 
		\begin{equation*}
		j^\prime  := \argmin_{j \in \calJ: \greedki^{(\gtcnt)}_j > 0} f_j(t).
		\end{equation*}
		If $\gtcnt=\gcnt$ the set $\{j \in \calJ: \greedki^{(\gtcnt)}_j > 0 \}$ is empty in which case we consider $j^\prime=\emptyset$.
		We distinguish two sub-cases.
		
		\textbf{Subcase $j^\prime=\emptyset$ or $f_{j^\prime}(t) \ge f^\star(t)$:}
				In this case it suffices to find constants $\LC_{i}^\star$ for
		$i=1, \ldots, \ell_m$, such that
		\begin{equation}\label{eq:DxI1}
		\begin{aligned}
		&\sum_{i=1}^{\gtcnt} 
		\LC_i \nabla x^{(e)}_{\greedk^{(i)}}
		[\tau^{(m)}_{n}, \tau^{(m+1)}_{n}] 
		+ \sum_{j=1}^J \LC \mu_j (y_j(t) - \rho_j)^+
		\ge \\
		&\sum_{i=1}^{\gtcnt-1} \LC_{i}^\star
		(\lambda_{\ji{i}} - \mu_{\ji{i}} y_{\ji{i}}(t))
		+ \LC_\gtcnt^\star \frac{\mumin}{J \Kmax \cfcnt^2}
		\left(1-\sum_{i=1}^{\gtcnt} {x}^{(e)}_{\greedk^{(i)}}(t)\right) \\
		&+ \frac{o(\subL)}{\tau^{(m+1)}_n - \tau^{(m)}_{n}},
		\end{aligned}
		\end{equation}
		and if we further define
		\begin{equation}\label{eq:DxI1__}
		\begin{aligned}
		& W^a_\gtcnt(\statez(t)) := &\sum_{i=1}^{\gtcnt-1} \LC_{i}^\star
		(\lambda_{\ji{i}} - \mu_{\ji{i}} y_{\ji{i}}(t)) + \\
		&&\LC_{\gtcnt}^\star \frac{\mumin}{J \Kmax \cfcnt^2}
		(1-\sum_{i=1}^{\gtcnt} {x}^{(e)}_{\greedk^{(i)}}(t)).
		\end{aligned}
		\end{equation}
		then
		\begin{equation}\label{eq:Zp3}
		\LC_i^\star
		(\lambda_{\ji{i}} - \mu_{\ji{i}} y_{\ji{i}}(t)) \ge 0
		\quad i=1, \ldots, \gtcnt-1,
		\end{equation}
		\begin{equation}\label{eq:Zp3_}
		\LC_{\gtcnt}^\star \frac{\mumin}{J \Kmax \cfcnt^2}
		\left(1-\sum_{i=1}^{\gtcnt} {x}^{(e)}_{\greedk^{(i)}}(t)\right) \ge 0,
		\end{equation}
		and
		\begin{equation}\label{eq:Wa_cond}
		W^a_\gtcnt(\statez(t)) > 0 \Leftrightarrow \statez(t) \not \in \setOpt.
		\end{equation}
		A choice of $\LC_i^\star$ that satisfies those 
		requirements is
		\begin{equation}\label{eq:Zstar2gt}
		\LC_\gtcnt^\star = \LC_\gtcnt,
		\end{equation}
		and
		\begin{equation}\label{eq:Zstar2}
			\LC_i^\star = \LC_i^\prime - 
			\dsone ( \lambda_{\ji{i}} -\mu_{\ji{i}} y_{\ji{i}}(t) < 0) \LC 2^{-i}
		\end{equation}
		where constants $\LC^\prime_{i}$ are given by the 
		following system of equations 
		\begin{equation}\label{eq:Z_Zp2}
		\begin{aligned}
		&\dsone\left(h^{\subind[m], (i)}(t) > 0\right) 
		\LC_\gtcnt^\star 
		+ \sum_{\ell=i}^{\gtcnt-1} 
		\greedk^{(\ell)}_{\ji{i}} \LC_{\ell}^\prime = \LC_i
		\quad i = 1,\ldots, \gtcnt-1. \\
		\end{aligned}
		\end{equation} 

		We will now justify why this choice of $\LC_i^\star$ satisfies
		(\ref{eq:DxI1}) and (\ref{eq:Zp3}) and (\ref{eq:Zp3_}).
		To prove (\ref{eq:DxI1}) we can add three relationships (\ref{eq:Zp3a}), (\ref{eq:Zp3b}) and
		(\ref{eq:Zp3c}) that we prove below. The first relationship is 
		\begin{equation}\label{eq:Zp3a}
		\begin{aligned}
		&\sum_{i=1}^{\gtcnt-1}
		\LC_i^\prime (\lambda_{\ji{i}} - \mu_{\ji{i}} y_{\ji{i}}(t)) \stackrel{(a)}{=} \\
		&\sum_{i=1}^{\gtcnt-1}
		\LC_i^\prime \sum_{\ell=1}^{i} \greedki^{(\ell)}_\ji{i}
		\nabla {x}^{L(e)}_{\greedk^{(\ell)}}[\tau^{(m)}_n, \tau^{(m+1)}_n]
		+ \frac{o(\subL)}{\tau^{(m+1)}_n - \tau^{(m)}_{n}} = \\
		&\sum_{i=1}^{\gtcnt-1} \LC_i 
		\nabla {x}^{L(e)}_{\greedk^{(i)}}[\tau^{(m)}_n, \tau^{(m+1)}_n]
		+ \frac{o(\subL)}{\tau^{(m+1)}_n - \tau^{(m)}_{n}},
		\end{aligned}
		\end{equation}
		where (a) is due to \ref{pr:exact} 
		for $i=1, \ldots, \gtcnt-1$. The second relationship is
		\begin{equation}\label{eq:Zp3b}
		\begin{aligned}
		&\LC_\gtcnt^\star \frac{\mumin}{J \Kmax \cfcnt^2}
		\left( 1 - \sum_{i=1}^{\gtcnt} {x}^{(e)}_{\greedk^{(i)}}(t) \right) 
		\stackrel{(a)}{\le} \\
		&\LC_\gtcnt^\star 
		\nabla {x}^{L(e)}_{\greedk^{(\gtcnt)}}[\tau^{(m)}_n, \tau^{(m+1)}_n] \\
		&+ \LC_\gtcnt^\star \sum_{i=1}^{\gtcnt-1}
		\left(\nabla {x}^{L(e)}_{\greedk^{(i)}}[\tau^{(m)}_n, \tau^{(m+1)}_n]\right)^+
		+ \frac{o(\subL)}{\tau^{(m+1)}_n - \tau^{(m)}_{n}} \stackrel{(b)}{=} \\
		&\LC_\gtcnt^\star 
		\nabla {x}^{L(e)}_{\greedk^{(\gtcnt)}}[\tau^{(m)}_n, \tau^{(m+1)}_n] \\
		&+ \LC_\gtcnt^\star \sum_{i=1}^{\gtcnt-1}
		\dsone(h^{\subind[m], (i)}(t) > 0) 
		\nabla {x}^{L(e)}_{\greedk^{(i)}}[\tau^{(m)}_n, \tau^{(m+1)}_n] \\
		&+ \frac{o(\subL)}{\tau^{(m+1)}_n - \tau^{(m)}_{n}},
		\end{aligned}
		\end{equation}
		where (a) is due to \ref{pr:bound2}  
		and assumption $j^\prime = \emptyset$ or $f_{j^\prime}(t) \ge f(t)$,
		while (b) is due to Lemma~\ref{lem:hrec}.
		The third relationship is
		\begin{equation}\label{eq:Zp3c}
		\begin{aligned}
		&\sum_{i=1}^{\gtcnt-1}
		\LC 2^{-i} (\mu_{\ji{i}} y_{\ji{i}}(t) - \lambda_{\ji{i}})
		\dsone(y_{\ji{i}}(t) > \rho_{\ji{i}}) = \\
		&\sum_{j=1}^J \LC (\mu_{j} y_{j}(t) - \lambda_{j})^+
		\sum_{i=1}^{\gtcnt-1}  2^{-i} \dsone(\ji{i}=j) < 
		\sum_{j=1}^J \LC \mu_j (y_{j}(t) - \rho_{j})^+.
		\end{aligned}
		\end{equation}
		
		Then to prove (\ref{eq:Zp3}) it suffices to show, 
		just like in the case $\ell_m < \gtcnt$, that 
		for $i=1, \ldots, \gtcnt-1$,
		$\LC_i^\star > 0$ if $\lambda_{\ji{i}} - \mu_{\ji{i}} y_{\ji{i}}(t) \ge 0$,
		and $\LC_i^\star < 0$ otherwise, while for (\ref{eq:Zp3_}) to be true
		we need $\LC_\gtcnt^\star > 0$, since $
		(1-\sum_{i=1}^{\gtcnt} {x}^{(e)}_{\greedk^{(i)}}(t)) \ge 0$.
		
		If we define $\LC_\gtcnt^\prime := \LC_\gtcnt^\star$ then it suffices 
		to show recursively
		\begin{equation}\label{eq:Zrec2}
		2 \LC_i > \LC^\prime_{i} > 0 \quad i=1,\ldots,\gtcnt.
		\end{equation}
		The process is very similar to the case $\ell_m < \gtcnt$.
		We can now prove \dref{eq:Wa_cond} as follows.

		Considering 
		$$
		1-\sum_{i=1}^{\gtcnt} {x}^{(e)}_{\greedk^{(i)}}(t) =
		\sum_{i=1}^{\gtcnt} \left(
		\globlt{x}^{(g)}_{\greedk^{(i)}} -
		{x}^{(e)}_{\greedk^{(i)}}(t)\right)
		$$ 
		and for $i=1, \ldots, \gtcnt-1$, 
		$$
		\mu_{\ji{i}} y_{\ji{i}}(t) - \lambda_{\ji{i}} = \sum_{\ell=1}^i
		\greedki^{(\ell)}_\ji{i}
		(x^{(e)}_{\greedk^{(\ell)}}(t) - \globlt{x}^{(g)}_{\greedk^{(\ell)}}).
		$$
		then $W^a_\gtcnt(\statez(t)) = 0$, if and only if
		$\globlt{x}^{(g)}_{\greedk^{(i)}} = x^{(e)}_{\greedk^{(i)}}(t)$ for
		$i=1, \ldots, \gtcnt$ or equivalently if and only if 
		$\statez(t) \in \setOpt$.
		If we define the vectors
		$\mathbf{x}(t) := (x^{(e)}_{\greedk^{(i)}}(t))_{i=1, \ldots,\gtcnt}$,
		$\globlt{x} := (\globlt{x}^{(g)}_{\greedk^{(i)}})_{i=1, \ldots,\gtcnt}$, 
		$\Delta\mathbf{x}(t) := \mathbf{x}(t) - \globlt{x}$ and consider
		$\globlt{x}$ the zero vector in space $\setConv[\epsilon_\rho]$,
		then we can also verify $W^a_\gtcnt(\statez(t))$ satisfies the subadditive and
		absolutely scalable properties, i.e
		\begin{equation}
		\begin{aligned}
		&W^a_\gtcnt(\Delta\mathbf{x}_1(t)) +
		W^a_\gtcnt(\Delta\mathbf{x}_2(t))
		\le W^a_\gtcnt(\Delta\mathbf{x}_1(t) + \Delta\mathbf{x}_2(t)) \\
		&W^a_\gtcnt(\alpha\Delta\mathbf{x}(t)) = |\alpha|
		W^a_\gtcnt(\Delta\mathbf{x}(t)).
		\end{aligned}
		\end{equation}

		Thus $W^a_{\gtcnt}(\statez(t))$ has the properties of
		norm in space $\setConv[\epsilon_\rho]$ and since this space has finite dimensions 
		all of its norms are equivalent which means there is $c^a$ such that		
		\begin{equation}\label{eq:delta2a}
		\begin{aligned}
		&W_{\gtcnt}^a(\statez(t)) \ge c^a V(\statez(t)) > c^a \epsilon_V
		\end{aligned}
		\end{equation}

		\textbf{Subcase $f_{j^\prime}(t) < f(t)$:}
		In this case it suffices to prove
		\begin{equation}\label{eq:DxI2}
		\begin{aligned}
		&\sum_{i=1}^{\gtcnt} 
		\LC_i \nabla x^{(e)}_{\greedk^{(i)}}
		[\tau^{(m)}_{n}, \tau^{(m+1)}_{n}] 
		+ \sum_{j=1}^J \LC \mu_j (y_j(t) - \rho_j)^+
		\ge \\
		&\sum_{i=1}^{\gtcnt} \LC_{i}^\star
		(\lambda_{\ji{i}} - \mu_{\ji{i}} y_{\ji{i}}(t))
		+ \frac{o(\subL)}{\tau^{(m+1)}_n - \tau^{(m)}_{n}},
		\end{aligned}
		\end{equation}
		and 
		\begin{equation}\label{eq:Zp4}
		\LC_i^\star
		(\lambda_{\ji{i}} - \mu_{\ji{i}} y_{\ji{i}}(t)) \ge 0
		\quad i=1, \ldots, \gtcnt.
		\end{equation}

		A choice of values of $\LC_i^\star$ that satisfies those 
		requirements for $i=1,\ldots,\gtcnt-1$ is
		\begin{equation}\label{eq:Zstar4}
		\begin{aligned}
		\LC_i^\star = \LC_i^\prime 
		- \dsone (\lambda_{\ji{i}} -\mu_{\ji{i}} y_{\ji{i}}(t) < 0) \LC 2^{-i}
		\end{aligned}
		\end{equation}
		and if we further define
		\begin{equation}\label{eq:DxI2__}
		\begin{aligned}
		&W^b_\gtcnt(\statez(t)) := \sum_{i=1}^{\gtcnt} \LC_{i}^\star
		(\lambda_{\ji{i}} - \mu_{\ji{i}} y_{\ji{i}}(t)) = 0.
		\end{aligned}
		\end{equation}
		then
		\begin{equation}\label{eq:Zstar4st}
		\LC_\gtcnt^\star = \LC_\gtcnt^\prime 
		- \dsone(\lambda_{j^\prime} - \mu_{j^\prime} y_{j^\prime}(t) < 0)\LC 2^{-\gtcnt}.
		\end{equation}
		and
		\begin{equation}\label{eq:Wb_cond}
		W^b_\gtcnt(\statez(t)) > 0 \Leftrightarrow \statez(t) \not \in \setOpt.
		\end{equation}
		
		Constants $\LC^\prime_{i}$ are given by the 
		following system of equations 
		\begin{equation}\label{eq:Z_Zp4}
		\begin{aligned}
		&\greedk^{(i)}_{j^\prime} \LC_{\gtcnt}^\prime + \sum_{\ell=i}^{\gtcnt-1} 
		\greedk^{(\ell)}_{\ji{i}} \LC_{\ell}^\prime = \LC_i
		\quad i = 1,\ldots, \gtcnt-1, \\
		&\greedk^{(\gtcnt)}_{j^\prime} \LC_{\gtcnt}^\prime = \LC_\gtcnt.
		\end{aligned}
		\end{equation}
		
		We will now justify why this choice of $\LC_i^\star$ satisfies
		(\ref{eq:DxI2}) and (\ref{eq:Zp4}).
		To prove (\ref{eq:DxI2}) we can add relationships 
		(\ref{eq:Zp3a}) and (\ref{eq:Zp3c})
		from sub-case $f_{j^\prime}(t) \ge f(t)$ 
		and (\ref{eq:Zp4b}) proven next.
		\begin{equation}\label{eq:Zp4b}
		\begin{aligned}
		&\LC_\gtcnt^\prime (\lambda_{j^\prime} - \mu_{j^\prime} y_{j^\prime}(t)) \stackrel{(a)}{\le} \\
		&\LC_\gtcnt^\prime \sum_{i=1}^{\gtcnt-1}
		\greedki^{(i)}_{j^\prime}
		\nabla {x}^{L(e)}_{\greedk^{(i)}}[\tau^{(m)}_n, \tau^{(m+1)}_n]
		+ \frac{o(\subL)}{\tau^{(m+1)}_n - \tau^{(m)}_{n}},
		\end{aligned}
		\end{equation}
		where (a) is due to \ref{pr:bound2}  
		and assumption $f_{j^\prime}(t) < f(t)$.
		
		By using systems of equations (\ref{eq:Zstar4}) and (\ref{eq:Zstar4st}),
		we can show, similarly to the case $\ell_m < \gtcnt$, that 
		for $i=1, \ldots, \gtcnt$,
		$\LC_i^\star > 0$ if $\lambda_{\ji{i}} - \mu_{\ji{i}} y_{\ji{i}}(t) \ge 0$
		and $\LC_i^\star < 0$ otherwise, by proving recursively
		\begin{equation}\label{eq:Zrec3}
		2 \LC_i > \LC^\prime_{i} > 0 \quad i=1,\ldots,\gtcnt.
		\end{equation}

		Similarly with $W^a_\gtcnt(\statez(t))$, we can show 
		\dref{eq:Wb_cond} as $W^b_\gtcnt(\statez(t))$
		satisfies all the properties of a norm in space $\setConv[\epsilon_\rho]$.

		Also, since this space has finite dimensions all of its norms are equivalent 
		which means there is $c^b$ such that
		\begin{equation}\label{eq:delta2b}
		\begin{aligned}
		&W_{\gtcnt}^b(\statez(t)) \ge c^b V(\statez(t)) > c^b \epsilon_V.
		\end{aligned}
		\end{equation}

		
		\textbf{Determining $\delta(\epsilon_V)$:} 
        Expressions (\ref{eq:delta1}), (\ref{eq:delta2a}) and (\ref{eq:delta2b})
		give a lower bound on $W^{(m)}(t)$
		in three different cases, thus for any $m\in \{1, \ldots, \maxm{n}\}$ 
		a value of $\delta(\epsilon_V)$ is the minimum of these expressions
		i.e.
		\begin{equation}
			\delta(\epsilon_V) := 
			\min\Big(\min_{i=1, \ldots, \gtcnt-1} \LC_i \alpha_i/2,
			c^a \epsilon_V, c^b \epsilon_V \Big).
		\end{equation}
		This completes the proof.
	\end{proof}

	Summing (\ref{eq:r_weightedC}) in Lemma~\ref{lem:overm}, 
    over all $m \in \{1, \ldots, \maxm{n}\}$, we get
	\begin{equation}\label{eq:r_weightedFinal}
	\begin{aligned}
	& \sum_{i=1}^{\gtcnt} \LC_i
	\left(
	X^{L(e)}_{\greedk^{(i)}} ({\tau}_{n+1}) 
	- X^{L(e)}_{\greedk^{(i)}} ({\tau}_{n})
	\right) \ge \\
	&\delta(\epsilon_V)({\tau}_{n+1} - {\tau}_{n}) - 
	\sum_{j=1}^J \LC ({\tau}_{n+1} - {\tau}_{n}) \mu_j(y_j(t) - \rho_j)^+
	+ o({\subL}).
	\end{aligned}
	\end{equation}
	Since each of (\ref{eq:r_weightedC}) holds with probability
	$1 - o(L^{-2})$ and \dref{eq:r_weightedFinal} is a finite sum
	of them, it will also be satisfied with probability	$1 - o(L^{-2})$.
	
	To show that (\ref{eq:r_weightedFinal}) implies (\ref{eq:rX}), it
	remains to prove that
	\begin{equation}
		\frac{\rm d}{{\rm d}t} (y_j(t) - \rho_j)^+ \le 
		-\mu_j(y_j(t) - \rho_j)^+.
	\end{equation}
	If $y_j(t) < \rho_j$, then $(y_j(t) - \rho_j)^+ = 0$, and obviously
	\begin{equation}
	\frac{\rm d}{{\rm d}t} (y_j(t) - \rho_j)^+ = 
	\frac{\rm d}{{\rm d}t} 0 = 0.
	\end{equation}
	If $y_j(t) > \rho_j$, then $(y_j(t) - \rho_j)^+ = y_j(t) - \rho_j$
	and hence
	\be\label{eq:dyGTrho}
	\frac{\rm d}{{\rm d}t} (y_j(t) - \rho_j)^+ &=& 
	\frac{\rm d}{{\rm d}t} y_j(t) \nonumber \\
	&\le& \lambda_j - y_j(t)\mu_j= -\mu_j(y_j(t) - \rho_j),
	\ee
	where in the last inequality we used the fact that rate that type-$j$ jobs at fluid limit are admitted cannot be more than $\lambda_j$ (not all type-$j$ jobs that arrive are admitted), and  existing type-$j$ jobs in the system depart at rate $\mu_j y_j(t)$ in fluid limit.

\section{Proof of Theorem~\ref{thm:main}}\label{sec:maintheorem_proof}
	Using Theorem~\ref{thm:convergence}, we first show that, as $L \to \infty$, the sequence of stationary random variables $\mathbf{x}^{L(e)}(\infty)$ converges in distribution to $\mathbf{\globlt{x}}^{(g)}$ (the unique global greedy assignment), i.e.,  
	$$\mathbf{x}^{L(e)}(\infty) \implies \mathbf{\globlt{x}}^{(g)}.$$
	
	By Theorem~\ref{thm:convergence}, given an $\epsilon_1 > 0$, we can choose $t_{\epsilon_1}$ large enough such that $\| \mathbf{x}^{(e)}(t) -
	\mathbf{\globlt{x}}^{(g)}\|\leq \epsilon_1$, for $t\geq t_{\epsilon_1}$. Also by Theorem~\ref{prop:fl_limits}, we can choose a subsequence of $L_n$ of $L$ such that $x^{L_n(e)}_{\bk}(t) \to x^{(e)}_{\bk}(t)$ (u.o.c). Now for an $\epsilon_2>0$, and $L_n$ large enough, we can choose an $\epsilon_3$ such that uniformly over all initial states $\statez^{L_n}(0)$ we have $\|\statez^{L_n}(0)-\statez(0)\| \leq \epsilon_3$ and
	$$
	\mathbb{P}\Big(\|\mathbf{x}^{L_n(e)}(t_{\epsilon_1})- \mathbf{x}^{(e)}(t_{\epsilon_1})\| < \epsilon_1 \Big) > 
	1-\epsilon_2.
	$$
	This is true because otherwise for a sequence of initial states $\statez^{L_n}(0) \to \statez(0)$,
	$
	\mathbb{P}\Big(\|\mathbf{x}^{L_n(e)}(t_{\epsilon_1})- \mathbf{x}^{(e)}(t_{\epsilon_1}) \| < \epsilon_1 \Big) \leq 1-\epsilon_2,
	$
	which is impossible because, almost surely, $\mathbf{x}^{L_n(e)}(t_{\epsilon_1})\to \mathbf{x}^{(e)}(t_{\epsilon_1})$. Hence,
	\begin{flalign*}
	& \mathbb{P}\Big(\|\mathbf{x}^{L_n(e)}(t_{\epsilon_1})-\mathbf{\globlt{x}}^{(g)}\| <2 \epsilon_1\Big) > \\
	&\mathbb{P}\Big(\|\mathbf{x}^{L_n(e)}(t_{\epsilon_1})-\mathbf{x}^{(e)}(t_{\epsilon_1})\|+\|\mathbf{x}^{(e)}(t_{\epsilon_1})- \mathbf{\globlt{x}}^{(g)}\| < 2 \epsilon_1 \Big) \\
	& > \mathbb{P}\Big(\|\mathbf{x}^{L_n(e)}(t_{\epsilon_1})-\mathbf{x}^{(e)}(t_{\epsilon_1})\|< \epsilon_1\Big) \\
	& > 1-\epsilon_2
	\end{flalign*}
	which implies $x^{L_n(e)}_{\bk}(\infty) \implies \globlt{x}^{(g)}_{\bk} $, since  $\epsilon_1, \epsilon_2$ were chosen arbitrarily. Since this holds for every subsequence $L_n$ of $L$, and all converge to the same limit $\mathbf{\globlt{x}}^{(g)}$, we can conclude  $\mathbf{x}^{L(e)}(\infty) \implies \mathbf{\globlt{x}}^{(g)}$ (e.g., see Theorem 2.6 of \cite{billingsley2013convergence}).

	Next, denote the normalized reward of the system at time $t$ under {\DRA} as
	$U^L(t) = \frac{1}{L}F^{{DRA}}(t)$, and the normalized reward at time $t$ under the optimal policy as	${U^\star}^L(t)=\frac{1}{L}F^{{\star}}(t)$.
	Then,
	\begin{equation}
	U^L(\infty) \ge \sum_{i=1}^{\gcnt} \sum_{j=1}^J x^{L(e)}_{\greedk^{(i)}}(\infty) 
	\greedki^{(i)}_j - \sum_{j=1}^J \frac{g(L) + \Kmax}{L} u_j,
	\end{equation}
	since there are at most $g(L) + \Kmax$ empty slots for each job type
	in all the servers with effective configuration.
	Further note that random variables $\{U^L(\infty)\}$, and $\{x^{L(e)}_{\greedk^{(i)}}(\infty)\}$ are uniformly integrable, so they also converge in expectation (e.g., see Theorem 3.5 of \cite{billingsley2013convergence}), hence taking the expectations from both sides and letting $L \to \infty$, 
	\be\label{eq:help2}
	\lim_{L\to \infty}\mathbb{E}[U^L(\infty)] \ge 
	\sum_{i=1}^{\gcnt} \sum_{j=1}^J \globlt{x}^{(g)}_{\greedk^{(i)}}
	\greedki^{(i)}_j 
	= U^{(g)}[\bm{\rho}] \stackrel{(a)}{\ge} \frac{U^\star[\bm{\rho}]}{2}, 
	\ee
	where (a) is a direct consequence of Theorem~\ref{thm:optimality} (and $\frac{1}{2}$ can be replaced with $(1-1/e)$ if Theorem~\ref{thm:optimality2} is used).
    Also note that for the optimal algorithm
	$
	\mathbb{E}[{U^\star}^L(\infty)] \leq U^\star[\bm{\rho}],
	$
	since $U^\star[\bm{\rho}]$ is the maximum possible normalized reward. 
    Using this, and dividing both sides of \dref{eq:help2} by 	$\mathbb{E}[{U^\star}^L(\infty)]$, completes the proof.

\end{document}